\providecommand{\U}[1]{\protect\rule{.1in}{.1in}}
\newtheorem{theorem}{Theorem}
\newtheorem{corollary}[theorem]{Corollary}
\newtheorem{lemma}[theorem]{Lemma}
\newtheorem{proposition}[theorem]{Proposition}
\newenvironment{proof}[1][Proof]{\noindent\textbf{#1.} }{\ \rule{0.5em}{0.5em}}
\begin{document}

\title{Forrelation: A Problem that Optimally Separates Quantum from Classical Computing}
\author{Scott Aaronson\thanks{Email: aaronson@csail.mit.edu. \ Supported by an NSF
Waterman Award.}\\MIT
\and Andris Ambainis\thanks{The research leading to these results has received
funding from the European Union Seventh Framework Programme (FP7/2007-2013)
under grant agreement n%
${{}^\circ}$
600700 (QALGO) and ERC Advanced Grant MQC.}\\University of Latvia}
\date{}
\maketitle

\begin{abstract}
We achieve essentially the largest possible separation between quantum and
classical query complexities. \ We do so using a property-testing problem
called \textsc{Forrelation}, where one needs to decide whether one Boolean
function is highly correlated with the Fourier transform of a second function.
\ This problem can be solved using $1$ quantum query, yet we show that any
randomized algorithm needs $\Omega(\sqrt{N}/\log N)$ queries (improving an
$\Omega(N^{1/4})$\ lower bound of Aaronson). \ Conversely, we show that this
$1$ versus $\widetilde{\Omega}(\sqrt{N})$ separation is optimal: indeed, any
$t$-query\ quantum algorithm whatsoever can be simulated by an $O\left(
N^{1-1/2t}\right)  $-query randomized algorithm. \ Thus, resolving an open
question of Buhrman et al.\ from 2002, there is no partial Boolean function
whose quantum query complexity is constant and whose randomized query
complexity is linear. \ We conjecture that a natural generalization of
\textsc{Forrelation}\ achieves the optimal $t$ versus $\Omega\left(
N^{1-1/2t}\right)  $ separation for all $t$. \ As a bonus, we show that this
generalization is $\mathsf{BQP}$-complete. \ This yields what's arguably the
simplest $\mathsf{BQP}$-complete problem yet known, and gives a second sense
in which \textsc{Forrelation}\ \textquotedblleft captures the maximum power of
quantum computation.\textquotedblright

\end{abstract}

\section{Introduction\label{INTRO}}

Since the work of Simon \cite{simon}\ and Shor \cite{shor}\ two decades ago,
we have had powerful evidence that quantum computers can achieve exponential
speedups over classical computers. \ Of course, for problems like
\textsc{Factoring}, these speedups are conjectural at present: we cannot rule
out that a fast classical factoring algorithm might exist. \ But in the
\textit{black-box model}, which captures most known quantum algorithms,
exponential and even larger speedups can be \textit{proved}. \ We know, for
example, that \textsc{Period-Finding}\ (a natural abstraction of the problem
solved by Shor's algorithm) is solvable\ with only $O\left(  1\right)
$\ quantum queries, but requires $N^{\Omega\left(  1\right)  }$\ classical
randomized queries, where $N$ is the number of input elements
\cite{cleve:lb,cfmw,mdw}. \ We also know that \textsc{Simon}'\textsc{s
Problem}\ is solvable with $O\left(  \log N\right)  $\ quantum queries, but
requires $\Omega(\sqrt{N})$\ classical queries; and that a similar separation
holds for the \textsc{Glued-Trees}\ problem introduced by Childs et
al.\ \cite{ccdfgs,fennerzhang}.\footnote{However, in all these cases the
queries are non-Boolean. \ If we insist on Boolean queries, then the quantum
query complexities get multiplied by an $O(\log N)$ factor.}

To us, these results raise an extremely interesting question:

\begin{itemize}
\item \textbf{\textquotedblleft The Speedup Question.\textquotedblright}
\ \textit{Within the black-box model, just how large of a quantum speedup is
possible? \ For example, could there be a function of }$N$\textit{ bits with a
quantum query complexity of }$1$\textit{, but a classical randomized query
complexity of }$\Omega\left(  N\right)  $\textit{?}
\end{itemize}

One may object: once we know that exponential and even larger quantum
speedups are possible in the black-box model, who cares about the exact limit?
\ In our view, the central reason to study the Speedup Question is that doing
so can help us better understand the nature of quantum speedups themselves.
\ For example, can all exponential quantum speedups be seen as originating
from a common cause? \ Is there a single problem or technique that captures
the advantages of quantum over classical query complexity, in much the same
way that random sampling could be said to capture the advantages of randomized
over deterministic query complexity?

As far as we know, the Speedup Question was first posed by Buhrman et
al.\ \cite{bfnr}\ around 2002, in their study of quantum property-testing.
\ Specifically, Buhrman et al.\ asked whether there is any property of $N$-bit
strings that exhibits a \textquotedblleft maximal\textquotedblright%
\ separation: that is, one that requires $\Omega(N)$\ queries to test
classically, but only $O\left(  1\right)  $\ quantumly. \ The best separation
they could find, based on Simon's problem, was \textquotedblleft
deficient\textquotedblright\ on both ends: it required $\Omega(\sqrt{N}%
)$\ queries to test classically, and $O(\log N\log\log N)$ quantumly.

Since then, there has been only sporadic progress on the Speedup Question.
\ In 2009, Aaronson \cite{aar:ph} introduced the \textsc{Forrelation}%
\ problem---a problem that we will revisit in this paper---and showed that it
was solvable with only $1$ quantum query, but required $\Omega(N^{1/4}%
)$\ classical randomized queries. \ In 2010, Chakraborty et al.\ \cite{cfmw}%
\ argued that \textsc{Period-Finding}\ gives a different example of an
$O(1)$\ versus $\widetilde{\Omega}(N^{1/4})$\ quantum/classical gap; there,
however, we only get an $O(1)$-query quantum algorithm if we allow non-Boolean queries.

Earlier, in 2001, de Beaudrap, Cleve, and Watrous \cite{beaudrap}\ had given
what they described as a black-box problem that was solvable with $1$ quantum
query, but that required $\Omega(N^{1/4})$\ or $\Omega(\sqrt{N})$ classical
randomized queries (depending on how one defines the \textquotedblleft input
size\textquotedblright\ $N$). \ However, de Beaudrap et al.\ were not working
within the usual model of quantum query complexity. \ Normally, one provides
\textquotedblleft black-box access\textquotedblright\ to a function $f$,
meaning that the quantum algorithm can apply a unitary transformation that
maps basis states of the form $\left\vert x,y\right\rangle $ to basis states
of the form $\left\vert x,y\oplus f(x)\right\rangle $ (or $\left\vert
x\right\rangle $\ to $\left(  -1\right)  ^{f\left(  x\right)  }\left\vert
x\right\rangle $, if $f$\ is Boolean). \ By contrast, for their separation, de
Beaudrap et al.\ had to assume the ability to map basis states of the form
$\left\vert x,y\right\rangle $\ to basis states of the form $\left\vert
x,\pi(y+sx)\right\rangle $, for some unknown permutation $\pi$\ and hidden
shift $s$.

\subsection{Our Results\label{RESULTS}}

This paper has two main contributions---the largest quantum black-box speedup
yet known, and a proof that that speedup is essentially optimal---as well as
many smaller related contributions.

\subsubsection{Maximal Quantum/Classical Separation\label{RESULT1}}

In Section \ref{GAP}, we undertake a detailed study of the
\textsc{Forrelation} problem, which Aaronson \cite{aar:ph} introduced for a
different purpose than the one that concerns us here (he was interested in an
oracle separation between $\mathsf{BQP}$\ and the polynomial
hierarchy).\footnote{Also, in \cite{aar:ph}, the problem was called
\textquotedblleft Fourier Checking.\textquotedblright} \ In
\textsc{Forrelation}, we are given access to two Boolean functions
$f,g:\left\{  0,1\right\}  ^{n}\rightarrow\left\{  -1,1\right\}  $. \ We want
to estimate the amount of correlation between $f$\ and the Fourier transform
of $g$---that is, the quantity%
\[
\Phi_{f,g}:=\frac{1}{2^{3n/2}}\sum_{x,y\in\left\{  0,1\right\}  ^{n}}f\left(
x\right)  \left(  -1\right)  ^{x\cdot y}g\left(  y\right)  .
\]
It is not hard to see that $\left\vert \Phi_{f,g}\right\vert \leq1$\ for all
$f,g$. \ The problem is to decide, say, whether $\left\vert \Phi
_{f,g}\right\vert \leq\frac{1}{100}$\ or $\Phi_{f,g}\geq\frac{3}{5}$, promised
that one of these is the case.\footnote{The reason for the asymmetry---i.e.,
for promising that $\Phi_{f,g}$\ is positive if its absolute value is large,
but not if its absolute value is small---is a bit technical. \ On the one
hand, we want the \textquotedblleft unforrelated\textquotedblright\ case to
encompass almost all randomly-chosen functions $f,g$. \ On the other hand, we
also want \textsc{Forrelation}\ to be solvable using only $1$ quantum query.
\ If we had promised $\left\vert \Phi_{f,g}\right\vert \geq\frac{3}{5}$,
rather than $\Phi_{f,g}\geq\frac{3}{5}$, then we would only know a $2$-query
quantum algorithm. \ In any case, none of these choices make a big difference
to our results.} \ Here and throughout this paper, the \textquotedblleft input
size\textquotedblright\ is taken to be $N:=2^{n}$.

One can give (see Section \ref{PRELIM}) a quantum algorithm that solves
\textsc{Forrelation}, with bounded probability of error, using only $1$
quantum query. \ Intuitively, however, the property of $f$\ and $g$\ being
\textquotedblleft forrelated\textquotedblright\ (that is, having large
$\Phi_{f,g}$\ value) is an extremely global property, which should not be
apparent to a classical algorithm until it has queried a significant fraction
of the entire truth tables of $f$\ and $g$. \ And indeed, improving an
$\Omega\left(  N^{1/4}\right)  $\ lower bound of Aaronson \cite{aar:ph}, in
Section \ref{GAP}\ we show the following:

\begin{theorem}
Any classical randomized algorithm for \textsc{Forrelation} must make
$\Omega(\frac{\sqrt{N}}{\log N})$ queries.\label{thm1}
\end{theorem}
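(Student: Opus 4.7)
By Yao's minimax principle, it suffices to exhibit two distributions---$\mathcal{D}_Y$ concentrated on yes-instances ($\Phi_{f,g}\geq 3/5$) and $\mathcal{D}_N$ concentrated on no-instances ($|\Phi_{f,g}|\leq 1/100$)---that no deterministic algorithm distinguishes in fewer than $\Omega(\sqrt{N}/\log N)$ queries. For $\mathcal{D}_N$ I would take $f$ and $g$ independent uniform Boolean functions; here $|\Phi_{f,g}|=O(1/\sqrt{N})$ with overwhelming probability. For $\mathcal{D}_Y$ I would use the standard Gaussian planting: sample a jointly Gaussian vector $(v,u)\in\mathbb{R}^{2N}$ with $E[v_x v_{x'}]=\delta_{xx'}$, $E[u_y u_{y'}]=\delta_{yy'}$, and cross-covariances $E[v_x u_y]=c\,(-1)^{x\cdot y}/\sqrt{N}$ for a suitable absolute constant $c$, then set $f(x)=\mathrm{sign}(v_x)$ and $g(y)=\mathrm{sign}(u_y)$. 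Sheppard's arcsine formula applied to each pair $(v_x,u_y)$ shows that $E[\Phi_{f,g}]$ is a positive constant, and a standard second-moment / concentration argument upgrades this to $\Phi_{f,g}\geq 3/5$ with high probability.

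\textbf{Transcript analysis.} Fix a deterministic algorithm $\mathcal{A}$ making $T$ queries, inducing a decision tree whose leaves $\ell$ carry length-$2T$ transcripts $(\vec a,\vec b)\in\{-1,1\}^{2T}$ recording the answers to the $f$- and $g$-queries along the root-to-$\ell$ path. Under $\mathcal{D}_N$ each leaf has probability exactly $2^{-2T}$. Under $\mathcal{D}_Y$ the leaf probability equals the probability that the $2T$-dimensional Gaussian obtained by restricting $(v,u)$ to the queried coordinates lies in the orthant specified by $(\vec a,\vec b)$. The covariance matrix of this restriction is the $2T\times 2T$ identity outside a $T\times T$ block of cross-entries of magnitude $c/\sqrt{N}$, so intuitively every leaf probability should be close to $2^{-2T}$.

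\textbf{Main bound and key obstacle.} I would control the total variation via the $\chi^2$-divergence, $\|\mathcal{D}_Y-\mathcal{D}_N\|_{TV}^2\leq \chi^2(\mathcal{D}_Y\,\|\,\mathcal{D}_N)=\sum_\ell 2^{2T}\Pr_Y[\ell]^2-1$, and expand each orthant probability as $2^{-2T}$ times a power series in the cross-covariances, using the multivariate generalization of Sheppard's formula (equivalently, the Hermite/tetrachoric expansion of the sign function). Since each cross-entry is $O(1/\sqrt{N})$ and there are at most $T^2$ of them, one should be able to show that the degree-$k$ term in the expansion contributes at most $(O(T^2/N))^{k/2}$ times combinatorial factors coming from the Hermite coefficients of $\mathrm{sign}$. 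Summing, the $\chi^2$-divergence is $o(1)$ provided $T=o(\sqrt{N}/\log N)$, which will yield the theorem. The hard part---and presumably the source of the logarithmic loss compared to a clean $\sqrt{N}$ bound---is the careful bookkeeping of these Hermite contributions: the higher-order Hermite coefficients of $\mathrm{sign}$ decay only polynomially, so one must combine them with the Cauchy--Schwarz type accounting over leaves in a way that exactly balances the $T^{2k}/N^k$ scaling against poly-logarithmic slack. I expect the bulk of the proof to be this term-by-term estimate, likely carried out via a reduction to the analysis of a random walk on signed characters or a direct bound on a sum of squared multilinear polynomials in the covariance entries.
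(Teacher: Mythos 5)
Your plan diverges from the paper at the crucial step, and the part you defer (``the careful bookkeeping'') is precisely where the whole difficulty of the theorem lives. The paper does not work with Boolean transcripts and orthant probabilities at all: it first reduces Boolean \textsc{Forrelation} to a real-valued problem (\textsc{Real Forrelation}, a special case of \textsc{Gaussian Distinguishing}) via the sign-rounding computation $\operatorname{E}[\Phi_{F,G}]=2/\pi\pm o(1)$, and then, in the Gaussian domain, tracks the exact log-likelihood ratio of the transcript via Gram--Schmidt orthogonalization and controls it with a Gaussian Azuma inequality applied \emph{recursively}, together with a union bound over all $M=2N$ test vectors (this union bound is where the $1/\log N$ loss comes from). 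Your $\chi^2$ bound over decision-tree leaves, with the degree-$k$ tetrachoric term asserted to be at most $(O(T^2/N))^{k/2}$ times combinatorial factors, is exactly the statement that would need a proof, and you give no mechanism for it. Two concrete reasons for concern: (a) for adaptive trees the queried sets $S_\ell$ and hence the Hermite coefficients multiplying the transcript signs are themselves functions of earlier signs, so the clean orthogonality that makes $E_N[\epsilon_\ell^2]$ a sum of squares for nonadaptive algorithms is lost, and the coefficients of the degree-$k$ terms are permanent-type sums over matchings whose size is not controlled by pairwise smallness alone; (b) the paper explicitly exhibits a ``model problem'' with the same pairwise near-orthogonality ($|\rho|\le 1/\sqrt N$) that an adaptive algorithm solves in $O(N^{1/3})$ queries, which shows that any estimate using only ``there are $\le T^2$ cross-entries each of size $O(1/\sqrt N)$'' cannot reach $\widetilde{\Omega}(\sqrt N)$ for adaptive algorithms; some global structural input (in the paper, the polynomial bound on the number of test vectors) must enter, and your outline never identifies where it would. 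Your guess that the logarithm comes from balancing Hermite decay against $T^{2k}/N^k$ also does not match the actual source of the loss.

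A secondary, more easily repaired gap: your yes-distribution is the sign-rounded forrelated measure, but the claim that $\Phi_{F,G}\ge 3/5$ holds ``with high probability'' by a ``standard second-moment / concentration argument'' is unsupported (the relevant second moment involves four-dimensional orthant probabilities, and bounded-differences arguments fail because one Gaussian coordinate of $f$ can flip many signs of $G$). The paper only establishes $\operatorname{E}[\Phi_{F,G}]\ge 2/\pi-o(1)$ and uses Markov to get $\Pr[\Phi_{F,G}\ge 3/5]>1/30$, which suffices because the lower bound is phrased as: no $o(\sqrt N/\log N)$-query algorithm can distinguish the two input distributions with constant bias, and an algorithm solving the promise problem would do so. You should either prove the concentration you assert or restructure the Yao step in this weaker (and sufficient) form. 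As it stands, the proposal is a plausible program---and for \emph{nonadaptive} algorithms a $\chi^2$ computation of this flavor can be pushed through---but it does not contain the ideas needed to handle adaptivity, which is the substance of Theorem~\ref{thm1}.
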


Theorem \ref{thm1} yields \textit{the largest quantum versus classical
separation yet known} in the black-box model. \ As we will show in Appendix
\ref{PROP}, Theorem \ref{thm1} also implies the largest
\textit{property-testing} separation yet known---for with some work, one can
recast \textsc{Forrelation}\ (or rather, its negation) as a property that is
testable with only $1$ query quantumly, but that requires $\Omega(\frac
{\sqrt{N}}{\log N})$\ queries to test classically.

We deduce Theorem \ref{thm1} as a consequence of a more general result:
namely, a lower bound on the classical query complexity of a problem called
\textsc{Gaussian Distinguishing}. \ Here we are given oracle access to a
collection of $\mathcal{N}\left(  0,1\right)  $\ real Gaussian\ random
variables, $x_{1},\ldots,x_{M}$. \ We are asked to decide whether the
variables are all independent, or alternatively, whether they lie in a known
low-dimensional subspace of $\mathbb{R}^{M}$: one that induces a covariance of
at most $\varepsilon$\ between each pair of variables, while keeping each
variable an $\mathcal{N}\left(  0,1\right)  $\ Gaussian individually. \ We
show the following:

\begin{theorem}
\textsc{Gaussian Distinguishing} requires $\Omega\left(  \frac{1/\varepsilon
}{\log(M/\varepsilon)}\right)  $\ classical randomized queries.\label{thm1.5}
\end{theorem}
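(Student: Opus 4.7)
\medskip
\noindent\textbf{Proof plan for Theorem \ref{thm1.5}.} The plan is to fix an explicit hard alternative $Q$, invoke Yao's principle, and bound the distinguishing advantage of deterministic decision-tree algorithms by the Kullback--Leibler divergence of the real-valued transcript. For $Q$, choose a matrix $A \in \mathbb{R}^{M \times k}$ with $k = \Theta(\varepsilon^{-2}\log M)$ whose rows $a_1,\ldots,a_M$ are unit vectors satisfying $|\langle a_i,a_j\rangle|\leq\varepsilon$ for $i\neq j$ (such $A$ exists by a standard Johnson--Lindenstrauss / random $\pm 1/\sqrt{k}$ construction); sample $z\sim\mathcal{N}(0,I_k)$ and declare $x=Az$. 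Then each $x_i$ is marginally $\mathcal{N}(0,1)$, pairwise covariances are the inner products $\langle a_i,a_j\rangle\in[-\varepsilon,\varepsilon]$, and $x$ lies in the fixed $k$-dimensional row-space of $A$, matching the alternative promise.

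A deterministic $t$-query algorithm is a decision tree of depth $t$ with indices $i\in[M]$ at internal nodes and real answers $A_j\in\mathbb{R}$ on the edges. Its distinguishing advantage is at most $\mathrm{TV}(P_{\mathrm{tr}},Q_{\mathrm{tr}})$, where $P=\mathcal{N}(0,1)^{\otimes M}$ is the null distribution and $P_{\mathrm{tr}},Q_{\mathrm{tr}}$ are the induced transcript laws. By the chain rule of KL divergence,
\[
D_{KL}(Q_{\mathrm{tr}}\,\|\,P_{\mathrm{tr}}) \;=\; \sum_{j=1}^{t}\mathbb{E}_{Q}\!\left[D_{KL}\!\left(\mathcal{N}(\mu_j,\sigma_j^2)\,\Big\|\,\mathcal{N}(0,1)\right)\right],
\]
where along each root-to-leaf path the conditional Gaussian under $Q$ has parameters $\mu_j=\Sigma_{j,<j}\Sigma_{<j,<j}^{-1}A_{<j}$ and $\sigma_j^2=1-\Sigma_{j,<j}\Sigma_{<j,<j}^{-1}\Sigma_{<j,j}$, with $\Sigma$ the $t\times t$ covariance submatrix of the coordinates queried along that path. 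Since every off-diagonal of $\Sigma$ has magnitude at most $\varepsilon$ and (by Gershgorin) $\Sigma_{<j,<j}\succeq (1-t\varepsilon)I$, one obtains $|1-\sigma_j^2|=O(j\varepsilon^2)$ and, after averaging over $A_{<j}$ under $Q$, $\mathbb{E}_Q[\mu_j^2]=O(j\varepsilon^2)$. Summing and applying Pinsker's inequality gives $\mathrm{TV}(P_{\mathrm{tr}},Q_{\mathrm{tr}})=O(t\varepsilon)$, which on its own is already an $\Omega(1/\varepsilon)$ lower bound in the non-adaptive regime.

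To obtain the stated bound with the $\log(M/\varepsilon)$ denominator and to handle adaptivity cleanly, I would restrict attention to the high-probability event $\mathcal{E}$ that every observed answer satisfies $|A_j|\leq R=O(\sqrt{\log(M/\varepsilon)})$; under both $P$ and $Q$ the complement of $\mathcal{E}$ has TV-mass at most $1/\mathrm{poly}(M/\varepsilon)$, and on $\mathcal{E}$ one can discretize answers to precision $\mathrm{poly}(\varepsilon/M)$, reducing the tree to a finite combinatorial object on which the chain-rule bounds apply path-wise. The main obstacle I anticipate is exactly this adaptivity: the submatrix $\Sigma$ depends on the previously observed answers via the query choices, so $\mu_j$ and $\sigma_j^2$ are data-dependent random variables and the per-step KL estimate must hold along every path of the tree rather than merely in expectation. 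The truncation event $\mathcal{E}$ supplies uniform control of the linear form $\mu_j$ along every path, at the cost of the $\log(M/\varepsilon)$ factor that ultimately appears in the denominator of the final lower bound.
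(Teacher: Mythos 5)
There is a genuine gap, and it sits exactly where you flag your ``main obstacle.'' Your chain-rule computation is sound for \emph{non-adaptive} trees: there the queried index set is fixed, $\mu_j=\Sigma_{j,<j}\Sigma_{<j,<j}^{-1}A_{<j}$ is a fixed linear form, and averaging over $A_{<j}$ under $Q$ legitimately gives $\operatorname{E}_Q[\mu_j^2]=O(j\varepsilon^2)$, hence $t=\Omega(1/\varepsilon)$. But for adaptive trees the step ``after averaging over $A_{<j}$ under $Q$'' is not available: conditioned on a root-to-leaf path, the law of $A_{<j}$ is the Gaussian restricted to the cell that selects that path, and an adaptive algorithm can steer into cells where $\mu_j$ is atypically large. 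Your proposed repair---truncate answers to $|A_j|\leq R=O(\sqrt{\log(M/\varepsilon)})$ and bound $\mu_j$ path-wise---only yields the worst-case estimate $|\mu_j|\leq\|\Sigma_{j,<j}\|\,\|\Sigma_{<j,<j}^{-1}\|\,\|A_{<j}\|=O(j\varepsilon R)$, so the transcript KL is $O(t^3\varepsilon^2\log(M/\varepsilon))$ and the lower bound you get is only $t=\Omega\bigl((1/\varepsilon)^{2/3}/\operatorname{polylog}\bigr)$. This is precisely the barrier the paper isolates (compare Proposition \ref{crude}, which gives the same $O(i\varepsilon\sqrt{\log})$ bound on the predicted mean and hence only an $N^{1/3}$-type bound), and it is not an artifact of the analysis: the example in footnote \ref{ipexample} is an instance with pairwise inner products at most $\varepsilon$ that an \emph{adaptive} algorithm solves in $O((1/\varepsilon)^{2/3})$ queries, so no argument that uses the cardinality bound $M$ only through an answer-truncation radius can reach $\widetilde{\Omega}(1/\varepsilon)$.

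The missing ideas are the ones the paper's proof is built around: (1) the coefficients multiplying each fresh answer are determined before that answer is revealed, so the deviation sums are martingales, and a Gaussian Azuma inequality (Lemma \ref{azumagauss}) gives $|\mu_j|=O(\sqrt{j}\,\varepsilon\sqrt{\log Mt})$ rather than $O(j\varepsilon\sqrt{\log})$; (2) because the algorithm chooses its next index adaptively, this concentration must hold \emph{simultaneously for every test vector} it could query next, which forces a union bound over all $M$ vectors---this is the Central Martingale Lemma (Lemma \ref{cml}) and is the only place the hypothesis $|\mathcal{V}|\leq M$ is genuinely used; and (3) the prediction coefficients themselves are built recursively out of earlier predictions (Gram--Schmidt), so the martingale bound has to be unrolled through the layers (Lemma \ref{rilem}) before the final Azuma application to $\sum_i r_i a_i$. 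A secondary point: you construct a bespoke hard covariance structure via Johnson--Lindenstrauss, but the theorem must hold for \emph{every} admissible $\mathcal{V}$ (in particular the standard-plus-Fourier basis), since that is what Theorem \ref{thm1} needs; the hard alternative $Q$ is already dictated by the problem, so nothing should be constructed---though since your estimates use only the pairwise covariance bound, this part is easily repaired, unlike the adaptivity issue.
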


Theorem \ref{thm1}\ is then simply a (discretized) special case of Theorem
\ref{thm1.5},\ with $M=2N$\ and $\varepsilon=1/\sqrt{N}$. \ Beyond that, it
seems to us that Theorem \ref{thm1.5}\ could have independent applications in
statistics and machine learning.

\subsubsection{Proof of Optimality\label{RESULT2}}

In Section \ref{SIM}, we show that the quantum/classical query complexity
separation achieved by the \textsc{Forrelation}\ problem is close to the best
possible. \ More generally:

\begin{theorem}
\label{thm2}Let $Q$ be any quantum algorithm that makes $t=O\left(  1\right)
$\ queries to an $N$-bit string $X\in\left\{  0,1\right\}  ^{N}$. \ Then we
can estimate $\Pr\left[  Q\text{ accepts }X\right]  $, to constant additive
error and with high probability, by making only $O(N^{1-1/2t})$ classical
randomized queries to $X$.\footnote{The reason for the condition $t=O\left(
1\right)  $\ is that, in the bound $O(N^{1-1/2t})$, the big-$O$ hides a
multiplicative factor of $\exp\left(  t\right)  $. \ Thus, we can obtain a
nontrivial upper bound on query complexity as long as $t=o(\sqrt{\log N})$.}
\ Moreover, the randomized queries are nonadaptive.
\end{theorem}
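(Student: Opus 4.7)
My first step is the standard polynomial method of Beals, Buhrman, Cleve, Mosca, and de Wolf: the acceptance probability $p(X) := \Pr[Q \text{ accepts } X]$ is a multilinear polynomial in the Boolean variables $X_{1}, \ldots, X_{N}$ of total degree at most $d := 2t$, with $p(X) \in [0,1]$ for every $X \in \{0,1\}^{N}$. The theorem thus reduces to the following statement: given offline knowledge of a degree-$d$ polynomial $p$ with $\|p\|_{\infty} \leq 1$ and online query access to $X$, estimate $p(X)$ to constant additive error with high probability using only $O(N^{1-1/d})$ nonadaptive classical queries.

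The algorithm I propose is the natural one: include each coordinate in a random sample $T \subseteq [N]$ independently with probability $\alpha := M/N$ for $M = \Theta(N^{1-1/d})$; query $X_{i}$ for every $i \in T$; and output an estimator $\widetilde{p}(X_{T})$ computable from the description of $p$ and the observed bits. Two natural candidates are the subcube average $\mathbb{E}_{Y}[p(X_{T}, Y)]$, with $Y$ uniform on $\{0,1\}^{[N] \setminus T}$, and the debiased Fourier-reweighted estimator $\sum_{S \subseteq T} \hat{p}(S) \, \alpha^{-|S|} \chi_{S}(X)$, where $\chi_{S}(X) = \prod_{i \in S}(1 - 2 X_{i})$ and the Fourier coefficients $\hat{p}(S)$ are read off from $Q$.

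The heart of the proof is a variance bound for $\widetilde{p}$. A direct Parseval computation, averaging over $T$ and over uniform $X$, gives $\mathbb{E}\bigl[(p(X) - \widetilde{p}(X_{T}))^{2}\bigr] \leq \alpha^{-d}$ for the debiased estimator and only a constant for the subcube average --- both insufficient at $\alpha = N^{-1/d}$. The strictly stronger hypothesis $\|p\|_{\infty} \leq 1$, or equivalently the block-multilinear structure of polynomials coming from $t$-query algorithms (the post-$t$-query amplitudes are degree-$t$ polynomials whose squared magnitudes sum pointwise to $1$), must be exploited to close this gap. The tools I would reach for are Bonami--Beckner hypercontractivity, a level-$k$ inequality controlling the Fourier weights $W_{k} := \sum_{|S|=k} \hat{p}(S)^{2}$, or a direct reduction to a Gaussian analogue in the spirit of Theorem \ref{thm1.5}; any of these should prevent $p$ from concentrating all of its Fourier mass at the top level $d$ and thus keep $\sum_{k} \alpha^{-k} W_{k}$ of constant order. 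Once the expected squared error is $O(\varepsilon^{2})$, Chebyshev plus the median of $O(\log(1/\delta))$ independent trials boosts the success probability to $1-\delta$ without sacrificing nonadaptivity or the $O(N^{1-1/(2t)})$ query budget.

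The main obstacle is clearly this variance step: closing the factor-of-$N$ gap between what Parseval alone gives ($\alpha^{-d} = N$ at $\alpha = N^{-1/d}$) and the target $O(1)$ bound, by genuinely using $\|p\|_{\infty} \leq 1$ rather than merely $\|p\|_{2} \leq 1$. The $\exp(t)$ factor hidden in the big-$O$ (cf.\ the paper's footnote to Theorem \ref{thm2}) is a natural cost of iterating hypercontractive estimates across all $d = 2t$ Fourier levels.
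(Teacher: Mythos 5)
You have the right first step and the right algorithmic shape (polynomial method, then a nonadaptive random-sampling estimator that sums the fully-observed monomials), and you correctly flag the variance bound as the crux; but the route you propose for closing it would not work, and in fact you reduce to a statement the paper cannot prove. Estimating an \emph{arbitrary} bounded multilinear polynomial of degree $d$ with $O(N^{1-1/d})$ queries is left as a conjecture in this paper (proved only for $d=2$, in Appendix \ref{POLY}, and even there via a result of Dinur et al.\ \cite{dfko} on influential variables rather than hypercontractivity); the theorem that is actually proved and used is Theorem \ref{estimatorthm}, for \emph{block-multilinear} polynomials bounded on all of $\{-1,1\}^{2tN}$, which is exactly what Lemma \ref{blockbeals} extracts from a $t$-query quantum algorithm (an inner product of two post-query amplitude vectors, with independent variables per query). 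More importantly, hypercontractive or level-$k$ inequalities cannot supply the missing variance bound: boundedness does not prevent the Fourier mass from sitting entirely at the top level. The polynomial $p(x)=x_{1,1}x_{2,1}\cdots x_{k,1}$ is bounded and block-multilinear, all its weight is at level $k$, and your uniformly-sampled estimator has variance $\Theta(\alpha^{-k})=\Theta(N)$ on it; the obstruction is not the level distribution but the \emph{unbalancedness} of the coefficients, i.e.\ that a few variables (or index sets) carry almost all the mass and are almost never sampled.

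The paper's missing idea is a preprocessing step that uses boundedness combinatorially rather than analytically. Writing $V_i=\sum_{i_2,\ldots,i_k}a_{i,i_2,\ldots,i_k}^2$, one chooses random signs for the other blocks and then signs for the first block to force $p=\sum_i |X_i|$, so boundedness yields $\sqrt{V_1}+\cdots+\sqrt{V_N}\le 1$; this justifies ``variable-splitting'' each influential variable into $\Theta(\sqrt{V_i}/\delta)$ copies (which preserves boundedness, since the split polynomial is a convex combination of restrictions), and after at most $2^k/\delta$ new variables one achieves the balance condition $\Lambda_S\le\delta=\varepsilon^2/N$ for every nonempty $S\subseteq[k]$ (Lemma \ref{cl:main}, Corollary \ref{splitcor}). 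Only then does the sampling estimator work: its variance is expanded by inclusion--exclusion over the overlap pattern of two monomials, and each overlap class is controlled by the corresponding $\Lambda_S$, giving $\operatorname{Var}=O(\delta/n)$ and hence an $\varepsilon$-accurate estimate from $O((N/\varepsilon^2)^{1-1/k})$ nonadaptive queries, with the $\exp(k)$ constant coming from the $2^k$ splitting rounds rather than from iterated hypercontractivity. Without this balancing step (or some substitute that identifies and always queries the influential variables), your Chebyshev-plus-median endgame has nothing to stand on.
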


So for example, every $1$-query quantum algorithm can be simulated by an
$O(\sqrt{N})$-query classical randomized algorithm, every $2$-query quantum
algorithm can be simulated by an $O(N^{3/4})$-query randomized algorithm, and
so on. \ Theorem \ref{thm2} resolves the open problem of Buhrman et
al.\ \cite{bfnr} in the negative: it shows that there is no problem
(property-testing or otherwise) with a constant\ versus
linear\ quantum/classical query complexity gap. \ Theorem \ref{thm2}\ does not
rule out the possibility of an $O\left(  \log N\right)  $\ versus
$\widetilde{\Omega}\left(  N\right)  $\ gap, and indeed, we conjecture that
such a gap is possible.

Once again, we deduce Theorem \ref{thm2} as a consequence of a more general
result, which might have independent applications to classical sublinear
algorithms. \ Namely:

\begin{theorem}
\label{thm2.5}Every degree-$k$ real polynomial $p:\left\{  -1,1\right\}
^{N}\rightarrow\mathbb{R}$\ that is

\begin{enumerate}
\item[(i)] bounded in $\left[  -1,1\right]  $\ at every Boolean point, and

\item[(ii)] \textquotedblleft block-multilinear\textquotedblright\ (that is,
the variables can be partitioned into $k$ blocks, such that every monomial is
the product of one variable from each block),
\end{enumerate}

\noindent can be approximated to within $\pm\varepsilon$, with high
probability, by nonadaptively querying only $O(\left(  N/\varepsilon
^{2}\right)  ^{1-1/k})$\ of the variables.
\end{theorem}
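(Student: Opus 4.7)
The plan is to analyse the natural unbiased random-sampling estimator and control its variance via a Bohnenblust--Hille-type inequality. First I would fix a sampling rate $q=\Theta((N/\varepsilon^2)^{-1/k})$, form $S\subseteq[N]$ by including each variable independently with probability $q$, query $x_i$ for $i\in S$, and output
$$\tilde p(x)\;:=\;q^{-k}\sum_{M\subseteq S}c_M\prod_{i\in M}x_i.$$
Because each block-multilinear monomial has exactly $k$ distinct variables (one per block), $\Pr[M\subseteq S]=q^k$, so $\tilde p(x)$ is an unbiased estimator of $p(x)$, and a Chernoff bound controls $|S|=\Theta(qN)=\Theta((N/\varepsilon^2)^{1-1/k})$ with high probability.

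The bulk of the work goes into the variance. I would expand $\mathrm{Var}(\tilde p(x))$ by splitting each pair of monomials $M,M'$ according to the set $A\subseteq[k]$ of blocks in which they agree, and then reorganise via M\"obius inversion into
$$\mathrm{Var}(\tilde p(x))\;=\;\sum_{\emptyset\neq A\subseteq[k]}(q^{-1}-1)^{|A|}\,W_A(x),\qquad W_A(x)\;:=\;\mathbb{E}_{y_A}\!\bigl[p(y_A,x_{\bar A})^2\bigr],$$
where $y_A$ is uniform $\pm 1$ on the blocks indexed by $A$. Each $W_A$ is non-negative and bounded by $\|p\|_\infty^2\leq 1$.

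The main obstacle is that this trivial bound $W_A\leq1$, plugged into the formula, gives only $\mathrm{Var}(\tilde p(x))\lesssim q^{-k}=N/\varepsilon^2$, which is off by roughly a factor of $N$. Closing this gap is where I expect the real difficulty. I would exploit two additional structural facts: (i) for $|A|=k$, $W_{[k]}(x)=\sum_M c_M^2$ is already bounded by $1$ via Parseval; and (ii) for each $A$ and each assignment $\sigma$ on the blocks in $A$, the slice polynomial $r_\sigma(x):=\sum_{M\supseteq(A,\sigma)}c_Mx^{M\setminus A}$ is itself a block-multilinear form with $\|r_\sigma\|_\infty\leq 1$ (because $r_\sigma(x)=\mathbb{E}_{y_A}[p(y_A,x_{\bar A})\,y^\sigma]$), so the Bohnenblust--Hille inequality applies not just to $p$ but to every slice. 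A careful level-by-level combination of Bohnenblust--Hille with H\"older---trading the $(q^{-1}-1)^{|A|}$ factor against a sharper bound on $W_A(x)$ derived from the slice inequalities---should collapse the variance sum to $O(\varepsilon^2)$ at our chosen $q$.

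Once the variance bound is in hand, Chebyshev yields $|\tilde p(x)-p(x)|\leq\varepsilon$ with constant probability; a median over $O(\log(1/\delta))$ independent copies upgrades this to success probability $1-\delta$ with the same asymptotic nonadaptive query complexity, the $\exp(O(k))$ factor from the Bohnenblust--Hille constants being absorbed into the hidden constant exactly as the theorem statement allows.
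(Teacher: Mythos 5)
Your estimator and your variance decomposition are essentially the ones the paper uses, but there is a genuine gap at exactly the point you flag as ``where I expect the real difficulty'': for the \emph{plain uniform-sampling} estimator, no analysis---Bohnenblust--Hille or otherwise---can collapse the variance to $O(\varepsilon^{2})$, because the claimed variance bound is simply false. Take $p(x)=x_{1,1}x_{2,1}\cdots x_{k,1}$, a single monomial with coefficient $1$ (bounded, block-multilinear). With your sampling rate $q=\Theta((N/\varepsilon^{2})^{-1/k})$, the estimator $\tilde p$ equals $\pm q^{-k}$ with probability $q^{k}$ and $0$ otherwise, so $\operatorname{Var}(\tilde p)=q^{-k}-1=\Theta(N/\varepsilon^{2})$; in your decomposition this is the $A=[k]$ term, where $W_{[k]}=\sum_{M}c_{M}^{2}=1$ meets the factor $(q^{-1}-1)^{k}$. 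The Bohnenblust--Hille inequality is \emph{consistent} with this example (a single unit coefficient saturates it), so it cannot exclude it, and neither can the slice bounds $\|r_{\sigma}\|_{\infty}\le 1$. Moreover the median-of-copies fix does not rescue you: with $O(\log(1/\delta))$ repetitions every copy is $0$ with overwhelming probability, so the median is $0$ and the error is $1$, not $\varepsilon$. The problem is not the analysis but the estimator itself: uniform sampling cannot find the few ``influential'' variables that carry most of the polynomial's mass.

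What is missing is the paper's preprocessing step. Before sampling, the paper performs \emph{variable-splitting}: it repeatedly replaces an influential variable $x_{j,l}$ by an average $\frac{1}{m}(x_{j,l_{1}}+\cdots+x_{j,l_{m}})$ of fresh copies (which preserves boundedness), until the balance conditions $\Lambda_{S}=\sum_{(i_{j})_{j\in S}}\bigl(\sum_{(i_{j})_{j\notin S}}a_{i_{1},\ldots,i_{k}}\bigr)^{2}\le\delta:=\varepsilon^{2}/N$ hold for every nonempty $S\subseteq[k]$. The existence of such a splitting with only $O(2^{k}N/\varepsilon^{2})$ new variables rests on a sign-choice argument exploiting boundedness (choosing $x_{1,i}=\operatorname{sgn}(X_{i})$ to show $\sum_{i}\sqrt{V_{i}}\le 1$), which is the substantive ingredient your sketch lacks. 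After this balancing, uniform sampling over the enlarged variable set at rate $n^{-1/k}$, with $n=O(N/\varepsilon^{2})$, does have variance $O(\delta/n)$---the proof being essentially your agreement-pattern decomposition, with each $W_{A}$-type term now bounded by $\delta$ rather than by $1$---and the query count is $O(n^{1-1/k})=O((N/\varepsilon^{2})^{1-1/k})$ as required. (An alternative repair of your route would be to abandon uniform sampling in favor of importance sampling weighted by coefficient mass, but that is a different algorithm from the one you propose.)
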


In the statement of Theorem \ref{thm2.5}, we strongly conjecture that
condition (ii) can be removed. \ If so, then we would obtain a sublinear
algorithm to estimate any bounded, constant-degree real polynomial. \ In
Appendix \ref{POLY}, we show that condition (ii) can indeed be removed in the
special case $k=2$.

\subsubsection{$k$-fold\label{RESULT3} \textsc{Forrelation}}

In Section \ref{BQP}, we study a natural generalization of
\textsc{Forrelation}. \ In $k$-fold \textsc{Forrelation}, we are given access
to $k$ Boolean functions $f_{1},\ldots,f_{k}:\left\{  0,1\right\}
^{n}\rightarrow\left\{  -1,1\right\}  $. \ We want to estimate the
\textquotedblleft twisted sum\textquotedblright%
\[
\Phi_{f_{1},\ldots,f_{k}}:=\frac{1}{2^{\left(  k+1\right)  n/2}}\sum
_{x_{1},\ldots,x_{k}\in\left\{  0,1\right\}  ^{n}}f_{1}\left(  x_{1}\right)
\left(  -1\right)  ^{x_{1}\cdot x_{2}}f_{2}\left(  x_{2}\right)  \left(
-1\right)  ^{x_{2}\cdot x_{3}}\cdots\left(  -1\right)  ^{x_{k-1}\cdot x_{k}%
}f_{k}\left(  x_{k}\right)  .
\]
It is not hard to show that $\left\vert \Phi_{f_{1},\ldots,f_{k}}\right\vert
\leq1$ for all $f_{1},\ldots,f_{k}$. \ The problem is to decide, say, whether
$\left\vert \Phi_{f_{1},\ldots,f_{k}}\right\vert \leq\frac{1}{100}$\ or
$\Phi_{f_{1},\ldots,f_{k}}\geq\frac{3}{5}$, promised that one of these is the case.

One can give (see Section \ref{PRELIM}) a quantum algorithm that solves
$k$-fold \textsc{Forrelation}, with bounded error probability, using only
$\left\lceil k/2\right\rceil $ quantum queries. \ In Section \ref{BQP}, we
show, conversely, that $k$-fold \textsc{Forrelation}\ \textquotedblleft
captures the full power of quantum computation\textquotedblright:

\begin{theorem}
\label{thm3}If $f_{1},\ldots,f_{k}$\ are described explicitly (say, by
circuits to compute them), and $k=\operatorname*{poly}\left(  n\right)  $,
then $k$-fold \textsc{Forrelation}\ is a $\mathsf{BQP}$-complete\ promise problem.
\end{theorem}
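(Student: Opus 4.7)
The plan is to prove containment and $\mathsf{BQP}$-hardness separately. Containment is quick: the $\lceil k/2\rceil$-query quantum algorithm for $k$-fold \textsc{Forrelation}\ given in Section~\ref{PRELIM} becomes a polynomial-time $\mathsf{BQP}$ algorithm once each $f_i$ is replaced by its explicit circuit (each oracle call is simulated by that circuit), and its probability of outputting $0^n$ is exactly $|\Phi_{f_1,\ldots,f_k}|^2$. The constant gap between $|\Phi|^2\geq 9/25$ (YES) and $|\Phi|^2\leq 10^{-4}$ (NO) is distinguished with high probability by $O(1)$ repetitions.

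For $\mathsf{BQP}$-hardness, the starting point is the identity
\[
\Phi_{f_1,\ldots,f_k} \;=\; \bra{0^n}\, H^{\otimes n} D_{f_k} H^{\otimes n} D_{f_{k-1}}\cdots D_{f_1} H^{\otimes n} \,\ket{0^n},
\]
where $D_f\ket{x}=f(x)\ket{x}$ is the diagonal $\pm 1$ unitary associated with $f$; this is immediate by expanding each $H^{\otimes n}$ as a Fourier sum. So $\Phi_{f_1,\ldots,f_k}$ is exactly the $\bra{0^n}\cdot\ket{0^n}$ transition amplitude of a quantum circuit in the ``alternating form'' (full Hadamard layers interleaved with $\pm 1$ diagonal unitaries). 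It therefore suffices to show that every $\mathsf{BQP}$ computation can be expressed by such a circuit with $k=\operatorname{poly}(n)$ layers and with polynomial-time computable diagonals.

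Given $L\in\mathsf{BQP}$ and input $w$, I would first amplify a $\mathsf{BQP}$ algorithm for $L$ to exponentially small error and, by coherently copying the answer bit to an ancilla and uncomputing the rest, convert it into a unitary $V_w$ for which $\bra{0^n}V_w\ket{0^n}\geq 1-2^{-n}$ if $w\in L$ and $\leq 2^{-n}$ otherwise. Rewriting $V_w$ in the alternating form and taking $f_i$ to be the Boolean function whose phase pattern is $D_i$ produces a tuple of polynomial-size circuits, and by construction $\Phi_{f_1,\ldots,f_k}=\bra{0^n}V_w\ket{0^n}$ is strictly above $3/5$ or strictly below $1/100$.

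The main obstacle is producing the alternating decomposition itself. My plan is to use the universal gate set $\{H,CCZ\}$ (Shi/Aharonov): each $CCZ$ is already a $\pm 1$ diagonal and can be absorbed into some $D_i$, while single-qubit Hadamards must be simulated by full $H^{\otimes N}$ layers on a larger register. To this end, I would add $\operatorname{poly}(n)$ ancilla qubits prepared so that pairs of successive $H^{\otimes N}$ layers act trivially on them (e.g., toggling $\ket{+}\leftrightarrow\ket{0}$), and insert between those layers carefully chosen $\pm 1$ diagonal ``selector'' gates that, via interference with the ancillas, cancel the unwanted Hadamard action on the spectator computational qubits at each step. Verifying that this ancilla-plus-diagonal gadget correctly emulates any single-qubit $H_i$, composes cleanly across the circuit, and incurs only polynomial overhead is the most delicate part of the argument.
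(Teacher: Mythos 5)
Your framework is the right one and matches the paper's: containment via the explicit-circuit simulation of the query algorithm, and hardness by observing that $\Phi_{f_1,\ldots,f_k}$ is exactly the $\bra{0^n}\cdots\ket{0^n}$ amplitude of a circuit of forced $H^{\otimes n}$ layers interleaved with diagonal $\pm1$ unitaries, then appealing to the universality of $\{H,\mathrm{CCZ}\}$ and trying to cancel the unwanted Hadamard layers. (Minor slip in the containment paragraph: the $\lceil k/2\rceil$-query algorithm accepts with probability $(1+\Phi)/2$, while it is the $k$-query version, measuring for $\ket{0}^{\otimes n}$, whose acceptance probability is $\Phi^2$; either one yields containment.) However, there is a genuine gap at the crux of the hardness direction: you do not actually construct the gadget that applies Hadamards to a chosen subset of qubits while leaving the others alone, and your sketched mechanism is doubtful. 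Between consecutive phase oracles the only non-diagonal resource available is the global $H^{\otimes n}$ layer itself, so on a spectator qubit the Hadamards can only be removed by letting an even number of layers cancel pairwise, while a target qubit needs a net odd (effective) Hadamard action over the same window; adding ancillas toggling $\ket{+}\leftrightarrow\ket{0}$ and inserting diagonal ``selector'' gates does not obviously resolve this parity clash, because diagonal gates cannot change which basis the forced layers rotate the spectators through, and a single sandwich $H^{\otimes n}DH^{\otimes n}$ can never equal a Hadamard on one qubit tensored with identity on the rest (that operator is not of this form for any diagonal $D$). So the step you defer as ``the most delicate part'' is exactly the theorem's content, and as sketched it would not go through.

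For comparison, the paper resolves this with a concrete identity-based gadget: on a chosen pair of qubits $a,b$, three CSIGN gates (each realizable as a phase function $f_i(z)=(-1)^{z_az_b}$) interleaved with the four forced Hadamard layers satisfy $\operatorname{H}^{\otimes2}\operatorname{C}\operatorname{H}^{\otimes2}\operatorname{C}\operatorname{H}^{\otimes2}\operatorname{C}\operatorname{H}^{\otimes2}=\operatorname{S}\operatorname{H}^{\otimes2}$ (a SWAP, which is tracked by relabeling, times Hadamards on the pair), whereas with identities in place of the CSIGNs the four layers multiply to the identity; thus Hadamards are applied exactly on the chosen pair and cancelled everywhere else. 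Odd-size subsets are handled by pairing the leftover qubit with a dummy qubit, constant functions $f_i\equiv1$ are inserted to cancel layers between gates, CCZ gates are absorbed directly into the diagonals, and negative amplitudes are handled by running both $Q$ and $-Q$ (your amplify-and-uncompute variant could also be made to work, but needs the sign/overlap bookkeeping done carefully). To complete your proof you would need to supply a gadget of this kind, or an equally explicit verified alternative.
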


We do not know of any complete problem for quantum computation that is more
self-contained than this. \ Not only can\ one\ state the $k$-fold
\textsc{Forrelation}\ problem without any notions from quantum mechanics, one
does not need any nontrivial mathematical notions, like the condition number
of a matrix or the Jones polynomial of a knot.

We conjecture, moreover, that $k$-fold \textsc{Forrelation}\ achieves the
optimal $k/2$\ versus $\widetilde{\Omega}(N^{1-1/k})$ quantum/classical query
complexity separation for all even $k$. \ If so, then there are \textit{two}
senses in which $k$-fold \textsc{Forrelation}\ captures the power of quantum computation.

\subsubsection{Other Results\label{OTHER}}

The paper also includes several other results.

In Appendix \ref{SAMPREL}, we study the largest possible quantum/classical
separations that are achievable for \textit{approximate sampling} and
\textit{relation} problems. \ We show that there exists a sampling
problem---namely, \textsc{Fourier Sampling}\ of a Boolean function---that is
solvable with $1$ quantum query, but requires $\Omega(N/\log N)$\ classical
queries. \ By our previous results, this exceeds the largest quantum/classical
gap that is possible for decision problems.

In Appendix \ref{POLY}, we generalize our result that every $1$-query quantum
algorithm can be simulated using $O(\sqrt{N})$\ randomized queries, to show
that \textit{every bounded degree-}$2$\textit{ real polynomial} $p:\left\{
-1,1\right\}  ^{N}\rightarrow\left[  -1,1\right]  $\ can be estimated using
$O(\sqrt{N})$\ randomized queries. \ We conjecture that this can be
generalized, to show that every bounded degree-$k$ real polynomial can be
estimated using $O(N^{1-1/k})$\ randomized queries.

In Appendix \ref{KFOLDLB}, we extend our $\Omega(\frac{\sqrt{N}}{\log N}%
)$\ randomized lower bound for the \textsc{Forrelation}\ problem, to show a
$\Omega(\frac{\sqrt{N}}{\log^{7/2}N})$\ lower bound for $k$-fold
\textsc{Forrelation}\ for any $k\geq2$. \ We conjecture that the right lower
bound is $\widetilde{\Omega}(N^{1-1/k})$, but even generalizing our
$\widetilde{\Omega}(\sqrt{N})$\ lower bound to the $k$-fold case\ is nontrivial.

\subsection{Techniques\label{TECH}}

\subsubsection{Randomized Lower Bound\label{RLB}}

Proving that any randomized algorithm for \textsc{Forrelation}\ requires
$\Omega(\frac{\sqrt{N}}{\log N})$\ queries is surprisingly involved. \ As we
mentioned in Section \ref{RESULT1}, the first step, following the work of
Aaronson \cite{aar:ph}, is to convert \textsc{Forrelation}\ into an analogous
problem involving real Gaussian variables. \ In \textsc{Real Forrelation}, we
are given oracle access to two real functions $f,g:\left\{  0,1\right\}
^{n}\rightarrow\mathbb{R}$, and are promised either that (i) every\ $f\left(
x\right)  $\ and $g\left(  y\right)  $ value is an independent $\mathcal{N}%
\left(  0,1\right)  $\ Gaussian, or else (ii) every $f\left(  x\right)
$\ value is an independent $\mathcal{N}\left(  0,1\right)  $\ Gaussian,\ while
every $g\left(  y\right)  $\ value equals $\hat{f}\left(  y\right)  $\ (i.e.,
the Fourier transform of $f$\ evaluated at $y$). \ The problem is to decide
which holds. \ Using a rounding reduction, we show that any query complexity
lower bound for \textsc{Real Forrelation}\ implies the same lower bound for
\textsc{Forrelation}\ itself.

Making the problem continuous allows us to adopt a geometric perspective. \ In
this perspective, we are given oracle access to a real vector $v\in
\mathbb{R}^{2N}$, whose $2N$\ coordinates consist of all values $f\left(  x\right)
$\ and all values $g\left(  y\right)  $\ (recall that $N=2^{n}$). \ We
are trying to distinguish the case where $v$ is simply an $\mathcal{N}\left(
0,1\right)  ^{2N}$ Gaussian, from the case where $v$\ is confined to an
$N$-dimensional subspace of $\mathbb{R}^{2N}$---namely, the subspace defined
by $g=\hat{f}$. \ Now, suppose that values $f\left(  x_{1}\right)
,\ldots,f\left(  x_{t}\right)  $\ and $g\left(  y_{1}\right)  ,\ldots,g\left(
y_{u}\right)  $\ have already been queried. \ Then we can straightforwardly
calculate the Bayesian posterior probabilities for being in case (i) or case
(ii). \ For case (i), the probability turns out to depend solely on the
squared $2$-norm of the vector of empirical data seen so far:%
\[
\Pr\left[  \text{case (i)}\right]  \propto\exp\left(  -\frac{\Delta_{\text{i}%
}}{2}\right)  ,
\]
where%
\[
\Delta_{\text{i}}=f\left(  x_{1}\right)  ^{2}+\cdots+f\left(  x_{t}\right)
^{2}+g\left(  y_{1}\right)  ^{2}+\cdots+g\left(  y_{u}\right)  ^{2}.
\]
For case (ii), by contrast, the probability is proportional to $\exp
(-\Delta_{\text{ii}}/2)$, where $\Delta_{\text{ii}}$\ is the minimum squared
$2$-norm of any point $f\in\mathbb{R}^{N}$ compatible with all the data seen
so far, as well as with the linear constraint $g=\hat{f}$. \ Let
$\mathcal{V}=\left\{  \left\vert 1\right\rangle ,\ldots,\left\vert
N\right\rangle ,|\hat{1}\rangle,\ldots,|\hat{N}\rangle\right\}  $\ be the set
of $2N$ unit vectors in $\mathbb{R}^{N}$\ that consists of all $N$ elements of
the standard basis, together with all $N$ elements of the Fourier basis.
\ Then $\Delta_{\text{ii}}$, in turn, can be calculated using a process of
Gram-Schmidt orthogonalization, on the vectors in $\mathcal{V}$\ corresponding
to the $f$-values and $g$-values that have been queried so far.

Our goal is to show that, with high probability, $\Delta_{\text{i}}$\ and
$\Delta_{\text{ii}}$ \textit{remain close to each other}, even after a large
number of queries have been made---meaning that the algorithm has not yet
succeeded in distinguishing case (i) from case (ii) with non-negligible bias.
\ To show this, the central fact we rely on is that the vectors in
$\mathcal{V}$\ are \textit{nearly-orthogonal}: that is, for all $\left\vert
v\right\rangle ,\left\vert w\right\rangle \in\mathcal{V}$, we have $\left\vert
\left\langle v|w\right\rangle \right\vert \leq\frac{1}{\sqrt{N}}$.
\ Intuitively, this means that, if we restrict attention to any small subset
$S$ of \thinspace$f$-values and $g$-values, then while correlations exist
among those values, the correlations are \textit{weak}: \textquotedblleft to a
first approximation,\textquotedblright\ we have simply asked for the
projections of a Gaussian vector onto $\left\vert S\right\vert $\ orthogonal
directions, and have therefore received $\left\vert S\right\vert
$\ uncorrelated answers.

From this perspective, the key question is: how many values can we query until
the \textquotedblleft orthogonal approximation\textquotedblright\ breaks down
(meaning that we notice the correlations)? \ In his previous work, Aaronson
\cite{aar:ph}\ showed that the approximation holds until $\Omega\left(
N^{1/4}\right)  $ queries are made. \ Indeed, he proved a stronger statement:
even if the $x$'s and $y$'s are chosen \textit{nondeterministically}, still
$\Omega\left(  N^{1/4}\right)  $ values\ must be revealed until we have a
\textit{certificate} showing that we are in case (i) or case (ii) with high probability.

To improve the lower bound from $\Omega\left(  N^{1/4}\right)  $\ to the
optimal $\widetilde{\Omega}(\sqrt{N})$, there are several hurdles to overcome.

Aaronson had assumed, conservatively, that the deviations from orthogonality
all \textquotedblleft pull in the same direction.\textquotedblright\ \ As a
first step, we notice instead that the deviations follow an unbiased random
walk, with some positive and others negative---the martingale property arising
from the fact that the algorithm can control which $x$'s and $y$'s to query,
but not the values of $f\left(  x\right)  $\ and $g\left(  y\right)  $. \ We
then use a Gaussian generalization of Azuma's inequality to upper-bound the
sum of the deviations. \ Doing this improves the lower bound from
$\Omega\left(  N^{1/4}\right)  $\ to $\widetilde{\Omega}\left(  N^{1/3}%
\right)  $, but we then hit an apparent barrier.

In this work, we explain the $\Omega\left(  N^{1/3}\right)  $\ barrier, by
exhibiting a \textquotedblleft model problem\textquotedblright\ that is
extremely similar to \textsc{Real Forrelation}\ (in particular, has exactly
the same near-orthogonality property), yet is solvable with only $O\left(
N^{1/3}\right)  $ queries, by exploiting adaptivity. \ However, we then break
the barrier, by using the fact that, for \textsc{Real Forrelation}\ (but
\textit{not} for the model problem), the total number of vectors in
$\mathcal{V}$\ is only $N^{O\left(  1\right)  }$. \ This fact lets us use the
Gaussian Azuma's inequality a second time, to upper-bound not only the
\textit{sum} of all the deviations from orthogonality, but the individual
deviations themselves. \ Implementing this yields a lower bound of
$\widetilde{\Omega}\left(  N^{2/5}\right)  $: better than $\widetilde{\Omega
}\left(  N^{1/3}\right)  $, but still not all the way up to $\widetilde{\Omega
}(\sqrt{N})$. \ However, we then notice that we can apply Azuma's inequality
\textit{recursively}---once for each layer of the Gram-Schmidt
orthogonalization process---to get better and better upper bounds on the
deviations from orthogonality. \ Doing so gives us a sequence of lower
bounds\ $\widetilde{\Omega}\left(  N^{3/7}\right)  $, $\widetilde{\Omega
}\left(  N^{4/9}\right)  $, $\widetilde{\Omega}\left(  N^{5/11}\right)  $,
etc., with the ultimate limit of the process being $\Omega(\frac{\sqrt{N}%
}{\log N})$.

\subsubsection{Randomized Upper Bound\label{RUB}}

Why did we have to work so hard to prove a\ $\widetilde{\Omega}(\sqrt{N}%
)$\ lower bound on the randomized query complexity of \textsc{Forrelation}?
\ Our other main result provides one possible explanation: namely, we are here
scraping up against the \textquotedblleft ceiling\textquotedblright\ of the
possible separations between randomized and quantum query complexity. \ In
particular, \textit{any} quantum algorithm that makes $1$ query to a Boolean
input $X\in\left\{  0,1\right\}  ^{N}$, can be simulated by a randomized
algorithm (in fact, a nonadaptive randomized algorithm) that makes $O(\sqrt
{N})$ queries to $X$. \ More generally, any quantum algorithm that makes
$t=O\left(  1\right)  $\ queries to $X$, can be simulated by a nonadaptive
randomized algorithm that makes $O(N^{1-1/2t})$\ queries to $X$.

The proof of this result consists of three steps. \ The first involves the
simulation of quantum algorithms by low-degree polynomials. \ In 1998, Beals
et al.\ \cite{bbcmw} famously observed that, if a quantum algorithm makes $t$
queries to a Boolean input $X\in\left\{  -1,1\right\}  ^{N}$, then $p\left(
X\right)  $, the probability that the algorithm accepts $X$, can be written as
a multilinear polynomial in $X$ of degree at most $2t$. \ We extend this
result of Beals et al., in a way that might be of independent interest for
quantum lower bounds. \ Namely, we observe that every $t$-query quantum
algorithm gives rise, not merely to a multilinear polynomial, but to a
\textit{block-multilinear} polynomial. \ By this, we mean a degree-$2t$
polynomial $q$\ that takes as input\ $2t$ blocks of $N$ variables each, and
whose every monomial contains exactly one variable from each block. \ If we
repeat the input $X\in\left\{  -1,1\right\}  ^{N}$ across all $2t$%
\ blocks,\ then $q\left(  X,\ldots,X\right)  $\ represents the quantum
algorithm's acceptance probability on $X$. \ However, the key point is that
$q\left(  Y\right)  $\ is bounded in $\left[  -1,1\right]  $\ for \textit{any}
Boolean input $Y\in\left\{  -1,1\right\}  ^{2tN}$.

This leads to a new complexity measure for Boolean functions $f$: the
\textit{block-multilinear approximate degree}
$\widetilde{\operatorname*{bmdeg}}\left(  f\right)  $, which lower-bounds the
quantum query complexity $\operatorname*{Q}\left(  f\right)  $ just as
$\widetilde{\deg}\left(  f\right)  $\ does, but which might provide a tighter
lower bound in some cases. \ (Indeed, we do not even know whether there is any
asymptotic separation between $\operatorname*{Q}\left(  f\right)  $\ and
$\widetilde{\operatorname*{bmdeg}}\left(  f\right)  $, whereas Ambainis
\cite{ambainis:deg}\ showed an asymptotic separation between
$\operatorname*{Q}\left(  f\right)  $\ and\ $\widetilde{\deg}\left(  f\right)
$.)

Once we have our quantum algorithm's acceptance probability in the form of a
block-multilinear polynomial $q$, the second step is to \textit{preprocess}
$q$, to make it easier to estimate using random sampling. \ The basic problem
is that $q$ might be highly \textquotedblleft unbalanced\textquotedblright:
certain variables might be hugely influential. Such variables are essential to
query, but examining the form of $q$ does not make it obvious which variables
these are. \ To deal with this, we repeatedly perform an operation called
\textquotedblleft variable-splitting,\textquotedblright\ which consists of
identifying an influential variable $x_{i}$, then replacing every occurrence
of $x_{i}$\ in\ $q$ by $\frac{1}{m}\left(  x_{i,1}+\cdots+x_{i,m}\right)  $,
where $x_{i,1},\ldots,x_{i,m}$\ are newly-created variables set equal to
$x_{i}$. \ The point of doing this is that each $x_{i,j}$\ will be less
influential in $q$ than $x_{i}$\ itself was, thereby yielding a more balanced
polynomial. \ We show that variable-splitting can achieve the desired balance
by introducing at most $\exp\left(  t\right)  \cdot O\left(  N\right)  $ new
variables, which is linear in $N$ for constant $t$.

Once we have a balanced polynomial $q$, the last step is to give a
query-efficient randomized algorithm to estimate its value. \ Our algorithm is
the simplest one imaginable: we simply choose $O(N^{1-1/2t})$\ variables
uniformly at random, query them, then form an estimate $\widetilde{q}$\ of
$q$\ by summing only those monomials all of whose variables were queried.
\ The hard part is to prove that this estimator \textit{works}---i.e., that
its variance is bounded. \ The proof of this makes heavy use of the balancedness 
property that was ensured by the preprocessing step.

Examining our estimation algorithm, an obvious question is whether it was
essential that $q$\ be block-multilinear, or whether the algorithm could be
extended to \textit{all} bounded low-degree polynomials. \ In Appendix
\ref{POLY}, we take a first step toward answering that question, by giving an
$O(\sqrt{N})$-query randomized algorithm to estimate any bounded degree-$2$
polynomial in $N$ Boolean variables. \ Once we drop block-multilinearity, our
variable-splitting procedure no longer works, so we rely instead on
Fourier-analytic results of Dinur et al.\ \cite{dfko} to identify influential
variables which we then split.

\subsubsection{Other Results\label{OTHERTECH}}

$\mathsf{BQP}$\textbf{-Completeness.} \ The proof that the $k$-fold
\textsc{Forrelation}\ problem is $\mathsf{P{}romiseBQP}$-complete is simple,
once one has the main idea. \ The sum that defines $k$-fold \textsc{Forrelation}
is, itself, an output amplitude for a particular kind of quantum circuit,
which consists entirely of Hadamard and $f$-phase gates (i.e., gates that map
$\left\vert x\right\rangle $\ to $\left(  -1\right)  ^{f\left(  x\right)
}\left\vert x\right\rangle $ for some Boolean function $f$). \ Since the
Hadamard and CCPHASE gates (corresponding to $f\left(  x,y,z\right)  =xyz$)
are known to be universal for quantum computation, one might think that our
work is done. \ The difficulty is that the quantum circuit for $k$-fold
\textsc{Forrelation}\ contains a Hadamard gate on every qubit, between every
pair of $f$-phase gates, \textit{whether we wanted Hadamards there or not}.
\ Thus, if we want to encode an arbitrary quantum circuit, then we need some
way of \textit{canceling} unwanted Hadamards, while leaving the wanted ones.
\ We achieve this via a gadget construction.

\textbf{Separation for Sampling Problems.} \ To achieve a $1$\ versus
$\Omega(N/\log N)$\ quantum/classical query complexity separation for a
sampling problem, we consider \textsc{Fourier Sampling}: the problem, given
oracle access to a Boolean function $f:\left\{  0,1\right\}  ^{n}%
\rightarrow\left\{  -1,1\right\}  $, of outputting a string $y\in\left\{
0,1\right\}  ^{n}$\ with probability approximately equal to $\hat{f}\left(
y\right)  ^{2}$. \ This problem is trivially solvable with $1$ quantum query,
but proving a $\Omega(N/\log N)$\ classical lower bound takes a few pages of
work. \ The basic idea is to concentrate on the probability of a
\textit{single} string---say, $y=0^{n}$---being output. \ Using a binomial
calculation, we show that this probability cannot depend on $f$'s truth table
in the appropriate way unless $\Omega(N/\log N)$\ function values are queried.

\textbf{Lower Bound for }$\mathbf{k}$\textbf{-Fold }\textsc{Forrelation}.
\ Once we have a $\Omega(\frac{\sqrt{N}}{\log N})$ randomized lower bound for
\textsc{Forrelation}, one might think it would be trivial to prove the same
lower bound for $k$-fold \textsc{Forrelation}: just reduce one to the other!
\ However, \textsc{Forrelation}\ does not embed in any clear way as a
subproblem of $k$-fold \textsc{Forrelation}. \ On the other hand, given an
instance of $k$-fold \textsc{Forrelation}, suppose we \textquotedblleft give
away for free\textquotedblright\ the complete truth tables of all but two of
the functions. \ In that case, we show that the induced subproblem on the
remaining two functions is an instance of \textsc{Gaussian Distinguishing}\ to
which, with high probability, our lower bound techniques can be applied.
\ Pursuing this idea leads to our $\Omega(\frac{\sqrt{N}}{\log^{7/2}N}%
)$\ lower bound on the randomized query complexity of $k$-fold
\textsc{Forrelation}, for all $k\geq2$.

\textbf{Property-Testing Separation.} \ To turn our quantum versus classical
separation for the \textsc{Forrelation}\ problem into a property-testing
separation, we need to prove two interesting statements. \ The first is that
function pairs $\left\langle f,g\right\rangle $ that are far in Hamming
distance from the set of all pairs with low forrelation, actually have high
forrelation. \ The second is that \textquotedblleft generic\textquotedblright%
\ function pairs $\left\langle f,g\right\rangle $\ and $\left\langle
f^{\prime},g^{\prime}\right\rangle $ that have small Hamming distance from one
another, are close in their forrelation values as well. \ In fact, we will
prove both of these statements for the general case of $k$-fold
\textsc{Forrelation}.

\subsection{Discussion\label{DISC}}

To summarize, this paper proves the largest separation between classical and
quantum query complexities yet known, and it also proves that that separation
is in some sense optimal. These results put us in a position to pose an
intriguing open question:

\begin{quotation}
\noindent Among all the problems that admit a superpolynomial quantum
speedup,\ is there any whose classical randomized query complexity is
$\gg\sqrt{N}$?
\end{quotation}

Strikingly, if we look at the known problems with superpolynomial quantum
speedups, for every one of them the classical randomized lower bound seems to
hit a \textquotedblleft ceiling\textquotedblright\ at $\sqrt{N}$. \ Thus,
\textsc{Simon}'\textsc{s Problem} has quantum query complexity $O\left(  \log
N\right)  $\ and randomized query complexity $\widetilde{\Theta}(\sqrt{N})$;
the \textsc{Glued-Trees}\ problem of Childs et al.\ \cite{ccdfgs}\ has quantum
query complexity $\log^{O\left(  1\right)  }(N)$\ and randomized query
complexity $\widetilde{\Theta}(\sqrt{N})$;\footnote{The randomized lower bound
for \textsc{Glued-Trees}\ proved by Childs et al.\ \cite{ccdfgs}\ was only
$\Omega(N^{1/6})$. \ However, Fenner and Zhang \cite{fennerzhang}\ improved
the lower bound to $\Omega(N^{1/3})$; and if we allow a success probability
that is merely (say) $1/3$, rather than exponentially small, then their bound
can be improved further, to $\Omega(\sqrt{N})$. \ In the other direction, we
are indebted to Shalev Ben-David for proving that \textsc{Glued-Trees}\ can be
solved deterministically using only $O(\sqrt{N}\log N)$\ queries (or
$O(\sqrt{N}\log^{2}N)$, if the queries are required to be Boolean). \ For his
proof, see
http://cstheory.stackexchange.com/questions/25279/the-randomized-query-complexity-of-the-conjoined-trees-problem}
and \textsc{Forrelation}\ has quantum query complexity $1$\ and randomized
query complexity $\widetilde{\Theta}(\sqrt{N})$.

If we insist on making the randomized query complexity $\Omega(N^{1/2+c})$,
for some $c>0$, and then try to minimize the quantum query complexity, then
the best thing we know how to do is to take the OR of $N^{2c}$\ independent
instances of \textsc{Forrelation}, each of size $N^{1-2c}$. \ This gives us a
problem whose quantum query complexity is $\Theta(N^{c})$,\footnote{Here the
upper bound comes from combining Grover's algorithm with the
\textsc{Forrelation}\ algorithm: the \textquotedblleft
na\"{\i}ve\textquotedblright\ way of doing this would produce an additional
$\log N$\ factor for error reduction, but it is well-known that that
log\ factor can be eliminated \cite{reichardt:reflections}. \ Meanwhile, the
lower bound comes from the optimality of Grover's algorithm.} and whose
classical randomized query complexity is $\widetilde{\Theta}(N^{1/2+c}%
)$.\footnote{Here the upper bound comes from simply taking the best randomized
\textsc{Forrelation}\ algorithm, which uses $O(\sqrt{N^{1-2c}})$\ queries, and
running it $N^{2c}$\ times, with an additional $\log N$\ factor for error
reduction. \ Meanwhile, the lower bound comes from combining this paper's
$\Omega(\sqrt{N}/\log N)$\ lower bound for \textsc{Forrelation}, with a
general result stating that the randomized query complexity of
$\operatorname*{OR}\left(  f,\ldots,f\right)  $, the $\operatorname*{OR}$\ of
$k$ disjoint copies of a function $f$, is $\Omega\left(  k\right)  $ times the
query complexity of a single copy. \ This result can be proved by adapting
ideas from a direct product theorem for\ randomized query complexity given by
Drucker \cite{drucker:dpt}\ (we thank A. Drucker, personal communication).}
\ Of course, this is not an exponential separation.

In this paper, we gave a candidate for a problem that breaks the
\textquotedblleft$\sqrt{N}$ barrier\textquotedblright: namely, $k$-fold
\textsc{Forrelation}. \ Indeed, we conjecture that $k$-fold
\textsc{Forrelation}\ achieves the \textit{optimal} separation for all
$k=O\left(  1\right)  $, requiring $\widetilde{\Omega}\left(  N^{1-1/k}%
\right)  $\ classical randomized queries but only $\left\lceil k/2\right\rceil
$\ quantum queries.\footnote{And perhaps $k$-fold \textsc{Forrelation}%
\ continues to give optimal separations, all the way up to $k=O\left(  \log
N\right)  $.} \ Proving this conjecture is an enticing problem.
\ Unfortunately, $k$-fold \textsc{Forrelation}\ becomes extremely hard to
analyze when $k>2$, because we can no longer view the functions $f_{1}%
,\ldots,f_{k}$\ as confined to a low-dimensional subspace: now we have to view
them as confined to a low-dimensional \textit{manifold}, which is defined by
degree-$\left(  k-1\right)  $ polynomials. \ As such, we can no longer compute
posterior probabilities by simply appealing to the rotational invariance of
the Gaussian measure, which made our lives easier in the $k=2$\ case.
\ Instead we need to calculate integrals over a nonlinear manifold.

Short of proving our conjecture about $k$-fold \textsc{Forrelation}, it would
of course be nice to find \textit{any} partial Boolean function whose quantum
query complexity is $\operatorname*{polylog}N$, and whose randomized query
complexity is $N^{1/2+\Omega\left(  1\right)  }$.

Another problem we leave is to generalize our $O\left(  N^{1-1/k}\right)
$\ randomized estimation algorithm from block-multilinear polynomials to
\textit{arbitrary} bounded polynomials of degree $k$. \ As we said, Appendix
\ref{POLY}\ achieves this in the special case $k=2$. \ Achieving it for
arbitrary $k$ seems likely to require generalizing the machinery of Dinur et
al. \cite{dfko}.

A third problem concerns the notion of block-multilinear approximate degree,
$\widetilde{\operatorname*{bmdeg}}\left(  f\right)  $, that we introduced to
prove Theorem \ref{thm2.5}. \ Is there any asymptotic separation between
$\widetilde{\operatorname*{bmdeg}}\left(  f\right)  $\ and ordinary
approximate degree? \ What about a separation between $\widetilde{\operatorname*{bmdeg}%
}\left(  f\right)  $\ and quantum query complexity?

A fourth, more open-ended problem is whether there are any applications of
\textsc{Forrelation}, in the same sense that factoring and discrete log
provide \textquotedblleft applications\textquotedblright\ of Shor's
period-finding problem. \ Concretely, are there any situations where one has
two efficiently-computable Boolean functions $f,g:\left\{  0,1\right\}
^{n}\rightarrow\left\{  -1,1\right\}  $ (described, for example, by circuits),
one wants to estimate how forrelated they are, \textit{and} the structure of
$f$ and $g$ does not provide a fast classical way to do this?

Here are five other open problems:

\begin{enumerate}
\item[(1)] Can we tighten the lower bound on the randomized query complexity
of \textsc{Forrelation}\ from $\Omega(\frac{\sqrt{N}}{\log N})$\ to
$\Omega(\sqrt{N})$, or give an $O(\frac{\sqrt{N}}{\log N})$\ upper bound?

\item[(2)] Can we generalize our results from Boolean to non-Boolean functions?

\item[(3)] What are the largest possible quantum versus classical query
complexity separations for \textit{sampling} problems? \ Is an $O\left(
1\right)  $\ versus $\Omega(N)$\ separation possible in this case? \ Also,
what separations are possible for search or relation problems? \ (For our
results on these questions, see Appendix \ref{SAMPREL}.)

\item[(4)] While there exists a $1$-query quantum algorithm that solves
\textsc{Forrelation} with bounded error probability, the error probability we
are able to achieve is about $0.4$---more than the customary $1/3$. \ If we
want (say) a $1$ versus $N^{\Omega\left(  1\right)  }$\ quantum versus
classical query complexity separation, then how small can the quantum
algorithm's error be?

\item[(5)] While we show in Appendix \ref{PROP}\ that being \textquotedblleft%
\textit{un}forrelated\textquotedblright---that is, having $\Phi_{f,g}\leq
\frac{1}{100}$---behaves nicely as a property-testing problem, it would be
interesting to show the same for being forrelated.
\end{enumerate}

\section{Acknowledgments}

We are grateful to Ronald de Wolf for early discussions, Andy Drucker for
discussions about the randomized query complexity of $\operatorname*{OR}%
\left(  f,\ldots,f\right)  $, Shalev Ben-David and Sean Hallgren for
discussions about the glued-trees problem, and Saeed Mehraban for a numerical calculation.

\bibliographystyle{plain}
\bibliography{thesis}

\section{Preliminaries\label{PRELIM}}

We assume familiarity with basic concepts of quantum computing, as covered
(for example) in Nielsen and Chuang \cite{nc}. \ We also assume some
familiarity with the model of query or decision-tree complexity; see Buhrman
and de Wolf \cite{bw} for a good survey. \ In this section, we first give a
brief recap of query complexity (in Section \ref{QUERY}), then observe some
properties of the $k$-fold \textsc{Forrelation} problem (in Section
\ref{FORPRELIM}), and finally collect some lemmas about Gram-Schmidt
orthogonalization (in Section \ref{GS}) and Gaussian martingales (in Section
\ref{AZUMA}) that will be important for our randomized lower bound in Section
\ref{GAP}.

\subsection{Query Complexity\label{QUERY}}

Briefly, by the \textit{query complexity} of an algorithm $\mathcal{A}$, we
mean the number of queries that $\mathcal{A}$ makes to its input $z=\left(
z_{1},\ldots,z_{N}\right)  $, maximized over all valid inputs $z$.\footnote{If
we are talking about a partial Boolean function, then a \textquotedblleft
valid\textquotedblright\ input is simply any input that satisfies the
promise.} \ The query complexity of a function $F$\ is then the minimum query
complexity of \textit{any} algorithm $\mathcal{A}$\ (of a specified
type---classical, quantum, etc.) that outputs $F\left(  z\right)  $, with
bounded probability of error, given any valid input $z$.

One slightly unconventional choice that we make is to define \textquotedblleft
bounded probability of error\textquotedblright\ to mean \textquotedblleft
error probability at most $1/2-\varepsilon$, for some constant $\varepsilon
>0$\textquotedblright\ rather than \textquotedblleft error probability at most
$1/3$.\textquotedblright\ \ The reason is that we will be able to design a
$1$-query quantum algorithm that solves the \textsc{Forrelation}\ problem with
error probability $2/5$, but \textit{not} one that solves it with error
probability $1/3$. \ Of course, one can make the error probability as small as
one likes using amplification, but doing so increases the query complexity by
a constant factor.

We assume throughout this paper that the input $z\in\left\{  -1,1\right\}
^{N}$\ is Boolean, and we typically work in the $\left\{  -1,1\right\}
$\ basis for convenience. \ In the classical setting, each query returns a
single bit $z_{i}$, for some index $i\in\left[  N\right]  $ specified by
$\mathcal{A}$. \ In the quantum setting, each query performs a diagonal
unitary transformation%
\[
\left\vert i,w\right\rangle \rightarrow z_{i}\left\vert i,w\right\rangle ,
\]
where $w$ represents \textquotedblleft workspace qubits\textquotedblright%
\ that do not participate in the query.\footnote{For Boolean inputs $z$, this
is well-known to be exactly equivalent to a different definition of a quantum
query, wherein each basis state $\left\vert i,a,w\right\rangle $ gets mapped
to $\left\vert i,a\oplus z_{i},w\right\rangle $. \ Here $a$\ represents a
$1$-qubit \textquotedblleft answer register.\textquotedblright} \ Between two
queries, $\mathcal{A}$\ can apply any unitary transformation it likes that
does not depend on $z$.

In this paper, the input $z=\left(  z_{1},\ldots,z_{N}\right)  $ will
typically consist of the truth tables of one or more Boolean functions: for
example, $f,g:\left\{  0,1\right\}  ^{n}\rightarrow\left\{  -1,1\right\}  $,
or $f_{1},\ldots,f_{k}:\left\{  0,1\right\}  ^{n}\rightarrow\left\{
-1,1\right\}  $. \ Throughout, we use $n$ for the number of input bits that
these Boolean functions take (which roughly corresponds to the number of
\textit{qubits} in a quantum algorithm), and we use $N=2^{n}$\ for the number
of bits being queried in superposition. \ (Strictly speaking, we should set
$N=k\cdot2^{n}$, where $k$ is the number of Boolean functions. \ But this
constant-factor difference will not matter for us.) \ Thus, for the purposes
of query complexity, $N$ is the \textquotedblleft input
size,\textquotedblright\ in terms of which we state our upper and lower bounds.

\subsection{\label{FORPRELIM}\textsc{Forrelation}}

The \textsc{Forrelation}\ and $k$-fold \textsc{Forrelation}\ problems were
defined in Sections \ref{RESULT1}\ and \ref{RESULT3} respectively.
\ Informally, though, one could define $k$-fold \textsc{Forrelation}\ simply
as the problem of simulating the quantum circuit shown in Figure
\ref{kfold}---and in particular, of estimating the amplitude, call it
$\alpha_{0\cdots0}$, with which this circuit returns $\left\vert
0\right\rangle ^{\otimes n}$\ as its output.%
\begin{figure}[ptb]%
\centering
\includegraphics[
trim=1.424685in 5.571248in 1.145084in 0.143669in,
height=0.8882in,
width=2.8764in
]%
{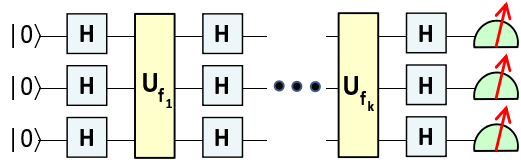}%
\caption{A quantum circuit that can be taken to define the $k$-fold
\textsc{Forrelation}\ problem. \ The circuit consists of $k$\ query
transformations $U_{f_{1}},\ldots,U_{f_{k}}$,\ which map each basis state
$\left\vert x\right\rangle $\ to $f_{i}\left(  x\right)  \left\vert
x\right\rangle $, sandwiched between rounds of Hadamard gates.}%
\label{kfold}%
\end{figure}
Observe that $\alpha_{0\cdots0}$\ is \textit{precisely} the quantity%
\[
\Phi_{f_{1},\ldots,f_{k}}:=\frac{1}{2^{\left(  k+1\right)  n/2}}\sum
_{x_{1},\ldots,x_{k}\in\left\{  0,1\right\}  ^{n}}f_{1}\left(  x_{1}\right)
\left(  -1\right)  ^{x_{1}\cdot x_{2}}f_{2}\left(  x_{2}\right)  \left(
-1\right)  ^{x_{2}\cdot x_{3}}\cdots\left(  -1\right)  ^{x_{k-1}\cdot x_{k}%
}f_{k}\left(  x_{k}\right)
\]
defined in Section \ref{RESULT3}. \ From this, it follows that we can decide
whether $\left\vert \Phi_{f_{1},\ldots,f_{k}}\right\vert \leq\frac{1}{100}%
$\ or $\Phi_{f_{1},\ldots,f_{k}}\geq\frac{3}{5}$\ with bounded probability of
error, and thereby solve the $k$-fold \textsc{Forrelation} problem, by making
only $k$ quantum queries to $f_{1},\ldots,f_{k}$.

Slightly more interesting is that we can improve the quantum query complexity
further, to $\left\lceil k/2\right\rceil $:

\begin{proposition}
\label{inpromisebqp}The $k$-fold \textsc{Forrelation} problem is solvable,
with error probability $0.4$, using $\left\lceil k/2\right\rceil $\ quantum
queries to the functions $f_{1},\ldots,f_{k}:\left\{  0,1\right\}
^{n}\rightarrow\left\{  -1,1\right\}  $, as well as $O\left(  nk\right)  $
quantum gates.
\end{proposition}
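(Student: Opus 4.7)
\textbf{Plan for Proposition \ref{inpromisebqp}.}

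The plan is to realize $\Phi_{f_1,\ldots,f_k}$ as a coherent inner product $\langle B|A\rangle$ between two quantum states, and to extract its real part via a standard Hadamard test. The key savings comes from ``fusing'' the two preparations so that each oracle call is shared between them. First I would verify the circuit identity
\[
\Phi_{f_1,\ldots,f_k} \;=\; \langle 0^n|\, H\, U_{f_k}\, H\, U_{f_{k-1}}\cdots H\, U_{f_1}\, H\,|0^n\rangle,
\]
which follows from expanding each Hadamard in the standard basis: the product of $k+1$ Hadamards contributes the prefactor $1/2^{(k+1)n/2}$, and the $U_{f_i}$ inject the phases $f_i(x_i)$. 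Setting $m = \lceil k/2 \rceil$, and using that $H$ and each $U_{f_i}$ are self-adjoint, I split this amplitude at the midpoint as $\langle B|A\rangle$ with
\[
|A\rangle := U_{f_m} H \cdots U_{f_1} H |0^n\rangle,\qquad
|B\rangle := H U_{f_{m+1}} H \cdots H U_{f_k} H |0^n\rangle,
\]
so that $|A\rangle$ uses $m=\lceil k/2\rceil$ queries and $|B\rangle$ uses $k-m=\lfloor k/2\rfloor$ queries.

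Next I would describe the combined circuit. Initialize an ancilla qubit in $|+\rangle$ and an $n$-qubit register in $|0^n\rangle$, then apply a Hadamard layer to the register. For $i=1,\ldots,m$, perform a \emph{single} combined oracle query that, controlled on the ancilla, acts as $U_{f_i}$ when the ancilla is $|0\rangle$ and as $U_{f_{k+1-i}}$ when it is $|1\rangle$ (or as the identity on the $|1\rangle$-branch at step $m$ when $k$ is odd, since then $|B\rangle$ has one fewer query). Formally this is one query to the Boolean function $h_i\colon \{0,1\}\times\{0,1\}^n\to\{-1,1\}$ with $h_i(0,x)=f_i(x)$ and $h_i(1,x)=f_{k+1-i}(x)$ (or $1$), so the total query count is exactly $\lceil k/2\rceil$. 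Between queries, apply Hadamard layers to the register; if $k$ is even, the $|B\rangle$-circuit has one extra trailing $H$, which I insert as a single controlled-Hadamard layer at the end, costing only $O(n)$ extra gates. After these operations, the joint state is $\tfrac{1}{\sqrt{2}}(|0\rangle|A\rangle + |1\rangle|B\rangle)$; applying $H$ to the ancilla and measuring it yields $0$ with probability $\tfrac{1}{2}(1 + \mathrm{Re}\langle A|B\rangle) = \tfrac{1}{2}(1+\Phi_{f_1,\ldots,f_k})$, since $\Phi$ is real for Boolean $f_i$. The overall gate count is clearly $O(nk)$.

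Finally, for the error probability: in the forrelated case we get $\Pr[0] \geq 4/5$, while in the unforrelated case $\Pr[0] \in [99/200,\,101/200]$. Since the two intervals overlap the value $1/2$, a deterministic threshold cannot beat error $1/2$, but a randomized decision rule can: output ``forrelated'' with some probability $\alpha$ when the measurement is $0$ and ``unforrelated'' otherwise, so that the two error probabilities become $1-\alpha p_1$ and $\alpha p_2$ respectively; any $\alpha \in [3/4,\,0.79]$ makes both at most $0.4$. The only delicate point in the whole argument is the bookkeeping for the controlled-merging step---in particular, keeping track of the parity mismatch between the numbers of Hadamards and queries in $|A\rangle$ versus $|B\rangle$, and verifying that the joint preparation truly costs just one query per merged pair rather than two; everything else is essentially a transcription of the defining circuit of Figure~\ref{kfold}.
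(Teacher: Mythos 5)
Your proposal is correct and follows essentially the same route as the paper's proof: split the defining amplitude at the midpoint, run the two halves on the branches of a control qubit so each oracle call is shared (one query per merged pair), and read off $\tfrac{1}{2}(1+\Phi_{f_1,\ldots,f_k})$ via a Hadamard test, then bias the output with a classical coin to get both error probabilities down to $0.4$ (the paper's ``reject with probability $1/4$'' is exactly your $\alpha=3/4$). The only differences are cosmetic bookkeeping, e.g.\ where the leftover Hadamard layer is placed.
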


\begin{proof}
Let $\operatorname*{H}$\ be the Hadamard gate, and let $U_{f_{i}}$\ be the
query transformation that maps each computational basis state $\left\vert
x\right\rangle $\ to $f_{i}\left(  x\right)  \left\vert x\right\rangle $.
\ Then to improve from $k$ to $\left\lceil k/2\right\rceil $\ queries, we
modify the circuit of Figure \ref{kfold} in the following way.

In addition to the initial state $\left\vert 0\right\rangle ^{\otimes n}$, we
prepare a control qubit in the state $\left\vert +\right\rangle =\frac
{\left\vert 0\right\rangle +\left\vert 1\right\rangle }{\sqrt{2}}$. \ Then,
conditioned on the control qubit being $\left\vert 0\right\rangle $, we apply
the following sequence of operations to the initial state:%
\[
\operatorname*{H}\nolimits^{\otimes n}\rightarrow U_{f_{1}}\rightarrow
\operatorname*{H}\nolimits^{\otimes n}\rightarrow U_{f_{2}}\rightarrow
\cdots\rightarrow\operatorname*{H}\nolimits^{\otimes n}\rightarrow
U_{f_{\left\lceil k/2\right\rceil }}\rightarrow\operatorname*{H}%
\nolimits^{\otimes n}.
\]
Meanwhile, conditioned on the control qubit being $\left\vert 1\right\rangle
$, we apply the following sequence of operations:%
\[
\operatorname*{H}\nolimits^{\otimes n}\rightarrow U_{f_{k}}\rightarrow
\operatorname*{H}\nolimits^{\otimes n}\rightarrow U_{f_{k-1}}\rightarrow
\cdots\rightarrow\operatorname*{H}\nolimits^{\otimes n}\rightarrow
U_{f_{\left\lceil k/2\right\rceil +1}}.
\]
Finally, we measure the control qubit in the $\left\{  \left\vert
+\right\rangle ,\left\vert -\right\rangle \right\}  $\ basis, and
\textquotedblleft accept\textquotedblright\ (i.e., say that $\Phi
_{f_{1},\ldots,f_{k}}$\ is large) if and only if we find it in the state
$\left\vert +\right\rangle $.

It is not hard to see that the probability that this circuit accepts is
exactly%
\[
\frac{1+\Phi_{f_{1},\ldots,f_{k}}}{2}.
\]
Thus, consider an algorithm $\mathcal{A}$\ that rejects with probability
$1/4$, and runs the circuit with probability $3/4$. \ We have%
\[
\Pr\left[  \mathcal{A}\text{ accepts}\right]  =\frac{3}{4}\left(  \frac
{1+\Phi_{f_{1},\ldots,f_{k}}}{2}\right)  .
\]
If $\left\vert \Phi_{f_{1},\ldots,f_{k}}\right\vert \leq\frac{1}{100}%
$\ then\ the above is less than $0.4$, while if $\Phi_{f_{1},\ldots,f_{k}}%
\geq\frac{3}{5}$\ then it is at least $0.6$.
\end{proof}

Purely from the unitarity of the quantum algorithm to compute $\Phi
_{f_{1},\ldots,f_{k}}$, we can derive some interesting facts about
$\Phi_{f_{1},\ldots,f_{k}}$\ itself. \ Most obviously, we have $\left\vert
\Phi_{f_{1},\ldots,f_{k}}\right\vert \leq1$. \ But beyond that, let
$f_{k}^{\left(  x\right)  }\left(  x_{k}\right)  :=f_{k}\left(  x_{k}\right)
\left(  -1\right)  ^{x_{k}\cdot x}$. \ Then%
\begin{equation}
\sum_{x\in\left\{  0,1\right\}  ^{n}}\Phi_{f_{1},\ldots,f_{k-1},f_{k}^{\left(
x\right)  }}^{2}=1; \label{ss}%
\end{equation}
this is just saying that the sum of the squares of the final amplitudes in the
\textsc{Forrelation}\ algorithm must be $1$. \ Since there is nothing
\textquotedblleft special\textquotedblright\ about the outcome $\left\vert
0\cdots0\right\rangle $,\ it follows by symmetry that \
\[
\operatorname*{E}\left[  \Phi_{f_{1},\ldots,f_{k}}^{2}\right]  =\frac{1}{N}%
\]
if $f_{1},\ldots,f_{k}$\ are chosen are uniformly at random.

\subsection{Gram-Schmidt Orthogonalization\label{GS}}

Given an arbitrary collection of linearly-independent unit vectors $\left\vert
v_{1}\right\rangle ,\left\vert v_{2}\right\rangle \ldots$,\ the
\textit{Gram-Schmidt process} produces orthonormal vectors by recursively
projecting each $\left\vert v_{i}\right\rangle $\ onto the orthogonal
complement of the subspace spanned by $\left\vert v_{1}\right\rangle
,\ldots,\left\vert v_{i-1}\right\rangle $, and then normalizing the result.
\ That is:%

\begin{align*}
\left\vert z_{i}\right\rangle  &  =\left\vert v_{i}\right\rangle -\sum
_{j=1}^{i-1}\left\langle v_{i}|w_{j}\right\rangle \left\vert w_{j}%
\right\rangle ,\\
\left\vert w_{i}\right\rangle  &  =\beta_{i}\left\vert z_{i}\right\rangle
\end{align*}
where $\beta_{i}=\frac{1}{\sqrt{\left\langle z_{i}|z_{i}\right\rangle }}$\ is
a normalizing constant. \ Note that $\left\langle z_{i}|z_{i}\right\rangle
\leq1$\ (since $\left\vert z_{i}\right\rangle $\ is the projection of a unit
vector onto a subspace), and hence $\beta_{i}\geq1$.

We will be interested in the behavior of this process when the $\left\vert
v_{i}\right\rangle $'s are already very close to orthogonal. \ We can control
that behavior with the help of the following lemma.

\begin{lemma}
[Gram-Schmidt Lemma]\label{gslem}Let $\left\vert v_{1}\right\rangle
,\ldots,\left\vert v_{t}\right\rangle $\ be unit vectors\ with $\left\vert
\left\langle v_{i}|v_{j}\right\rangle \right\vert \leq\varepsilon$\ for all
$i\neq j$, and suppose $t\leq0.1/\varepsilon$ (so in particular,
$\varepsilon\leq0.1$). \ Let $\left\vert w_{i}\right\rangle $\ and $\beta_{i}%
$\ be as above. \ Then for all $i>j$, we have%
\begin{align*}
\left\vert \left\langle v_{i}|w_{j}\right\rangle \right\vert  &
\leq\varepsilon+2j\varepsilon^{2},\\
\beta_{i}  &  \leq1+2i\varepsilon^{2}.
\end{align*}
So in particular, under the stated hypothesis, $\left\vert \left\langle
v_{i}|w_{j}\right\rangle \right\vert \leq1.2\varepsilon$\ and $\beta_{i}%
\leq1+0.2\varepsilon$.
\end{lemma}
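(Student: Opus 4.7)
The plan is to prove both bounds simultaneously by induction on $i$. The base case $i=1$ is trivial: since $|z_1\rangle = |v_1\rangle$ is already a unit vector, $\beta_1 = 1 \leq 1 + 2\varepsilon^2$, and the first bound is vacuous.

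For the inductive step, assume both conclusions hold for all $i' < i$. First, I would bound $|\langle v_i \mid w_j \rangle|$ for each $j < i$ by expanding
\[
|w_j\rangle = \beta_j\Bigl(|v_j\rangle - \sum_{k=1}^{j-1}\langle v_j \mid w_k\rangle\, |w_k\rangle\Bigr),
\]
so that
\[
\langle v_i \mid w_j\rangle = \beta_j\Bigl(\langle v_i \mid v_j\rangle - \sum_{k=1}^{j-1}\langle v_j \mid w_k\rangle\,\langle v_i \mid w_k\rangle\Bigr).
\]
Applying the inductive hypothesis to the indices $j$ and $k < j$ (both $< i$) gives $|\langle v_j \mid w_k\rangle|, |\langle v_i \mid w_k\rangle| \leq 1.2\varepsilon$ (using $t\varepsilon \leq 0.1$) and $\beta_j \leq 1 + 0.2\varepsilon$. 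The hypothesis $|\langle v_i \mid v_j\rangle| \leq \varepsilon$ together with the cross-term bound $(j-1)(1.2\varepsilon)^2 = 1.44(j-1)\varepsilon^2$ then yields
\[
|\langle v_i \mid w_j\rangle| \leq (1+0.2\varepsilon)\bigl(\varepsilon + 1.44(j-1)\varepsilon^2\bigr) \leq \varepsilon + 2j\varepsilon^2,
\]
where the last inequality just requires verifying that $1.44(j-1) + 0.2 + O(\varepsilon) \leq 2j$, which holds comfortably in the given parameter range.

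Next I would bound $\beta_i$. Since $\{|w_1\rangle,\dots,|w_{i-1}\rangle\}$ is orthonormal by construction of Gram--Schmidt, the coefficients $c_j := \langle v_i \mid w_j\rangle$ satisfy
\[
\langle z_i \mid z_i\rangle = \langle v_i \mid v_i\rangle - \sum_{j=1}^{i-1} c_j^2 = 1 - \sum_{j=1}^{i-1} c_j^2.
\]
The bounds just established give $\sum_j c_j^2 \leq (i-1)(1.2\varepsilon)^2 = 1.44(i-1)\varepsilon^2$, which is small (at most $0.0144$) under the hypothesis $t\varepsilon \leq 0.1$. Using the elementary estimate $1/\sqrt{1-x} \leq 1+x$ valid for $x$ in this range, I obtain
\[
\beta_i = \frac{1}{\sqrt{1-\sum_j c_j^2}} \leq 1 + 1.44(i-1)\varepsilon^2 \leq 1 + 2i\varepsilon^2,
\]
completing the induction.

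The proof is essentially a careful bookkeeping exercise, so the main obstacle is just getting the constants to line up: one needs to verify that the ``slack'' factor $1.44$ introduced by squaring $1.2\varepsilon$ is absorbed by the coefficient $2$ in the target bounds $2j\varepsilon^2$ and $2i\varepsilon^2$, and that the condition $t \leq 0.1/\varepsilon$ propagates through each step to keep all intermediate quantities (in particular $\sum_j c_j^2$) in the regime where the linear approximation $1/\sqrt{1-x}\leq 1+x$ is valid. Nothing deeper than orthonormality of the $|w_j\rangle$'s and the near-orthogonality hypothesis on the $|v_i\rangle$'s is needed.
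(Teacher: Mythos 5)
Your proof is correct and takes essentially the same route as the paper's: a joint induction controlling $\lvert\langle v_i|w_j\rangle\rvert$ and the normalization factor, using the recursive expansion of $|w_j\rangle$ for the first bound and the orthonormality identity $\langle z_i|z_i\rangle = 1-\sum_{j<i}\langle v_i|w_j\rangle^2$ for the second; your constants all check out. One bookkeeping caveat: the term $\lvert\langle v_i|w_k\rangle\rvert$ with $k<j$ is not covered by your stated hypothesis ``both conclusions hold for all $i'<i$,'' since its first index is the current $i$; you need to run the induction over the pairs $(i,j)$ in lexicographic order (equivalently, an inner induction on $j$ within each $i$), which is exactly how the paper organizes it, and with that ordering your argument goes through as written.
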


\begin{proof}
We will do an induction on ordered pairs $\left(  i,j\right)  $, in the order
$\left(  2,1\right)  ,\left(  3,1\right)  ,\left(  3,2\right)  ,\left(
4,1\right)  ,\ldots$, with two induction hypotheses. \ Here are the
hypotheses: for all $i>j$,%
\begin{align*}
\left\vert \left\langle v_{i}|w_{j}\right\rangle \right\vert  &
\leq\varepsilon+Aj\varepsilon^{2},\\
1-\left\langle z_{i}|z_{i}\right\rangle  &  \leq Bi\varepsilon^{2}.
\end{align*}
for some constants $A,B$\ to be determined later.

For the base case ($i=2$ and $j=1$), we have $\left\vert \left\langle
v_{2}|w_{1}\right\rangle \right\vert =\left\vert \left\langle v_{2}%
|v_{1}\right\rangle \right\vert \leq\varepsilon$\ and%
\begin{align*}
\left\langle z_{2}|z_{2}\right\rangle  &  =\left(  \left\langle v_{2}%
\right\vert -\left\langle v_{1}|v_{2}\right\rangle \left\langle v_{1}%
\right\vert \right)  \left(  \left\vert v_{2}\right\rangle -\left\langle
v_{1}|v_{2}\right\rangle \left\vert v_{1}\right\rangle \right) \\
&  =1-\left\langle v_{1}|v_{2}\right\rangle ^{2}\\
&  \geq1-\varepsilon^{2}.
\end{align*}
For the induction step: first,%
\begin{align*}
\left\langle v_{i}|w_{j}\right\rangle  &  =\left\langle v_{i}\right\vert
\beta_{j}\left(  \left\vert v_{j}\right\rangle -\sum_{k=1}^{j-1}\left\langle
v_{j}|w_{k}\right\rangle \left\vert w_{k}\right\rangle \right) \\
&  =\frac{1}{\sqrt{\left\langle z_{j}|z_{j}\right\rangle }}\left(
\left\langle v_{i}|v_{j}\right\rangle +\sum_{k=1}^{j-1}\left\langle
v_{i}|w_{k}\right\rangle \left\langle v_{j}|w_{k}\right\rangle \right)  .
\end{align*}
So%
\begin{align*}
\left\vert \left\langle v_{i}|w_{j}\right\rangle \right\vert  &  \leq\frac
{1}{\sqrt{1-Bj\varepsilon^{2}}}\left(  \varepsilon+\sum_{k=1}^{j-1}\left(
\varepsilon+Ak\varepsilon^{2}\right)  ^{2}\right) \\
&  \leq\frac{1}{1-Bj\varepsilon^{2}}\left(  \varepsilon+j\varepsilon
^{2}+Aj^{2}\varepsilon^{3}+\frac{A^{2}j^{3}\varepsilon^{4}}{3}\right) \\
&  \leq\left(  1+\frac{B}{1-0.01B}j\varepsilon^{2}\right)  \left(
\varepsilon+\left(  1+0.1A+\frac{0.01A^{2}}{3}\right)  j\varepsilon^{2}\right)
\\
&  \leq\varepsilon+\left[  \left(  1+0.1A+\frac{0.01A^{2}}{3}\right)  \left(
1+\frac{0.01B}{1-0.01B}\right)  +\frac{0.1B}{1-0.01B}\right]  j\varepsilon
^{2},
\end{align*}
where we repeatedly made the substitutions $\varepsilon\leq0.1$ and
$j\varepsilon\leq0.1$\ to produce multiples of $j\varepsilon^{2}$ in the
numerator, and get rid of $j$ and $\varepsilon$ in the denominator.\ \ Second,%
\begin{align*}
\left\langle z_{i}|z_{i}\right\rangle  &  =\left(  \left\langle v_{i}%
\right\vert -\sum_{j=1}^{i-1}\left\langle v_{i}|w_{j}\right\rangle
\left\langle w_{j}\right\vert \right)  \left(  \left\vert v_{i}\right\rangle
-\sum_{j=1}^{i-1}\left\langle v_{i}|w_{j}\right\rangle \left\vert
w_{j}\right\rangle \right) \\
&  =\left\langle v_{i}|v_{i}\right\rangle -\sum_{j=1}^{i-1}\left\langle
v_{i}|w_{j}\right\rangle ^{2}+\sum_{j\neq k\in\left[  i-1\right]
}\left\langle v_{i}|w_{j}\right\rangle \left\langle v_{i}|w_{k}\right\rangle
\left\langle w_{j}|w_{k}\right\rangle \\
&  =1-\sum_{j=1}^{i-1}\left\langle v_{i}|w_{j}\right\rangle ^{2}%
\end{align*}
where we used the fact that $\left\langle w_{j}|w_{k}\right\rangle =0$. \ So%
\begin{align*}
1-\left\langle z_{i}|z_{i}\right\rangle  &  \leq\sum_{j=1}^{i-1}\left(
\varepsilon+Aj\varepsilon^{2}\right)  ^{2}\\
&  \leq i\varepsilon^{2}+Ai^{2}\varepsilon^{3}+\frac{A^{2}i^{3}\varepsilon
^{4}}{3}\\
&  \leq\left(  1+0.1A+\frac{0.01A^{2}}{3}\right)  i\varepsilon^{2}.
\end{align*}
If we now make the choice (say) $A=2$\ and $B=1.5$, we find that both parts of
the induction are satisfied:%
\begin{align*}
\left\vert \left\langle v_{i}|w_{j}\right\rangle \right\vert  &
\leq\varepsilon+1.39j\varepsilon^{2}\leq\varepsilon+Aj\varepsilon^{2},\\
1-\left\langle z_{i}|z_{i}\right\rangle  &  \leq1.22i\varepsilon^{2}\leq
Bi\varepsilon^{2}.
\end{align*}
Furthermore, we now have the lemma, since%
\[
\beta_{i}=\frac{1}{\sqrt{\left\langle z_{i}|z_{i}\right\rangle }}\leq\frac
{1}{\left\langle z_{i}|z_{i}\right\rangle }=1+\frac{1-\left\langle z_{i}%
|z_{i}\right\rangle }{\left\langle z_{i}|z_{i}\right\rangle }\leq
1+\frac{Bi\varepsilon^{2}}{1-Bi\varepsilon^{2}}\leq1+\frac{Bi\varepsilon^{2}%
}{1-0.01B}\leq1+2i\varepsilon^{2}.
\]

\end{proof}

\subsection{Gaussian Azuma's Inequality\label{AZUMA}}

Azuma's inequality is a well-known generalization of the Chernoff/Hoeffing
tail bound to the case of martingales with bounded differences. \ We will need
a generalization of Azuma's inequality to martingale difference sequences in
which each term is Gaussian (and therefore, unbounded). \ Fortunately, Shamir
\cite[Theorem 2]{ohadshamir} recently proved a useful such generalization.
\ We now state Shamir's bound, in a slightly different form than in
\cite{ohadshamir} (but easily seen to be equivalent).

\begin{lemma}
[Gaussian Azuma's Inequality \cite{ohadshamir}]\label{azumagauss}Suppose
$x_{1},x_{2},\ldots$\ form a martingale difference sequence, in the sense that
$\operatorname{E}\left[  x_{i}|x_{1},\ldots,x_{i-1}\right]  =0$. \ Suppose
further that, conditioned on its predecessors, $x_{i}$\ is always
\textquotedblleft dominated by an $\mathcal{N}\left(  0,\sigma^{2}\right)
$\ Gaussian,\textquotedblright\ in the sense that $\Pr\left[  \left\vert
x_{i}\right\vert >\sigma B\right]  <\exp\left(  -B^{2}/2\right)  $ for all
$B$. \ Then%
\[
\Pr\left[  \left\vert x_{1}+\cdots+x_{t}\right\vert >c\sigma\sqrt{t}\right]
<2\exp\left(  -\frac{c^{2}}{56}\right)  .
\]

\end{lemma}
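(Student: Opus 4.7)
The plan is to follow the classical Chernoff--Cram\'er route to concentration, adapted so that the ``bounded difference'' hypothesis of standard Azuma is replaced by the given sub-Gaussian tail condition. The proof reduces to three steps: a one-step conditional moment-generating-function bound, an iteration via the tower property, and an exponential Markov inequality.

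First I would establish the following conditional MGF estimate: there exists an absolute constant $C>0$ such that, for every $\lambda \in \mathbb{R}$ and every $i$,
\[
\operatorname{E}\bigl[e^{\lambda x_i} \,\big|\, x_1,\ldots,x_{i-1}\bigr] \;\leq\; \exp\bigl(C \lambda^2 \sigma^2\bigr).
\]
To derive this I would write the MGF via the layer-cake representation, plug the hypothesis $\Pr[|x_i|>\sigma B]<e^{-B^2/2}$ into the resulting tail integrals, and complete the square in the exponent. The martingale assumption $\operatorname{E}[x_i\mid x_1,\ldots,x_{i-1}]=0$ is crucial here: it cancels the linear-in-$\lambda$ contribution so that the bound is purely quadratic in $\lambda$, which is what is needed to obtain a deviation of order $\sigma\sqrt{t}$ rather than $\sigma t$.

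Second, setting $S_t=x_1+\cdots+x_t$ and using the tower property,
\[
\operatorname{E}\bigl[e^{\lambda S_t}\bigr] \;=\; \operatorname{E}\Bigl[e^{\lambda S_{t-1}} \cdot \operatorname{E}\bigl[e^{\lambda x_t}\,\big|\,x_1,\ldots,x_{t-1}\bigr]\Bigr] \;\leq\; e^{C\lambda^2\sigma^2}\,\operatorname{E}\bigl[e^{\lambda S_{t-1}}\bigr],
\]
and induction on $t$ gives $\operatorname{E}[e^{\lambda S_t}]\leq \exp(Ct\lambda^2\sigma^2)$. Then Markov's inequality yields $\Pr[S_t > c\sigma\sqrt{t}] \leq \exp(Ct\lambda^2\sigma^2 - \lambda c\sigma\sqrt{t})$ for every $\lambda>0$; optimizing $\lambda = c/(2C\sigma\sqrt{t})$ produces $\Pr[S_t > c\sigma\sqrt{t}] \leq \exp(-c^2/(4C))$. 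Applying the same argument to $-x_1,\ldots,-x_t$ (which also forms a martingale-difference sequence with Gaussian-dominated tails) and union-bounding gives the factor of $2$ and the two-sided statement; the specific constant $56$ in the denominator arises by tracking $C$ explicitly through step one.

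The main obstacle is step one: extracting a sub-Gaussian MGF bound from a mere tail hypothesis for an \emph{unbounded} random variable. Hoeffding's lemma is unavailable since the variables need not lie in a bounded interval, so one has to integrate the tails by hand and use the zero-mean property to cancel unwanted linear terms. Once this lemma is in place, steps two and three are boilerplate. Since Shamir's Theorem~2 in \cite{ohadshamir} already carries out exactly this calculation with the stated numerical constants, the expeditious route---and the one the excerpt signals---is simply to verify that our formulation agrees with Shamir's after an appropriate rescaling.
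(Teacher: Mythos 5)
Your proposal is sound, but note that the paper itself offers no proof of this lemma at all: it simply restates Shamir's Theorem~2 from \cite{ohadshamir} and remarks that the reformulation is ``easily seen to be equivalent'' after rescaling (with $b=1$ and $c=1/(2\sigma^2)$ in Shamir's parametrization, which is where the constant $56$ comes from). So the bulk of your write-up---the conditional MGF bound, the tower-property iteration, the Chernoff optimization, and the two-sided union bound---is a self-contained derivation that the paper delegates entirely to the citation; your closing sentence about matching formulations with Shamir is in fact the whole of the paper's argument. Your sketch is correct in outline: the only step with real content is the one-step lemma that a conditionally zero-mean variable with $\Pr[|x_i|>\sigma B]<e^{-B^2/2}$ satisfies $\operatorname{E}[e^{\lambda x_i}\mid x_1,\ldots,x_{i-1}]\leq e^{C\lambda^2\sigma^2}$, and this is the standard sub-Gaussian MGF lemma, though the mechanism is slightly more delicate than ``plug tails into the layer cake and complete the square''---a naive layer-cake bound on $\operatorname{E}[e^{\lambda X}]$ leaves a linear-in-$\lambda$ term, and one typically expands $e^{\lambda x}-1-\lambda x$ (or bounds moments via the tail) and uses the zero mean to discard the linear term, exactly as you indicate. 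As for the constant: your route gives $2\exp(-c^2/(4C))$, so any $C\leq 14$ recovers the stated bound, which standard estimates comfortably provide; you need not reproduce Shamir's exact bookkeeping, only some explicit admissible $C$.
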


Note, in particular, that if the $x_{i}$'s themselves are $\mathcal{N}\left(
0,\tau_{i}^{2}\right)  $\ Gaussians\ for some (possibly-differing) variances
$\tau_{i}<\sigma$, then the $x_{i}$'s are dominated by $\mathcal{N}\left(
0,\sigma^{2}\right)  $\ Gaussians, so Lemma \ref{azumagauss} can be applied.

\section{Maximal Quantum/Classical Query Complexity Gap\label{GAP}}

In this section, we prove that the randomized query complexity of
\textsc{Forrelation}\ is $\Omega(\frac{\sqrt{N}}{\log N})$. \ Previously,
Aaronson \cite{aar:ph} proved an $\Omega(N^{1/4})$\ randomized lower bound for
this problem. \ We will need a further idea to improve the lower bound to
$\widetilde{\Omega}(N^{1/3})$, a still further idea to improve it to
$\widetilde{\Omega}(N^{2/5})$, and then yet another idea to get all the way up
to $\widetilde{\Omega}(\sqrt{N})$.

Following \cite{aar:ph}, the first step is to replace \textsc{Forrelation}\ by
a \textquotedblleft continuous relaxation\textquotedblright\ of the problem: a
version that is strictly easier (and thus, for which proving a lower bound is
\textit{harder}), but which has rotational symmetry that will be extremely
convenient for us. \ Thus, in \textsc{Real Forrelation}, we are given oracle
access to two real functions $f,g:\left\{  0,1\right\}  ^{n}\rightarrow
\mathbb{R}$. \ As usual, the \textquotedblleft input size\textquotedblright%
\ is $N=2^{n}$. \ We are promised that the pair $\left\langle f,g\right\rangle
$\ was drawn from one of two probability measures:

\begin{enumerate}
\item[(i)] In the uniform measure $\mathcal{U}$, every $f\left(  x\right)
$\ and $g\left(  y\right)  $\ value is an independent $\mathcal{N}\left(
0,1\right)  $\ Gaussian.

\item[(ii)] In the forrelated measure $\mathcal{F}$, every $f\left(  x\right)
$\ value is an independent $\mathcal{N}\left(  0,1\right)  $\ Gaussian, while
every $g\left(  y\right)  $\ value is fixed to%
\[
\hat{f}\left(  y\right)  =\frac{1}{\sqrt{N}}\sum_{x\in\left\{  0,1\right\}
^{n}}\left(  -1\right)  ^{x\cdot y}f\left(  x\right)  .
\]

\end{enumerate}

The problem is to decide, with constant bias, whether (i) or (ii) holds (i.e.,
whether $\left\langle f,g\right\rangle $\ was drawn from $\mathcal{U}$\ or
from $\mathcal{F}$).

We will often treat (the truth tables of) $f$ and $g$ as vectors in
$\mathbb{R}^{N}$. \ Then another way to think about \textsc{Real Forrelation}
is this: in case (i), $f$\ and $g$ are drawn independently from $\mathcal{N}%
\left(  0,1\right)  ^{N}$. \ In case (ii), $f$ and $g$ are \textit{also} both
distributed according to $\mathcal{N}\left(  0,1\right)  ^{N}$, by the
rotational symmetry of the Gaussian measure and the unitarity of the Hadamard
transform. \ But they are no longer independent: they are related by $g=Hf$,
where $H$\ is the $N\times N$\ Hadamard matrix, given by $H_{x,y}=\left(
-1\right)  ^{x\cdot y}/\sqrt{N}$. \ The problem is to detect whether this
correlation is present.

An algorithm for \textsc{Real Forrelation}\ proceeds by querying $f\left(
x\right)  $\ and $g\left(  y\right)  $\ values one at a time, deciding which
$x$\ or $y$\ to query next based on the values seen so far. \ We are
interested in the expected number of queries needed by the best algorithm.

\subsection{Discrete Versus Continuous\label{CONT}}

As a first step, we need to show that a lower bound for \textsc{Real
Forrelation}\ really does imply the same lower bound for the original, Boolean
\textsc{Forrelation}\ problem. \ The key to doing so is the following result,
which calculates the expected value of $\Phi_{F,G}$, for Boolean
\textsc{Forrelation}\ instances $\left\langle F,G\right\rangle $\ that are
produced by \textquotedblleft rounding\textquotedblright\ real instances
$\left\langle f,g\right\rangle $\ in a natural way.

\begin{theorem}
\label{2overpi}Suppose $\left\langle f,g\right\rangle $ are drawn from the
forrelated measure $\mathcal{F}$. \ Define Boolean functions $F,G:\left\{
0,1\right\}  ^{n}\rightarrow\left\{  -1,1\right\}  $\ by $F\left(  x\right)
:=\operatorname*{sgn}\left(  f\left(  x\right)  \right)  $ and $G\left(
y\right)  :=\operatorname*{sgn}\left(  g\left(  y\right)  \right)  $. \ Then%
\[
\operatorname{E}_{f,g\sim\mathcal{F}}\left[  \Phi_{F,G}\right]  =\frac{2}{\pi
}\pm O\left(  \frac{\log N}{N}\right)  .
\]

\end{theorem}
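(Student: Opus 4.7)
The plan is to swap sum and expectation, use the joint Gaussianity of $(f(x),g(y))$ under $\mathcal{F}$, and invoke Sheppard's arcsine formula.

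First I would expand using the definition of $\Phi_{F,G}$:
\[
\operatorname{E}_{f,g\sim\mathcal{F}}[\Phi_{F,G}] = \frac{1}{N\sqrt{N}}\sum_{x,y\in\{0,1\}^{n}}(-1)^{x\cdot y}\,\operatorname{E}\!\left[\operatorname{sgn}(f(x))\,\operatorname{sgn}(g(y))\right].
\]
Under $\mathcal{F}$, we have $g=Hf$ with $f\sim\mathcal{N}(0,1)^{N}$, so $g(y)=\tfrac{1}{\sqrt{N}}\sum_{x'}(-1)^{x'\cdot y}f(x')$. Thus $f(x)$ and $g(y)$ are both standard Gaussians (the latter because $H$ is orthogonal), and the only nonzero contribution to their covariance comes from the $x'=x$ term:
\[
\operatorname{Cov}(f(x),g(y)) = \frac{1}{\sqrt{N}}\sum_{x'}(-1)^{x'\cdot y}\operatorname{E}[f(x)f(x')] = \frac{(-1)^{x\cdot y}}{\sqrt{N}}.
\]
Hence $(f(x),g(y))$ is jointly Gaussian with unit marginals and correlation $\rho_{x,y}=(-1)^{x\cdot y}/\sqrt{N}$.

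Next I would apply Sheppard's arcsine formula, a standard identity stating that for jointly Gaussian $(X,Y)$ with standard normal marginals and correlation $\rho$,
\[
\operatorname{E}[\operatorname{sgn}(X)\operatorname{sgn}(Y)] = \frac{2}{\pi}\arcsin(\rho).
\]
(If I needed to prove it from scratch, the quickest route is geometric: write $X=\cos\theta\,Z_{1}+\sin\theta\,Z_{2}$, $Y=Z_{1}$ with $\cos\theta=\rho$ and $Z_{1},Z_{2}$ independent standard normals, then note that the probability the two half-plane indicators agree is $1-\theta/\pi$ by rotational invariance.) Since $\arcsin$ is odd, $\arcsin(\rho_{x,y})=(-1)^{x\cdot y}\arcsin(1/\sqrt{N})$, so the signs $(-1)^{x\cdot y}$ in the outer sum cancel with the signs produced by $\arcsin$:
\[
\operatorname{E}_{f,g\sim\mathcal{F}}[\Phi_{F,G}] = \frac{N^{2}}{N\sqrt{N}}\cdot\frac{2}{\pi}\arcsin\!\left(\frac{1}{\sqrt{N}}\right) = \sqrt{N}\cdot\frac{2}{\pi}\arcsin\!\left(\frac{1}{\sqrt{N}}\right).
\]

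Finally, using the Taylor expansion $\arcsin(z)=z+\tfrac{z^{3}}{6}+O(z^{5})$, we get $\sqrt{N}\arcsin(1/\sqrt{N}) = 1+O(1/N)$, which yields $\operatorname{E}[\Phi_{F,G}] = \tfrac{2}{\pi}+O(1/N)$, well within the claimed $O(\log N/N)$ tolerance. There is no genuine obstacle here; the only mild technical points are (a) verifying that the sign cancellation is exact so that no off-diagonal Fourier phases survive, and (b) justifying Sheppard's formula. The looser $\log N/N$ bound stated by the authors presumably reflects a coarser route (perhaps a direct Hermite or tail estimate that avoids the closed-form $\arcsin$), but the clean covariance computation above already gives the better rate $O(1/N)$.
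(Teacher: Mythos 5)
Your proof is correct, but it takes a genuinely different route from the paper. You observe that under $\mathcal{F}$ the pair $(f(x),g(y))$ is exactly bivariate Gaussian with unit marginals and correlation $\rho_{x,y}=(-1)^{x\cdot y}/\sqrt{N}$, and then invoke Sheppard's arcsine law $\operatorname{E}[\operatorname{sgn}(X)\operatorname{sgn}(Y)]=\tfrac{2}{\pi}\arcsin\rho$, so that the oddness of $\arcsin$ makes the phases $(-1)^{x\cdot y}$ cancel exactly and the whole expectation collapses to the closed form $\sqrt{N}\cdot\tfrac{2}{\pi}\arcsin(1/\sqrt{N})=\tfrac{2}{\pi}+O(1/N)$. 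The paper instead avoids the arcsine law: it splits $g(y)$ into the single term $H_{x,y}v_x$ plus the independent Gaussian $Z=\sum_{x'\neq x}H_{x',y}v_{x'}$, notes that $\operatorname{E}[F(x)(-1)^{x\cdot y}\operatorname{sgn}(Z)]=0$ and that adding back $H_{x,y}v_x$ can only flip the sign favorably, reducing the per-pair expectation to $2\Pr[G(y)\neq G'(y)]$, and then sandwiches that probability between explicit Gaussian integrals (truncating at $C=\sqrt{\log N}$ for the lower bound), which is where the $\log N$ in the stated error comes from. What your approach buys is a one-line exact identity and a sharper $O(1/N)$ error; what the paper's buys is self-containedness (only elementary tail/integral estimates, no appeal to the quadrant-probability formula) and a template that generalizes directly to the $k$-fold setting of Theorem \ref{2overpi2}, where the multipliers $c_x$ make the analogous conditioning argument go through with only minor changes. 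Both arguments are valid, and your sharper bound is indeed well within the theorem's claimed tolerance.
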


\begin{proof}
By linearity of expectation, it suffices to calculate $\operatorname{E}\left[
F\left(  x\right)  \left(  -1\right)  ^{x\cdot y}G\left(  y\right)  \right]
$\ for some \textit{specific} $x,y$ pair. \ Let $v\in\mathbb{R}^{N}$ be a
vector of independent $\mathcal{N}\left(  0,1\right)  $\ Gaussians, and let
$H$\ be the $N\times N$\ Hadamard matrix (without normalization), with entries
$H_{x,y}=\left(  -1\right)  ^{x\cdot y}$. \ Then we can consider $\left\langle
F,G\right\rangle $\ to have been generated as follows:%
\begin{align*}
F\left(  x\right)   &  =\operatorname*{sgn}\left(  v_{x}\right)  ,\\
G\left(  y\right)   &  =\operatorname*{sgn}(\left(  Hv\right)  _{y}).
\end{align*}
Now, $\left(  Hv\right)  _{y}$\ can be expressed as the sum of $H_{x,y}v_{x}%
$\ with the independent Gaussian random variable%
\[
Z:=\sum_{x^{\prime}\neq x}H_{x^{\prime},y}v_{x^{\prime}}.
\]
Let $G^{\prime}\left(  y\right)  :=\operatorname*{sgn}\left(  Z\right)  $.
\ Then%
\[
\operatorname{E}\left[  F\left(  x\right)  \left(  -1\right)  ^{x\cdot
y}G^{\prime}\left(  y\right)  \right]  =\operatorname{E}\left[
\operatorname*{sgn}\left(  v_{x}\right)  \left(  -1\right)  ^{x\cdot
y}\operatorname*{sgn}\left(  Z\right)  \right]  =0,
\]
since $v_{x}$\ and $Z$ are independent Gaussians both with mean $0$. \ Note
that adding $H_{x,y}v_{x}$\ back to $Z$\ can only flip $Z$ to having the
\textit{same} sign as $\operatorname*{sgn}\left(  v_{x}\right)  \left(
-1\right)  ^{x\cdot y}$, not the opposite sign---and hence can only increase
$F\left(  x\right)  \left(  -1\right)  ^{x\cdot y}G\left(  y\right)  $. \ It
follows that%
\[
\operatorname{E}\left[  F\left(  x\right)  \left(  -1\right)  ^{x\cdot
y}G\left(  y\right)  \right]  =2\Pr\left[  G\left(  y\right)  \neq G^{\prime
}\left(  y\right)  \right]  .
\]

The event $G\left(  y\right)  \neq G^{\prime}\left(  y\right)  $\ occurs if
and only if the following two events both occur:%
\begin{align*}
\left\vert H_{x,y}v_{x}\right\vert  &  >\left\vert Z\right\vert ,\\
\operatorname*{sgn}\left(  H_{x,y}v_{x}\right)   &  \neq\operatorname*{sgn}%
\left(  Z\right)  .
\end{align*}
Since $H_{x,y}\in\left\{  -1,1\right\}  $\ and\ the distribution of $v_{x}%
$\ is symmetric about $0$, we can assume without loss of generality that
$H_{x,y}=1$.

Let $Z\left(  t\right)  $\ be the probability density function of $Z$. \ Then%
\begin{align*}
\Pr\left[  \left\vert H_{x,y}v_{x}\right\vert >\left\vert Z\right\vert \text{
\ and \ }\operatorname*{sgn}\left(  H_{x,y}v_{x}\right)  \neq
\operatorname*{sgn}\left(  Z\right)  \right]   &  =2\int_{t=0}^{\infty
}Z\left(  t\right)  \Pr\left[  H_{x,y}v_{x}>t\right]  dt\\
&  =2\int_{t=0}^{\infty}Z\left(  t\right)  \Pr\left[  v_{x}>t\right]  dt.
\end{align*}
(Here the factor of $2$ appears because we are restricting to the case $Z>0$,
and there is an equal probability coming from the $Z<0$\ case.)

As a linear combination of $N-1$\ independent $\mathcal{N}\left(  0,1\right)
$\ Gaussians, with $\pm1$\ coefficients, $Z$\ has the $\mathcal{N}\left(
0,N-1\right)  $\ Gaussian distribution. \ Therefore%
\begin{align*}
2\int_{t=0}^{\infty}Z\left(  t\right)  \Pr\left[  v_{x}>t\right]  dt  &
=\frac{2}{\sqrt{2\pi\left(  N-1\right)  }}\int_{t=0}^{\infty}\exp\left(
-\frac{t^{2}}{2\left(  N-1\right)  }\right)  \Pr\left[  v_{x}>t\right]  dt\\
&  \leq\frac{2}{\sqrt{2\pi\left(  N-1\right)  }}\int_{t=0}^{\infty}\Pr\left[
v_{x}>t\right]  dt\\
&  =\frac{2}{\sqrt{2\pi\left(  N-1\right)  }}\operatorname{E}\left[
\left\vert v_{x}\right\vert \right] \\
&  =\frac{2}{\pi\sqrt{N-1}}\\
&  \leq\frac{2}{\pi\sqrt{N}}+O\left(  \frac{1}{N^{3/2}}\right)  .
\end{align*}
Here the fourth line follows from $\operatorname{E}\left[  \left\vert
X\right\vert \right]  =\sqrt{2/\pi}$\ when $X$\ is an $\mathcal{N}\left(
0,1\right)  $\ Gaussian. \ In the other direction, for all $C>0$ we have%
\begin{align*}
2\int_{t=0}^{\infty}Z\left(  t\right)  \Pr\left[  v_{x}>t\right]  dt  &
=\frac{2}{\sqrt{2\pi\left(  N-1\right)  }}\int_{t=0}^{\infty}\exp\left(
-\frac{t^{2}}{2\left(  N-1\right)  }\right)  \Pr\left[  v_{x}>t\right]  dt\\
&  \geq\frac{2}{\sqrt{2\pi N}}\int_{t=0}^{C}\exp\left(  -\frac{t^{2}}{2\left(
N-1\right)  }\right)  \Pr\left[  v_{x}>t\right]  dt\\
&  \geq\frac{2}{\sqrt{2\pi N}}\exp\left(  -\frac{C^{2}}{2\left(  N-1\right)
}\right)  \int_{t=0}^{C}\Pr\left[  v_{x}>t\right]  dt\\
&  =\frac{2}{\sqrt{2\pi N}}\exp\left(  -\frac{C^{2}}{2\left(  N-1\right)
}\right)  \left(  \operatorname{E}\left[  \left\vert v_{x}\right\vert \right]
-\frac{1}{\sqrt{2\pi}}\int_{t=C}^{\infty}te^{-t^{2}/2}dt\right) \\
&  =\frac{2}{\sqrt{2\pi N}}\exp\left(  -\frac{C^{2}}{2\left(  N-1\right)
}\right)  \left(  \sqrt{\frac{2}{\pi}}-\frac{e^{-C^{2}/2}}{\sqrt{2\pi}%
}\right)  .
\end{align*}
If we set $C:=\sqrt{\log N}$, then the above is%
\[
\frac{2}{\sqrt{2\pi N}}\left(  1-O\left(  \frac{\log N}{N}\right)  \right)
\left(  \sqrt{\frac{2}{\pi}}-\frac{1}{\sqrt{2\pi}N}\right)  \geq\frac{2}%
{\pi\sqrt{N}}-O\left(  \frac{\log N}{N^{3/2}}\right)  .
\]

Therefore%
\begin{align*}
\operatorname{E}\left[  \Phi_{F,G}\right]   &  =\frac{1}{2^{3n/2}}\sum
_{x,y\in\left\{  0,1\right\}  ^{n}}\operatorname{E}\left[  F\left(  x\right)
\left(  -1\right)  ^{x\cdot y}G\left(  y\right)  \right] \\
&  =\frac{N^{2}}{N^{3/2}}\cdot\left(  \frac{2}{\pi\sqrt{N}}\pm O\left(
\frac{\log N}{N^{3/2}}\right)  \right) \\
&  =\frac{2}{\pi}\pm O\left(  \frac{\log N}{N}\right)  .
\end{align*}

\end{proof}

Earlier, Aaronson \cite[Theorem 9]{aar:ph}\ proved a variant of Theorem
\ref{2overpi}, but with a badly suboptimal constant: he was only able to show
that%
\[
\operatorname{E}\left[  \Phi_{F,G}\right]  \geq\cos\left(  2\arccos\sqrt
{\frac{2}{\pi}}\right)  -o\left(  1\right)  \approx0.273,
\]
compared to the exact value of $2/\pi\approx0.637$\ that we get here. \ As a
result, if we used \cite{aar:ph}, we would only be able to show hardness for
distinguishing $\Phi_{F,G}\approx0$\ from (say) $\Phi_{F,G}\geq\frac{1}{4}$,
rather than $\Phi_{F,G}\approx0$\ from $\Phi_{F,G}\geq\frac{3}{5}$.

We now use Theorem \ref{2overpi}\ to give the desired reduction from
\textsc{Real Forrelation}\ to \textsc{Forrelation}.

\begin{corollary}
\label{realfor}Suppose there exists a $T$-query algorithm that solves
\textsc{Forrelation}\ with bounded error. \ Then there also exists an
$O\left(  T\right)  $-query algorithm that solves \textsc{Real Forrelation}
with bounded error.
\end{corollary}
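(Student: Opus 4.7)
My plan is to reduce \textsc{Real Forrelation} to \textsc{Forrelation} by pointwise rounding. Given oracle access to real-valued $f, g : \{0,1\}^n \to \mathbb{R}$ drawn from either $\mathcal{U}$ or $\mathcal{F}$, define Boolean surrogates $F(x) := \operatorname{sgn}(f(x))$ and $G(y) := \operatorname{sgn}(g(y))$. Since each query to $F$ or $G$ costs exactly one query to $f$ or $g$, simulating a $T$-query \textsc{Forrelation} algorithm on $\langle F, G\rangle$ uses at most $T$ queries to the real oracle. The task thus reduces to showing that $\langle F, G\rangle$ satisfies the appropriate side of the \textsc{Forrelation} promise with constant probability over the draw of $\langle f, g\rangle$.

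For the unforrelated side: when $\langle f, g\rangle \sim \mathcal{U}$ the signs $F, G$ are independent uniformly random Boolean functions, so the identity $\operatorname{E}[\Phi_{F,G}^2] = 1/N$ from Section \ref{FORPRELIM} combined with Markov's inequality gives $\Pr[|\Phi_{F,G}| \geq 1/100] \leq 10^4/N$, so with high probability $\langle F, G\rangle$ satisfies the ``unforrelated'' promise. For the forrelated side: when $\langle f, g\rangle \sim \mathcal{F}$, Theorem \ref{2overpi} gives $\operatorname{E}[\Phi_{F,G}] = 2/\pi - o(1) > 3/5$. Setting $p := \Pr[\Phi_{F,G} \geq 3/5]$, the deterministic bound $|\Phi_{F,G}| \leq 1$ yields
\[
\tfrac{2}{\pi} - o(1) \;=\; \operatorname{E}[\Phi_{F,G}] \;\leq\; p\cdot 1 + (1-p)\cdot\tfrac{3}{5},
\]
so $p \geq \tfrac{5}{\pi} - \tfrac{3}{2} - o(1)$, a positive constant.

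The last step is just amplification. First boost the hypothesized $T$-query \textsc{Forrelation} algorithm $\mathcal{A}$ to success probability $1 - \delta$ on any valid instance for some $\delta < p/3$, by $O(1)$ rounds of majority voting. Then run the amplified $\mathcal{A}$ on $\langle F, G\rangle$. Under $\mathcal{U}$ the induced acceptance probability is at most $\delta + O(1/N)$; under $\mathcal{F}$ it is at least $p(1 - \delta)$. These differ by a positive constant, which gives bounded-error discrimination between $\mathcal{U}$ and $\mathcal{F}$ using $O(T)$ queries to $\langle f, g\rangle$.

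The only delicate point is the forrelated side: we only control $\operatorname{E}[\Phi_{F,G}]$, not each realization, yet must conclude $\Phi_{F,G} \geq 3/5$ with probability bounded away from $0$. This is precisely where the sharp constant $2/\pi \approx 0.637$ in Theorem \ref{2overpi} is indispensable---the promise threshold $3/5$ lies strictly between Aaronson's earlier value $\approx 0.273$ and $2/\pi$, so without the tighter bound this rounding reduction would collapse and we would have to relax the promise in \textsc{Forrelation}.
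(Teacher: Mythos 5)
Your proposal is correct and follows essentially the same route as the paper: sign-rounding $f,g$ to $F,G$, using $\operatorname{E}[\Phi_{F,G}^2]=1/N$ plus Markov for the uniform case, Theorem \ref{2overpi} plus a reverse-Markov/averaging argument (the paper phrases it as $\Pr[\Phi_{F,G}\geq 2/\pi-\varepsilon]>\varepsilon$, you derive $p\geq 5/\pi-3/2-o(1)$, same idea) for the forrelated case, and then constant amplification with a union bound to get a constant gap in acceptance probabilities. Your closing observation about the necessity of the sharp $2/\pi$ constant matches the paper's own remark comparing with Aaronson's earlier $\approx 0.273$ bound.
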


\begin{proof}
Let $\left\langle f,g\right\rangle $ be an instance of \textsc{Real
Forrelation}. \ Then we will produce an instance $\left\langle
F,G\right\rangle $\ of Boolean \textsc{Forrelation}\ exactly as in Theorem
\ref{2overpi}: that is, we set $F\left(  x\right)  :=\operatorname*{sgn}%
\left(  f\left(  x\right)  \right)  $ for all $x$ and $G\left(  y\right)
:=\operatorname*{sgn}\left(  g\left(  y\right)  \right)  $ for all $y$. \ If
$\left\langle f,g\right\rangle $\ was drawn from the uniform measure
$\mathcal{U}$, then $\operatorname{E}\left[  \Phi_{F,G}^{2}\right]  =\frac
{1}{N}$\ by symmetry. \ So by Markov's inequality,%
\[
\Pr\left[  \left\vert \Phi_{F,G}\right\vert >\frac{1}{100}\right]
<\frac{10000}{N}.
\]
By contrast, if $\left\langle f,g\right\rangle $\ was drawn from the
forrelated measure $\mathcal{F}$, then%
\[
\operatorname{E}\left[  \Phi_{F,G}\right]  \geq\frac{2}{\pi}-o\left(
1\right)
\]
by Theorem \ref{2overpi}. \ By Markov's inequality (and the fact that
$\Phi_{F,G}\leq1$), it follows that for all constants $\varepsilon\in\left(
0,1/2\right)  $,%
\[
\Pr\left[  \Phi_{F,G}\geq\frac{2}{\pi}-\varepsilon\right]  >\varepsilon.
\]
So in particular,%
\[
\Pr\left[  \Phi_{F,G}\geq\frac{3}{5}\right]  >\frac{1}{30}.
\]
Using a constant amount of amplification, we can clearly produce an $O\left(
T\right)  $-query\ algorithm for \textsc{Forrelation}\ that errs with
probability at most (say) $\frac{1}{100}$\ on all $\left\langle
F,G\right\rangle $. \ By the union bound, such an algorithm distinguishes the
case that $\left\langle f,g\right\rangle $\ was drawn from $\mathcal{U}$\ from
the case that $\left\langle f,g\right\rangle $\ was drawn from $\mathcal{F}$
with bias at least%
\[
\frac{1}{30}-\frac{10000}{N}-2\left(  \frac{1}{100}\right)  =\Omega\left(
1\right)  .
\]

\end{proof}

Because of Corollary \ref{realfor}, we see that, to prove a lower bound for
\textsc{Forrelation}, it suffices to prove the same lower bound for
\textsc{Real Forrelation}. \ Furthermore, because the \textsc{Real
Forrelation}\ problem is to distinguish two probability distributions, we can
assume without loss of generality that any algorithm for the latter is deterministic.

\subsection{\label{LBREAL}Lower Bound for \textsc{Real Forrelation}}

We now proceed to a lower bound on the query complexity of \textsc{Real
Forrelation}. \ As a first step, let us recast our problem more abstractly.
\ For convenience, we will use ket notation ($\left\vert v\right\rangle $,
$\left\vert w\right\rangle $, etc.)\ for vectors in $\mathbb{R}^{N}$, even if
the vectors do not represent quantum states and are not even normalized. \ Let
$\left\vert 1\right\rangle ,\ldots,\left\vert N\right\rangle $\ be an
orthonormal basis for $\mathbb{R}^{N}$, and let $\left\vert \hat{\imath
}\right\rangle =H\left\vert i\right\rangle $\ be the Hadamard transform of
$\left\vert i\right\rangle $\ (so that $|\hat{1}\rangle,\ldots,|\hat{N}%
\rangle$\ is also an orthonormal basis).

Then consider the following generalization of \textsc{Real Forrelation}, which
we call \textsc{Gaussian Distinguishing}. \ We are given a finite set
$\mathcal{V}$\ of unit vectors in $\mathbb{R}^{N}$, called \textquotedblleft
test vectors.\textquotedblright\ \ (In our case, $\mathcal{V}$\ happens to
equal $\left\{  \left\vert 1\right\rangle ,\ldots,\left\vert N\right\rangle
,|\hat{1}\rangle,\ldots,|\hat{N}\rangle\right\}  $.) \ In each step, we are
allowed to pick any test vector $\left\vert v\right\rangle \in\mathcal{V}$
that hasn't been picked in previous steps. \ We then \textquotedblleft
query\textquotedblright\ $\left\vert v\right\rangle $, getting back a
real-valued response $a_{v}\in\mathbb{R}$. \ The problem is to distinguish the
following two cases, with constant bias:

\begin{enumerate}
\item[(i)] Each $a_{v}$\ is drawn independently from $\mathcal{N}\left(
0,1\right)  $.

\item[(ii)] Each $a_{v}$\ equals $\left\langle \Psi|v\right\rangle $, where
$\left\vert \Psi\right\rangle \in\mathbb{R}^{N}$\ is a vector drawn from
$\mathcal{N}\left(  0,1\right)  ^{N}$\ that is fixed throughout the algorithm.
\end{enumerate}

We will actually prove a \textit{general} lower bound for \textsc{Gaussian
Distinguishing}, which works whenever $\left\vert \mathcal{V}\right\vert $\ is
not too large, and every pair of vectors in $\mathcal{V}$\ is sufficiently
close to orthogonal. \ Here is our general result:

\begin{theorem}
\label{gdlb}Suppose $\left\vert \mathcal{V}\right\vert \leq M$, and
$\left\vert \left\langle v|w\right\rangle \right\vert \leq\varepsilon$ for all
distinct vectors $\left\vert v\right\rangle ,\left\vert w\right\rangle
\in\mathcal{V}$. \ Then any classical algorithm for \textsc{Gaussian
Distinguishing} must make $\Omega\left(  \frac{1/\varepsilon}{\log
(M/\varepsilon)}\right)  $ queries.
\end{theorem}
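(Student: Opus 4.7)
The plan is to invoke Yao's principle to reduce to a deterministic adaptive algorithm, then bound the total variation distance between the distributions of the transcript $(v_1, a_1, \ldots, v_t, a_t)$ under cases (i) and (ii) via Pinsker by controlling the KL divergence. For each step $i$, I would compute the conditional law of the response $a_i$ given the history: under case (i) it is $\mathcal{N}(0,1)$; under case (ii), rotational invariance of $\mathcal{N}(0, I_N)$ together with Gram-Schmidt orthogonalization of $v_1, \ldots, v_i$ into $w_1, \ldots, w_i$ (writing $v_i = \sum_{j<i}\langle v_i | w_j\rangle w_j + \gamma_i w_i$) shows that $a_i$ conditioned on the Gram-Schmidt-rotated responses $b_1, \ldots, b_{i-1}$ is $\mathcal{N}(\mu_i, \gamma_i^2)$ with $\mu_i = \sum_{j<i}\langle v_i | w_j\rangle b_j$. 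The per-step KL between $\mathcal{N}(0,1)$ and $\mathcal{N}(\mu_i, \gamma_i^2)$ expands as $\tfrac{1}{2}\mu_i^2 + O((1-\gamma_i^2)^2)$.

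Invoking Lemma \ref{gslem}, as long as $t \le 0.1/\varepsilon$ one has $|\langle v_i | w_j\rangle| \le 1.2\varepsilon$ and $1 - \gamma_i^2 = O(i\varepsilon^2)$, so the $(1-\gamma_i^2)^2$ terms contribute only $O(t^3 \varepsilon^4)$ in aggregate, and the whole problem reduces to bounding $\sum_i \mu_i^2$ under case (i) with high probability. For this, each $b_j$ is an explicit linear combination of the independent $\mathcal{N}(0,1)$ responses with off-diagonal coefficients $O(\varepsilon)$ (from Gram-Schmidt) and near-unit diagonal $\beta_j$; applying the Gaussian Azuma inequality (Lemma \ref{azumagauss}) gives that each $b_j$ is dominated by $\mathcal{N}(0, 1 + o(1))$ up to logarithmic factors. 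A second application of Lemma \ref{azumagauss} to $\mu_i = \sum_j \langle v_i | w_j \rangle b_j$, viewed as a martingale difference sequence with step size $O(\varepsilon)$, yields $|\mu_i| = \widetilde{O}(\varepsilon \sqrt{i})$. Because the adaptive algorithm's next query $v_i$ could be \emph{any} vector of $\mathcal{V}$, I must union-bound over $|\mathcal{V}| \le M$; this is precisely what introduces the extra $\log(M/\varepsilon)$ factor in the denominator. Summing gives $\sum_i \mu_i^2 = \widetilde{O}(t^2 \varepsilon^2 \log(M/\varepsilon))$, which is $o(1)$ exactly when $t = o\bigl(\tfrac{1}{\varepsilon \log(M/\varepsilon)}\bigr)$.

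The main obstacle is making the iterated Azuma argument actually converge. A single application of Lemma \ref{azumagauss}, or one that treats all Gram-Schmidt layers together, loses too much and only reproduces the weaker $\widetilde{\Omega}(N^{1/3})$ or $\widetilde{\Omega}(N^{2/5})$ bounds described in the Techniques section. The improvement comes from a \emph{recursive} application: the tight bound on $b_j$ obtained at one layer of the Gram-Schmidt process becomes the input to a tighter bound at the next layer, while the $\log(M/\varepsilon)$ union-bound cost is paid only once per layer rather than compounding. The delicate step is verifying that the second-order corrections produced by Gram-Schmidt are genuinely dominated by the first-order estimate from the previous layer, so that the recursion converges cleanly to the stated $\Omega\bigl(1/(\varepsilon \log(M/\varepsilon))\bigr)$ bound rather than degrading at each step.
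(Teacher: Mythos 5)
Your proposal is correct in outline and follows essentially the same route as the paper's proof: the conditional law of each response under the forrelated measure is obtained from Gram--Schmidt (your $\mu_i$ is the paper's $r_i$, your $\gamma_i$ its $1/\beta_i$), the Gaussian Azuma inequality is applied with a union bound over all $M$ test vectors and all prefixes (the paper's Central Martingale Lemma, which is where $\log(M/\varepsilon)$ enters), and the deviations are controlled recursively layer-by-layer through the Gram--Schmidt expansion, with your KL/Pinsker wrapper playing the role of the paper's direct bound on the log-likelihood ratio $\Delta(D)=\Delta_{\mathcal{U}}(D)-\Delta_{\mathcal{F}}(D)$. One caution on the step you yourself flag as delicate: $\langle v_i|w_j\rangle b_j$ is not literally a martingale difference sequence (each $b_j$ has a history-dependent conditional mean), so, as in the paper, one must unravel $c_j=a_j-\sum_{k<j}\langle v_j|w_k\rangle c_k$ and apply Azuma only to the sums involving the fresh $a_j$'s, absorbing the predictable corrections into a geometric series with ratio $O(t\varepsilon)$ per layer, which yields $|r_i|=O\bigl(\sqrt{i\log(Mt)/N}\bigr)$ and hence the stated bound.
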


In our case (\textsc{Real Forrelation}), we have $M=2N$ and $\varepsilon
=1/\sqrt{N}$, so the lower bound we get is $\Omega\left(  \frac{\sqrt{N}}{\log
N}\right)  $. \ As a remark, the example of \textsc{Real Forrelation}\ shows
that Theorem \ref{gdlb} is tight in its dependence on $1/\varepsilon$. \ One
can also construct an example to show that the theorem's dependence on $M$ is
in some sense needed (if possibly not tight). \ That is, one does \textit{not}
have a $\widetilde{\Omega}\left(  1/\varepsilon\right)  $\ lower bound on
query complexity for arbitrarily large $M$, but at best a $\Omega(\left(
1/\varepsilon\right)  ^{2/3})$\ lower bound.\footnote{\label{ipexample}Here is
the example that shows this: let $\left\vert 1\right\rangle ,\ldots,\left\vert
n\right\rangle $\ be orthogonal unit vectors. \ Then for all $2^{n}$\ strings
$z=z_{1}\cdots z_{n}\in\left\{  -1,1\right\}  ^{n}$, let $\left\vert
w_{z}\right\rangle $\ be a vector such that $\left\langle w_{z}|i\right\rangle
=z_{i}/n^{3/2}$ for all $i\in\left[  n\right]  $, and also such that the
projections of the $\left\vert w_{z}\right\rangle $'s onto the orthogonal
complement of $\left\vert 1\right\rangle ,\ldots,\left\vert n\right\rangle $
are all orthogonal to one another. \ Let $\mathcal{V}=\left\{  \left\vert
1\right\rangle ,\ldots,\left\vert n\right\rangle \right\}  \cup\left\{
\left\vert w_{z}\right\rangle \right\}  _{z\in\left\{  -1,1\right\}  ^{n}}$.
\ Then the inner product between any two distinct vectors in $\mathcal{V}$\ is
upper-bounded by $\varepsilon=1/n^{3/2}$ in absolute value (the inner product
between any two $\left\vert w_{z}\right\rangle $'s is at most $n/(n^{3/2}%
)^{2}=1/n^{2}$). \ On the other hand, here is an algorithm that solves
\textsc{Gaussian Distinguishing}\ using only $O\left(  n\right)
\ll1/\varepsilon$ queries: first query $\left\vert 1\right\rangle
,\ldots,\left\vert n\right\rangle $ to obtain $a_{1},\ldots,a_{n}$. \ Let
$\left\vert \varphi\right\rangle :=a_{1}\left\vert 1\right\rangle
+\cdots+a_{n}\left\vert n\right\rangle $. \ Next, find $n$ distinct vectors
$\left\vert w_{z}\right\rangle $\ that each have inner product $\Theta\left(
n/n^{3/2}\right)  =\Theta\left(  1/\sqrt{n}\right)  $\ with $\left\vert
\varphi\right\rangle $\ (such vectors\ can always be found, so long as
$\left\vert a_{i}\right\vert =\Omega\left(  1\right)  $\ for $\Omega\left(
n\right)  $\ values of $i$), and query all of them, letting $b_{1}%
,\ldots,b_{n}$ be the results. \ In case (i), we have $\operatorname*{E}%
\left[  b_{1}+\cdots+b_{n}\right]  $\ and $\operatorname*{Var}\left[
b_{1}+\cdots+b_{n}\right]  =n$. \ But in case (ii), we have $\operatorname*{E}%
\left[  b_{1}+\cdots+b_{n}\right]  =\Theta\left(  \sqrt{n}\right)  $\ and
$\operatorname*{Var}\left[  b_{1}+\cdots+b_{n}\right]  =O\left(  n\right)  $,
allowing the two cases to be distinguished with constant bias.} \ In the
context of \textsc{Real Forrelation}, this means that, if the only thing we
knew about $\mathcal{V}$\ was that $\left\vert \left\langle v|w\right\rangle
\right\vert \leq1/\sqrt{N}$\ for all distinct $\left\vert v\right\rangle
,\left\vert w\right\rangle \in\mathcal{V}$ (so in particular, we had no upper
bound on $\mathcal{V}$'s cardinality), then we could not hope to prove any
lower bound better than $\Omega(N^{1/3})$.\footnote{In fact one \textit{can}
prove a $\widetilde{\Omega}(N^{1/3})$\ lower bound even under this
restriction---and more generally, in the statement of Theorem \ref{gdlb}, one
can replace the lower bound $\Omega\left(  \frac{1/\varepsilon}{\log
(M/\varepsilon)}\right)  $\ by $\Omega\left(  \frac{\left(  1/\varepsilon
\right)  ^{2/3}}{\left(  \log1/\varepsilon\right)  ^{1/3}}\right)  $,
independent of $M$. \ We will briefly remark on how to do this at the relevant
point in our proof.}

For the remainder of the proof, we will fix $\varepsilon=1/\sqrt{N}$ for
concreteness; but will leave $M$ unfixed. \ Note that $N$\ will \textit{only}
enter into the proof through its relation with $\varepsilon$; the fact that
$N$\ is also the dimensionality of the vectors will be irrelevant for
us.\footnote{By slightly modifying the example from footnote \ref{ipexample}%
---to make the projections of the $\left\vert w_{z}\right\rangle $'s onto the
orthogonal complement of $\left\vert 1\right\rangle ,\ldots,\left\vert
n\right\rangle $ not exactly orthogonal to each other, but merely
approximately orthogonal---one can produce an instance of\ \textsc{Gaussian
Distinguishing} whose classical query complexity is only $O(\left(
1/\varepsilon\right)  ^{2/3})$, and which \textit{also} satisfies $N=O\left(
n^{4}\right)  =O(\left(  1/\varepsilon\right)  ^{8/3})$. \ This is an
exponential improvement in the dimensionality $N$\ compared to footnote
\ref{ipexample}. \ It would be interesting to know whether enforcing, say,
$N=O((1/\varepsilon)^{2})$\ rules out such examples.}

The first question we need to answer is this: suppose an algorithm has queried
test vectors $\left\vert v_{1}\right\rangle ,\ldots,\left\vert v_{t}%
\right\rangle \in\mathcal{V}$, and has gotten back responses $a_{1}%
,\ldots,a_{t}\in\mathbb{R}$. \ Let $D=\left\{  \left(  \left\vert
v_{i}\right\rangle ,a_{i}\right)  \right\}  _{i}$\ represent the data that the
algorithm has seen. \ Then conditioned on $D$, how likely are we to be in case
(i) or case (ii)? \ How much probability measure do $\mathcal{U}$\ and
$\mathcal{F}$\ respectively assign to $D$?

For case (i), the question is easy to answer: the probability measure that
$\mathcal{U}$\ assigns to $D$\ is just the Gaussian one,%
\[
\mu_{\mathcal{U}}\left(  D\right)  =\frac{e^{-\Delta_{\mathcal{U}}\left(
D\right)  /2}}{\left(  2\pi\right)  ^{t/2}},
\]
where%
\[
\Delta_{\mathcal{U}}\left(  D\right)  :=a_{1}^{2}+\cdots+a_{t}^{2}%
\]
is the squared $2$-norm of the vector of responses seen so far. \ For case
(ii), by contrast, we start with $\left\vert \Psi\right\rangle $\ drawn from
$\mathcal{N}\left(  0,1\right)  ^{N}$; then each data point restricts
$\left\vert \Psi\right\rangle $\ to the affine subspace $S_{i}$\ defined by
$\left\langle \Psi|v_{i}\right\rangle =a_{i}$. \ Let $S\left(  D\right)
=S_{1}\cap\cdots\cap S_{t}$ be the intersection of all these affine subspaces.
\ Then the probability measure that $\mathcal{F}$\ assigns to $D$\ is simply
the measure that $\mathcal{N}\left(  0,1\right)  ^{N}$\ assigns to $S\left(
D\right)  $, which in turn (by rotational symmetry) is just the minimum
squared $2$-norm of any point in $S\left(  D\right)  $, scaled by a dimension
factor. \ That is,%
\[
\mu_{\mathcal{F}}\left(  D\right)  =\frac{e^{-\Delta_{\mathcal{F}}\left(
D\right)  /2}}{\left(  2\pi\right)  ^{t/2}},
\]
where%
\[
\Delta_{\mathcal{F}}\left(  D\right)  :=\min_{\left\vert \Phi\right\rangle \in
S\left(  D\right)  }\left\langle \Phi|\Phi\right\rangle .
\]

Putting the two things together, we have%
\[
\frac{\mu_{\mathcal{F}}\left(  D\right)  }{\mu_{\mathcal{U}}\left(  D\right)
}=\exp\left(  \frac{\Delta_{\mathcal{U}}\left(  D\right)  -\Delta
_{\mathcal{F}}\left(  D\right)  }{2}\right)  .
\]
Thus, let%
\[
\Delta\left(  D\right)  :=\Delta_{\mathcal{U}}\left(  D\right)  -\Delta
_{\mathcal{F}}\left(  D\right)  .
\]
Then if we can just show that $\left\vert \Delta\left(  D\right)  \right\vert
$\ remains $o\left(  1\right)  $\ after $t$\ queries, we will have shown that
the algorithm cannot have distinguished case (i)\ from case (ii) with constant
bias after $t$\ queries. \ Thus, upper-bounding $\left\vert \Delta\left(
D\right)  \right\vert $\ will be our focus for the rest of the proof.

\subsection{\label{DELTA}Upper-Bounding $\left\vert \Delta\left(  D\right)
\right\vert $}

As a first observation, we cannot hope to show that $\left\vert \Delta\left(
D\right)  \right\vert $\ remains small with \textit{certainty}. \ Indeed, even
after just $2$\ queries, $\left\vert \Delta\left(  D\right)  \right\vert
$\ could be unboundedly large, if the responses $a_{1}$\ and $a_{2}$\ were far
out in the tails of $\mathcal{N}\left(  0,1\right)  $. \ Thus, our only hope
is to show that, after few enough queries, $\left\vert \Delta\left(  D\right)
\right\vert $\ remains small with \textit{high probability}. \ But do we mean
high probability with respect to $\mathcal{U}$\ or $\mathcal{F}$? \ Crucially,
we claim that the answer doesn't matter. \ To see this, suppose (for example)
that we have $\left\vert \Delta\left(  D\right)  \right\vert =o\left(
1\right)  $\ with probability $1-o\left(  1\right)  $\ over data $D$ drawn
according to $\mathcal{U}$. \ Then with probability\ $1-o\left(  1\right)  $
over $\mathcal{U}$, we have%
\[
\frac{\mu_{\mathcal{F}}\left(  D\right)  }{\mu_{\mathcal{U}}\left(  D\right)
}=\exp\left(  \frac{\Delta\left(  D\right)  }{2}\right)  =\exp\left(  \pm
o\left(  1\right)  \right)  =1\pm o\left(  1\right)  .
\]
It follows that we also have $\left\vert \Delta\left(  D\right)  \right\vert
=o\left(  1\right)  $\ with probability $1-o\left(  1\right)  $\ over data $D$
drawn according to $\mathcal{F}$. \ So for simplicity, we will assume the data
to be drawn according to $\mathcal{U}$.

Let us look more closely at the difference $\Delta\left(  D\right)
=\Delta_{\mathcal{U}}\left(  D\right)  -\Delta_{\mathcal{F}}\left(  D\right)
$. \ The $\Delta_{\mathcal{U}}\left(  D\right)  $\ component is easy to
compute, since it is just $a_{1}^{2}+\cdots+a_{t}^{2}$. \ For the
$\Delta_{\mathcal{F}}\left(  D\right)  $ component, on the other hand, we need
to solve the linear-algebra problem of finding the distance between the affine
subspace $S\left(  D\right)  $\ and the origin. \ We can do this using
Gram-Schmidt orthogonalization (see Section \ref{GS}). \ That is, for each
$i\in\left[  t\right]  $, we define $\left\vert w_{i}\right\rangle $
recursively as the normalized projection of $\left\vert v_{i}\right\rangle
$\ onto the orthogonal complement of the subspace spanned by $\left\vert
w_{1}\right\rangle ,\ldots,\left\vert w_{i-1}\right\rangle $.\ \ We can
express $\left\vert w_{i}\right\rangle $ recursively as%
\[
\left\vert w_{i}\right\rangle =\beta_{i}\left(  \left\vert v_{i}\right\rangle
-\sum_{j=1}^{i-1}\left\langle v_{i}|w_{j}\right\rangle \left\vert
w_{j}\right\rangle \right)  ,
\]
where $\beta_{i}$\ is a normalizing constant. \ Let us also define%
\begin{align*}
b_{i}  &  =\left\langle \Psi|w_{i}\right\rangle \\
&  =\beta_{i}\left(  \left\langle \Psi|v_{i}\right\rangle -\sum_{j=1}%
^{i-1}\left\langle v_{i}|w_{j}\right\rangle \left\langle \Psi|w_{j}%
\right\rangle \right) \\
&  =\beta_{i}\left(  a_{i}-\sum_{j=1}^{i-1}\left\langle v_{i}|w_{j}%
\right\rangle b_{j}\right)  .
\end{align*}
Then we have:%
\begin{align*}
\Delta_{\mathcal{F}}\left(  D\right)   &  =\min_{\left\vert \Phi\right\rangle
\in S\left(  D\right)  }\left\langle \Phi|\Phi\right\rangle \\
&  =\min_{\left\vert \Phi\right\rangle \in S\left(  D\right)  }\sum_{i=1}%
^{t}\left\langle \Phi|w_{i}\right\rangle ^{2}\\
&  =\sum_{i=1}^{t}\left\langle \Psi|w_{i}\right\rangle ^{2}\\
&  =b_{1}^{2}+\cdots+b_{t}^{2}%
\end{align*}
where the third line used the orthogonality of the $\left\vert w_{i}%
\right\rangle $'s.

To simplify matters, let us define a variant of $b_{i}$\ where we omit all the
normalization factors $\beta_{i}$:%
\begin{equation}
c_{i}:=a_{i}-\sum_{j=1}^{i-1}\left\langle v_{i}|w_{j}\right\rangle c_{j}.
\label{cdef}%
\end{equation}
Also, call the data $D$ \textit{well-behaved} if $\left\vert a_{i}\right\vert
\leq\sqrt{2\ln100t}$ for all $i\in\left[  t\right]  $.

\begin{proposition}
\label{wellbehaved}$D$ is well-behaved with probability at least $0.99$ over
$\mathcal{U}$.
\end{proposition}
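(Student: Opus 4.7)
The plan is a one-line union bound, since under $\mathcal{U}$ the responses are genuinely i.i.d.\ standard Gaussians, regardless of the algorithm's adaptive choice of which test vector to query next.

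First I would observe that under the uniform measure $\mathcal{U}$, the response $a_i$ to each query $|v_i\rangle$ is an independent $\mathcal{N}(0,1)$ draw, by definition of case (i); adaptivity of the algorithm does not affect this, since the distribution of $a_v$ is the same $\mathcal{N}(0,1)$ for every possible choice of $v$. Hence the vector $(a_1,\ldots,a_t)$ has exactly the distribution of $t$ i.i.d.\ standard Gaussians, \emph{regardless} of the algorithm's behavior.

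Next I would invoke the standard Gaussian tail bound $\Pr_{X \sim \mathcal{N}(0,1)}[|X| > B] \le e^{-B^2/2}$. Setting $B = \sqrt{2\ln 100t}$ gives $\Pr[|a_i| > B] \le e^{-\ln 100t} = 1/(100t)$ for each fixed $i$.

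Finally I would apply the union bound over $i \in [t]$, yielding
\[
\Pr_{\mathcal{U}}[\exists i : |a_i| > \sqrt{2\ln 100t}] \le t \cdot \frac{1}{100t} = \frac{1}{100},
\]
so $D$ is well-behaved with probability at least $0.99$. There is no real obstacle here; the only subtlety worth spelling out is why adaptivity of the algorithm does not break independence of the $a_i$'s under $\mathcal{U}$, which follows because under case (i) the distribution of the response is $\mathcal{N}(0,1)$ for every test vector.
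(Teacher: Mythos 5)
Your proof is correct and follows essentially the same route as the paper: under $\mathcal{U}$ each response $a_i$ is an independent $\mathcal{N}(0,1)$ Gaussian, so the tail bound $\Pr[|a_i|>\sqrt{2\ln 100t}]\le 1/(100t)$ plus a union bound over $i\in[t]$ gives the claim. Your extra remark about why adaptivity does not disturb the independence is a fine (and harmless) elaboration of what the paper leaves implicit.
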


\begin{proof}
Follows from the union bound, together with the fact that each $a_{i}$\ is an
independent\ $\mathcal{N}\left(  0,1\right)  $\ Gaussian, so%
\[
\Pr\left[  \left\vert a_{i}\right\vert >\sqrt{2\ln100t}\right]  <\frac
{1}{100t}.
\]

\end{proof}

Then we have the following useful lemma.

\begin{lemma}
\label{nobeta}Let $t\leq\sqrt{N}/10$, and suppose $D$ is well-behaved. \ Then
$\left\vert c_{i}-b_{i}\right\vert =O\left(  i\frac{\sqrt{\log N}}{N}\right)
$ for all $i\in\left[  t\right]  $.
\end{lemma}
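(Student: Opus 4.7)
The plan is to prove the bound by (strong) induction on $i$, with inductive hypothesis $|c_j - b_j| \leq K j \sqrt{\log N}/N$ for some absolute constant $K$ to be chosen at the end. The base case $i=1$ is immediate: since $|z_1\rangle = |v_1\rangle$ already has unit norm, $\beta_1 = 1$ and therefore $c_1 = a_1 = b_1$.

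For the inductive step, I would first derive a convenient closed form for $c_i - b_i$. Substituting $c_j = b_j + (c_j - b_j)$ into (\ref{cdef}) and using the identity $a_i - \sum_{j<i} \langle v_i|w_j\rangle b_j = b_i/\beta_i$ (which is just the definition of $b_i$ rearranged) gives the clean decomposition
\[
c_i - b_i \;=\; -\,\frac{\beta_i - 1}{\beta_i}\, b_i \;-\; \sum_{j=1}^{i-1} \langle v_i | w_j\rangle\, (c_j - b_j).
\]
This splits the discrepancy into a ``normalization-mismatch'' contribution (the first term) and a ``propagation'' contribution from previously accumulated errors (the second term).

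To bound these two pieces I would assemble three ingredients. First, Lemma \ref{gslem} applied with $\varepsilon = 1/\sqrt{N}$, whose hypothesis $t \leq 0.1/\varepsilon$ is exactly $t \leq \sqrt{N}/10$, yields $\beta_i - 1 \leq 2i/N$ and $|\langle v_i|w_j\rangle| \leq 1.2/\sqrt{N}$. Second, well-behavedness gives $|a_i| = O(\sqrt{\log N})$, from which a short auxiliary induction on the recursion for $b_j$ gives $|b_j| = O(\sqrt{\log N})$ uniformly for $j \leq t$ (the feedback sum is damped by $(1.2/\sqrt{N}) \cdot t \leq 0.12$). Third, the inductive hypothesis on $|c_j - b_j|$. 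Combining these gives
\[
|c_i - b_i| \;\leq\; \frac{2i}{N}\cdot O(\sqrt{\log N}) \;+\; \frac{1.2}{\sqrt{N}} \sum_{j=1}^{i-1} \frac{K j \sqrt{\log N}}{N} \;\leq\; \bigl(C_0 + 0.06\, K\bigr)\,\frac{i\sqrt{\log N}}{N},
\]
where $C_0$ is an absolute constant; the ``$0.06$'' arises from $\sum_{j<i} j \leq i^2/2$, together with the hypothesis $i \leq t \leq \sqrt{N}/10$ converting one extra factor of $i/\sqrt{N}$ into $1/10$. Choosing $K$ large enough that $C_0 + 0.06\, K \leq K$ (for instance $K = 2C_0$) closes the induction.

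The only delicate point, and the sole place where the assumption $t \leq \sqrt{N}/10$ is genuinely used, is keeping the propagation sum linear in $i$: summed naively, $\sum_{j=1}^{i-1} j = \Theta(i^2)$ threatens to push the bound outside the desired $O(i\sqrt{\log N}/N)$ regime, and the upper bound on $t$ is exactly what allows the inductive constant $K$ to absorb that extra factor of $i$. Everything else is routine bookkeeping on top of Lemma \ref{gslem}.
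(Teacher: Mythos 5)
Your proof is correct and follows essentially the same route as the paper's: Lemma \ref{gslem} with $\varepsilon=1/\sqrt{N}$, the auxiliary bound $\left\vert b_{j}\right\vert =O(\sqrt{\log N})$ from well-behavedness, and an induction in which the hypothesis $t\leq\sqrt{N}/10$ damps the propagated error. Your decomposition $c_{i}-b_{i}=-\frac{\beta_{i}-1}{\beta_{i}}b_{i}-\sum_{j<i}\left\langle v_{i}|w_{j}\right\rangle \left(  c_{j}-b_{j}\right)$ is a slightly cleaner variant of the paper's $\left(  1-\beta_{i}\right)  a_{i}-\sum_{j<i}\left\langle v_{i}|w_{j}\right\rangle \left(  c_{j}-\beta_{i}b_{j}\right)$, and your explicit strong induction with the constant $K$ replaces the paper's rearrangement of the recurrence for $\varepsilon_{i}$, but the substance is identical.
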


\begin{proof}
If $\left\vert a_{i}\right\vert \leq\sqrt{2\ln100t}$\ for all $i\in\left[
t\right]  $, then certainly $\left\vert b_{i}\right\vert =O(\sqrt{\log
t})=O(\sqrt{\log N})$\ for all $i\in\left[  t\right]  $ as well, since%
\begin{align*}
\left\vert b_{i}\right\vert  &  \leq\beta_{i}\left(  \left\vert a_{i}%
\right\vert +\sum_{j=1}^{i-1}\left\vert \left\langle v_{i}|w_{j}\right\rangle
\right\vert \left\vert b_{j}\right\vert \right) \\
&  \leq\left(  1+\frac{0.2}{\sqrt{N}}\right)  \left(  \left\vert
a_{i}\right\vert +\frac{0.2}{\sqrt{N}}\sum_{j=1}^{i-1}\left\vert
b_{j}\right\vert \right) \\
&  =O\left(  \max_{j\in\left[  i\right]  }\left\vert a_{j}\right\vert \right)
,
\end{align*}
where the second line used Lemma \ref{gslem} and the third used $i\leq
t\leq\sqrt{N}/10$, together with induction on $j$. \ Now,%
\begin{align*}
c_{i}-b_{i}  &  =a_{i}-\sum_{j=1}^{i-1}\left\langle v_{i}|w_{j}\right\rangle
c_{j}-\beta_{i}\left(  a_{i}-\sum_{j=1}^{i-1}\left\langle v_{i}|w_{j}%
\right\rangle b_{j}\right) \\
&  =\left(  1-\beta_{i}\right)  a_{i}-\sum_{j=1}^{i-1}\left\langle v_{i}%
|w_{j}\right\rangle \left(  c_{j}-\beta_{i}b_{j}\right)  .
\end{align*}
So%
\begin{align*}
\left\vert c_{i}-b_{i}\right\vert  &  \leq\left(  \beta_{i}-1\right)
\left\vert a_{i}\right\vert +\sum_{j=1}^{i-1}\left\vert \left\langle
v_{i}|w_{j}\right\rangle \right\vert \left(  \left\vert c_{j}-b_{j}\right\vert
+\left(  \beta_{i}-1\right)  \left\vert b_{j}\right\vert \right) \\
&  \leq\frac{2i\left\vert a_{i}\right\vert }{N}+\frac{0.2}{\sqrt{N}}\sum
_{j=1}^{i-1}\left(  \frac{2i\left\vert b_{j}\right\vert }{N}+\left\vert
c_{j}-b_{j}\right\vert \right) \\
&  =O\left(  \frac{i\sqrt{\log N}}{N}+\frac{i^{2}\sqrt{\log N}}{N^{3/2}%
}\right)  +\frac{1}{\sqrt{N}}\sum_{j=1}^{i-1}\left\vert c_{j}-b_{j}\right\vert
\\
&  =O\left(  \frac{i\sqrt{\log N}}{N}\right)  +\frac{1}{\sqrt{N}}\sum
_{j=1}^{i-1}\left\vert c_{j}-b_{j}\right\vert
\end{align*}
where the second line used Lemma \ref{gslem}\ and the last used $i\leq
t\leq\sqrt{N}/10$. \ So, letting $\varepsilon_{i}$\ be an upper bound on
$\left\vert c_{j}-b_{j}\right\vert $ for all $j\leq i$, we have%
\begin{align*}
\varepsilon_{i}  &  =O\left(  \frac{i\sqrt{\log N}}{N}\right)  +\frac{i}%
{\sqrt{N}}\varepsilon_{i-1}\\
&  =O\left(  \frac{i\sqrt{\log N}}{N}\right)  +\frac{i}{\sqrt{N}}%
\varepsilon_{i}.
\end{align*}
Rearranging, we have%
\[
0.9\varepsilon_{i}=O\left(  \frac{i\sqrt{\log N}}{N}\right)
\]
and are done.
\end{proof}

As a first consequence of Lemma \ref{nobeta}, if $D$ is well-behaved, then%
\[
\left\vert c_{i}\right\vert =O\left(  \sqrt{\log t}+i\frac{\sqrt{\log N}}%
{N}\right)  =O(\sqrt{\log t})
\]
for all $i\in\left[  t\right]  $. \ As a more important consequence, let%
\[
\Delta_{\mathcal{F}}^{\prime}\left(  D\right)  :=c_{1}^{2}+\cdots+c_{t}^{2},
\]
and let%
\[
\Delta^{\prime}\left(  D\right)  :=\Delta_{\mathcal{U}}\left(  D\right)
-\Delta_{\mathcal{F}}^{\prime}\left(  D\right)  .
\]
Then we can restrict our attention to upper-bounding $\left\vert
\Delta^{\prime}\left(  D\right)  \right\vert $, rather than the more
complicated $\left\vert \Delta\left(  D\right)  \right\vert $. \ For by Lemma
\ref{nobeta}, if $D$ is well-behaved, then%
\begin{align*}
\left\vert \Delta_{\mathcal{F}}\left(  D\right)  -\Delta_{\mathcal{F}}%
^{\prime}\left(  D\right)  \right\vert  &  =\left\vert \sum_{i=1}^{t}\left(
b_{i}^{2}-c_{i}^{2}\right)  \right\vert \\
&  \leq\sum_{i=1}^{t}\left(  \left\vert b_{i}\right\vert +\left\vert
c_{i}\right\vert \right)  \left\vert c_{i}-b_{i}\right\vert \\
&  =\sum_{i=1}^{t}O\left(  \sqrt{\log N}\cdot i\frac{\sqrt{\log N}}{N}\right)
\\
&  =O\left(  t^{2}\frac{\log N}{N}\right)  .
\end{align*}
So if $\left\vert \Delta^{\prime}\left(  D\right)  \right\vert =o\left(
1\right)  $ and $t=o\left(  \sqrt{\frac{N}{\log N}}\right)  $, then by the
triangle inequality,%
\[
\left\vert \Delta\left(  D\right)  \right\vert \leq\left\vert \Delta^{\prime
}\left(  D\right)  \right\vert +\left\vert \Delta_{\mathcal{F}}\left(
D\right)  -\Delta_{\mathcal{F}}^{\prime}\left(  D\right)  \right\vert
\]
is $o\left(  1\right)  $\ as well. \ Thus, from now on our goal is to
upper-bound $\left\vert \Delta^{\prime}\left(  D\right)  \right\vert $.

Let%
\begin{equation}
r_{i}:=a_{i}-c_{i}=\sum_{j=1}^{i-1}\left\langle v_{i}|w_{j}\right\rangle
c_{j}. \label{ridef}%
\end{equation}
Notice that, if we unravel the recursive definition of $c_{j}$, we find that
$r_{i}$\ is a linear combination of $a_{1},\ldots,a_{i-1}$, with no dependence
on $a_{i}$. \ Though we will not need this for the proof, $r_{i}$\ has an
interesting interpretation, as the \textit{expected} value of $a_{i}$\ after
$a_{1},\ldots,a_{i-1}$\ have been queried but before $a_{i}$\ has been
queried, assuming the data were drawn from the forrelated distribution
$\mathcal{F}$. \ Now,%
\begin{align}
\Delta^{\prime}\left(  D\right)   &  =\Delta_{\mathcal{U}}\left(  D\right)
-\Delta_{\mathcal{F}}^{\prime}\left(  D\right) \nonumber\\
&  =\sum_{i=1}^{t}\left(  a_{i}^{2}-c_{i}^{2}\right) \nonumber\\
&  =\sum_{i=1}^{t}r_{i}\left(  2a_{i}-r_{i}\right)  . \label{rieq}%
\end{align}
As we show in the next lemma, the above means that our problem can in turn be
reduced to upper-bounding the $r_{i}$'s.

\begin{lemma}
\label{azumapart}Suppose $\left\vert r_{i}\right\vert \leq\frac{1}%
{1750\sqrt{t}}$\ for all $i\in\left[  t\right]  $.\ \ Then%
\[
\left\vert \sum_{i=1}^{t}r_{i}a_{i}\right\vert \leq0.01
\]
with probability at least $0.99$\ over the data $D$.
\end{lemma}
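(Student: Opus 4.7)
The plan is to recognize $\sum_{i=1}^t r_i a_i$ as a Gaussian martingale and apply Lemma \ref{azumagauss}. The crucial structural fact is that $r_i$, by its definition in equation \eqref{ridef} and the recursive definition of $c_j$, is a linear combination of $a_1,\ldots,a_{i-1}$ alone, with no dependence on $a_i$ (the coefficients $\langle v_i|w_j\rangle$ depend only on the query choices, which are themselves determined by the earlier $a$-values since we may assume the algorithm is deterministic).

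First I would observe that under $\mathcal{U}$, where $a_1,a_2,\ldots$ are independent $\mathcal{N}(0,1)$ Gaussians, the sequence $x_i := r_i a_i$ is a martingale difference sequence: conditioned on $a_1,\ldots,a_{i-1}$, the value $r_i$ is a fixed constant, and $\mathrm{E}[a_i \mid a_1,\ldots,a_{i-1}] = 0$, so $\mathrm{E}[x_i \mid x_1,\ldots,x_{i-1}] = 0$. Moreover, conditioned on the past, $x_i$ is distributed as $\mathcal{N}(0, r_i^2)$. By hypothesis $|r_i| \le \frac{1}{1750\sqrt{t}}$, so each $x_i$ is (conditionally) dominated by an $\mathcal{N}(0,\sigma^2)$ Gaussian with $\sigma = \frac{1}{1750\sqrt{t}}$, in the precise sense required by Lemma \ref{azumagauss}.

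Now I would apply the Gaussian Azuma inequality with this $\sigma$ and the constant $c = 17.5$. This gives
\[
\Pr\!\left[\,\left|\sum_{i=1}^{t} r_i a_i\right| > 17.5 \cdot \frac{1}{1750\sqrt{t}} \cdot \sqrt{t}\,\right] < 2\exp\!\left(-\frac{17.5^2}{56}\right).
\]
The left-hand threshold is exactly $0.01$, and the right-hand side is $2\exp(-306.25/56) < 2\exp(-5.46) < 0.01$, which is the conclusion.

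There is no real obstacle here — the lemma is essentially a bookkeeping application of Lemma \ref{azumagauss} once one notices the martingale structure. The only point requiring a moment's care is verifying that $r_i$ is independent of $a_i$ (so that $r_i a_i$ is genuinely a martingale difference and not merely mean-zero), which follows from inspecting the recursion \eqref{cdef} and the fact that the Gram–Schmidt vectors $|w_j\rangle$ for $j < i$ and the inner products $\langle v_i|w_j\rangle$ are determined entirely by query choices made before $a_i$ is seen.
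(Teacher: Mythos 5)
Your proof is correct and follows essentially the same route as the paper: identify $r_1a_1,\ldots,r_ta_t$ as a Gaussian martingale difference sequence (using that $r_i$ is determined by $a_1,\ldots,a_{i-1}$ and the query choices) and apply Lemma \ref{azumagauss} with $\sigma=\frac{1}{1750\sqrt{t}}$. The only difference is the choice of constant ($c=17.5$ versus the paper's $c=\sqrt{56\ln 200}\approx 17.2$), and both yield the stated $0.01$ bounds.
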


\begin{proof}
Notice that each $r_{i}a_{i}$\ has an expectation of $0$, even after
conditioning on $a_{1},\ldots,a_{i-1}$. \ This is because, according to the
measure $\mathcal{U}$, each $a_{i}$\ is a \textquotedblleft
fresh\textquotedblright\ $\mathcal{N}\left(  0,1\right)  $\ Gaussian,
uncorrelated with $a_{1},\ldots,a_{i-1}$, whereas $r_{i}$\ is a linear
combination of $a_{1},\ldots,a_{i-1}$ that does not depend on $a_{i}$. \ Thus,
$r_{1}a_{1},\ldots,r_{t}a_{t}$\ forms a martingale difference sequence, in
which, conditioned on its predecessors, each $r_{i}a_{i}$\ is an
$\mathcal{N}\left(  0,r_{i}^{2}\right)  $\ Gaussian, for some $\left\vert
r_{i}\right\vert \leq\frac{1}{1750\sqrt{t}}$. \ Set $\epsilon:=1/1750$. \ Then
by Lemma \ref{azumagauss},%
\begin{align*}
\Pr\left[  \left\vert \sum_{i=1}^{t}r_{i}a_{i}\right\vert >0.01\right]   &
\leq\Pr\left[  \left\vert \sum_{i=1}^{t}r_{i}a_{i}\right\vert >\sqrt{56\ln
200}\frac{\epsilon}{\sqrt{t}}\sqrt{t}\right] \\
&  <2\exp\left(  -\frac{(\sqrt{56\ln200})^{2}}{56}\right) \\
&  =0.01.
\end{align*}

\end{proof}

Thus, suppose $\left\vert r_{i}\right\vert \leq\frac{1}{1750\sqrt{t}}$ for all
$i\in\left[  t\right]  $. \ Then by Lemma \ref{azumapart} and equation
(\ref{rieq}), we have%
\begin{align*}
\left\vert \Delta^{\prime}\left(  D\right)  \right\vert  &  =\left\vert
\sum_{i=1}^{t}r_{i}\left(  2a_{i}-r_{i}\right)  \right\vert \\
&  \leq2\left\vert \sum_{i=1}^{t}r_{i}a_{i}\right\vert +\sum_{i=1}^{t}%
r_{i}^{2}\\
&  \leq0.02+\frac{1}{1750^{2}}%
\end{align*}
with probability at least $0.99$\ over $D$. \ This implies that the algorithm
has not yet succeeded at distinguishing $\mathcal{F}$\ from $\mathcal{U}%
$\ with bias (say) $1/2$. \ So in summary, if we can show that with high
probability, $\left\vert r_{i}\right\vert =O(1/\sqrt{t})$\ for all
$i\in\left[  t\right]  $, then we have shown that the algorithm must make
$\Omega\left(  t\right)  $ queries.

\subsection{\label{RISEC}Upper-Bounding $\left\vert r_{i}\right\vert $}

We now turn to the problem of upper-bounding $\left\vert r_{i}\right\vert $
(with high probability over $D$), for all $i\in\left[  t\right]  $. \ The
better the upper bound on $\left\vert r_{i}\right\vert $ we can achieve, the
better will be our lower bound on $t$. \ To illustrate, it is easy to prove
the following crude bound:

\begin{proposition}
\label{crude}If $D$ is well-behaved and $t<\sqrt{N}/10$, then $\left\vert
r_{i}\right\vert =O\left(  i\sqrt{\frac{\log N}{N}}\right)  $ for all
$i\in\left[  t\right]  $.
\end{proposition}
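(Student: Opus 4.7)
The plan is to bound $|r_i|$ directly via the triangle inequality applied to its defining expression \eqref{ridef}, using the two ingredients that are already in hand: an upper bound on the Gram--Schmidt coefficients $|\langle v_i|w_j\rangle|$, and an upper bound on the $|c_j|$.

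First I would invoke the Gram--Schmidt Lemma (Lemma~\ref{gslem}) with $\varepsilon = 1/\sqrt{N}$. The hypothesis of that lemma requires $t \le 0.1/\varepsilon = \sqrt{N}/10$, which is precisely what we are assuming. The lemma then gives $|\langle v_i | w_j\rangle| \le 1.2/\sqrt{N}$ for all $j < i \le t$. Second, I would use the consequence of Lemma~\ref{nobeta} noted immediately after its proof: provided $D$ is well-behaved (so $|a_k| \le \sqrt{2\ln 100t}$) and $t \le \sqrt{N}/10$, we have $|c_j| = O(\sqrt{\log t}) = O(\sqrt{\log N})$ for all $j \in [t]$.

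Plugging these two bounds into
\[
r_i = \sum_{j=1}^{i-1} \langle v_i | w_j\rangle \, c_j
\]
and applying the triangle inequality gives
\[
|r_i| \;\le\; \sum_{j=1}^{i-1} |\langle v_i|w_j\rangle| \cdot |c_j|
\;\le\; (i-1) \cdot \frac{1.2}{\sqrt{N}} \cdot O\!\bigl(\sqrt{\log N}\bigr)
\;=\; O\!\left(i\sqrt{\frac{\log N}{N}}\right),
\]
which is exactly the claimed bound.

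There is essentially no obstacle here: the proposition is a straightforward composition of the Gram--Schmidt near-orthogonality estimate with the boundedness of the $c_j$'s, and the only thing to check is that the hypothesis $t < \sqrt{N}/10$ is strong enough to apply both Lemma~\ref{gslem} and Lemma~\ref{nobeta}, which it is. The real work (which the authors flag later) will come in sharpening this crude $O(i\sqrt{\log N/N})$ estimate to the $O(1/\sqrt{t})$ type bound needed to feed into Lemma~\ref{azumapart}; that is what the recursive Gaussian Azuma argument is for, but it is not needed for the present proposition.
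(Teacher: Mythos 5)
Your proposal is correct and follows essentially the same route as the paper: bound $|c_j|=O(\sqrt{\log N})$ via the consequence of Lemma~\ref{nobeta} for well-behaved $D$, bound $|\langle v_i|w_j\rangle|=O(1/\sqrt{N})$ via Lemma~\ref{gslem}, and apply the triangle inequality to $r_i=\sum_{j<i}\langle v_i|w_j\rangle c_j$. Nothing is missing.
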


\begin{proof}
We noted before that if $D$\ is well-behaved then $\left\vert c_{i}\right\vert
=O(\sqrt{\log t})=O(\sqrt{\log N})$ for all $i$. \ So by Lemma \ref{gslem},%
\begin{align*}
\left\vert r_{i}\right\vert  &  \leq\sum_{j=1}^{i-1}\left\vert \left\langle
v_{i}|w_{j}\right\rangle \right\vert \left\vert c_{j}\right\vert \\
&  \leq i\cdot\frac{2}{\sqrt{N}}\cdot O(\sqrt{\log N}).
\end{align*}

\end{proof}

Setting $t\sqrt{\frac{\log N}{N}}=1/\sqrt{t}$\ and solving, Proposition
\ref{crude} yields a lower bound of $t=\Omega\left(  \left(  \frac{N}{\log
N}\right)  ^{1/3}\right)  $\ queries.\footnote{Furthermore, notice that
Proposition \ref{crude}\ has no dependence on the number of test vectors $M$.
\ This is why it implies a $\Omega\left(  \frac{\left(  1/\varepsilon\right)
^{2/3}}{\left(  \log1/\varepsilon\right)  ^{1/3}}\right)  $ lower bound for
\textsc{Gaussian Distinguishing}, independent of $M$.}

With some more work, one can prove a bound of $\left\vert r_{i}\right\vert
=O\left(  \sqrt{\frac{i\log Mt}{N}}+\frac{i^{2}}{N}\right)  $, which yields a
lower bound of $t=\Omega\left(  N^{2/5}\right)  $ queries whenever $M\leq
\exp\left(  O\left(  N^{1/5}\right)  \right)  $. \ In this section, however,
we will go for the bound $\left\vert r_{i}\right\vert =O\left(  \sqrt
{\frac{i\log Mt}{N}}\right)  $, which yields a lower bound of $t=\Omega\left(
\frac{\sqrt{N}}{\log MN}\right)  $ queries. \ For \textsc{Real Forrelation},
of course, we have $M=2N$, and therefore $t=\Omega\left(  \frac{\sqrt{N}}{\log
N}\right)  $\ as desired.

Our strategy will be to make repeated use of the following lemma.

\begin{lemma}
[Central Martingale Lemma]\label{cml}Suppose $200\leq t<\sqrt{N}/10$. \ Then
with probability at least $0.99$\ over data $D$ drawn from $\mathcal{U}$, we
have%
\[
\left\vert \sum_{j=1}^{i-1}\left\langle v|w_{j}\right\rangle a_{j}\right\vert
\leq30\sqrt{\frac{i\ln Mt}{N}}%
\]
for all $i\in\left[  t\right]  $\ and for all test vectors $\left\vert
v\right\rangle \in\mathcal{V}$.
\end{lemma}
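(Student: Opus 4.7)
The plan is to realize each sum $S_{v,i} := \sum_{j=1}^{i-1} \langle v|w_j\rangle a_j$, with $v$ and $i$ fixed, as a Gaussian martingale, apply Lemma \ref{azumagauss} to each one, and finish by a union bound over the (at most) $Mt$ pairs $(v,i)$.

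\textbf{Martingale structure.} Under $\mathcal{U}$ the responses $a_1, a_2, \ldots$ are i.i.d.\ $\mathcal{N}(0,1)$. Since we may assume (by Section \ref{CONT}) the algorithm is deterministic, each query $v_j$ is a deterministic function of $a_1,\ldots,a_{j-1}$, and hence $w_j$---which depends only on $v_1,\ldots,v_j$---is measurable with respect to $a_1,\ldots,a_{j-1}$. For any fixed test vector $v$, the coefficient $\langle v|w_j\rangle$ is therefore predictable, so
\[
\operatorname{E}\!\left[ \langle v|w_j\rangle\, a_j \,\big|\, a_1,\ldots,a_{j-1}\right] = 0,
\]
and conditional on the past, $\langle v|w_j\rangle\, a_j$ is exactly an $\mathcal{N}(0,\langle v|w_j\rangle^2)$ Gaussian, making $\{\langle v|w_j\rangle\, a_j\}_j$ a Gaussian martingale difference sequence.

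\textbf{Bounding the coefficients.} I would show that $|\langle v|w_j\rangle| \leq 1.2/\sqrt{N}$ uniformly in $j$. For $v \notin \{v_1,\ldots,v_j\}$, the set $\{v, v_1, \ldots, v_j\}$ still consists of unit test vectors with all pairwise inner products at most $\varepsilon = 1/\sqrt{N}$, so feeding the identity
\[
\langle v|w_j\rangle = \beta_j\!\left(\langle v|v_j\rangle - \sum_{k<j} \langle v_j|w_k\rangle\, \langle v|w_k\rangle\right)
\]
into exactly the same induction used to prove Lemma \ref{gslem} (with $v$ playing the role of a prepended external vector) yields $|\langle v|w_j\rangle| \leq \varepsilon + O(j\varepsilon^2) \leq 1.2/\sqrt{N}$, valid as long as $t \leq \sqrt{N}/10$. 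Hence each $\langle v|w_j\rangle a_j$ is dominated, in the sense of Lemma \ref{azumagauss}, by an $\mathcal{N}(0, 1.44/N)$ Gaussian. (In the corner case $v = v_k$ for some $k < i$, note $\langle v|w_j\rangle = 0$ for $j > k$ because $v_k$ lies in $\operatorname{span}(w_1,\ldots,w_k)$, while for $j < k$ the same bound still applies, leaving only the isolated term $\langle v_k|w_k\rangle a_k$ which is controlled by the well-behavedness assumption on $D$.)

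\textbf{Gaussian Azuma and the union bound.} Apply Lemma \ref{azumagauss} with $\sigma = 1.2/\sqrt{N}$ to the sequence $\{\langle v|w_j\rangle a_j\}_{j=1}^{i-1}$:
\[
\Pr\!\left[|S_{v,i}| > c \cdot \tfrac{1.2}{\sqrt{N}}\sqrt{i-1}\right] < 2\exp(-c^2/56).
\]
Choosing $c = \sqrt{56\ln(200 Mt)}$ drives the right-hand side below $0.01/(Mt)$. A union bound over the $\leq M$ test vectors and the $t$ values of $i$ then shows that, with probability at least $0.99$,
\[
|S_{v,i}| \leq 1.2\sqrt{56}\cdot\sqrt{i\ln(200 Mt)/N} \leq 30\sqrt{i\ln(Mt)/N}
\]
simultaneously for all $v \in \mathcal{V}$ and all $i \in [t]$; the constant $30$ comfortably absorbs $1.2\sqrt{56}\approx 9$ together with the $\ln 200$ slack (using the hypothesis $t \geq 200$).

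\textbf{Main obstacle.} The delicate step is the uniform coefficient bound: we need $|\langle v|w_j\rangle| \leq 1.2/\sqrt{N}$ despite the fact that $v_1,\ldots,v_j$ are chosen \emph{adaptively} from $a_1,\ldots,a_{j-1}$, and for an external $v \in \mathcal{V}$ not already covered by the hypothesis of Lemma \ref{gslem}. Extending that induction to include an external vector is routine, but it is the one place where the near-orthogonality of \emph{every} pair in $\mathcal{V}$ (not just the queried ones) is essential, and where the hypothesis $|\mathcal{V}| \leq M$ enters through the union bound to give the $\ln M$ factor.
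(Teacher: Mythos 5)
Your proposal follows essentially the same route as the paper: view $\langle v|w_1\rangle a_1,\ldots,\langle v|w_{i-1}\rangle a_{i-1}$ as a Gaussian martingale difference sequence (the coefficients are predictable, each $a_j$ is a fresh $\mathcal{N}(0,1)$), bound the coefficients by $O(1/\sqrt{N})$ via the Gram--Schmidt Lemma, apply the Gaussian Azuma inequality with $c=\sqrt{56\ln(200Mt)}$, and union-bound over the at most $Mt$ pairs $(v,i)$; your constant accounting ($1.2\sqrt{56}\cdot\sqrt{2}\leq 30$ using $t\geq 200$) checks out, and your explicit extension of Lemma \ref{gslem} to an external $v\notin\{v_1,\ldots,v_j\}$ (treating $v$ as a hypothetical next query in the sequence) is actually more careful than the paper, which simply cites the lemma for all $v,w_j$. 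The one incorrect piece is your parenthetical about the corner case $v=v_k$ with $k<i$: there $\langle v_k|w_k\rangle=\sqrt{\langle z_k|z_k\rangle}\approx 1$, so the isolated term is essentially $a_k\sim\mathcal{N}(0,1)$, which is typically $\Theta(1)\gg 30\sqrt{i\ln Mt/N}$; well-behavedness (which is not even a hypothesis of this lemma, and only caps $|a_k|$ at $\sqrt{2\ln 100t}$) cannot rescue the stated bound, which genuinely fails for such $(v,i)$. The correct resolution is not to repair that case but to note it is never needed: in the only application (Lemma \ref{rilem}), the lemma is invoked with $v=v_j$, the vector freshly queried at step $j$, which by the distinctness of queries is not among $v_1,\ldots,v_{j-1}$---exactly the case your main argument (and the paper's) covers.
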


\begin{proof}
Fix any $\left\vert v\right\rangle \in\mathcal{V}$. \ By Lemma \ref{gslem}, we
have $\left\vert \left\langle v|w_{j}\right\rangle \right\vert \leq2/\sqrt{N}$
for all $\left\vert v\right\rangle $\ and $\left\vert w_{j}\right\rangle $.
Also, recall that each $a_{j}$\ is a \textquotedblleft fresh\textquotedblright%
\ $\mathcal{N}\left(  0,1\right)  $\ Gaussian, and that $\left\langle
v|w_{j}\right\rangle $\ does not depend on $a_{j}$. \ Thus, $\left\langle
v|w_{1}\right\rangle a_{1},\ldots,\left\langle v|w_{i-1}\right\rangle a_{i-1}%
$\ forms a martingale difference sequence, in which, conditioned on its
predecessors, each $\left\langle v|w_{j}\right\rangle a_{j}$\ is an
$\mathcal{N}(0,\left\langle v|w_{j}\right\rangle ^{2})$\ Gaussian. \ So by
Lemma \ref{azumagauss},%
\begin{align*}
\Pr\left[  \left\vert \sum_{j=1}^{i-1}\left\langle v|w_{j}\right\rangle
a_{j}\right\vert >30\sqrt{\frac{i\ln Mt}{N}}\right]   &  <\Pr\left[
\left\vert \sum_{j=1}^{i-1}\left\langle v|w_{j}\right\rangle a_{j}\right\vert
>\sqrt{56\ln(200Mt)}\frac{2}{\sqrt{N}}\sqrt{i}\right] \\
&  <2\exp\left(  -\frac{(\sqrt{56\ln(200Mt)})^{2}}{56}\right) \\
&  =\frac{1}{100Mt}.
\end{align*}
The result now follows by taking the union bound over all $\left\vert
v\right\rangle \in\mathcal{V}$\ and $i\in\left[  t\right]  $.
\end{proof}

We can now prove the desired upper bound on $\left\vert r_{i}\right\vert $.

\begin{lemma}
\label{rilem}Suppose $t<\sqrt{N}/10$. \ Then with probability at least
$0.99$\ over $D$, we have $\left\vert r_{i}\right\vert =O\left(  \sqrt
{\frac{i\log Mt}{N}}\right)  $\ for all $i\in\left[  t\right]  $.
\end{lemma}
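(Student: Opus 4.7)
The plan is to decompose $r_i$ into a ``pure martingale'' piece, which Lemma~\ref{cml} already controls, plus a smaller feedback piece that can be handled by induction on $i$. Substituting $c_j = a_j - r_j$ into the defining equation~(\ref{ridef}) for $r_i$ yields
\[
r_i \;=\; \sum_{j=1}^{i-1} \langle v_i | w_j \rangle\, a_j \;-\; \sum_{j=1}^{i-1} \langle v_i | w_j \rangle\, r_j.
\]
Lemma~\ref{cml} (with $v=v_i$) bounds the first sum by $30\sqrt{i\ln(Mt)/N}$ simultaneously for every $i\in[t]$, with probability at least $0.99$ over $D\sim\mathcal{U}$. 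I would condition on this event, together with the well-behavedness event of Proposition~\ref{wellbehaved}, taking a union bound to absorb the small failure probabilities.

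For the feedback sum, invoking the Gram--Schmidt Lemma~\ref{gslem} with $\varepsilon=1/\sqrt{N}$ and $t\leq 0.1\sqrt{N}$ gives $|\langle v_i|w_j\rangle|\leq 1.2/\sqrt{N}$ for all $j<i$. Thus
\[
|r_i| \;\leq\; 30\sqrt{\frac{i\ln(Mt)}{N}} \;+\; \frac{1.2}{\sqrt{N}}\sum_{j=1}^{i-1}|r_j|.
\]
Now I would close the recursion by induction on $i$, showing $|r_i|\leq C\sqrt{i\ln(Mt)/N}$ for a suitable absolute constant $C$. The base case $i=1$ is trivial since $r_1=0$. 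For the inductive step, using $\sum_{j=1}^{i-1}\sqrt{j}\leq \tfrac{2}{3}i^{3/2}$ gives
\[
\frac{1.2}{\sqrt{N}}\sum_{j=1}^{i-1}|r_j| \;\leq\; 0.8\,C\sqrt{\frac{i\ln(Mt)}{N}}\cdot\sqrt{\frac{i}{N}},
\]
and since $i\leq t<\sqrt{N}/10$, we have $\sqrt{i/N}\leq 1/(\sqrt{10}\,N^{1/4})=o(1)$. Hence for $N$ large enough the second contribution is at most $\tfrac{1}{2}C\sqrt{i\ln(Mt)/N}$, so the induction closes with e.g.\ $C=60$, yielding $|r_i|=O(\sqrt{i\ln(Mt)/N})$ as required.

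The main obstacle I anticipate is not really conceptual but rather bookkeeping: verifying that the constants close self-consistently, and that all the small failure events (the Lemma~\ref{cml} event, the well-behaved event, and the regime $t<200$ where Lemma~\ref{cml} does not apply but which is trivial since $|r_i|\leq O(i/\sqrt{N})$ already suffices) can be handled uniformly. The essential conceptual point is the one exploited throughout the paper: near-orthogonality makes the feedback coefficient $1.2/\sqrt{N}$ so small that the $\sqrt{i}$ loss from triangle-inequality bounding of the feedback sum is more than compensated by $\sqrt{i/N}=o(1)$ whenever $t\ll\sqrt{N}$, which is precisely the regime of interest.
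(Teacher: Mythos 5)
Your proposal is correct and rests on exactly the same two ingredients as the paper's proof---Lemma \ref{cml} applied simultaneously to all test vectors and all prefixes $i$, plus the Gram--Schmidt bound $\left\vert \left\langle v_{i}|w_{j}\right\rangle \right\vert =O(1/\sqrt{N})$ from Lemma \ref{gslem}---but packages them differently: the paper unravels the recursion for $c_{j}$ completely, obtaining nested martingale sums each controlled by Lemma \ref{cml} and damped by powers of $2/\sqrt{N}$, and then sums the resulting geometric series with ratio $2i/\sqrt{N}<1$; you instead stop the unraveling after one step, writing $r_{i}=\sum_{j<i}\left\langle v_{i}|w_{j}\right\rangle a_{j}-\sum_{j<i}\left\langle v_{i}|w_{j}\right\rangle r_{j}$, and close the feedback term by induction on $i$. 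The two arguments are equivalent (your induction constant plays the role of the geometric series), and conditioning once on the Lemma \ref{cml} event so that the remaining argument is deterministic is exactly the right move. Two small corrections: in the inductive step the feedback contribution is $0.8\,C\sqrt{i\ln(Mt)/N}\cdot\left(  i/\sqrt{N}\right)  $, not $\cdot\sqrt{i/N}$---fortunately $i\leq t<\sqrt{N}/10$ gives $i/\sqrt{N}<1/10$, so the induction still closes with, say, $C=60$, and it is this hypothesis (not ``$N$ large'') that makes it work; and the well-behavedness event of Proposition \ref{wellbehaved} is never actually used in your argument, so you should drop it---keeping it in the union bound would nominally cut the success probability to $0.98$, just short of the stated $0.99$.
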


\begin{proof}
Taking the equation for $r_{i}$\ (equation (\ref{ridef})), and unraveling the
recursive definition of $c_{j}$\ (equation (\ref{cdef})), we get%
\begin{align*}
r_{i}  &  =\sum_{j=1}^{i-1}\left\langle v_{i}|w_{j}\right\rangle c_{j}\\
&  =\sum_{j=1}^{i-1}\left\langle v_{i}|w_{j}\right\rangle \left(  a_{j}%
-\sum_{k=1}^{j-1}\left\langle v_{j}|w_{k}\right\rangle c_{k}\right) \\
&  =\sum_{j=1}^{i-1}\left\langle v_{i}|w_{j}\right\rangle \left(  a_{j}%
-\sum_{k=1}^{j-1}\left\langle v_{j}|w_{k}\right\rangle \left(  a_{k}%
-\sum_{\ell=1}^{k-1}\left\langle v_{k}|w_{\ell}\right\rangle c_{\ell}\right)
\right) \\
&  \vdots
\end{align*}
Thus,%
\begin{align*}
\left\vert r_{i}\right\vert  &  \leq\left\vert \sum_{j=1}^{i-1}\left\langle
v_{i}|w_{j}\right\rangle a_{j}\right\vert +\sum_{j=1}^{i-1}\left\vert
\left\langle v_{i}|w_{j}\right\rangle \right\vert \left\vert \sum_{k=1}%
^{j-1}\left\langle v_{j}|w_{k}\right\rangle a_{k}\right\vert +\sum_{j=1}%
^{i-1}\left\vert \left\langle v_{i}|w_{j}\right\rangle \right\vert \sum
_{k=1}^{j-1}\left\vert \left\langle v_{j}|w_{k}\right\rangle \right\vert
\left\vert \sum_{\ell=1}^{k-1}\left\langle v_{k}|w_{\ell}\right\rangle
a_{\ell}\right\vert +\cdots\\
&  \leq30\sqrt{\frac{i\ln Mt}{N}}+\sum_{j=1}^{i-1}\frac{2}{\sqrt{N}}%
30\sqrt{\frac{j\ln Mt}{N}}+\sum_{j=1}^{i-1}\sum_{k=1}^{j-1}\left(  \frac
{2}{\sqrt{N}}\right)  ^{2}30\sqrt{\frac{k\ln Mt}{N}}+\cdots\\
&  \leq30\sqrt{\frac{\ln Mt}{N}}\left[  \sqrt{i}+\frac{2}{\sqrt{N}}%
i^{3/2}+\left(  \frac{2}{\sqrt{N}}\right)  ^{2}i^{5/2}+\cdots\right] \\
&  =30\sqrt{\frac{i\ln Mt}{N}}\left[  1+\frac{2i}{\sqrt{N}}+\left(  \frac
{2i}{\sqrt{N}}\right)  ^{2}+\cdots\right] \\
&  =O\left(  \sqrt{\frac{i\log Mt}{N}}\right)
\end{align*}
where the second line used Lemmas \ref{gslem}\ and \ref{cml}.
\end{proof}

\section{Simulation of $t$-Query Quantum Algorithms\label{SIM}}

Let $\mathcal{A}$\ be a quantum algorithm that makes $t=O\left(  1\right)  $
queries to a Boolean input $x\in\left\{  -1,1\right\}  ^{N}$, and then either
accepts or rejects. \ In this section, we show how to estimate $\mathcal{A}$'s
acceptance probability, on all inputs $x$, by a classical, nonadaptive
randomized algorithm that makes only $O(N^{1-1/2t})$\ queries to $x$.

So for example, we can simulate any $1$-query quantum algorithm using
$O(\sqrt{N})$\ classical queries---thereby showing that the $1$\ versus
$\Omega(\frac{\sqrt{N}}{\log N})$\ separation of Section \ref{GAP} is nearly
tight. \ More generally, resolving an open problem of Buhrman et
al.\ \cite{bfnr}, we find that there is \textit{no} partial Boolean function
whose quantum query complexity is constant\ but whose randomized query
complexity is linear.

We obtain our simulation of quantum algorithms as a consequence of a much more
general result: namely, that \textit{any bounded, degree-}$k$\textit{
polynomial }$p:\left\{  -1,1\right\}  ^{N}\rightarrow\mathbb{R}$\textit{,
which satisfies a technical condition called \textquotedblleft
block-multilinearity,\textquotedblright\ can be estimated by querying only
}$O(N^{1-1/k})$\textit{\ of its variables.} \ This result makes no direct
reference to quantum computing, and seems likely to have independent
applications---for example, to the design of classical sublinear algorithms.
\ We strongly conjecture that the block-multilinearity condition can be
removed, which would further heighten the non-quantum interest of this result.
\ In Appendix \ref{POLY}, we prove that conjecture in the special case $k=2$.

More formally, let $p:\mathbb{R}^{N}\rightarrow\mathbb{R}$\ be a real
polynomial of degree $k$. \ Since we will only care about $p$'s behavior on
the Boolean hypercube $\left\{  -1,1\right\}  ^{N}$, we can assume without
loss of generality that $p$ is multilinear (that is, that no variable is
raised to a higher power than $1$). \ We call $p$ \textit{bounded} if
$p\left(  x\right)  \in\left[  -1,1\right]  $\ for all $x\in\left\{
-1,1\right\}  ^{N}$.\footnote{In quantum query complexity, normally we would
call a polynomial $p$ \textquotedblleft bounded\textquotedblright\ if
$p\left(  x\right)  \in\left[  0,1\right]  $\ for all $x\in\left\{
-1,1\right\}  ^{N}$---in other words, if $p$ represents a probability. \ As we
will see, however, we need to consider polynomials that can represent
arbitrary inner products between vectors of norm at most $1$, and can
therefore only assume $p\left(  x\right)  \in\left[  -1,1\right]  $.} \ Now,
we call $p$ \textit{block-multilinear} if its $N$\ variables $x_{1}%
,\ldots,x_{N}$\ can be partitioned into $k$ blocks, $R_{1},\ldots,R_{k}$, so
that every monomial of $p$ contains exactly one variable from each block.
\ Note that block-multilinearity implies, in particular, that $p$ is
homogeneous. \ Also, by introducing at most $O\left(  N\right)  $\ dummy
variables, we can assume without loss of generality that every block has the
same size, $\left\vert R_{1}\right\vert =\cdots=\left\vert R_{k}\right\vert
=n=N/k$.

We can now state the main result of this section.

\begin{theorem}
\label{estimatorthm}Let $p:\left\{  -1,1\right\}  ^{N}\rightarrow\left[
-1,1\right]  $\ be any bounded block-multilinear polynomial of degree $k$.
\ Then there exists a classical randomized algorithm that, on input
$x\in\left\{  -1,1\right\}  ^{N}$, nonadaptively queries $O(\left(
N/\varepsilon^{2}\right)  ^{1-1/k})$ bits of $x$, and then outputs an estimate
$\tilde{p}$ such that with high probability,
\[
\left\vert \tilde{p}-p(x_{1},\ldots,x_{N})\right\vert \leq\varepsilon.
\]
(Here the big-O hides a multiplicative constant that is exponential in $k$.)
\end{theorem}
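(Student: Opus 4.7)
My plan is to follow the two-step strategy sketched in Section \ref{RUB}: first preprocess $p$ via repeated \emph{variable-splitting} until it becomes \emph{balanced} (meaning every variable has bounded influence), and then estimate the balanced polynomial by a trivial random-sampling estimator whose variance can be controlled using the balancedness together with block-multilinearity.

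For preprocessing, I would define the influence of variable $x_i$ by $\mathrm{Inf}_i(p) := \sum_{M \ni x_i} c_M^{\,2}$, the sum running over monomials $M$ of $p$ containing $x_i$ with coefficient $c_M$. Because $p$ is bounded on $\{-1,1\}^N$, Parseval gives $\sum_M c_M^{\,2} \leq 1$, and therefore $\sum_i \mathrm{Inf}_i(p) \leq k$. While some variable $x_i$ has influence above a threshold $\tau$ (to be tuned), I replace every occurrence of $x_i$ in $p$ by $\frac{1}{m}(x_{i,1} + \cdots + x_{i,m})$ for $m \approx \lceil \mathrm{Inf}_i(p)/\tau \rceil$ appropriately chosen. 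This keeps $p$ unchanged on inputs where all new copies agree with $x_i$, preserves block-multilinearity and boundedness, and drops each new variable's influence below $\tau$. A potential argument (tracking, say, $\sum_i \mathrm{Inf}_i(p)^{1/k}$) then bounds the total number of new variables introduced by $\exp(O(k)) \cdot N$, as claimed.

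For the estimation step, once $p$ is balanced on $N' = \exp(O(k))\,N$ variables with $\mathrm{Inf}_i \leq \tau$ for all $i$, include each variable independently in a sample set $S$ with probability $q$; query every variable in $S$ and output
\[
\tilde p \;:=\; \sum_{M \,:\, \mathrm{vars}(M) \subseteq S} \frac{c_M}{q^{k}} \prod_{x_i \in M} x_i.
\]
Block-multilinearity forces every monomial to contain exactly one variable per block, so $\Pr[\mathrm{vars}(M) \subseteq S] = q^{k}$ for each $M$, whence $\operatorname{E}[\tilde p] = p(x)$. To bound $\operatorname{Var}[\tilde p]$, I expand it as a sum over pairs $(M,M')$ of monomials and group pairs by $j := |\mathrm{vars}(M) \cap \mathrm{vars}(M')|$; a pair with overlap $j$ contributes at most $|c_M c_{M'}| \cdot q^{-j}$. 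Using block-multilinearity to restrict which $M'$ can share exactly $j$ variables with $M$ (they must share one variable in each of $j$ blocks), together with $\mathrm{Inf}_i \leq \tau$ to bound the double sum of $c_M c_{M'}$ over shared variables, I expect to show that the $j$-th group contributes $O(\binom{k}{j}^{2} \tau^{j} q^{-j})$. Tuning $\tau$ and $q$ so that $qN' = \Theta((N/\varepsilon^{2})^{1-1/k})$ and simultaneously $\tau/q = O(\varepsilon^{2/k}/k^{2})$ makes every intersection term $O(\varepsilon^{2}/k)$, and Chebyshev then finishes the argument.

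The main obstacle will be the variance calculation. One has to track how block-multilinearity plus balancedness jointly bound the contribution of each intersection pattern, while simultaneously tuning the three parameters $\tau$, $m$, and $q$ so that (i) variable-splitting terminates with only an $\exp(O(k))$-factor blowup in $N$, (ii) the sample size is $O((N/\varepsilon^{2})^{1-1/k})$, and (iii) the variance is $O(\varepsilon^{2})$. Pulling all three constraints through the $j$-th intersection term and verifying they balance for every $j \in \{0,1,\dots,k\}$ is the delicate part of the proof, and is what ultimately forces the $\exp(k)$ factor hidden in the big-$O$ of the theorem.
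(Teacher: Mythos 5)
Your high-level outline matches the paper's (variable-splitting to "balance" the polynomial, then a uniform-sampling estimator analyzed by grouping monomial pairs by their overlap), but the notion of balancedness you propose is the wrong one, and this breaks the variance bound. You require only that each individual Fourier influence $\operatorname{Inf}_i(p)=\sum_{M\ni x_i}c_M^2$ be at most $\tau$. However, the quantity that actually appears in the overlap-$j$ variance term, for a set $S$ of $j$ shared blocks, is
\[
\sum_{(i_j)_{j\in S}}\Bigl(\sum_{(i_j)_{j\notin S}}c_{i_1,\ldots,i_k}\,x_{\cdot}\Bigr)^{2},
\]
a square of a \emph{coherent sum} of coefficients over the collapsed blocks, which small sums of \emph{squares} do not control. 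A concrete counterexample: for $k=2$ take $p(x,y)=x_1\cdot\frac{1}{N}\sum_{j}y_j$. This is bounded, block-multilinear, and every influence is at most $1/N$, so your splitting step does nothing (and no threshold $\tau$ achievable with $O(N)$ new variables changes this picture materially); yet on the all-ones input your estimator equals roughly $q^{-1}$ with probability $q$ and $0$ otherwise, so its variance is $\Theta(q^{-1})=\Theta(\sqrt{N})$, not $O(\varepsilon^2)$. In the notation above, $\sum_i(\sum_j a_{ij})^2=1$ here even though $\max_i\operatorname{Inf}_i=1/N$: the variable $x_1$ is "influential" in the sense that matters for sampling, but not in the Fourier-influence sense, so your claimed bound $O\bigl(\binom{k}{j}^2\tau^j q^{-j}\bigr)$ on the $j$-th group is false. (Your parameter tuning is also internally inconsistent—$(\tau/q)^j$ with $\tau/q=\varepsilon^{2/k}/k^2$ gives $\varepsilon^{2/k}$ at $j=1$, not $\varepsilon^2$—but the conceptual issue is the main one.)

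What the paper does instead is enforce, for \emph{every} nonempty $S\subseteq[k]$, the condition $\Lambda_S:=\sum_{(i_j)_{j\in S}}\bigl(\sum_{(i_j)_{j\notin S}}a_{i_1,\ldots,i_k}\bigr)^2\le\delta$ with $\delta=\varepsilon^2/N$; i.e., it bounds the Fourier weight of every polynomial obtained by collapsing (setting to $1$) some subset of the blocks, not the influences of the original polynomial. The reason this strong condition can be achieved with only $O(2^k N/\varepsilon^2)$ new variables is a step your plan is missing: from boundedness one extracts the inequality $\sqrt{V_1}+\cdots+\sqrt{V_N}\le 1$ (where $V_i$ is the collapsed-coefficient weight attached to variable $i$) by choosing random signs in the other blocks and then picking the first-block signs to equal $\operatorname{sgn}(X_i)$, so that $p=\sum_i|X_i|\le 1$; the splitting amounts $m_i\approx\sqrt{V_i}/\delta$ then sum to at most $1/\delta$. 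Your Parseval-based potential argument gives no analogue of this, and without the $\Lambda_S$-type control the estimator's variance cannot be bounded, as the example above shows. In your counterexample the paper's procedure would split $x_1$ into about $N/\varepsilon^2$ copies—precisely what is needed to tame the variance—whereas an influence-based criterion never flags it.
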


Before plunging into the proof of Theorem \ref{estimatorthm}, let us explain
why it implies the desired conclusion about quantum algorithms. \ The key
observation relating quantum query complexity to low-degree polynomials was
made by Beals et al.\ \cite{bbcmw} in 1998:

\begin{lemma}
[Beals et al.\ \cite{bbcmw}]\label{bbcmwlem}Given any quantum algorithm
$\mathcal{A}$\ that makes $t$ queries to a Boolean input $x\in\left\{
-1,1\right\}  ^{N}$, the probability that $\mathcal{A}$\ accepts can be
expressed as a real multilinear polynomial $p\left(  x\right)  $,\ of degree
at most $2t$. \ (Thus, in particular, $p\left(  x\right)  \in\left[
0,1\right]  $\ for all $x\in\left\{  -1,1\right\}  ^{N}$.)
\end{lemma}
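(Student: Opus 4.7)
The plan is to track how the amplitudes of the state vector evolve as polynomials in the input bits. I would write any $t$-query quantum algorithm in the canonical form $U_t O_x U_{t-1} O_x \cdots U_1 O_x U_0$, where the $U_j$ are fixed unitaries independent of $x$, and the oracle acts as $O_x |i, w\rangle = x_i |i, w\rangle$ (as specified in the query model of Section \ref{QUERY}).

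My first step would be to prove by induction on $k$ that after the first $k$ query/unitary layers, every amplitude $\alpha_{i,w}^{(k)}(x)$ of the resulting state is a (possibly complex-coefficient) polynomial in $x_1, \ldots, x_N$ of degree at most $k$. The base case $k=0$ is immediate because $U_0 |0\rangle$ has constant amplitudes. For the inductive step, one application of $O_x$ multiplies the amplitude on $|i, w\rangle$ by $x_i$, raising the degree by at most one, and the subsequent input-independent unitary $U_{k+1}$ only takes $x$-independent linear combinations of the amplitudes, preserving the degree bound.

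Next, I would write the acceptance probability as
\[
p(x) \;=\; \sum_{(i,w) \in \mathrm{Acc}} \bigl| \alpha_{i,w}^{(t)}(x) \bigr|^{2} \;=\; \sum_{(i,w) \in \mathrm{Acc}} \alpha_{i,w}^{(t)}(x)\, \overline{\alpha_{i,w}^{(t)}(x)},
\]
and observe that the right-hand side is a polynomial of degree at most $2t$, each summand being a product of two polynomials of degree at most $t$ (one in the amplitudes, one in their conjugates). That the sum is real-valued is automatic because $p(x)$ is a probability; equivalently, one can split each $\alpha_{i,w}^{(t)}$ into real and imaginary parts, each of which is a real polynomial of degree at most $t$, and then $|\alpha_{i,w}^{(t)}|^{2}$ is a real polynomial of degree at most $2t$. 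Finally, to reduce to a multilinear polynomial, I would use the identity $x_i^{2} = 1$ valid on $\{-1, 1\}^{N}$: collapsing every repeated variable in every monomial to exponent zero or one yields a multilinear representative of the same Boolean function, without increasing the degree. Boundedness $p(x) \in [0, 1]$ on $\{-1, 1\}^{N}$ is immediate since $p$ is an acceptance probability.

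I do not anticipate a genuine obstacle; the argument is a clean two-line induction followed by an algebraic manipulation. The only point requiring a small amount of care is to keep track of complex versus real coefficients: the amplitudes themselves generally have complex coefficients as polynomials in $x$, but once they are conjugated and summed they yield a real polynomial, and the doubling of degree from $t$ to $2t$ is precisely the cost of squaring amplitudes to obtain probabilities.
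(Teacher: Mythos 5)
Your proof is correct, and it is exactly the standard Beals et al.\ argument that the paper simply cites for this lemma: induct over query/unitary layers to show each amplitude is a polynomial of degree at most $t$ in $x$, then note that the acceptance probability, being a sum of squared magnitudes of such amplitudes, has degree at most $2t$, and multilinearize via $x_i^2=1$. The same inductive amplitude-tracking also underlies the paper's proof of the stronger block-multilinear version (Lemma \ref{blockbeals}), so there is no substantive difference in approach.
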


Note that, \textit{if} Theorem \ref{estimatorthm} worked for arbitrary
polynomials (rather than only block-multilinear ones), then combining it with
Lemma \ref{bbcmwlem} would immediately give the simulation of quantum
algorithms that we want.

Fortunately, one can strengthen Lemma \ref{bbcmwlem}, to show that a $t$-query
quantum algorithm gives rise, not just to \textit{any} bounded degree-$2t$
polynomial, but to a block-multilinear one.

\begin{lemma}
\label{blockbeals}Let $\mathcal{A}$\ be a quantum algorithm that makes $t$
queries to a Boolean input $x\in\left\{  -1,1\right\}  ^{N}$. \ Then there
exists a degree-$2t$ block-multilinear polynomial $p:\mathbb{R}^{2tN}%
\rightarrow\mathbb{R}$, with $2t$\ blocks of $N$ variables each, such that

\begin{enumerate}
\item[(i)] the probability that $\mathcal{A}$\ accepts $x$ equals $p\left(
x,\ldots,x\right)  $ (with $x$ repeated $2t$\ times), and

\item[(ii)] $p\left(  z\right)  \in\left[  -1,1\right]  $\ for all
$z\in\left\{  -1,1\right\}  ^{2tN}$.
\end{enumerate}
\end{lemma}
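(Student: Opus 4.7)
The plan is to refine the Beals et al.\ argument (Lemma \ref{bbcmwlem}) by giving each query its own private block of input variables, rather than reusing a single copy of $x$, so that the block structure is visible at the level of the Feynman path expansion. Write $\mathcal{A}$ as $V_t\, U_x\, V_{t-1}\, U_x \cdots V_1\, U_x\, V_0$, where the $V_j$ are input-independent unitaries and $U_x : |i,w\rangle \mapsto x_i |i,w\rangle$ is the phase query. Introduce fresh variable vectors $y^{(1)}, \ldots, y^{(2t)} \in \mathbb{R}^N$ and define
$$|\phi(y^{(1:t)})\rangle \;:=\; V_t\, U_{y^{(t)}}\, V_{t-1}\, U_{y^{(t-1)}} \cdots U_{y^{(1)}}\, V_0 |0\rangle.$$
Expanding in the Feynman-path form, each amplitude $\alpha_y(y^{(1:t)}) := \langle y | \phi(y^{(1:t)})\rangle$ is a polynomial in which each monomial contributes exactly one factor $y^{(j)}_{i_j}$ from block $j$: at step $j$ of a fixed path the query register holds a definite value $i_j$, producing one factor from block $j$ and none from any other. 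Hence $\alpha_y$ is block-multilinear of degree $t$ over the first $t$ blocks, though with generally complex coefficients.

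Let $\beta_y(y^{(1:t)})$ be obtained from $\alpha_y$ by conjugating each coefficient, so that $\beta_y$ agrees with $\overline{\alpha_y}$ on real inputs, and set
$$q(y^{(1:2t)}) \;:=\; \sum_{y \in \mathrm{Acc}} \beta_y(y^{(1:t)})\, \alpha_y(y^{(t+1:2t)}), \qquad p \;:=\; \tfrac{1}{2}\bigl(q + \overline{q}\bigr),$$
where $\overline{q}$ denotes the polynomial obtained by conjugating the coefficients of $q$. Then $p$ is a real polynomial in $2tN$ variables, and is block-multilinear of degree $2t$ over $2t$ blocks of $N$ variables each, since block-multilinearity is manifestly preserved under products across disjoint blocks, sums, and coefficient conjugation.

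For property (i), substituting $y^{(j)} = x$ for every $j$ collapses the construction to the genuine execution of $\mathcal{A}$ on $x$, giving $q(x,\ldots,x) = \sum_{y \in \mathrm{Acc}} |\alpha_y(x)|^2 = \Pr[\mathcal{A} \text{ accepts } x]$; this quantity is real, so $p(x,\ldots,x) = q(x,\ldots,x)$ equals it as well. For property (ii), fix any Boolean $z^{(1)}, \ldots, z^{(2t)} \in \{-1,1\}^N$. Each $U_{z^{(j)}}$ is then a genuine diagonal unitary, so $|\phi(z^{(1:t)})\rangle$ and $|\phi(z^{(t+1:2t)})\rangle$ are both unit vectors, and
$$q(z^{(1:2t)}) \;=\; \langle \phi(z^{(1:t)}) \,|\, \Pi_{\mathrm{Acc}} \,|\, \phi(z^{(t+1:2t)})\rangle,$$
which has absolute value at most $1$ by Cauchy--Schwarz (since $\Pi_{\mathrm{Acc}}$ has operator norm $1$). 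Hence $|p(z)| \le |q(z)| \le 1$.

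The only delicate point is the passage from complex to real coefficients: if one tried to write the acceptance probability directly as a real polynomial by splitting each $\alpha_y$ into real and imaginary parts and summing squares, block-multilinearity would be clumsy to maintain, since a monomial could then contain two variables from a single block via $\mathrm{Re}(\alpha_y)^2$. The $\beta_y\cdot\alpha_y$ trick above sidesteps this by placing the two copies of the amplitude into disjoint blocks; the $(q+\bar q)/2$ step then launders away imaginary coefficients without disturbing which block any given variable lives in. Everything else is essentially bookkeeping on top of the original polynomial-method argument.
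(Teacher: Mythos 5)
Your proof is correct and follows essentially the same route as the paper's: give each query its own fresh block of $N$ variables so that every final amplitude becomes a degree-$t$ block-multilinear polynomial, then realize the acceptance probability as a product of two copies of these amplitudes placed on disjoint blocks, with boundedness on all Boolean inputs following from the inner-product (Cauchy--Schwarz) interpretation. The only difference is cosmetic: the paper assumes real amplitudes without loss of generality, whereas you handle complex amplitudes directly by conjugating coefficients and symmetrizing via $(q+\overline{q})/2$.
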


\begin{proof}
Assume for simplicity (and without loss of generality) that $\mathcal{A}%
$\ involves real amplitudes only.

For all $j\in\left[  t\right]  $\ and $i\in\left[  N\right]  $, let $x_{j,i}%
$\ be the value of $x_{i}$\ that $\mathcal{A}$'s oracle returns in response to
its $j^{th}$\ query. \ Of course, in any \textquotedblleft
normal\textquotedblright\ run of $\mathcal{A}$, we will have $x_{j,i}%
=x_{j^{\prime},i}$\ for all $j,j^{\prime}$: that is, the value of $x_{i}%
$\ will be consistent across all $t$\ queries. \ But it is perfectly
legitimate to ask what happens if $x$\ changes from one query to the next.
\ In any case, $\mathcal{A}$\ will have some normalized final state, of the
form%
\[
\sum_{i,w}\alpha_{i,w}\left(  x_{1,1},\ldots,x_{t,N}\right)  \left\vert
i,w\right\rangle .
\]
Furthermore, following Beals et al.\ \cite{bbcmw},\ it is easy to see that
each amplitude $\alpha_{i,w}$\ can be written as a degree-$t$
block-multilinear polynomial\ in the $tN$\ variables $x_{1,1},\ldots,x_{t,N}$,
with one block of $N$ variables, $R_{j}=\left\{  x_{j,1},\ldots,x_{j,N}%
\right\}  $, corresponding to each of the $t$ queries. \ (If $\mathcal{A}$ has
basis states that do not participate in queries, then we can deal with that by
introducing dummy variables, $x_{1,0},\ldots,x_{t,0}$, which are set to
$1$\ in any \textquotedblleft normal\textquotedblright\ run of $\mathcal{A}$.)

Next, for all $j\in\left[  t\right]  $\ and $i\in\left[  N\right]  $, we
create a \textit{second} variable $x_{t+j,i}$, which just like $x_{j,i}$,
represents the value of $x_{i}$\ that $\mathcal{A}$'s oracle returns in
response to its $j^{th}$\ query. \ Let $\operatorname*{Acc}$\ be the set of
all accepting basis states, and consider the polynomial%
\[
p\left(  x_{1,1},\ldots,x_{2t,N}\right)  :=\sum_{\left(  i,w\right)
\in\operatorname*{Acc}}\alpha_{i,w}\left(  x_{1,1},\ldots,x_{t,N}\right)
\alpha_{i,w}\left(  x_{t+1,1},\ldots,x_{2t,N}\right)  .
\]
By construction, $p$ is a degree-$2t$ block-multilinear polynomial in the
$2tN$\ variables $x_{1,1},\ldots,x_{2t,N}$, with one block of $N$ variables,
$R_{j}=\left\{  x_{j,1},\ldots,x_{j,N}\right\}  $, for each $j\in\left[
2t\right]  $. \ Furthermore, if we repeat the same input\ $x\in\left\{
-1,1\right\}  ^{N}$\ across all $2t$\ blocks, then%
\[
p\left(  x,\ldots,x\right)  =\sum_{\left(  i,w\right)  \in\operatorname*{Acc}%
}\alpha_{i,w}^{2}\left(  x,\ldots,x\right)
\]
is simply the probability that $\mathcal{A}$ accepts $x$. \ Finally, even if
$x_{1,1},\ldots,x_{2t,N}\in\left\{  -1,1\right\}  ^{2tN}$ is completely
arbitrary, $p\left(  x_{1,1},\ldots,x_{2t,N}\right)  $\ still represents an
inner product between two vectors,%
\[
\sum_{\left(  i,w\right)  \in\operatorname*{Acc}}\alpha_{i,w}\left(
x_{1,1},\ldots,x_{t,N}\right)  \left\vert i,w\right\rangle \text{ \ \ \ \ and
\ \ \ \ }\sum_{\left(  i,w\right)  \in\operatorname*{Acc}}\alpha_{i,w}\left(
x_{t+1,1},\ldots,x_{2t,N}\right)  \left\vert i,w\right\rangle .
\]
Since both of these vectors have norm at most $1$, their inner product is
bounded in $\left[  -1,1\right]  $.
\end{proof}

As a side note, given any Boolean function $f:\left\{  -1,1\right\}
^{N}\rightarrow\left\{  0,1\right\}  $, one can consider the minimum degree of
any block-multilinear polynomial $p$\ that approximates $f$. \ More formally,
let the \textit{block-multilinear approximate degree} of $f$, or
$\widetilde{\operatorname*{bmdeg}}\left(  f\right)  $, be the minimum degree
of any block-multilinear polynomial $p:\mathbb{R}^{kN}\rightarrow\mathbb{R}$,
with $k$\ blocks of $N$ variables each, such that

\begin{enumerate}
\item[(i)] $\left\vert p\left(  x,\ldots,x\right)  -f\left(  x\right)
\right\vert \leq\frac{1}{3}$ for all $x\in\left\{  -1,1\right\}  ^{N}$ (or
alternatively, for all $x$ satisfying some promise), and

\item[(ii)] $p\left(  x_{1,1},\ldots,x_{k,N}\right)  \in\left[  -1,1\right]  $
for all $x_{1,1},\ldots,x_{k,N}\in\left\{  -1,1\right\}  ^{kN}$.
\end{enumerate}

Recall that $\widetilde{\deg}\left(  f\right)  $, the \textquotedblleft
ordinary\textquotedblright\ approximate degree of $f$, is the minimum degree
of any polynomial $p:\mathbb{R}^{N}\rightarrow\mathbb{R}$\ such that
$\left\vert p\left(  x\right)  -f\left(  x\right)  \right\vert \leq\frac{1}%
{3}$\ for all $x$. \ Lemma \ref{bbcmwlem} of Beals et al.\ \cite{bbcmw}%
\ implies that $\widetilde{\deg}\left(  f\right)  \leq2\operatorname*{Q}%
\left(  f\right)  $\ for all $f$, where $\operatorname*{Q}\left(  f\right)
$\ is the bounded-error quantum query complexity of $f$.

Clearly $\widetilde{\deg}\left(  f\right)  \leq
\widetilde{\operatorname*{bmdeg}}\left(  f\right)  $\ for all $f$, by
identifying variables across the $k$ blocks. \ Also, Lemma \ref{blockbeals}%
\ implies that $\widetilde{\operatorname*{bmdeg}}\left(  f\right)
\leq2\operatorname*{Q}\left(  f\right)  $. \ Putting these facts together, we
find that $\widetilde{\operatorname*{bmdeg}}\left(  f\right)  $\ is a lower
bound on quantum query complexity that is \textit{at least} as good as
$\widetilde{\deg}\left(  f\right)  $, and \textit{might} sometimes be better.
\ We do not currently know whether there is any asymptotic separation between
$\widetilde{\deg}\left(  f\right)  $ and $\widetilde{\operatorname*{bmdeg}%
}\left(  f\right)  $, nor do we know whether there is an asymptotic separation
between $\widetilde{\operatorname*{bmdeg}}\left(  f\right)  $\ and
$\operatorname*{Q}\left(  f\right)  $. \ Note that Ambainis
\cite{ambainis:deg}\ exhibited a Boolean function $f$ such that
$\widetilde{\deg}\left(  f\right)  =O\left(  \operatorname*{Q}\left(
f\right)  ^{0.76}\right)  $. \ By contrast, we do not know any techniques for
upper-bounding $\widetilde{\operatorname*{bmdeg}}\left(  f\right)  $, that do
not \textit{also} upper-bound $\operatorname*{Q}\left(  f\right)  $.

\subsection{Preprocessing the Polynomial\label{PREPROCESS}}

We are now ready to prove Theorem \ref{estimatorthm}. \ Thus, suppose%
\[
p\left(  x_{1,1},\ldots,x_{k,N}\right)  =\sum_{i_{1},\ldots,i_{k}\in\left[
N\right]  }a_{i_{1},\ldots,i_{k}}x_{1,i_{1}}\cdots x_{k,i_{k}}%
\]
is a bounded block-multilinear polynomial of degree $k$. \ Then in our
estimation procedure, the first step is to preprocess $p$, in order to
\textquotedblleft balance\textquotedblright\ it, and get rid of any variables
that are \textquotedblleft too influential.\textquotedblright\ \ More
formally, set $\delta:=\varepsilon^{2}/N$. \ Then we wish to achieve the
following requirement:\ for every nonempty set $S\subseteq\left[  k\right]  $,%
\begin{equation}
\Lambda_{S}:=\sum_{(i_{j})_{j\in S}}\left(  \sum_{(i_{j})_{j\notin S}}%
a_{i_{1},\ldots,i_{k}}\right)  ^{2}\leq\delta. \label{eq:need}%
\end{equation}

The basic operation that we use to achieve this requirement is
\textit{variable-splitting}. \ The operation consists of taking a variable
$x_{j,l}$ and replacing it by $m$ variables, in the following way. \ We
introduce $m$ new variables $x_{j,l_{1}},\ldots,x_{j,l_{m}}$, and define
$p^{\prime}$ as the polynomial obtained by substituting $\frac{x_{j,l_{1}%
}+\cdots+x_{j,l_{m}}}{m}$ in the polynomial $p$ instead of $x_{j,l}$. \ We
refer to this as splitting $x_{j,l}$ into $m$ variables. \ Observe that
variable-splitting preserves the property that $p$ is bounded in $\left[
-1,1\right]  $ at all Boolean points---for, regardless of how we set
$x_{j,l_{1}},\ldots,x_{j,l_{m}}$, the value of $p^{\prime}$\ will simply equal
the value of $p$\ with $x_{j,l}$\ set to $\frac{x_{j,l_{1}}+\cdots+x_{j,l_{m}%
}}{m}$, which in turn is a convex combination of $p$\ with $x_{j,l}$\ set
to\ $-1$ and $p$\ with $x_{j,l}$\ set to\ $1$.

\begin{lemma}
\label{cl:main}Let $S\subseteq\left[  k\right]  $ be nonempty. \ Then there is
a sequence of variable-splittings that introduces at most $1/\delta$ new
variables, and that produces a polynomial $p^{\prime}$ that satisfies
$\Lambda_{S}\leq\delta$.
\end{lemma}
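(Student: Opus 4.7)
The plan is to restrict all splits to a single chosen block $j^*\in S$. For each index $l$ in that block define
\[
\Lambda_S^{(l)} := \sum_{(i_j)_{j\in S\setminus\{j^*\}}} \Bigl(\sum_{(i_j)_{j\notin S}} a_{i_1,\ldots,i_{j^*}=l,\ldots,i_k}\Bigr)^{\!2},
\]
so that $\Lambda_S = \sum_l \Lambda_S^{(l)}$. Splitting $x_{j^*,l}$ into $m_l$ copies rescales each coefficient with $i_{j^*}=l$ by $1/m_l$ and produces $m_l$ copies of the rescaled monomial, so $l$'s contribution to $\Lambda_S$ drops from $\Lambda_S^{(l)}$ to $\Lambda_S^{(l)}/m_l$. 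The problem thus reduces to choosing integers $m_l\ge 1$ with $\sum_l \Lambda_S^{(l)}/m_l\le\delta$ while minimising $\sum_l(m_l-1)$.

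I would use the Lagrange-style choice $m_l := \max\{1,\lceil L\sqrt{\Lambda_S^{(l)}}/\delta\rceil\}$, where $L := \sum_l\sqrt{\Lambda_S^{(l)}}$. Indices with $\sqrt{\Lambda_S^{(l)}}<\delta/L$ get $m_l=1$ (unsplit); for them $\Lambda_S^{(l)}<\sqrt{\Lambda_S^{(l)}}\,\delta/L$, so their contributions sum to at most $\delta$. The split indices satisfy $\Lambda_S^{(l)}/m_l\le\delta\sqrt{\Lambda_S^{(l)}}/L$, and their contributions also sum to at most $\delta$. Thus $\Lambda_S^{\mathrm{new}}\le 2\delta$, which is made $\le\delta$ by halving $\delta$ at the outset (a harmless constant-factor loss). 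The total number of newly introduced variables is then at most $\sum_l L\sqrt{\Lambda_S^{(l)}}/\delta = L^2/\delta$.

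The heart of the argument — and the main obstacle — is to show $L = O(1)$ for bounded block-multilinear $p$. Writing $q_S = \sum_l x_{j^*,l}\,q_l(x_{-j^*})$, where $q_l$ is block-multilinear of degree $|S|-1$ in the remaining blocks of $S$ with coefficients $b_{l,I}=\sum_{I_{[k]\setminus S}}a$, one sees $\sqrt{\Lambda_S^{(l)}}=\|q_l\|_2$. Since $|p|\le 1$ on the Boolean cube we have $|q_S|\le 1$, and choosing $x_{j^*,l}=\operatorname{sgn}(q_l(x_{-j^*}))$ pointwise gives $\sum_l|q_l(x_{-j^*})|\le 1$; squaring and integrating yields only $\sum_l\|q_l\|_2^2\le 1$, from which Cauchy--Schwarz returns the too-weak bound $L\le\sqrt{|R_{j^*}|}$. (There is even a simple Hadamard-style bilinear example with $2\times 2$ blocks achieving $L=\sqrt{2}\,\|q_S\|_\infty$, so the naive inequality $L\le\|q_S\|_\infty$ is false.) I would close the $\sqrt{n}$ gap by invoking hypercontractivity (Bonami--Beckner) for the constant-degree polynomials $q_l$: this gives $\|q_l\|_2 \le c^{|S|-1}\,\mathbb{E}|q_l|$ for a universal constant $c$, and summing yields $L \le c^{|S|-1}\,\mathbb{E}\bigl[\sum_l|q_l|\bigr]\le c^{|S|-1}$, a constant depending only on $k$. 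The new-variable count is therefore $O(1/\delta)$ with a multiplicative factor $\exp(O(k))$, matching — after one sums over the $2^k$ choices of $S$ — the $\exp(t)\cdot O(N)$ blow-up quoted in the introduction.
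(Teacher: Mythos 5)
Your proposal is correct, and it follows the same overall scheme as the paper's proof: pick one block of $S$, measure each of its variables' mass $\Lambda_S^{(l)}$ (this is exactly the paper's $V_i$ after the blocks outside $S$ are collapsed by setting their variables to $1$), split each variable into a number of copies proportional to $\sqrt{\Lambda_S^{(l)}}/\delta$, and reduce everything to the key bound $L=\sum_l\sqrt{\Lambda_S^{(l)}}=O(1)$, which both you and the paper extract from boundedness via random signs on the remaining blocks of $S$ plus a sign choice on the chosen block. The genuine divergence is in how one passes from $\operatorname{E}\left[\left\vert q_l\right\vert\right]$ to $\sqrt{\Lambda_S^{(l)}}=\left\Vert q_l\right\Vert_2$. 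The paper asserts $\operatorname{E}\left[\left\vert X_i\right\vert\right]\geq\sqrt{V_i}$ ``by concavity,'' but concavity of the square root gives the reverse inequality, and your $2\times2$ Hadamard-style example is a correct witness that the constant-$1$ claim $\sum_i\sqrt{V_i}\leq1$ can fail (there $\sum_i\sqrt{V_i}=\sqrt{2}$ while $\left\Vert p\right\Vert_\infty=1$). Your invocation of hypercontractivity, $\left\Vert q_l\right\Vert_2\leq c^{\left\vert S\right\vert-1}\operatorname{E}\left[\left\vert q_l\right\vert\right]$ for degree-$(\left\vert S\right\vert-1)$ multilinear polynomials, is the standard repair; it gives $L\leq c^{k-1}$ and hence at most $\exp\left(O\left(k\right)\right)/\delta$ new variables rather than the lemma's stated $1/\delta$, a loss that is harmless because the theorem already hides $\exp\left(t\right)$ factors. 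The remaining differences are cosmetic: you treat general $S$ directly rather than reducing to $S=\left[k\right]$ by substituting $1$'s (the two are the same computation), your Lagrange-style choice of $m_l$ with the factor $L$ mirrors the paper's $m_i=\left\lfloor\sqrt{V_i}/\delta\right\rfloor$ (which implicitly uses $L\leq1$), and halving $\delta$ to convert $2\delta$ into $\delta$ is fine. In short, your argument is correct, and on the one delicate step it is more careful than the paper's own.
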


\begin{proof}
We start with the case $S=\left[  k\right]  $. \ Then we have to ensure
\begin{equation}
\sum_{i_{1},\ldots,i_{k}\in\left[  N\right]  }a_{i_{1},\ldots,i_{k}}^{2}%
\leq\delta, \label{eq:sum}%
\end{equation}
where $a_{i_{1},\ldots,i_{k}}^{2}$\ is the coefficient of $x_{1,i_{1}}\ldots
x_{k,i_{k}}$. \ Let
\[
V_{i}:=\sum_{i_{2},\ldots,i_{k}\in\left[  N\right]  }a_{i_{1},i_{2}%
,\ldots,i_{k}}^{2}.
\]
We now randomly set each $x_{j,i_{j}}$ for $j\geq2$, to be $1$\ or $-1$\ with
independent probability $1/2$. \ Let
\[
X_{i}:=\sum_{i_{2},\ldots,i_{k}\in\left[  N\right]  }a_{i_{1},i_{2}%
\ldots,i_{k}}x_{2,i_{2}}\cdots x_{k,i_{k}}.
\]
Then $\operatorname*{E}[X_{i}^{2}]=V_{i}$. \ By the concavity of the square
root function, this means $\operatorname*{E}[\left\vert X_{i}\right\vert
]\geq\sqrt{V_{i}}$. \ Hence%
\[
E[\left\vert X_{1}\right\vert +\cdots+\left\vert X_{N}\right\vert ]\geq
\sqrt{V_{1}}+\cdots+\sqrt{V_{N}}.
\]
If we set $x_{1,i}=1$ whenever $X_{i}\geq0$ and $x_{1,i}=-1$ otherwise, we
get
\[
p(x_{1,1},\ldots,x_{k,N})=\sum_{i=1}^{N}x_{1,i}X_{i}=\sum_{i=1}^{N}\left\vert
X_{i}\right\vert .
\]
Since $p(x_{1,1},\ldots,x_{k,N})$\ is bounded in $\left[  -1,1\right]  $\ at
all Boolean points, this means that
\[
\sqrt{V_{1}}+\cdots+\sqrt{V_{N}}\leq1.
\]
We now perform a sequence of variable-splittings. \ For each $i\in\left[
N\right]  $, let $m_{i}:=\left\lfloor \sqrt{V_{i}}/\delta\right\rfloor $, so
that%
\[
\delta m_{i}\leq\sqrt{V_{i}}<\delta\left(  m_{i}+1\right)  .
\]
Then we split $x_{1,i}$ into $m_{i}+1$ variables. \ This replaces each term
$a_{i_{1},\ldots,i_{k}}x_{1,i_{1}}\cdots x_{k,i_{k}}$ with $m_{i}+1$ terms
that each equal $\frac{1}{m_{i}+1}a_{i_{1},\ldots,i_{k}}x_{1,i_{1}}\cdots
x_{k,i_{k}}$. \ Therefore, this variable-splitting reduces $V_{i}$\ to
$V_{i}/\left(  m_{i}+1\right)  $, and decreases the sum (\ref{eq:sum}) by
$\frac{m_{i}}{m_{i}+1}V_{i}$.

After we have performed such variable-splittings for each $i$, the sum
(\ref{eq:sum}) becomes
\begin{align*}
\sum_{i=1}^{N}\frac{V_{i}}{m_{i}+1}  &  \leq\sum_{i=1}^{N}\frac{V_{i}}%
{\sqrt{V_{i}}/\delta}\\
&  =\delta\left(  \sqrt{V_{1}}+\cdots+\sqrt{V_{N}}\right) \\
&  \leq\delta.
\end{align*}
The number of new variables that get introduced equals
\[
\sum_{i=1}^{N}m_{i}\leq\sum_{i=1}^{N}\frac{\sqrt{V_{i}}}{\delta}\leq\frac
{1}{\delta}.
\]

The case $S\subset\left[  k\right]  $ reduces to the case $S=\left[  k\right]
$ in the following way. \ For typographical convenience, assume that
$S=\left[  \ell\right]  $ for some $\ell$. \ Then substituting $x_{i,j}=1$ for
$i>\ell$ transforms the polynomial $p(x_{1,1},\ldots,x_{k,N})$ into the
polynomial
\[
p^{\prime}(x_{1,1},\ldots,x_{\ell,N})=\sum_{i_{1},\ldots,i_{\ell}\in\left[
N\right]  }\bar{a}_{i_{1},\ldots,i_{\ell}}x_{1,i_{1}}\cdots x_{\ell,i_{\ell}}%
\]
where
\[
\bar{a}_{i_{1},\ldots,i_{\ell}}:=\sum_{i_{\ell+1},\ldots,i_{k}\in\left[
N\right]  }a_{i_{1},\ldots,i_{k}}.
\]
The statement of Lemma \ref{cl:main} now becomes
\[
\sum_{i_{1},\ldots,i_{\ell}\in\left[  N\right]  }\bar{a}_{i_{1},\ldots
,i_{\ell}}^{2}\leq\delta
\]
which can be achieved similarly to the previous case.
\end{proof}

Lemma \ref{cl:main} has the following consequence.

\begin{corollary}
\label{splitcor}There is a sequence of variable-splittings that introduces at
most $2^{k}/\delta$ new variables, and that produces a polynomial $p^{\prime}%
$\ that satisfies $\Lambda_{S}\leq\delta$\ for every nonempty subset
$S\subseteq\left[  k\right]  $.
\end{corollary}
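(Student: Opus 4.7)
The plan is to apply Lemma~\ref{cl:main} sequentially, once for each of the $2^k - 1$ nonempty subsets $S \subseteq [k]$. Each application uses at most $1/\delta$ new variables to drive $\Lambda_S$ down to at most $\delta$. If we can verify that these individual fixes do not interfere with one another, then the total budget is at most $(2^k - 1)/\delta \leq 2^k/\delta$, exactly the bound claimed. Crucially, Lemma~\ref{cl:main} applies to any bounded block-multilinear polynomial of degree $k$, and variable-splitting preserves both properties, so the lemma remains available at every stage of the iteration.

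The key observation needed is a \emph{monotonicity} claim: splitting any single variable $x_{j,l}$ in any block $j$ cannot \emph{increase} $\Lambda_{S'}$ for any nonempty $S' \subseteq [k]$. Given this, once some earlier step has driven $\Lambda_{S_1} \leq \delta$, no subsequent splittings (performed to fix $\Lambda_{S_2}, \Lambda_{S_3}, \ldots$) can undo that guarantee. Thus all the $\Lambda_S$ bounds accumulate, and the final polynomial satisfies $\Lambda_S \leq \delta$ for every nonempty $S$ simultaneously.

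To establish monotonicity, I would compute directly how $\Lambda_{S'}$ transforms under a single split. Splitting $x_{j,l}$ into $x_{j,l_1}, \ldots, x_{j,l_m}$ replaces each monomial containing $x_{j,l}$ by $m$ monomials whose coefficients are rescaled by $1/m$. When $j \notin S'$, the index $i_j$ lies in the inner sum defining $\Lambda_{S'}$: the $m$ new terms at $i_j = l_1, \ldots, l_m$ each contribute $\tfrac{1}{m} a_{\ldots,l,\ldots}$, telescoping exactly to the old contribution at $i_j = l$, so the inner sum is unchanged and $\Lambda_{S'}$ is preserved. When $j \in S'$, the index $i_j$ lies in the outer sum: the old contribution $\bigl(\sum_{(i_s)_{s \notin S'}} a_{\ldots,l,\ldots}\bigr)^2$ is replaced by $m$ new outer-sum contributions, each equal to $1/m^2$ times the original, summing to $1/m$ of the original and thus strictly decreasing (or preserving) $\Lambda_{S'}$.

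Putting this together, I would enumerate the nonempty subsets as $S_1, \ldots, S_{2^k - 1}$; for each $S_i$ in turn, apply Lemma~\ref{cl:main} to the current polynomial to achieve $\Lambda_{S_i} \leq \delta$ using at most $1/\delta$ new variables; by monotonicity, all previously enforced bounds $\Lambda_{S_1}, \ldots, \Lambda_{S_{i-1}} \leq \delta$ persist. Adding up the at most $1/\delta$ new variables from each of the $2^k - 1$ steps gives the bound $2^k/\delta$. The only genuine technical step is the monotonicity calculation above; it is a short coefficient-bookkeeping exercise rather than a serious obstacle, so the main content of the corollary really is the clean reduction to Lemma~\ref{cl:main} via this invariant.
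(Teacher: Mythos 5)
Your proposal is correct and follows essentially the same route as the paper: iterate Lemma \ref{cl:main} over all $2^k-1$ nonempty subsets, with the observation that a split in block $j$ leaves $\Lambda_{S'}$ unchanged when $j\notin S'$ and can only decrease it when $j\in S'$, so earlier guarantees persist. Your explicit coefficient bookkeeping just spells out the monotonicity claim that the paper states in one line.
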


\begin{proof}
We simply apply the procedure of Lemma \ref{cl:main} once for each nonempty
$S\subseteq\left[  k\right]  $, in any order. \ Since there are $2^{k}%
-1$\ possible choices for $S$, and since each iteration adds at most
$1/\delta$\ variables, the total number of added variables is at most
$2^{k}/\delta$. \ Furthermore, we claim that later iterations can never
\textquotedblleft undo\textquotedblright\ the effects of previous iterations.
\ This is because, if we consider how the quantity%
\[
\Lambda_{S}=\sum_{(i_{j})_{j\in S}}\left(  \sum_{(i_{j})_{j\notin S}}%
a_{i_{1},\ldots,i_{k}}\right)  ^{2}%
\]
is affected by variable-splittings applied to the variables in $R_{j}$, there
are only two possibilities: if $j\in S$\ then $\Lambda_{S}$\ can decrease,
while if $j\notin S$\ then $\Lambda_{S}$\ remains unchanged.
\end{proof}

We now apply Corollary \ref{splitcor} with the choice $\delta=\varepsilon
^{2}/N$. \ This introduces at most $2^{k}N/\varepsilon^{2}=O\left(
N/\varepsilon^{2}\right)  $\ new variables, and achieves $\Lambda_{S}%
\leq\varepsilon^{2}/N$\ for every $S$.

From now on, we will use $n$ to denote the \textquotedblleft
new\textquotedblright\ number of variables per block, which is a constant
factor greater than the \textquotedblleft old\textquotedblright\ number $N$.

\subsection{The Estimator\label{ESTIMATOR}}

Let
\[
b_{i_{1},\ldots,i_{k}}:=a_{i_{1},\ldots,i_{k}}x_{1,i_{1}}\cdots x_{k,i_{k}}.
\]
Then
\[
p(x_{1,1},\ldots,x_{k,n})=\sum_{i_{1},\ldots,i_{k}}b_{i_{1},\ldots,i_{k}}.
\]
We can estimate this sum in the following way. \ For each $i,j_{i}$
independently, let $y_{i,j_{i}}$ be a $\left\{  0,1\right\}  $-valued random
variable with $\Pr[y_{i,j_{i}}=1]=\frac{1}{n^{1/k}}$. \ We then take
\[
P:=b_{i_{1},\ldots,i_{k}}y_{1,i_{1}}\cdots y_{k,i_{k}}%
\]
as our estimator.

Clearly, this is an unbiased estimator of $p(x_{1,1},\ldots,x_{k,n})$, with
expectation%
\[
\operatorname*{E}[P]=\frac{p(x_{1},\ldots,x_{n})}{n}.
\]
The result we would like to prove is that $\operatorname*{Var}[P]=O(\delta
/n)$. \ If this is true, then performing $O\left(  1\right)  $\ repetitions of
$P$ allows us to estimate $p(x_{1,1},\ldots,x_{k,n})$ with precision
$\sqrt{\delta n}=\sqrt{\left(  \varepsilon^{2}/n\right)  \cdot n}=\varepsilon
$. \ This estimation can be carried out with $O(n^{1-1/k})$ queries because,
to calculate $P$, we only need the values of $x_{i,j}$ with $y_{i,j}=1$, and
the number of such variables is $O(n^{1-1/k})$, with a very high probability.
\ Note that%
\[
O(n^{1-1/k})=O\left(  \left(  \frac{N}{\delta}\right)  ^{1-1/k}\right)
=O\left(  \left(  \frac{N}{\varepsilon^{2}}\right)  ^{1-1/k}\right)  ,
\]
in terms of the number of variables $N$ of our original polynomial. \ Here the
big-$O$ hides a factor of $\exp\left(  k\right)  $.

\subsection{Warmup\label{WARMUP}}

As a warmup, consider the following simpler estimator. \ For each
$i_{1},\ldots,i_{k}$\ independently, let $y_{i_{1},\ldots,i_{k}}$ be a
$\left\{  0,1\right\}  $-valued random variable with
\[
\Pr[y_{i_{1},\ldots,i_{k}}=1]=\frac{1}{n}.
\]
Then let
\[
P^{\prime}:=\sum_{i_{1},\ldots,i_{k}\in\left[  n\right]  }b_{i_{1}%
,\ldots,i_{k}}y_{i_{1},\ldots,i_{k}}.
\]
Once again, $P^{\prime}$\ is clearly an unbiased estimator for $p$, with
expectation $\operatorname*{E}[P^{\prime}]=p/n$.

Let us start by proving that $\operatorname*{Var}[P^{\prime}]=O(\delta/n)$.
\ Let
\[
B=\sum_{i_{1},\ldots,i_{k}\in\left[  n\right]  }b_{i_{1},\ldots,i_{k}}^{2}.
\]
Then
\begin{align*}
\operatorname*{Var}[P^{\prime}]  &  =\sum_{i_{1},\ldots,i_{k}\in\left[
n\right]  }b_{i_{1},\ldots,i_{k}}^{2}\operatorname*{Var}[y_{i_{1},\ldots
,i_{k}}]\\
&  =\sum_{i_{1},\ldots,i_{k}\in\left[  n\right]  }b_{i_{1},\ldots,i_{k}}%
^{2}\left(  \frac{1}{n}-\frac{1}{n^{2}}\right) \\
&  \leq\frac{1}{n}\sum_{i_{1},\ldots,i_{k}\in\left[  n\right]  }%
b_{i_{1},\ldots,i_{k}}^{2}\\
&  =\frac{B}{n}.
\end{align*}
Taking $S=\left[  k\right]  $ in equation (\ref{eq:need}) implies that
$B\leq\delta$ and hence $\operatorname*{Var}[P^{\prime}]\leq\delta/n$.

\subsection{Second Estimator\label{SECOND}}

The variance of the original estimator $P$ is
\begin{align*}
\operatorname*{Var}[P]  &  =\sum_{i_{1},\ldots,i_{k}\in\left[  n\right]
}b_{i_{1},\ldots,i_{k}}^{2}\operatorname*{Var}\left[  y_{1,i_{1}}\cdots
y_{k,i_{k}}\right] \\
&  +\sum_{(i_{1},\ldots,i_{k})\neq(i_{1}^{\prime},\ldots,i_{k}^{\prime}%
)}b_{i_{1},\ldots,i_{k}}b_{i_{1}^{\prime},\ldots,i_{k}^{\prime}}%
\operatorname*{Cov}\left[  y_{1,i_{1}}\cdots y_{k,i_{k}},y_{1,i_{1}^{\prime}%
}\cdots y_{k,i_{k}^{\prime}}\right] \\
&  =\operatorname*{Var}[P^{\prime}]+\sum_{(i_{1},\ldots,i_{k})\neq
(i_{1}^{\prime},\ldots,i_{k}^{\prime})}b_{i_{1},\ldots,i_{k}}b_{i_{1}^{\prime
},\ldots,i_{k}^{\prime}}\operatorname*{Cov}\left[  y_{1,i_{1}}\cdots
y_{k,i_{k}},y_{1,i_{1}^{\prime}}\cdots y_{k,i_{k}^{\prime}}\right]  .
\end{align*}
If $i_{j}\neq i_{j}^{\prime}$ for all $j$, then $\prod_{j}y_{j,i_{j}}$ and
$\prod_{j}y_{j,i_{j}^{\prime}}$ are independent random variables and the
covariance between them is zero. \ If $i_{j}=i_{j}^{\prime}$ for $\ell$ values
of $j$, then
\begin{align*}
\operatorname*{Cov}\left[  \prod_{j}y_{j,i_{j}},\prod_{j}y_{j,i_{j}^{\prime}%
}\right]   &  =\Pr\left[  \prod_{j}y_{j,i_{j}}=\prod_{j}y_{j,i_{j}^{\prime}%
}=1\right]  -\Pr\left[  \prod_{j}y_{j,i_{j}}=1\right]  \Pr\left[  \prod
_{j}y_{j,i_{j}^{\prime}}=1\right] \\
&  =\frac{1}{\left(  n^{1/k}\right)  ^{2k-\ell}}-\left(  \frac{1}{\left(
n^{1/k}\right)  ^{k}}\right)  ^{2}\\
&  =\frac{1}{n^{2-\ell/k}}-\frac{1}{n^{2}}.
\end{align*}
Let $S_{\ell}$ consist of all pairs $(i_{1},\ldots,i_{k}),(i_{1}^{\prime
},\ldots,i_{k}^{\prime})$ such that $i_{j}=i_{j}^{\prime}$ for exactly $\ell$
values of $j$. \ Let $T_{\ell}$ be the multiset consisting of the elements of
$S_{\ell},\ldots,S_{k-1}$, with each element of $S_{\ell^{\prime}}$ occurring
${\binom{\ell^{\prime}}{\ell}}$ times. \ Then by inclusion-exclusion, we have
\[
S_{\ell}=T_{\ell}-{\binom{\ell+1}{\ell}}T_{\ell+1}+{\binom{\ell+2}{\ell}%
}T_{\ell+2}-\cdots
\]
where ${\binom{\ell^{\prime}}{\ell}}T_{\ell^{\prime}}$ denotes the union of
$\ell^{\prime}$ copies of $T_{\ell^{\prime}}$. \ Hence,
\begin{align}
\operatorname*{Var}[P]  &  =\operatorname*{Var}[P^{\prime}]+\sum_{\ell
=1}^{k-1}\left(  \frac{1}{n^{2-\ell/k}}-\frac{1}{n^{2}}\right)  \sum
_{(i_{1},\ldots,i_{k}),(i_{1}^{\prime},\ldots,i_{k}^{\prime})\in S_{\ell}%
}b_{i_{1},\ldots,i_{k}}b_{i_{1}^{\prime},\ldots,i_{k}^{\prime}}\nonumber\\
&  =\operatorname*{Var}[P^{\prime}]+\sum_{\ell=1}^{k-1}p_{\ell}\sum
_{(i_{1},\ldots,i_{k}),(i_{1}^{\prime},\ldots,i_{k}^{\prime})\in T_{\ell}%
}b_{i_{1},\ldots,i_{k}}b_{i_{1}^{\prime},\ldots,i_{k}^{\prime}}
\label{eq:nearly-final}%
\end{align}
where
\[
p_{\ell}:=\sum_{j=1}^{\ell}(-1)^{\ell-j}{\binom{\ell}{j}}\left(  \frac
{1}{n^{2-j/k}}-\frac{1}{n^{2}}\right)  .
\]
For large $n$, we have $p_{\ell}=(1\pm o(1))\frac{1}{n^{2-\ell/k}}$. \ To
complete the proof, we just need one more lemma.

\begin{lemma}
\label{lem:final}
\[
\sum_{(i_{1},\ldots,i_{k}),(i_{1}^{\prime},\ldots,i_{k}^{\prime})\in T_{\ell}%
}b_{i_{1},\ldots,i_{k}}b_{i_{1}^{\prime},\ldots,i_{k}^{\prime}}\leq
\delta{\binom{k}{\ell}}.
\]

\end{lemma}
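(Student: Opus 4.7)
The first step is to unpack $T_\ell$ and observe that, because every pair in $S_{\ell'}$ with agreement set $A$ appears in $T_\ell$ with multiplicity $\binom{\ell'}{\ell} = |\{S \subseteq A : |S| = \ell\}|$, we have
\[
\sum_{(i,i') \in T_\ell} b_i b_{i'} \;=\; \sum_{S \subseteq [k],\,|S|=\ell}\,\sum_{\substack{i, i':\\ i|_S = i'|_S,\, i \neq i'}} b_i b_{i'}.
\]
For each fixed $S$ of size $\ell$, set $B_{i_S} := \sum_{i_{S^c}} b_{i_S, i_{S^c}}$. Completing the square on the diagonal $i = i'$ gives that the inner sum equals $\sum_{i_S} B_{i_S}^2 - \sum_i b_i^2 = \sum_{i_S} B_{i_S}^2 - \Lambda_{[k]}$, where the last equality uses $b_i^2 = a_i^2$.

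Next, using $b_{i_S, i_{S^c}} = a_{i_S, i_{S^c}} \cdot x_{S, i_S} \cdot x_{S^c, i_{S^c}}$ together with $x_{S, i_S}^2 = 1$, I rewrite
\[
\sum_{i_S} B_{i_S}^2 \;=\; \sum_{i_S}\Bigl(\sum_{i_{S^c}} a_{i_S, i_{S^c}} \prod_{j \in S^c} x_{j, i_j}\Bigr)^{\!2} \;=\; \Lambda_S(\widetilde p),
\]
where $\widetilde p$ is the polynomial obtained from $p$ by flipping the signs of the variables in each block $j \in S^c$ according to the pattern $x_{j,\cdot}$. Since $\widetilde p(z) = p(\sigma \cdot z)$ for a $\pm 1$ pattern $\sigma$, the polynomial $\widetilde p$ is again bounded in $[-1,1]$ on Boolean inputs and block-multilinear of degree $k$; hence the preprocessing from Corollary \ref{splitcor} should give $\Lambda_S(\widetilde p) \leq \delta$. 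Summing over the $\binom{k}{\ell}$ choices of $S$ and subtracting the nonnegative term $\binom{k}{\ell} \Lambda_{[k]}$ would then yield the lemma.

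The main obstacle is the \emph{uniformity} of the splitting in the preceding paragraph: the splitting multiplicities produced by Lemma \ref{cl:main} are calibrated using the sign-dependent quantities $V_l$, so a priori they control $\Lambda_S(p)$ but not $\Lambda_S(p_\sigma)$ for an arbitrary twist. To finish the proof I would re-run the heart of Lemma \ref{cl:main} for the twisted polynomial: the boundedness step there (which extracts $\sum_l \sqrt{V_l} \leq 1$ from a suitable random sign assignment together with $|p| \leq 1$) applies verbatim to $p_\sigma$ and gives $\sum_l \sqrt{V_l^\sigma} \leq 1$ for every $\sigma$. A Cauchy--Schwarz step, applied against the already-executed splits with multiplicities $m_l + 1 \geq \sqrt{V_l}/\delta$, then bounds $\sum_l V_l^\sigma/(m_l+1)$ and thereby $\Lambda_S(\widetilde p)$ by $\delta$ (possibly up to a constant factor that is absorbed into the $\exp(k)$ hidden in Theorem \ref{estimatorthm}'s big-O). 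Handling this Cauchy--Schwarz carefully, so that the factor is independent of $n$ and the bound survives summation over $|S| = \ell$, is the place where I expect the real work to lie.
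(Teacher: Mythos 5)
Your multiset decomposition $T_\ell=\sum_{|S|=\ell}T_S$ and the completing-the-square step (the diagonal you subtract is exactly the nonnegative term $\sum_i b_i^2=\Lambda_{[k]}$ that the paper adds and discards) coincide with the paper's own proof; up to that point the two arguments are the same. The divergence is at the last step. The paper finishes in one line: since the factors $x_{j,i_j}$ with $j\in S$ square away, it bounds $\sum_{i_S}\bigl(\sum_{i_{S^c}}b_{i}\bigr)^2$ by $\delta$ by appealing directly to the requirement (\ref{eq:need}). You are right that, read literally in terms of the raw coefficients $a$, (\ref{eq:need}) controls only the all-ones collapse, whereas the quantity appearing here is $\Lambda_S$ of the polynomial twisted by the input's signs on the blocks outside $S$; so what the lemma actually consumes is the sign-uniform form of (\ref{eq:need}) (equivalently, (\ref{eq:need}) with the $a$'s replaced by the $b$'s of an arbitrary Boolean input). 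But that is a statement about what the preprocessing must deliver, i.e., it belongs to Lemma \ref{cl:main} and Corollary \ref{splitcor}; the clean way to proceed is to demand and establish the requirement in that uniform form there, not to re-run the splitting analysis inside Lemma \ref{lem:final}, which is what you attempt and, by your own admission, do not complete.

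Moreover, the patch you sketch would fail as stated. Knowing $\sum_l\sqrt{V_l^\sigma}\le1$ for every $\sigma$ together with $m_l+1\ge\sqrt{V_l}/\delta$ does not bound $\sum_l V_l^\sigma/(m_l+1)$ by $O(\delta)$: the multiplicities are calibrated by the untwisted $V_l$, and cancellation can make $V_l=0$ (hence $m_l=0$) while $V_l^\sigma$ is of constant size, so a single unsplit variable already contributes far more than $\delta$. A concrete instance: $p=\frac1M\sum_{j=1}^M(-1)^j x_1 y_j$ (bounded, block-multilinear, $M$ even) has $\Lambda_{\{1\}}(p)=0$, so the pass of Lemma \ref{cl:main} for $S=\{1\}$ splits nothing, yet for $\sigma_j=(-1)^j$ one gets $\Lambda_{\{1\}}(p_\sigma)=1$. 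In this example it is the passes for the \emph{other} sets $S$ (which see $V_1=1/M$ concentrated on $x_1$) that force the heavy splitting of $x_1$ and rescue the estimator---an interaction that your per-$S$ Cauchy--Schwarz argument does not capture. So as written, your proposal proves the lemma only for the all-ones input; the step you flag as ``the real work'' is genuinely missing, and the route you propose for it does not go through.
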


\begin{proof}
Let $S\subseteq\left[  k\right]  $ with $\left\vert S\right\vert =\ell$. \ We
define $T_{S}$ as the set consisting of all pairs $(i_{1},\ldots,i_{k}%
),(i_{1}^{\prime},\ldots,i_{k}^{\prime})$ such that $i_{j}=i_{j}^{\prime}$ for
all $j\in S$ and $(i_{1},\ldots,i_{k})\neq(i_{1}^{\prime},\ldots,i_{k}%
^{\prime})$. \ Then
\[
T_{\ell}=\sum_{S~:~\left\vert S\right\vert =\ell}T_{S},
\]
and%
\[
\sum_{\left(  (i_{1},\ldots,i_{k}),(i_{1}^{\prime},\ldots,i_{k}^{\prime
})\right)  \in T_{\ell}}b_{i_{1},\ldots,i_{k}}b_{i_{1}^{\prime},\ldots
,i_{k}^{\prime}}=\sum_{S~:~\left\vert S\right\vert =\ell}\sum_{\left(
(i_{1},\ldots,i_{k}),(i_{1}^{\prime},\ldots,i_{k}^{\prime})\right)  \in T_{S}%
}b_{i_{1},\ldots,i_{k}}b_{i_{1}^{\prime},\ldots,i_{k}^{\prime}}.
\]
The lemma now follows by showing that, for each $S$, the inner sum is at most
$\delta$. \ To show that, we first add all pairs $\left(  (i_{1},\ldots
,i_{k}),(i_{1},\ldots,i_{k})\right)  $ to $T_{S}$. \ This may only increase
the sum because $a_{i_{1},\ldots,i_{k}}^{2}$ is always at least $0$. \ Then,
we group together all terms with the same values of $i_{j}=i_{j}^{\prime}$ for
$j\in S$. \ The sum of those is equal to
\[
\sum_{(i_{j},i_{j}^{\prime})_{j\notin S}}b_{i_{1},\ldots,i_{k}}b_{i_{1}%
^{\prime},\ldots,i_{k}^{\prime}}=\left(  \sum_{(i_{j})_{j\notin S}}%
b_{i_{1},\ldots,i_{k}}\right)  ^{2}.
\]
Because of (\ref{eq:need}), the sum of all such squares, over all $i_{j},j\in
S$ is at most $\delta$.
\end{proof}

Combining Lemma \ref{lem:final} with (\ref{eq:nearly-final}), we obtain%
\begin{align*}
\operatorname*{Var}[P]  &  =\operatorname*{Var}[P^{\prime}]+\sum_{\ell
=1}^{k-1}p_{\ell}\cdot O\left(  \delta\right) \\
&  =\operatorname*{Var}[P^{\prime}]+\sum_{\ell=1}^{k-1}O\left(  \frac{\delta
}{n^{2-\ell/k}}\right) \\
&  =\operatorname*{Var}[P^{\prime}]+O\left(  \frac{\delta}{n^{1+1/k}}\right)
\\
&  =O\left(  \frac{\delta}{n}\right)  .
\end{align*}

\section{\textsf{BQP}-Completeness\label{BQP}}

In this section, we prove that (an explicit version of) the $k$-fold
\textsc{Forrelation}\ problem, with $k=\operatorname*{poly}\left(  n\right)
$, is complete for the complexity class $\mathsf{P{}romiseBQP}$. \ More
generally, for any $k$, we will show how explicit $k$-fold
\textsc{Forrelation}\ captures the power of quantum circuits of depth
$O\left(  k\right)  $.

Recall that, in explicit $k$-fold \textsc{Forrelation}, we are given as input
$k$\ Boolean circuits $C_{1},\ldots,C_{k}$, which compute the Boolean
functions $f_{1},\ldots,f_{k}:\left\{  0,1\right\}  ^{n}\rightarrow\left\{
-1,1\right\}  $ respectively. \ The problem is to decide whether the
\textquotedblleft twisted sum\textquotedblright%
\[
\Phi_{f_{1},\ldots,f_{k}}:=\frac{1}{2^{\left(  k+1\right)  n/2}}\sum
_{x_{1},\ldots,x_{k}\in\left\{  0,1\right\}  ^{n}}f_{1}\left(  x_{1}\right)
\left(  -1\right)  ^{x_{1}\cdot x_{2}}f_{2}\left(  x_{2}\right)  \left(
-1\right)  ^{x_{2}\cdot x_{3}}\cdots\left(  -1\right)  ^{x_{k-1}\cdot x_{k}%
}f_{k}\left(  x_{k}\right)
\]
satisfies $\left\vert \Phi_{f_{1},\ldots,f_{k}}\right\vert \leq\frac{1}{100}$
or $\Phi_{f_{1},\ldots,f_{k}}\geq\frac{3}{5}$, promised that one of those is
the case. \ As we observed in Proposition \ref{inpromisebqp},\ this problem is
clearly \textit{in} $\mathsf{P{}romiseBQP}$, so our task reduces to showing
that it's $\mathsf{P{}romiseBQP}$-hard---i.e., that any quantum circuit can be
encoded into it.

For this task, it will suffice to consider an extremely restricted version of
$k$-fold \textsc{Forrelation}, in which each function $f_{i}$\ depends on at
most $3$ of its input bits.

We will appeal to a well-known result of Shi \cite{shi:gate}, who showed that
the gate set $\left\{  \operatorname*{H},\operatorname*{Toffoli}\right\}  $ is
already universal for quantum computation. \ Recall here that%
\[
\operatorname*{H}=\frac{1}{\sqrt{2}}\left(
\begin{array}
[c]{cc}%
1 & 1\\
1 & -1
\end{array}
\right)
\]
is the Hadamard gate, while the Toffoli gate is the $3$-qubit gate that maps
each basis vector $\left\vert x,y,z\right\rangle $ to $\left\vert x,y,z\oplus
xy\right\rangle $. \ The Toffoli gate is equivalent, under conjugating the
third qubit by Hadamards, to the controlled-controlled-sign or CCSIGN\ gate,
which maps each $\left\vert x,y,z\right\rangle $\ to $\left(  -1\right)
^{xyz}\left\vert x,y,z\right\rangle $. \ Thus, we deduce that the set
$\left\{  \operatorname*{H},\operatorname*{CCSIGN}\right\}  $ is also
universal for quantum computation.

In a bit more detail, given a quantum circuit $Q$ composed of Hadamard\ and
CCSIGN gates, acting on $n$ qubits, define%
\[
A_{Q}:=\left\langle 0\right\vert ^{\otimes n}Q\left\vert 0\right\rangle
^{\otimes n},
\]
so that $A_{Q}^{2}$\ is the probability that $Q$ returns the all-$0$ state to
itself. \ Then let \textsc{QSim}\ be the problem of deciding whether
$\left\vert A_{Q}\right\vert \leq\frac{1}{100}$\ or $A_{Q}\geq\frac{3}{5}$,
promised that one of those is the case.

\begin{lemma}
[follows from Shi \cite{shi:gate}]\textsc{QSim} is $\mathsf{P{}romiseBQP}%
$-complete.\label{qsimlem}
\end{lemma}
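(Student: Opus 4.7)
For containment in $\mathsf{P{}romiseBQP}$, observe that every gate of $Q$ is real ($\operatorname*{H}$ and CCSIGN both have real entries in the computational basis), so $A_{Q}\in\mathbb{R}$. The Hadamard test then applies directly: prepare an ancilla in $\left\vert +\right\rangle $, apply controlled-$Q$ on the remaining $n$ qubits initialized to $\left\vert 0\right\rangle ^{\otimes n}$, apply $\operatorname*{H}$ to the ancilla, and measure it. The outcome $\left\vert 0\right\rangle $ has probability $(1+A_{Q})/2$, so the two cases $A_{Q}\geq 3/5$ and $\left\vert A_{Q}\right\vert \leq 1/100$ are separated by a constant probability gap, and $O(1)$ repetitions distinguish them with high confidence.

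For hardness, the plan is to invoke Shi's theorem \cite{shi:gate}: the gate set $\{\operatorname*{H},\operatorname*{Toffoli}\}$---equivalently $\{\operatorname*{H},\operatorname*{CCSIGN}\}$ by conjugating the Toffoli target with Hadamards---is universal for $\mathsf{BQP}$. Hence every $L\in\mathsf{P{}romiseBQP}$ is decided by a uniformly-generated polynomial-size $\{\operatorname*{H},\operatorname*{CCSIGN}\}$-circuit family $\{C_{x}\}$ with error at most $2^{-m}$ after standard amplification, where the designated answer qubit being $\left\vert 1\right\rangle $ indicates acceptance. I would then convert $C_{x}$ into a QSim instance via a ``Hadamard-test'' wrapping: starting from $\left\vert 0\right\rangle ^{\otimes(m+1)}$, apply $\operatorname*{H}$ to an added control ancilla, apply controlled-$C_{x}$, apply controlled-$Z$ on the answer qubit, apply controlled-$C_{x}^{\dagger}$, and finish with $\operatorname*{H}$ on the ancilla. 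A short calculation (the control ancilla acts as a Hadamard-test qubit for the observable $C_{x}^{\dagger}Z_{\text{ans}}C_{x}$) yields
\[
A_{Q}=\tfrac{1}{2}\bigl(1+\langle 0|^{\otimes m}C_{x}^{\dagger}Z_{\text{ans}}C_{x}|0\rangle^{\otimes m}\bigr)=1-p_{\text{acc}},
\]
so yes-instances of $L$ give $A_{Q}\leq 2^{-m}$ and no-instances give $A_{Q}\geq 1-2^{-m}$. This reduces $\bar{L}$ to QSim, and closure of $\mathsf{P{}romiseBQP}$ under complement completes the argument.

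The only genuinely non-trivial step, and I expect it to be routine bookkeeping rather than a real obstacle, is verifying that each of controlled-$C_{x}$, controlled-$C_{x}^{\dagger}$, and controlled-$Z_{\text{ans}}$ can be written as an $\{\operatorname*{H},\operatorname*{CCSIGN}\}$ circuit with $\operatorname*{poly}(m)$ scratch ancillas. It suffices to control each constituent gate of $C_{x}$ individually; and controlled-$\operatorname*{H}$ and CCCSIGN each decompose into $O(1)$ gates from $\{\operatorname*{H},\operatorname*{CCSIGN}\}$ with a constant number of ancillas, again by Shi's universality applied at the constant-qubit level (with an exponentially small approximation error that can be absorbed into the amplification step). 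No techniques beyond Shi's theorem are needed.
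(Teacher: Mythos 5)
Your containment argument is fine (the paper does it even more simply: run $Q$ on $\left\vert 0\right\rangle ^{\otimes n}$ and measure, since $A_{Q}^{2}$ is the probability of seeing all zeros), and your hardness route is genuinely different from the paper's: you wrap the amplified circuit in a Hadamard test with \emph{controlled}-$C_{x}$, controlled-$Z_{\text{ans}}$, controlled-$C_{x}^{\dagger}$, whereas the paper never controls anything---it uncomputes so that acceptance corresponds to returning to $\left\vert 0\right\rangle ^{\otimes n}$ (so $A_{Q}$ itself is $\approx 1$ or $\approx 0$), and handles the unknown global sign by querying the \textsc{QSim} oracle on both $Q$ and $-Q$. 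Your variant has the nice feature that $A_{Q}=1-p_{\text{acc}}\geq 0$, so no sign trick is needed, at the cost of the compilation burden below.

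The gap is exactly in the step you dismiss as routine. The claim that controlled-$\operatorname*{H}$ (and CCCSIGN) ``decompose into $O(1)$ gates from $\left\{  \operatorname*{H},\operatorname*{CCSIGN}\right\}  $ with exponentially small approximation error, by Shi's universality at the constant-qubit level'' is false as stated: Shi's theorem gives \emph{approximate} universality, and a fixed finite gate set generates only countably many unitaries, so a generic target gate cannot be hit exactly, and approximating it to error $\delta$ requires circuits of size growing with $\log(1/\delta)$ (Solovay--Kitaev), not $O(1)$ gates with error $2^{-\Omega(m)}$. (CCCSIGN happens to be exactly synthesizable---three Toffolis plus one clean ancilla give CCCX, and Toffoli $=\operatorname*{H}_{t}\cdot\operatorname*{CCSIGN}\cdot\operatorname*{H}_{t}$---but controlled-$\operatorname*{H}$ is not covered by anything in Shi, and exact Toffoli--Hadamard synthesis of such gates is a nontrivial fact you would have to cite separately.) The argument is fixable: approximate each controlled gate to accuracy $1/\operatorname*{poly}(m)$ via Solovay--Kitaev, note that operator-norm errors perturb $A_{Q}$ only additively so the constant promise gap survives; or, cleaner, observe that $c\text{-}(C_{x}^{\dagger}Z_{\text{ans}}C_{x})=C_{x}^{\dagger}\,(c\text{-}Z_{\text{ans}})\,C_{x}$ with $C_{x}$ \emph{uncontrolled}, so only a single CZ needs compiling; or, as the paper effectively does, build the ideal accept-iff-return-to-zero circuit over a standard gate set first and only then apply Shi's compilation (to $1/\operatorname*{poly}$ accuracy) to the whole thing. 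As written, though, the error/size accounting at the crucial step does not hold up.
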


\begin{proof}
Besides what was said above, together with standard amplification, we just
need two further observations. \ First, by using uncomputing, we can modify
any quantum circuit so that it \textquotedblleft accepts\textquotedblright\ by
returning all its qubits to the initial state, $\left\vert 0\right\rangle
^{\otimes n}$, and \textquotedblleft rejects\textquotedblright\ by ending in
any state orthogonal to $\left\vert 0\right\rangle ^{\otimes n}$. \ Second, we
can handle the case that $A_{Q}$\ is negative by\ running both $Q$\ and $-Q$
(i.e., $Q$ with a $-1$\ global phase), and checking whether our \textsc{QSim}%
\ oracle returns $A_{Q}\geq\frac{3}{5}$\ for either of them.
\end{proof}

Now, the outline of a reduction from \textsc{QSim}\ to $k$-fold
\textsc{Forrelation}\ suggests itself almost immediately. \ Given an $n$-qubit
quantum circuit $Q$ over the basis $\left\{  \operatorname*{H}%
,\operatorname*{CCSIGN}\right\}  $, we want to construct Boolean functions
$f_{1},\ldots,f_{k}$\ with the property that $\Phi_{f_{1},\ldots,f_{k}}=A_{Q}%
$. \ To do so, we should exploit the fact that, as we have seen, $\Phi
_{f_{1},\ldots,f_{k}}$\ is a transition amplitude for a particular kind of
quantum circuit: namely, a circuit that consists of rounds of Hadamards
applied to all $n$ qubits, interleaved with diagonal matrices $U_{f_{i}}%
$\ that map each basis state $\left\vert x\right\rangle $\ to $f_{i}\left(
x\right)  \left\vert x\right\rangle $. Thus, we should use suitably-placed
$f_{i}$'s to simulate each of the CCSIGN\ gates in $Q$ (exploiting the fact
that CCSIGN is diagonal in the computational basis), while using the $\left(
-1\right)  ^{x_{i}\cdot x_{i+1}}$\ terms in the expression for $\Phi
_{f_{1},\ldots,f_{k}}$\ to simulate the Hadamard gates in $Q$.

However, there is a technical problem in implementing the above plan.
\ Namely, while $\Phi_{f_{1},\ldots,f_{k}}$\ \textit{will} equal the
transition amplitude $\left\langle 0\right\vert ^{\otimes n}Q^{\prime
}\left\vert 0\right\rangle ^{\otimes n}$, for some quantum circuit $Q^{\prime
}$ that consists of Hadamard and CCSIGN gates, the circuit $Q^{\prime}$\ will
contain $n$ Hadamard gates between every CCSIGN gate and the succeeding one,
\textit{whether we want Hadamards there or not}. \ This suggests that, in
order to encode an \textit{arbitrary} sequence of Hadamard and
$\operatorname*{CCSIGN}$ gates, we need some gadget that \textquotedblleft
cancels\textquotedblright\ unwanted Hadamard gates against each other, leaving
only the Hadamard gates that actually appear in the original circuit $Q$. \ Of
course, we can exploit the fact that $\operatorname*{H}^{2}$\ is the identity.
\ So for example, if we wanted to remove the $n$ Hadamard gates that
\textquotedblleft automatically appear\textquotedblright\ between $U_{f_{i-1}%
}$\ and $U_{f_{i}}$, then we could simply set $f_{i}$\ to be the constant $1$
function, so that $U_{f_{i}}$\ was the identity. \ Then every
$\operatorname*{H}$ between $U_{f_{i-1}}$\ and $U_{f_{i}}$\ would cancel with
a corresponding $\operatorname*{H}$ between $U_{f_{i}}$\ and $U_{f_{i+1}}$.
\ Alas, this still doesn't tell us how to cancel \textit{some}
$\operatorname*{H}$'s: that is, how to Hadamard certain desired qubits, but
not other qubits. \ We do this in the following theorem.

\begin{theorem}
\label{bqphard}Explicit $k$-fold \textsc{Forrelation}, for
$k=\operatorname*{poly}\left(  n\right)  $, is $\mathsf{P{}romiseBQP}$-hard.
\ (Moreover, the functions $f_{1},\ldots,f_{k}$\ produced by the reduction all
have the form $f_{i}\left(  x\right)  =\left(  -1\right)  ^{C\left(  x\right)
}$, where $C$ is a product of at most $3$ input bits.)
\end{theorem}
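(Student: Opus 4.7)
The plan is to reduce from \textsc{QSim} (which is $\mathsf{P{}romiseBQP}$-hard by Lemma \ref{qsimlem}). Given an $n$-qubit circuit $Q$ over the universal gate set $\{\operatorname{H},\operatorname{CCSIGN}\}$, I will construct circuits $C_1,\ldots,C_k$ computing functions $f_i=(-1)^{C_i(x)}$ on $n' = n+N$ qubits (with $N=\operatorname{poly}(n)$ ancillas initialized to $\ket{0}$), where each $C_i$ is a product of at most $3$ input bits and $k=\operatorname{poly}(n)$, so that $\Phi_{f_1,\ldots,f_k} = c\cdot A_Q$ for a known constant $c$ that preserves the promise gap. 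By Section \ref{FORPRELIM}, $\Phi_{f_1,\ldots,f_k}$ is exactly the $\ket{0^{n'}}\to\ket{0^{n'}}$ amplitude of the rigidly-alternating circuit $Q' = \operatorname{H}^{\otimes n'}\,U_{f_k}\,\operatorname{H}^{\otimes n'}\cdots U_{f_1}\,\operatorname{H}^{\otimes n'}$, so the task reduces to realizing $Q$ on the logical qubits (with ancillas returning to $\ket{0}$) in this format using diagonal phase gates $U_{f_i}$ of the allowed 3-local form.

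Handling each $\operatorname{CCSIGN}$ gate of $Q$ on qubits $(a,b,c)$ is immediate: it is literally a $U_{f_i}$ with $f_i(x)=(-1)^{x_a x_b x_c}$. The enforced sandwich $\operatorname{H}^{\otimes n'}U_{f_i}\operatorname{H}^{\otimes n'}$ produces $\operatorname{H}^{\otimes 3}(\operatorname{CCSIGN})\operatorname{H}^{\otimes 3}$ on $(a,b,c)$ and identity on all other qubits, using the crucial fact that two consecutive $\operatorname{H}^{\otimes n'}$ layers with only an identity $U_f$ between them cancel on every qubit untouched by that $U_f$. The main obstacle -- and the reason a gadget is needed -- is the single-qubit Hadamard $\operatorname{H}_j$: no diagonal $U_f$ makes $\operatorname{H}^{\otimes n'}U_f\operatorname{H}^{\otimes n'}$ equal $\operatorname{H}$ on qubit $j$ and identity on the other qubits, since a Hadamard-conjugate of a diagonal gate is again diagonal in the opposite basis and never equals $\operatorname{H}$ itself. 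My plan is to pay for each logical Hadamard with a constant-length ancilla gadget that implements measurement-based Hadamard teleportation \emph{inside} the Forrelation format. Specifically, allocate a fresh ancilla $a_j$ initially in $\ket{0}$ (which the next $\operatorname{H}^{\otimes n'}$ layer promotes to the MBQC resource state $\ket{+}$), then insert a short sub-sequence of $U_{f_i}$'s of 2-local form $(-1)^{x_j x_{a_j}}$ together with 1-local corrections of the form $(-1)^{x_j}$ or $(-1)^{x_{a_j}}$, chosen so that the ``postselect $a_j = 0$'' built into the final $\bra{0^{n'}}$ projection automatically enforces the ``$+$'' outcome of the $X$-basis measurement in the teleportation gadget. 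A direct $4\times 4$ matrix check -- in the spirit of the identity $(I\otimes \operatorname{H})\operatorname{CZ}(I\otimes \operatorname{H})=\operatorname{CNOT}$ -- will confirm that the sub-sequence implements a scalar multiple of $\operatorname{H}_j\otimes I$ on $(j,a_j)$, while every other qubit sees an even number of $\operatorname{H}^{\otimes n'}$ layers with only identity $U_f$'s between them, and is therefore left unchanged.

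Stringing these gadgets together in the order the gates of $Q$ appear yields a Forrelation circuit with $k = O(|Q|) = \operatorname{poly}(n)$, and $\Phi_{f_1,\ldots,f_k} = c\cdot A_Q$ where $c$ is an explicit product of $1/\sqrt{2}$ factors, one per Hadamard gadget. The $\textsc{QSim}$ promise ($\lvert A_Q\rvert\le 1/100$ versus $A_Q\ge 3/5$) then transfers through $c$ to the promise ($\lvert\Phi\rvert\le 1/100$ versus $\Phi\ge 3/5$) of $k$-fold \textsc{Forrelation} after constant-factor amplification and the $\pm Q$ sign trick used in the proof of Lemma \ref{qsimlem}. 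The expected hardest part is Step 2: verifying that the Hadamard gadget leaves the ancilla $a_j$ \emph{exactly} in $\ket{0}$ (with no residual entanglement or stray global phase) and does not disturb any logical qubit other than $j$; this is a finite but fiddly computation, and the kind of check where it is easy to lose a sign or a factor of $\sqrt{2}$.
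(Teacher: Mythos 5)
Your overall architecture (reduce from \textsc{QSim}, implement CCSIGN gates directly as $3$-local phase functions, handle the forced Hadamard layers with constant functions, and build a special gadget for the single-qubit Hadamards) matches the paper, but your Hadamard gadget has a fatal flaw. Measurement-based Hadamard teleportation with the ``$+$'' outcome enforced by the final $\bra{0^{n'}}$ projection is a \emph{postselection}, and each such postselection multiplies the target amplitude by $1/\sqrt{2}$: you acknowledge this yourself when you write $\Phi_{f_1,\ldots,f_k}=c\cdot A_Q$ with $c$ a product of one $1/\sqrt{2}$ factor per Hadamard gadget. A generic \textsc{QSim} instance arising from an arbitrary $\mathsf{PromiseBQP}$ computation has $m=\operatorname{poly}(n)$ Hadamard gates, so $c=2^{-m/2}$ is exponentially small; then $\left\vert \Phi_{f_1,\ldots,f_k}\right\vert \leq 2^{-m/2}$ in \emph{both} promise cases, and the fixed thresholds $\frac{1}{100}$ versus $\frac{3}{5}$ of $k$-fold \textsc{Forrelation} cannot be met. ``Constant-factor amplification'' cannot repair an exponentially small amplitude (and adaptive Pauli corrections, which is how MBQC normally avoids this loss, are unavailable here because the $U_{f_i}$'s are fixed in advance and the only measurement is the final one). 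There is also a secondary bookkeeping problem: after the gadget, the retired qubit $j$ keeps getting hit by the automatic Hadamard layers, and depending on the parity of the remaining layers the unwanted ``$-$'' branch (carrying an $HZ$ byproduct) can still overlap $\ket{0}$ at the end and contaminate the amplitude; this is fixable with parity tracking, but the exponential damping is not.

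The paper's gadget is precisely designed to avoid any postselection: it implements the desired Hadamards \emph{unitarily}, using the identity $\operatorname{H}^{\otimes2}\operatorname{C}\operatorname{H}^{\otimes2}\operatorname{C}\operatorname{H}^{\otimes2}\operatorname{C}\operatorname{H}^{\otimes2}=\operatorname{S}\operatorname{H}^{\otimes2}$, where $\operatorname{C}$ is CSIGN (realizable as $f_i(z)=(-1)^{z_a z_b}$) and $\operatorname{S}$ is SWAP. Three CSIGN gates interleaved with the forced Hadamard layers thus Hadamard a chosen pair of qubits exactly, with deterministic amplitude $1$, while qubits outside the pair see an even number of Hadamard layers separated by identities and are untouched; the SWAP is handled by relabeling, odd-size Hadamard sets use a single dummy qubit, and the result is $\Phi_{f_1,\ldots,f_k}=A_Q$ with no scaling factor, so the promise transfers directly. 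To salvage your proof you would need to replace the teleportation gadget by a deterministic identity of this kind; as written, the reduction does not establish $\mathsf{PromiseBQP}$-hardness.
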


\begin{proof}
Given what we said above, the only additional ingredient we need is a gadget
that lets us Hadamard some desired \textit{subset} of the qubits,
$S\subset\left[  n\right]  $, and not the qubits outside $S$.%
\begin{figure}[ptb]%
\centering
\includegraphics[
trim=1.820609in 5.468394in 1.523933in 0.608960in,
height=0.7169in,
width=2.4483in
]%
{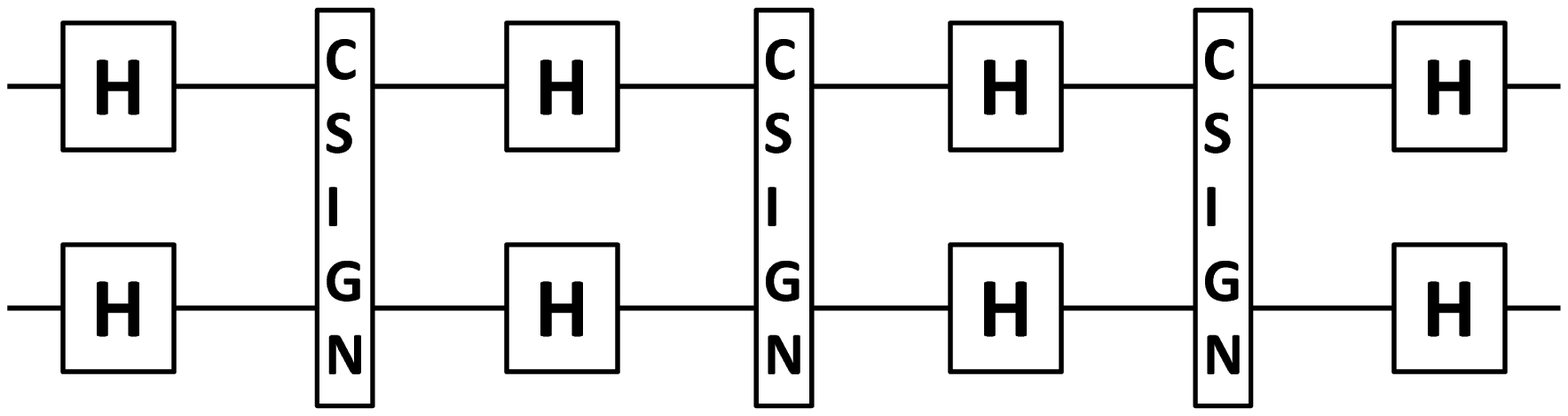}%
\caption{A $2$-qubit gadget for converting an even number of layers of
Hadamard gates into an odd number.}%
\label{gadget}%
\end{figure}

For simplicity, suppose that $\left\vert S\right\vert =2$, and let $a$ and $b$
be $S$'s elements. \ Our gadget, shown in Figure \ref{gadget}, consists of
three CSIGN gates (i.e., gates that map $\left\vert x,y\right\rangle $\ to
$\left(  -1\right)  ^{xy}\left\vert x,y\right\rangle $) on $a$ and $b$,
sandwiched between Hadamard gates. \ Note that we can implement a CSIGN on $a$
and $b$ as $U_{f_{i}}$,\ where $f_{i}\left(  z_{1},\ldots,z_{n}\right)
:=\left(  -1\right)  ^{z_{a}z_{b}}$. \ Meanwhile, the Hadamard gates are just
those that are automatically applied between each $U_{f_{i}}$\ and
$U_{f_{i+1}}$ in a quantum circuit for \textsc{Forrelation}.

To see why the gadget works, consider the following identity:%
\[
\left(  \frac{1}{2}\left(
\begin{array}
[c]{cccc}%
1 & 1 & 1 & 1\\
1 & -1 & 1 & -1\\
1 & 1 & -1 & -1\\
1 & -1 & -1 & 1
\end{array}
\right)  \left(
\begin{array}
[c]{cccc}%
1 & 0 & 0 & 0\\
0 & 1 & 0 & 0\\
0 & 0 & 1 & 0\\
0 & 0 & 0 & -1
\end{array}
\right)  \right)  ^{3}=\left(
\begin{array}
[c]{cccc}%
1 & 0 & 0 & 0\\
0 & 0 & 1 & 0\\
0 & 1 & 0 & 0\\
0 & 0 & 0 & 1
\end{array}
\right)  .
\]
In particular, if we let $\operatorname*{C}$\ stand for CSIGN,
$\operatorname*{H}^{\otimes2}$ for Hadamards on two qubits, and
$\operatorname*{S}$\ for the $2$-qubit SWAP gate, then%
\[
\operatorname*{H}\nolimits^{\otimes2}\operatorname*{C}\operatorname*{H}%
\nolimits^{\otimes2}\operatorname*{C}\operatorname*{H}\nolimits^{\otimes
2}\operatorname*{C}\operatorname*{H}\nolimits^{\otimes2}=\operatorname*{S}%
\operatorname*{H}\nolimits^{\otimes2}.
\]
Contrast this with what happens if we apply the $2$-qubit identity,
$\operatorname*{I}$, rather than $\operatorname*{C}$, in the inner layers:%
\[
\operatorname*{H}\nolimits^{\otimes2}\operatorname*{I}\operatorname*{H}%
\nolimits^{\otimes2}\operatorname*{I}\operatorname*{H}\nolimits^{\otimes
2}\operatorname*{I}\operatorname*{H}\nolimits^{\otimes2}=\operatorname*{I}.
\]
Thus, Hadamards get applied if $\operatorname*{C}$\ is chosen for the inner
layers, but \textit{not} if $\operatorname*{I}$\ is chosen. \ So this gadget
has the effect of Hadamarding $a$ and $b$, while not Hadamarding the other
qubits in the circuit. \ Now, the gadget also has the unintended side effect
of swapping $a$ and $b$. \ But since we know this is going to happen, we can
keep track of it by simply swapping the \textit{labels} of $a$ and $b$
whenever the gadget is applied.

To generalize to arbitrary subsets $S\subset\left[  n\right]  $: if
$\left\vert S\right\vert >2$ is even, then we simply partition $S$ into pairs,
and apply the $2$-qubit Hadamard gadget once in succession to each pair. \ If
$\left\vert S\right\vert $\ is odd, then the odd qubit in $S$ can be paired
with a \textquotedblleft dummy qubit,\textquotedblright\ which is introduced
into the circuit for this sole purpose.

Notice that each CSIGN gate is simulated by a single $f_{i}$, while each pair
of Hadamards is simulated by three $f_{i}$'s together with the Hadamards that
sandwich them. \ Thus, we can place a pair of Hadamards after a CSIGN gate, or
vice versa, with no difficulty. \ To place one CSIGN gate after another, or
one pair of Hadamards after another, we insert an $f_{i}=1$\ (i.e., a constant
$1$ function) in between them, in order to cancel the unwanted Hadamards.

Given a quantum circuit $Q$ on $n$ qubits, consisting of $m$ Hadamard and
CCSIGN gates, the end result of our reduction will be a list of Boolean
functions $f_{1},\ldots,f_{k}:\left\{  0,1\right\}  ^{n+1}\rightarrow\left\{
-1,1\right\}  $,\ with $k=O\left(  m\right)  $, such that $\Phi_{f_{1}%
,\ldots,f_{k}}=A_{Q}$. \ (The $n+1$\ comes from the addition of the dummy
qubit.) \ Furthermore, each $f_{i}\left(  z_{1},\ldots,z_{n}\right)  $\ in the
list will have the form $1$\ or $\left(  -1\right)  ^{z_{a}z_{b}}$ or $\left(
-1\right)  ^{z_{a}z_{b}z_{c}}$, so will be easy to specify using a Boolean circuit.
\end{proof}

As a side note, suppose we required the functions $f_{1},\ldots,f_{k}$\ to
depend on at most $2$ input bits, rather than $3$ bits. \ In that case, we
claim that $k$-fold \textsc{Forrelation}\ would be in $\mathsf{P}$. \ The
reason is just that in this case, our quantum circuit for \textsc{Forrelation}%
\ would be a\ stabilizer circuit, so the Gottesman-Knill Theorem would
apply.\footnote{Indeed, by a result of Aaronson and Gottesman \cite{ag},
$k$-fold \textsc{Forrelation}\ with this restriction is $\mathsf{\oplus L}%
$-complete.}

Examining the proof of Theorem \ref{bqphard}, we can derive a stronger
consequence. \ Define a \textit{depth-}$d$\textit{ quantum circuit} as one
where the gates are organized into $d$ sequential layers, with the gates
within each layer all commuting with one another.\footnote{Often, one further
requires that the gates within each layer act on disjoint sets of qubits.
\ But it will be convenient for us to drop that requirement.} \ Now, given a
depth-$d$ quantum circuit $Q$\ over the basis $\left\{  \operatorname*{H}%
,\operatorname*{CCSIGN}\right\}  $, let \textsc{QSim}$_{d}$\ be the problem of
deciding whether $A_{Q}:=\left\langle 0\right\vert ^{\otimes n}Q\left\vert
0\right\rangle ^{\otimes n}$\ satisfies $A_{Q}\geq\frac{1}{4}$ or $\left\vert
A_{Q}\right\vert \leq\frac{1}{100}$, promised that one of those is the case.
\ Then we have the following:

\begin{theorem}
\textsc{QSim}$_{d}$ is polynomial-time reducible to explicit $\left(
2d+1\right)  $-fold \textsc{Forrelation}. \ (Moreover, the functions
$f_{1},\ldots,f_{2d+1}$\ produced by the reduction all have the form
$f_{i}\left(  x\right)  =\left(  -1\right)  ^{p\left(  x\right)  }$, where
$p$\ is a degree-$3$ polynomial in the input bits.)\label{qsimd}
\end{theorem}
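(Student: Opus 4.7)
The plan is to adapt the gadget-based reduction of Theorem~\ref{bqphard}, exploiting the depth structure of $Q$ so that every layer of $Q$ contributes only a constant number of slots to the Forrelation sequence, with boundary slots shared between consecutive layers.

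First, I would normalize the depth-$d$ circuit $Q$: each layer $L_j$ consists of mutually commuting gates from $\{H,\operatorname{CCSIGN}\}$, so I can write $L_j=H_{S_j}\,D_j$, where $H_{S_j}$ applies Hadamards to a subset $S_j\subseteq[n]$ and $D_j$ is a product of $\operatorname{CCSIGN}$ gates supported on $[n]\setminus S_j$ (commutativity within the layer forces the two parts to act on disjoint qubits). Padding each $S_j$ with one ancilla I may also assume $|S_j|$ is even, and fix an arbitrary pairing of $S_j$ into pairs.

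Second, I would implement each $L_j$ inside the Forrelation structure using three consecutive slots $U_{2j-1},U_{2j},U_{2j+1}$, generalizing the two-qubit gadget identity $(HC)^3H=\operatorname{SWAP}\cdot H^{\otimes 2}$ of Theorem~\ref{bqphard}. Concretely, set $U_{2j-1}=U_{2j+1}=\prod_{(a,b)\in\operatorname{pairs}(S_j)} C_{a,b}$ and $U_{2j}=\bigl(\prod_{(a,b)\in\operatorname{pairs}(S_j)} C_{a,b}\bigr)\cdot D_j$. Factoring $H^{\otimes n}=H_{S_j}\otimes H_{[n]\setminus S_j}$, the gadget identity applied in parallel across all pairs produces $\operatorname{SWAP}\cdot H_{S_j}$ on $S_j$, while on $[n]\setminus S_j$ the intermediate Hadamards cancel via $HIH=I$ on either side of $D_j$, leaving exactly $D_j$. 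Thus the block $H\,U_{2j-1}\,H\,U_{2j}\,H\,U_{2j+1}\,H$ realizes $L_j$ up to a permutation of the paired qubits in $S_j$, which I track symbolically and undo at the end by relabeling.

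Third, the key economy: the slot $U_{2j+1}$ serves as both the right boundary of layer $L_j$'s gadget and the left boundary of layer $L_{j+1}$'s gadget. Since both contributions are diagonal unitaries, I multiply them, setting $U_{2j+1}=\prod_{(a,b)\in\operatorname{pairs}(S_j)\cup\operatorname{pairs}(S_{j+1})} C_{a,b}$, so that each new layer after the first costs only two additional slots. The total is $3+2(d-1)=2d+1$ slots, with each $f_i(x)=(-1)^{p_i(x)}$ for a polynomial $p_i$ of degree at most $3$ (degree $2$ from the CSIGN factors, degree $3$ from the CCSIGN factors inside the $D_j$'s). To pass from the QSim$_d$ thresholds into the Forrelation thresholds, I would apply the same control-qubit amplification trick as in Proposition~\ref{inpromisebqp}, which only inflates the depth by $O(1)$.

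The main obstacle is handling qubits that lie in both $S_j$ and $S_{j+1}$: the shared slot then contains two $\operatorname{CSIGN}$ factors overlapping on the repeated qubit, and the gadget factorization across $S_j\cup S_{j+1}$ and its complement no longer cleanly decouples. I would resolve this by routing each conflicted qubit through a fresh ancilla for one of the two layers, using the SWAP side effect of the gadget to transfer its state; this introduces at most $O(nd)$ ancillas while preserving the $2d+1$ slot count.
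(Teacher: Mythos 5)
Your construction is, at its core, the same as the paper's: write each layer as Hadamards on a set $S_j$ and CCSIGN gates on disjoint qubits, simulate it by three diagonal slots (the paired-$\operatorname{CSIGN}$ function in all three, the CCSIGN factors only in the middle one), keep track of the gadget's SWAP side effect by relabeling, and let consecutive layers share a boundary slot so that $d$ layers cost $3+2(d-1)=2d+1$ functions, each of the form $(-1)^{p(x)}$ with $\deg p\leq 3$. The paper reaches the same shared-boundary structure slightly differently (first $4d$ slots with a constant separator function between layers, then deleting each constant slot together with the two cancelling Hadamard layers around it and multiplying the two adjacent boundary functions), but the end result is identical to yours.

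Where you go astray is the final paragraph: the case $S_j\cap S_{j+1}\neq\varnothing$ is not an obstacle, and no ancilla routing is needed. The shared slot carries the \emph{product} of the two layers' boundary functions, so as a diagonal unitary it factors, $U_{f\cdot f'}=U_{f}U_{f'}$; inserting $H^{\otimes n}H^{\otimes n}=I$ between these two factors reconstitutes the two independent three-slot gadgets, each of which equals $\sigma_j L_j$ by the single-layer analysis, which never refers to the neighboring layer. Correctness of the composition is thus pure operator algebra and holds no matter how $S_j$ and $S_{j+1}$ intersect; there is no ``combined gadget over $S_j\cup S_{j+1}$'' whose factorization has to decouple. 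Two smaller points. First, the shared slot should literally be the product of the two functions (equivalently, a multiset or XOR of the $\operatorname{CSIGN}$ exponents): a plain set union of pairs is wrong in the degenerate case where the very same pair occurs in both layers' pairings, since those two $\operatorname{CSIGN}$s must cancel. Second, your proposed workaround would not work as stated anyway: the SWAP produced by the Hadamard gadget comes inseparably bundled with Hadamards on the routed qubits, so it cannot be used as a bare state transfer. Finally, the appeal to the control-qubit trick of Proposition \ref{inpromisebqp} is extraneous: the paper's reduction (like yours) simply yields $\Phi_{f_1,\ldots,f_{2d+1}}=A_Q$ and performs no threshold conversion, and that trick changes an acceptance probability, not the value of $\Phi$, so it would not map the $1/4$ threshold of \textsc{QSim}$_d$ to $3/5$ in any case.
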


\begin{proof}
The only change we need to make to the proof of Theorem \ref{bqphard}\ is to
be a bit more frugal with $f_{i}$'s---using at most two $f_{i}$'s for each
layer of $Q$, rather than separate $f_{i}$'s for each gate.

In more detail, a given layer $L$ of $Q$ consists of Hadamard gates\ on some
subset\ of qubits $S\subseteq\left[  n\right]  $, as well as CCSIGN gates
(which might overlap each other) on some other subset of qubits $T\subseteq
\left[  n\right]  $\ satisfying $S\cap T=\varnothing$. \ Suppose we want to
simulate $L$ using the three functions $f_{i},f_{i+1},f_{i+2}$, together with
the Hadamards that sandwich them. \ Then we build up the functions as follows:
initially $f_{i}=f_{i+1}=f_{i+2}=1$. \ For each CCSIGN gate, acting on qubits
$a,b,c\in T$, we multiply $f_{i+1}$\ by $\left(  -1\right)  ^{z_{a}z_{b}z_{c}%
}$, leaving $f_{i}$\ and $f_{i+2}$\ unchanged. \ For each pair of
Hadamard\ gates,\ acting on qubits $a,b\in S$, we multiply $f_{i}$,\ $f_{i+1}%
$, and $f_{i+2}$ by $\left(  -1\right)  ^{z_{a}z_{b}}$. \ One can check that
the end result is%
\[
\operatorname*{H}\nolimits^{\otimes n}U_{f_{i+2}}\operatorname*{H}%
\nolimits^{\otimes n}U_{f_{i+1}}\operatorname*{H}\nolimits^{\otimes n}%
U_{f_{i}}\operatorname*{H}\nolimits^{\otimes n}=\sigma L,
\]
where $\sigma$\ represents a SWAP gate applied to each pair $a,b\in
S$\ (something that, as before, we can easily keep track of).

To separate two successive layers of the circuit, we could simply insert a
constant function, $f_{i}=1$. \ This would yield a $4d$-fold
\textsc{Forrelation}\ instance, $d$ being the number of layers. \ If we want
to decrease the number of $f_{i}$'s\ from $4d$\ to $2d+1$, then we can
eliminate each constant $f_{i}$, together with the Hadamard layers surrounding
it (which simply cancel each other out), and then merge $f_{i-1}$\ and
$f_{i+1}$\ into a single $f_{i}$\ by multiplying them: $f_{i}\left(  x\right)
=f_{i-1}\left(  x\right)  f_{i+1}\left(  x\right)  $, or $f_{i}\left(
x\right)  =\left(  -1\right)  ^{p\left(  x\right)  +q\left(  x\right)  }$\ if
$f_{i-1}\left(  x\right)  =\left(  -1\right)  ^{p\left(  x\right)  }$\ and
$f_{i+1}\left(  x\right)  =\left(  -1\right)  ^{q\left(  x\right)  }$.

Note that, at the end, each $f_{i}$\ is a degree-$3$ polynomial in its input
bits, and we again have $\Phi_{f_{1},\ldots,f_{k}}=A_{Q}$.
\end{proof}

So for example, we find that explicit $\log n$-fold \textsc{Forrelation}\ is a
complete promise problem for $\mathsf{P{}romiseBQNC}^{1}$: the class of
problems that captures what can be done using log-depth quantum circuits (and
which already contains \textsc{Factoring}, by a result of Cleve and Watrous
\cite{cw}).

\section{Appendix: Separations for Sampling and Relation
Problems\label{SAMPREL}}

Let \textsc{Fourier Sampling}\ be the following problem. \ Given oracle access
to a Boolean function $f:\left\{  0,1\right\}  ^{n}\rightarrow\left\{
-1,1\right\}  $, the task is to sample from a distribution $\emph{D}$\ over
$\left\{  0,1\right\}  ^{n}$\ such that $\left\Vert \emph{D}-\emph{D}%
_{f}\right\Vert \leq\varepsilon$, where $\emph{D}_{f}$\ is the distribution
defined by%
\[
\Pr_{\emph{D}_{f}}\left[  y\right]  =\hat{f}\left(  y\right)  ^{2}=\left(
\frac{1}{2^{n}}\sum_{x\in\left\{  0,1\right\}  ^{n}}f\left(  x\right)  \left(
-1\right)  ^{x\cdot y}\right)  ^{2}.
\]
It is clear that \textsc{Fourier Sampling}\ is solvable---indeed, with
$\varepsilon=0$---by a quantum algorithm that makes just a single query to
$f$. \ The algorithm consists of a round of Hadamard gates, then a query to
$f$, then another round of Hadamard gates, then a measurement in the
computational basis.

By contrast, we show in this appendix that any classical randomized algorithm
for \textsc{Fourier Sampling} requires $\Omega\left(  N/\log N\right)
$\ queries, where $N=2^{n}$ is the size of $f$'s truth table. \ In other
words, a much larger quantum versus classical separation can be achieved for
sampling problems than for decision problems.

\begin{theorem}
\label{samplb}Fix (say) $\varepsilon=0.01$. \ Then the randomized query
complexity of \textsc{Fourier Sampling}\ is $\Omega\left(  N/\log N\right)  $.
\end{theorem}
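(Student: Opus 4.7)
The strategy is to apply Yao's minimax principle and focus on the marginal probability that the algorithm outputs the single string $y = 0^{n}$. Let $\mathcal{A}$ be any randomized algorithm for \textsc{Fourier Sampling} making $q$ queries, and write $P(f) := \Pr[\mathcal{A}(f) = 0^{n}]$. The hypothesis $\|\mathcal{A}(f) - \mathcal{D}_f\| \le \varepsilon$, applied to the singleton event $\{0^{n}\}$, yields the pointwise estimate $|P(f) - \hat{f}(0^{n})^{2}| \le 2\varepsilon$ for every Boolean $f$. Since $\hat{f}(0^{n})^{2} = \bigl(\tfrac{1}{N}\sum_{x} f(x)\bigr)^{2}$ is a degree-$2$ global function of the whole truth table of $f$, the idea is that no $q$-query algorithm with $q = o(N/\log N)$ can force $P(f)$ to track $\hat{f}(0^{n})^{2}$ on every $f$.

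First, I would set up a hard input distribution. Consider the mixture in which, with probability $1/2$, $f$ is uniform on $\{-1,+1\}^{N}$, and with probability $1/2$, $f$ is a function drawn uniformly from a carefully chosen family of almost-constant functions $\{f_{S}\}_{|S|\le s}$ (where $f_{S}(x) = -1$ on $x \in S$ and $+1$ otherwise, so $\hat{f_S}(0^n) = 1 - 2|S|/N$). On the uniform branch, $E[\hat{f}(0^{n})^{2}] = 1/N$, so $E[P(f)] \le 1/N + 2\varepsilon$; on the almost-constant branch with $|S|$ small, $\hat{f_S}(0^{n})^{2} \approx 1$, so $P(f_{S}) \ge 1-2\varepsilon$.

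The key \emph{binomial calculation} compares the distributions of the $q$-query transcript under the two branches. On the uniform branch, each query returns $\pm 1$ independently uniformly, so a given transcript $\tau \in \{-1,+1\}^{q}$ has probability $2^{-q}$. On the almost-constant branch, a $q$-query transcript $\tau$ is indistinguishable from the all-$+1$ transcript unless some query happens to land in $S$, an event of probability $\le q|S|/N$. Thus the posterior probabilities of the two branches, conditioned on the transcript, depend only on the number of $-1$s observed; a binomial tail bound (applied over the random choice of which queries are made, and over the random placement of the $|S|$ minus-ones) lets one quantify precisely when the algorithm's output probability $P(f)$ can differ substantially across the two hypotheses.

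Finally, the hardest step---and the one responsible for the $N/\log N$ rather than a much smaller bound---is the following. The naive two-hypothesis argument (constant $f$ vs. uniform $f$) already gives $2^{-q}(1-2\varepsilon) \le 1/N + 2\varepsilon$, i.e.\ $q = \Omega(\log N)$, which is far from tight. To push this up to $\Omega(N/\log N)$, I would consider a family of $\Theta(N/\log N)$ disjoint candidate "planted" patterns $S_1, \ldots, S_M$ with $|S_i| = \Theta(\log N)$, chosen so that a $q$-query algorithm can only "notice" a planted pattern by hitting one of its $\Theta(\log N)$ special positions. A union-bound / coupon-collector argument then shows that unless $q \cdot M \cdot (\log N)/N = \Omega(1)$, i.e.\ $q = \Omega(N/\log N)$, the algorithm cannot simultaneously have $P(f)$ close to $\hat{f}(0^n)^2$ on all of the $f_{S_i}$ while also maintaining $E_{\text{uniform}}[P(f)] \le 1/N + 2\varepsilon$. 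The main work will be carrying out this binomial/hypergeometric bookkeeping carefully, particularly controlling the probability, under the uniform hypothesis, of a transcript that mimics one of the planted patterns.
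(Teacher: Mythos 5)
There is a genuine gap, and it is at the very first step: you apply the total-variation requirement only to the singleton event $\{0^n\}$, obtaining $|P(f)-\hat{f}(0^n)^2|\le 2\varepsilon$ for every $f$. With $\varepsilon=0.01$ this tolerance is a \emph{constant}, while $\hat{f}(0^n)^2$ is typically of order $1/N$; so the constraint is essentially vacuous on the uniform branch and only bites when $\hat{f}(0^n)^2$ is large, i.e.\ on nearly-constant functions. But the set of constraints you actually derive can be satisfied by an $O(\log N)$-query algorithm: query $O(\log N)$ uniformly random positions; if all returned values agree, output $0^n$, otherwise output a uniformly random string. On each of your planted functions $f_{S_i}$ (with $|S_i|=\Theta(\log N)$ flipped positions out of $N$) the queries miss $S_i$ with probability $1-O(\log^2 N/N)$, so $P(f_{S_i})\ge 1-o(1)\ge 1-2\varepsilon$; on a uniform $f$ the probability of seeing all-equal answers is $1/\mathrm{poly}(N)$, so $E[P(f)]\le 1/N+2\varepsilon$ easily. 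Since an $O(\log N)$-query algorithm meets every condition your argument imposes, no bookkeeping over transcripts, however careful, can extract an $\Omega(N/\log N)$ bound from those conditions alone. (Separately, your final counting $q\cdot M\cdot(\log N)/N=\Omega(1)$ with $M=\Theta(N/\log N)$ forces only $q=\Omega(1)$, not $q=\Omega(N/\log N)$.)

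The missing idea, which is the paper's opening move, is to exploit the $\ell_1$ requirement across \emph{all} $N$ output strings rather than one: averaging $\frac12\sum_y|p_{f,y}-\hat f(y)^2|\le\varepsilon$ over $y$ and over random $f$ produces a string $y^\ast$ with $\operatorname{E}_f[|p_{f,y^\ast}-\hat f(y^\ast)^2|]\le 2\varepsilon/N$, and then Markov gives $|p_{f,y^\ast}-\hat f(y^\ast)^2|\le 20\varepsilon/N$ for $90\%$ of $f$'s. This factor-$N$ sharpening of the tolerance changes the nature of the hard instances: instead of nearly-constant versus uniform functions (which are trivially distinguishable), the hardness comes from distinguishing truth tables with $|Z|=N/2$ ones from those with $|Z|=N/2+2\sqrt N$ ones, where the output probability of $y^\ast=0^n$ must move by $\Theta(1/N)$ even though the per-bit bias is only $O(1/\sqrt N)$. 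After symmetrizing so that the acceptance probability depends only on the number $k$ of ones seen among $t$ nonadaptive random queries, the binomial/hypergeometric ratio calculation shows that for $k\le t/2+O(\sqrt{t\log N})$ the relevant likelihood ratio is $\exp\{O(\sqrt{t\log N/N})\}=1+o(1)$ when $t=o(N/\log N)$, while larger $k$ contribute $o(1/N)$; hence no nonnegative combination of these kernels can satisfy both required inequalities, giving the $\Omega(N/\log N)$ bound. Your proposal correctly anticipates the single-string focus and a binomial computation, but without the averaging step the quantitative target ($\Theta(1/N)$ accuracy, not $\Theta(\varepsilon)$) is lost and the approach cannot reach the stated bound.
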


\begin{proof}
Let $A$\ be a classical algorithm, and let $\mathcal{E}_{f}=\left\{
p_{f,y}\right\}  _{y\in\left\{  0,1\right\}  ^{n}}$\ be the probability
distribution output by $A$ when given $f$ as an oracle. \ The success
condition is that, for all $f$,%
\[
\frac{1}{2}\sum_{y\in\left\{  0,1\right\}  ^{n}}\left\vert p_{f,y}-\hat
{f}\left(  y\right)  ^{2}\right\vert \leq\varepsilon.
\]
By an averaging argument, this implies that there exists a $y^{\ast}%
\in\left\{  0,1\right\}  ^{n}$\ such that%
\[
\operatorname*{E}_{f}\left[  \left\vert p_{f,y^{\ast}}-\hat{f}\left(  y^{\ast
}\right)  ^{2}\right\vert \right]  \leq\frac{2\varepsilon}{N}.
\]
So by Markov's inequality,%
\[
\left\vert p_{f,y^{\ast}}-\hat{f}\left(  y^{\ast}\right)  ^{2}\right\vert
\leq\frac{20\varepsilon}{N}%
\]
for at least a\ $9/10$\ fraction of $f$'s. \ Now assume by symmetry, and
without loss of generality, that $y^{\ast}=0^{n}$. \ Let $z_{i}:=\frac
{1+f\left(  x_{i}\right)  }{2}$ (where $x_{1},\ldots,x_{N}$ is a lexicographic
ordering of inputs), let $Z:=\left(  z_{1},\ldots,z_{N}\right)  $, and let%
\[
\left\vert Z\right\vert :=z_{1}+\cdots+z_{N}.
\]
Then $\hat{f}\left(  0^{n}\right)  =\left(  2\left\vert Z\right\vert
-N\right)  /N$. \ The question before us is how many $z_{i}$'s\ the algorithm
$A$ needs to query, in order to output $0^{n}$\ (or, as we'll say,
\textquotedblleft accept\textquotedblright) with a probability $p_{Z}%
:=p_{f,0^{n}}$\ that satisfies
\begin{equation}
\left\vert p_{Z}-\left(  \frac{2\left\vert Z\right\vert }{N}-1\right)
^{2}\right\vert \leq\frac{20\varepsilon}{N} \label{succ}%
\end{equation}
with probability at least $9/10$\ over $Z\in\left\{  0,1\right\}  ^{N}$.

Observe that, without loss of generality, $A$ just nonadaptively queries $t$
randomly-chosen inputs $z_{i_{1}},\ldots,z_{i_{t}}$, and then accepts with a
probability $q_{k}$\ that depends solely on $k:=z_{i_{1}}+\cdots+z_{i_{t}}$.
\ For, if $A$ did anything other than this, then by averaging over all
$N!$\ possible permutations of $Z$, we would obtain an algorithm of this
restricted form that made the same number of queries and that was just as
likely to satisfy (\ref{succ}). \ In particular, this means that the
probability $p_{\left\vert Z\right\vert }=p_{Z}$\ that $A$ accepts $Z$ depends
only on $\left\vert Z\right\vert $. \ Explicitly,%
\begin{equation}
p_{w}=\sum_{k=0}^{t}q_{k}r_{k,w}, \label{convcomb}%
\end{equation}
where%
\[
r_{k,w}=\frac{1}{2^{N}}\binom{t}{k}\binom{N-t}{w-k}%
\]
is the probability that $z_{i_{1}}+\cdots+z_{i_{t}}=k$\ conditioned on
$\left\vert Z\right\vert =w$.

Let%
\begin{align*}
U  &  :=\left\{  Z:\left\vert \left\vert Z\right\vert -\frac{N}{2}\right\vert
\leq\frac{\sqrt{N}}{4}\right\}  ,\\
V  &  :=\left\{  Z:\left\vert \left\vert Z\right\vert -\frac{N}{2}\right\vert
\in\left[  \frac{\sqrt{N}}{2},2\sqrt{N}\right]  \right\}  .
\end{align*}
Then note that for sufficiently large $N$,%
\begin{align*}
\Pr_{Z}\left[  Z\in U\right]   &  \geq\operatorname{erf}\left(  \frac
{1}{2\sqrt{2}}\right)  -o\left(  1\right)  >0.38,\\
\Pr_{Z}\left[  Z\in V\right]   &  \geq\operatorname{erf}\left(  2\sqrt
{2}\right)  -\operatorname{erf}\left(  \frac{1}{\sqrt{2}}\right)  -o\left(
1\right)  >0.31.
\end{align*}
This implies that, conditioned on $Z\in U$, we must have%
\begin{align*}
p_{Z}  &  \leq\left(  \frac{2\left\vert Z\right\vert }{N}-1\right)  ^{2}%
+\frac{20\varepsilon}{N}\\
&  \leq\left(  \frac{2\sqrt{N}/4}{N}\right)  ^{2}+\frac{20\varepsilon}{N}\\
&  =\frac{0.25+20\varepsilon}{N}\\
&  =\frac{0.45}{N}%
\end{align*}
with probability at least $1-\frac{0.1}{0.38}>2/3$\ over $Z$. \ Likewise,
conditioned on $Z\in V$, we must have%
\begin{align*}
p_{Z}  &  \geq\left(  \frac{2\left\vert Z\right\vert }{N}-1\right)  ^{2}%
-\frac{20\varepsilon}{N}\\
&  \geq\frac{1-20\varepsilon}{N}\\
&  =\frac{0.8}{N}%
\end{align*}
with probability at least $1-\frac{0.1}{0.31}>2/3$\ over $Z$.

Thus, it suffices to prove a lower bound for the following restricted problem:
for at least $2/3$ of strings $Z\in U$, accept with some probability
$p_{Z}\leq0.45/N$, while for at least $2/3$ of strings $Z\in V$, accept with
some probability $p_{Z}\geq0.8/N$. \ Indeed, let us assume without loss of
generality that if $Z$ is drawn from $U$ then $\left\vert Z\right\vert
=N/2$\ exactly, while if $Z$ is drawn from $V$ then $\left\vert Z\right\vert
=N/2+2\sqrt{N}$ exactly. \ This can only make the distinguishing task easier,
and therefore the lower bound stronger. \ Note that, because $p_{\left\vert
Z\right\vert }=p_{Z}$\ depends only on $\left\vert Z\right\vert $, any
algorithm that achieves $p_{Z}\leq0.45/N$\ for at least a $2/3$\ fraction of
$\left\vert Z\right\vert =N/2$\ actually achieves that for \textit{all}
$\left\vert Z\right\vert =N/2$, while any algorithm that achieves $p_{Z}%
\geq0.8/N$\ for at least a $2/3$\ fraction of $\left\vert Z\right\vert
=N/2+2\sqrt{N}$\ achieves that for all $\left\vert Z\right\vert =N/2+2\sqrt
{N}$.

Recall from (\ref{convcomb}) that $p_{\left\vert Z\right\vert }=p_{Z}$\ is a
linear combination of $r_{k,\left\vert Z\right\vert }$'s, which are the
probabilities for various numbers $k$\ of `1' bits to be observed among the
$t$\ bits queried, conditioned on $\left\vert Z\right\vert $ having the value
that it does. \ Moreover, the coefficients $q_{k}$\ in this linear combination
are all in $\left[  0,1\right]  $. \ We want to show that, if $t=o\left(
N/\log N\right)  $, then either $p_{N/2+2\sqrt{N}}=o\left(  1/N\right)  $\ or
else $p_{N/2+2\sqrt{N}}-p_{N/2}=o\left(  1/N\right)  $---either of which
suffices to show $A$'s failure.

We deduce this from two probabilistic claims. \ First, by a Chernoff bound,%
\begin{align*}
\sum_{k=t/2+c\sqrt{t}}^{t}r_{k,N/2+2\sqrt{N}}  &  =\Pr\left[  z_{i_{1}}%
+\cdots+z_{i_{t}}\geq\frac{t}{2}+c\sqrt{t}:\left\vert Z\right\vert =\frac
{N}{2}+2\sqrt{N}\right] \\
&  \leq\exp\left\{  -\frac{1}{3}\frac{\left(  \frac{t}{2}+c\sqrt{t}-\left(
\frac{t}{2}+\frac{2t}{\sqrt{N}}\right)  \right)  ^{2}}{\frac{t}{2}+\frac
{2t}{\sqrt{N}}}\right\} \\
&  \leq\exp\left\{  -\frac{1}{3t}\left(  c^{2}t-\frac{4ct^{3/2}}{\sqrt{N}%
}+\frac{4t^{2}}{N}\right)  \right\} \\
&  \leq\exp\left\{  -\frac{c^{2}}{3}+\frac{4c}{3}\sqrt{\frac{t}{N}}\right\} \\
&  \leq\exp\left\{  -\Omega\left(  c^{2}\right)  \right\}
\end{align*}
for large $c$. \ So in particular, if $c=\omega(\sqrt{\log N})$, then all the
events involving observing $k$\ `1' bits, for $k\geq t/2+c\sqrt{t}$, have
total probability $o\left(  1/N\right)  $. \ This means that, if $A$ worked,
then we could set $q_{k}=0$\ for all $k\geq t/2+c\sqrt{t}$\ without affecting
$A$'s success: we would still have $p_{N/2+2\sqrt{N}}-p_{N/2}=\Omega\left(
1/N\right)  $ and $p_{N/2}=O(1/N)$.

Thus, let us concentrate next on $r_{k,N/2+2\sqrt{N}}$\ and $r_{k,N/2}$\ for
$k\leq t/2+O(\sqrt{t\log N})$. \ Here, we look at their ratio:%
\begin{align*}
\frac{r_{k,N/2+2\sqrt{N}}}{r_{k,N/2}}  &  =\frac{\binom{N-t}{N/2+2\sqrt{N}-k}%
}{\binom{N-t}{N/2-k}}\\
&  =\frac{\left(  N/2-t+k-2\sqrt{N}+1\right)  \cdots\left(  N/2-t+k\right)
}{\left(  N/2-k+1\right)  \cdots\left(  N/2-k+2\sqrt{N}\right)  }\\
&  \leq\left(  1+\frac{2k-t-2\sqrt{N}}{N/2-k+1}\right)  ^{2\sqrt{N}}\\
&  \leq\left(  1+\frac{2k-t}{N/4}\right)  ^{2\sqrt{N}}\\
&  \leq\left(  1+O\left(  \frac{\sqrt{t\log N}}{N}\right)  \right)
^{2\sqrt{N}}\\
&  =\exp\left\{  O\left(  \sqrt{\frac{t\log N}{N}}\right)  \right\}  .
\end{align*}
Notice that, if $t=o\left(  N/\log N\right)  $, then the above ratio is
$1+o\left(  1\right)  $. \ This means that taking a nonnegative linear
combination of $r_{k,\left\vert Z\right\vert }$'s cannot possibly suffice to
achieve $p_{N/2}=O(1/N)$\ and $p_{N/2+2\sqrt{N}}-p_{N/2}=\Omega\left(
1/N\right)  $\ at the same time.
\end{proof}

We conjecture that Theorem \ref{samplb}\ is tight: that is, that there exists
a randomized algorithm for \textsc{Fourier Sampling}\ making $O\left(  N/\log
N\right)  $\ queries. \ More generally, we conjecture that \textit{any}
approximate sampling problem solvable with $1$ quantum query is also solvable
with $O\left(  N/\log N\right)  $\ classical randomized queries. \ Still more
generally, we conjecture that any approximate sampling problem solvable with
$k=O\left(  1\right)  $\ quantum queries is also solvable with $O(N/\left(
\log N\right)  ^{1/k})$\ classical randomized queries; and that this is tight,
being achieved by a $k$-fold generalization of \textsc{Fourier Sampling}. \ In
the $k$-fold generalization, we are given oracle access to $k$ Boolean
functions $f_{1},\ldots,f_{k}:\left\{  0,1\right\}  ^{n}\rightarrow\left\{
-1,1\right\}  $. \ The task is to sample from a distribution $\emph{D}$\ over
$\left\{  0,1\right\}  ^{n}$\ such that $\left\Vert \emph{D}-\emph{D}%
_{f_{1},\ldots,f_{k}}\right\Vert \leq\varepsilon$, where $\emph{D}%
_{f_{1},\ldots,f_{k}}$\ is the distribution defined by%
\[
\Pr_{\emph{D}_{f_{1},\ldots,f_{k}}}\left[  y\right]  =\left(  \frac
{1}{2^{n\left(  k+1\right)  /2}}\sum_{x_{1},\ldots,x_{k}\in\left\{
0,1\right\}  ^{n}}f_{1}\left(  x_{1}\right)  \left(  -1\right)  ^{x_{1}\cdot
x_{2}}f_{2}\left(  x_{2}\right)  \left(  -1\right)  ^{x_{2}\cdot x_{3}}\cdots
f_{k}\left(  x_{k}\right)  \left(  -1\right)  ^{x_{k}\cdot y}\right)  ^{2}.
\]

So far, we have discussed separations for approximate sampling problems. \ But
it is also possible to modify \textsc{Fourier Sampling}\ to produce a
\textit{relation} problem---that is, a problem of outputting any element of a
set $S$ of \textquotedblleft valid solutions\textquotedblright---with a large
quantum/classical separation. \ One way to do this would be to use the
construction of Aaronson \cite{aar:samp}, which, given any approximate
sampling problem, uses Kolmogorov complexity to produce a relation problem of
roughly equivalent difficulty. \ Unfortunately, that construction will blow up
the quantum query complexity from $1$\ to $O\left(  \log N\right)  $,
weakening the result. \ A more direct approach would be to consider the
following relation problem: given oracle access to a Boolean function
$f:\left\{  0,1\right\}  ^{n}\rightarrow\left\{  -1,1\right\}  $, output any
string $y\in\left\{  0,1\right\}  ^{n}$\ such that $\left\vert \hat{f}\left(
y\right)  \right\vert \geq c$. \ If we use the obvious Fourier sampling
algorithm, this problem is solvable with $1$ quantum query, with success
probability asymptotically equal to%
\[
\frac{2}{\sqrt{2\pi}}\int_{c}^{\infty}e^{-x^{2}/2}x^{2}dx.
\]
On the other hand, it is a plausible conjecture that any classical randomized
algorithm that makes $o\left(  N/\log N\right)  $\ queries to $f$, can solve
the relation problem with probability at most about%
\[
\frac{2}{\sqrt{2\pi}}\int_{c}^{\infty}e^{-x^{2}/2}dx.
\]
If (say) $c=1$, this would give us an $0.8$\ versus $0.317$\ gap in success
probabilities. \ That gap could be boosted further using amplification (which,
however, would increase the quantum query complexity, from $1$ to some larger constant).

\section{Appendix: Estimator for Arbitrary Bounded Quadratics\label{POLY}}

Assume that we have an arbitrary degree-$k$ polynomial
\[
p\left(  x_{1},\ldots,x_{N}\right)  =\sum_{I\subseteq\left[  N\right]
:\left\vert I\right\vert \leq k}a_{I}\prod_{i\in I}x_{i},
\]
with $p\left(  x\right)  \in\lbrack-1,1]$ whenever $x\in\left\{  -1,1\right\}
^{N}$. \ We would like to show that $p\left(  x\right)  $ can be estimated by
a randomized algorithm that makes $O\left(  N^{1-1/k}\right)  $ queries, using
a sampling procedure similar to what we used in Section \ref{SIM} for the
special case of block-multilinear polynomials. \ For our previous proof to
work, we need there to exist a sequence of variable-splittings that introduces
$O(N)$ new variables, and that transforms $p(x_{1},\ldots,x_{N})$ into a
polynomial%
\[
q(x_{1},\ldots,x_{M})=\sum_{I\subseteq\lbrack M]:|I|\leq k}b_{I}\prod_{i\in
I}x_{i}%
\]
that satisfies the following two requirements:

\begin{enumerate}
\item[(i)] $\sum_{I}b_{I}^{2}=O(\frac{1}{N})$;

\item[(ii)] for all $l\in\left[  k-1\right]  $, we have
\begin{equation}
\sum_{I,J:,I\neq J,|I\cap J|=l}b_{I}b_{J}=O\left(  \frac{1}{N^{l/k}}\right)  .
\label{eq:squares1}%
\end{equation}

\end{enumerate}

Requirement (i) is for the bound on the variance of the \textquotedblleft
warmup estimator,\textquotedblright\ in Section \ref{WARMUP}. \ Requirement
(ii) is for the \textquotedblleft real estimator,\textquotedblright\ in
Section \ref{SECOND}. \ Below, we will be able to prove requirement (i) for
any $k$, and requirement (ii) in the special case $k=2$.

\subsection{Fourier Basics\label{FBASICS}}

Given a real polynomial $p:\left\{  -1,1\right\}  ^{N}\rightarrow\mathbb{R}$,
we consider the following notions:%
\begin{align*}
\operatorname{Var}\left[  p\right]   &  :=\operatorname{E}\left[  \left(
p\left(  x\right)  -\operatorname{E}\left[  p\left(  x\right)  \right]
\right)  ^{2}\right]  ,\\
\operatorname{Inf}_{i}\left[  p\right]   &  :=\operatorname{E}\left[  \left(
p\left(  x^{i}\right)  -p\left(  x\right)  \right)  ^{2}\right]  ,\\
\left\Vert p\right\Vert _{\infty}  &  :=\max_{x\in\left\{  -1,1\right\}  ^{N}%
}\left\vert p\left(  x\right)  \right\vert
\end{align*}
(where $x^{i}$\ means $x$\ with the $i^{th}$\ bit flipped). \ Also, let
$\hat{p}\left(  S\right)  $ be the Fourier coefficient corresponding to the
subset $S\subseteq\left[  N\right]  $---or equivalently, the coefficient in
$p$\ of the monomial $\prod_{i\in S}x_{i}$.

Note that, since $\sum_{I}b_{I}^{2}=\operatorname{Var}[q]$, requirement (i) is
equivalent to $\operatorname{Var}[q]=O(\frac{1}{N})$.

From elementary Fourier analysis, we have the following useful lemma.

\begin{lemma}
\label{infbound}If $p:\left\{  -1,1\right\}  ^{N}\rightarrow\mathbb{R}$ is a
real polynomial of degree $k$, then%
\[
\sum_{i\in\left[  N\right]  }\operatorname{Inf}_{i}\left[  p\right]  \leq
k\operatorname{Var}\left[  p\right]  .
\]

\end{lemma}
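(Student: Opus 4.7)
The plan is to use Fourier analysis on the Boolean hypercube, which is the standard tool for relating influences to variance. I will write $p$ in the Fourier expansion
\[
p(x) = \sum_{S \subseteq [N]} \hat{p}(S) \, \chi_S(x), \qquad \chi_S(x) := \prod_{i \in S} x_i,
\]
where, because $p$ has degree $k$, the sum is supported on $|S| \leq k$. By Parseval's identity, $\operatorname{Var}[p] = \sum_{S \neq \emptyset} \hat{p}(S)^2$.

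Next I would compute $p(x^i) - p(x)$ characterwise. Since $\chi_S(x^i) = -\chi_S(x)$ when $i \in S$ and $\chi_S(x^i) = \chi_S(x)$ when $i \notin S$, the characters with $i \notin S$ drop out entirely:
\[
p(x^i) - p(x) = -2 \sum_{S \ni i} \hat{p}(S) \, \chi_S(x).
\]
Applying Parseval again to this expression (the characters $\chi_S$ are orthonormal under the uniform measure on $\{-1,1\}^N$) yields
\[
\operatorname{Inf}_i[p] = \operatorname{E}\!\left[(p(x^i) - p(x))^2\right] = 4 \sum_{S \ni i} \hat{p}(S)^2.
\]

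The final step is to swap the order of summation:
\[
\sum_{i=1}^{N} \operatorname{Inf}_i[p] = 4 \sum_{S} \hat{p}(S)^2 \cdot |\{i : i \in S\}| = 4\sum_{S} |S| \, \hat{p}(S)^2,
\]
and use the degree bound $|S| \leq k$ (together with $|\emptyset| = 0$, so the $S=\emptyset$ term contributes nothing anyway), giving $\sum_i \operatorname{Inf}_i[p] \leq 4k \sum_{S \neq \emptyset} \hat{p}(S)^2 = 4k \operatorname{Var}[p]$, which matches the claimed bound up to the normalization implicit in the paper's definition of $\operatorname{Inf}_i$. There is no real obstacle; the proof is a three-line Fourier-analytic calculation, and the main thing to get right is the bookkeeping showing that only characters containing the flipped coordinate survive the difference $p(x^i) - p(x)$.
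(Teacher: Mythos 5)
Your proof is correct and follows essentially the same route as the paper: Fourier-expand $p$, use Parseval to write each influence as a sum of squared coefficients over sets containing $i$, swap the order of summation, and invoke $|S|\leq k$. The factor of $4$ you carry is just an artifact of taking the paper's written definition $\operatorname{Inf}_i[p]=\operatorname{E}[(p(x^i)-p(x))^2]$ literally; the paper's own proof silently uses the convention $\operatorname{Inf}_i[p]=\sum_{S\ni i}\hat{p}(S)^2$ (i.e.\ the discrete derivative $\tfrac{1}{2}(p(x^i)-p(x))$), so your calculation is, if anything, the more careful one, and the discrepancy is harmless since only the asymptotics of the influences are used downstream.
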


\begin{proof}
We have%
\[
\operatorname{Inf}_{i}\left[  p\right]  =\sum_{S\ni i}\hat{p}\left(  S\right)
^{2},
\]
and hence%
\[
\sum_{i\in\left[  N\right]  }\operatorname{Inf}_{i}\left[  p\right]
=\sum_{\left\vert S\right\vert \leq k}\left\vert S\right\vert \hat{p}\left(
S\right)  ^{2}\leq k\sum_{\left\vert S\right\vert \leq k}\hat{p}\left(
S\right)  ^{2}=k\operatorname{Var}\left[  p\right]  .
\]

\end{proof}

\subsection{Requirement (i)\label{REQI}}

Our goal is to find variables $x_{i}$\ in $p$ with large influences (that is,
large values of $\operatorname{Inf}_{i}\left[  p\right]  $). \ To do so, we
will use the following result of Dinur et al.\ \cite[Theorem 3]{dfko}.

\begin{theorem}
[\cite{dfko}, Theorem 3]\label{dfkothm}There exists a constant $C$ for which
the following holds. \ Suppose $p:\left\{  -1,1\right\}  ^{N}\rightarrow
\mathbb{R}$ is a real polynomial of degree $k$, which satisfies
$\operatorname{Var}[p]=1$ and $\operatorname{Inf}_{i}\left[  p\right]  \leq
t^{2}C^{-k}$ for all $i\in\left[  N\right]  $. \ Then%
\[
\Pr_{x\in\left\{  -1,1\right\}  ^{N}}\left[  \left\vert p\left(  x\right)
\right\vert \geq t\right]  \geq\exp(-Ct^{2}k^{2}\log k).
\]

\end{theorem}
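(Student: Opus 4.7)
The plan is to reduce the Boolean anti-concentration statement to an analogous statement for polynomials in Gaussian variables, via the Mossel--O'Donnell--Oleszkiewicz invariance principle, and then invoke standard Gaussian anti-concentration (together with hypercontractivity) to conclude. The small-influence hypothesis $\operatorname{Inf}_i[p]\leq t^2 C^{-k}$ is exactly the sort of ``regularity'' condition under which the invariance principle gives sharp distributional comparisons between $p(x)$ for uniform $x\in\{-1,1\}^N$ and $p(G)$ for $G$ an i.i.d.\ standard Gaussian vector; the factor $C^{-k}$ is chosen so that the error in the invariance principle, which scales like $k^{O(k)}\cdot(\max_i \operatorname{Inf}_i[p])^{1/2}$, is negligible compared to the tail probability we are trying to lower-bound.

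First, I would reduce to the Gaussian analogue: namely, for any degree-$k$ polynomial $\tilde p:\mathbb{R}^N\to\mathbb{R}$ with $\operatorname{Var}[\tilde p]=1$, one has $\Pr_{G}[|\tilde p(G)|\geq 2t]\geq \exp(-C't^2 k^2 \log k)$ for an appropriate $C'$. The reason to aim at $2t$ rather than $t$ on the Gaussian side is so that, after applying the invariance principle at scale $t$, the Boolean probability at threshold $t$ inherits the Gaussian lower bound. The invariance principle (in its CDF form) states that $|\Pr[p(x)\geq s]-\Pr[p(G)\geq s]|$ is at most a function of the max influence to a positive power with a $k^{O(k)}$ prefactor; choosing $C$ large enough compared to the hidden constants forces this error to be less than, say, half of the Gaussian lower bound we derive.

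For the Gaussian statement, the cleanest route is Paley--Zygmund combined with hypercontractivity. Bonami's lemma gives $\|\tilde p\|_4\leq 3^{k/2}\|\tilde p\|_2$, hence $\mathbb{E}[\tilde p^4]\leq 9^k$. Paley--Zygmund at small threshold then yields $\Pr[\tilde p^2 \geq \tfrac{1}{2}]\geq \Omega(9^{-k})$, settling the case $t=O(1)$. For larger $t$, I would use the Carbery--Wright anti-concentration inequality in conjunction with a level-set argument: informally, split the contribution of $\tilde p$ over dyadic shells of $|\tilde p|$ and note that, because the variance is concentrated, some shell near the scale $1$ carries $\Omega(9^{-k})$ mass, and from there a comparison of moments (again via Bonami) shows that the tail cannot drop below $\exp(-O(t^2 k\log k))$. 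The factor $k^2\log k$ in the target exponent then absorbs both the Bonami constants and the invariance-principle slack.

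The main obstacle, as I see it, is not any single step but the bookkeeping of constants: one needs the influence threshold $t^2 C^{-k}$, the invariance-principle error, the Paley--Zygmund lower bound, and the Bonami fourth-moment bound all to interact so that the final exponent is exactly of the form $Ct^2 k^2 \log k$. In particular, a naive application of the invariance principle costs a $k^{O(k)}$ factor, which would overwhelm the $\exp(-Ct^2 k^2\log k)$ bound unless $\max_i\operatorname{Inf}_i[p]$ is forced to be exponentially small in $k$---which is precisely why the hypothesis reads $t^2 C^{-k}$ rather than a milder polynomial threshold. Getting this balance right, and handling the case $t\leq 1$ separately from $t\gg 1$ (so that Paley--Zygmund suffices in the former regime and the Carbery--Wright tail estimate in the latter), is where the delicate work lies.
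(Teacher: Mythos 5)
Two things. First, there is no ``paper proof'' to compare against: the paper quotes this statement verbatim from Dinur, Friedgut, Kindler and O'Donnell \cite{dfko} and uses it as a black box (via Corollary \ref{cor:dfko}), so you are in effect proposing a fresh proof of the DFKO theorem. Note also that the hypothesis as printed, $\operatorname{Inf}_i[p]\leq t^{2}C^{-k}$, is evidently a typo for $t^{-2}C^{-k}$: the sentence immediately following the theorem and the application in Corollary \ref{cor:dfko} both use $t^{-2}C^{-k}$, and the printed version is false as stated (take $k=2$, $p(x)=x_1x_2$, $t=C$: the hypothesis holds, yet $\Pr[|p|\geq C]=0$). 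Your proposal builds the ``regularity'' intuition around the printed direction, in which larger $t$ makes the influence condition \emph{weaker}, which is already a sign that something is off.

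The genuine gap is exactly where you locate the ``delicate work'': the large-$t$ regime, which is the entire content of the theorem. An invariance-principle transfer can only preserve probabilities down to the size of the transfer error, and that error is polynomially small in the maximum influence $\tau$ --- of order $k\,9^{k}\sqrt{\tau}$ for smooth test functions, and degrading to $\tau^{\Theta(1/k)}$ at the CDF level (which is what the event $\{|p|\geq t\}$ requires, after mollifying with a Gaussian small-ball estimate). With $\tau$ of order $t^{\pm2}C^{-k}$, this error is, for fixed $k$ and $C$, at best polynomially small in $t$, whereas the target quantity $\exp(-Ct^{2}k^{2}\log k)$ is exponentially small in $t^{2}$. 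Hence for every fixed $k$ and $C$ there is a constant $t_{0}$ beyond which ``Gaussian tail at $2t$ minus invariance error'' is negative and the argument yields nothing; no choice of $C$ in the influence hypothesis can repair this, because the error simply does not decay with $t$ at the required rate. The regime your argument does handle, $t=O(1)$, is the easy part: there Paley--Zygmund plus Bonami already give $\Pr[|p|\geq t]\geq 2^{-O(k)}$ directly on the cube, with no influence hypothesis and no invariance principle. For the hard regime you defer to ``Carbery--Wright in conjunction with a level-set argument,'' but Carbery--Wright is an \emph{upper} bound on small-ball probabilities ($\Pr[|p(G)-s|\leq\epsilon]\lesssim k\,\epsilon^{1/k}$) and gives no lower bound on $\Pr[|p(G)|\geq t]$; and the dyadic-shell/moment-comparison sketch is not an argument --- knowing that $\Omega(9^{-k})$ of the mass sits at scale $1$ says nothing about the measure of the level set at scale $t\gg1$. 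That missing tail lower bound, at scale $\exp(-\Theta(t^{2}k^{2}\log k))$, is precisely what DFKO prove, and they do it by a direct random-restriction/drift argument on the hypercube --- the kind of argument that can produce exponentially small probabilities, which no one-shot distributional-comparison step with polynomial error can reach.
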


Theorem \ref{dfkothm} means, in particular, that if $\operatorname{Var}\left[
p\right]  =1$ and $\operatorname{Inf}_{i}\left[  p\right]  \leq t^{-2}C^{-k}$
for all $i\in\left[  N\right]  $, then $\left\Vert p\right\Vert _{\infty}\geq
t$: in other words, there \textit{exists} an $x\in\left\{  -1,1\right\}  ^{N}%
$\ such that $\left\vert p\left(  x\right)  \right\vert \geq t$. \ By
rescaling, we can turn this into the following.

\begin{corollary}
\label{cor:dfko}There exists a constant $C$ for which the following holds.
\ Suppose $p:\left\{  -1,1\right\}  ^{N}\rightarrow\mathbb{R}$ is a real
polynomial of degree $k$, which satisfies $\left\Vert p\right\Vert _{\infty
}\leq1$ and $\operatorname{Var}\left[  p\right]  \geq\frac{C}{N}$. \ Then
there exists an $i\in\left[  N\right]  $\ such that $\operatorname{Inf}%
_{i}\left[  p\right]  \geq\frac{1}{C^{k}}\operatorname{Var}\left[  p\right]
^{2}$.
\end{corollary}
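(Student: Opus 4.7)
}
The plan is to derive the corollary as a rescaling of Theorem \ref{dfkothm} via a normalization and contrapositive argument. Concretely, I will normalize $p$ to have variance $1$, translate both the $\|\cdot\|_\infty$ bound and the max-influence bound through the rescaling, and then choose the threshold $t$ in Theorem \ref{dfkothm} just large enough to force a contradiction with the boundedness of $p$.

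First, set $V := \operatorname{Var}[p]$ and $\tilde{p} := (p-\operatorname{E}[p])/\sqrt{V}$. (The condition $V\geq C/N$ in the hypothesis ensures $V>0$, so this is well-defined; if $V=0$ the corollary is vacuous since all influences vanish.) A direct calculation, using that influences are unchanged under addition of a constant and scale as the square under multiplication by a scalar, gives
\[
\operatorname{Var}[\tilde p]=1,\qquad \operatorname{Inf}_i[\tilde p]=\frac{\operatorname{Inf}_i[p]}{V},\qquad \|\tilde p\|_\infty\leq \frac{\|p\|_\infty+|\operatorname{E}[p]|}{\sqrt V}\leq \frac{2}{\sqrt V}.
\]

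Next, let $C_0$ denote the constant furnished by Theorem \ref{dfkothm}, and fix the Corollary's constant $C := C_0$. Argue by contrapositive: assume $\operatorname{Inf}_i[p]<V^2/C^k$ for every $i\in[N]$. Translating through the rescaling, this reads $\operatorname{Inf}_i[\tilde p]<V/C_0^k$ for all $i$. Now choose the threshold $t := 3/\sqrt V$, so that $t^2 C_0^{-k}=9/(V C_0^k)$. Since $V\leq \operatorname{E}[p^2]\leq \|p\|_\infty^2\leq 1$, we have $V/C_0^k\leq 9/(V C_0^k)=t^2 C_0^{-k}$, so the hypothesis of Theorem \ref{dfkothm} is satisfied by $\tilde p$ with this choice of $t$.

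Theorem \ref{dfkothm} then yields $\Pr_{x}[|\tilde p(x)|\geq 3/\sqrt V]\geq \exp(-9C_0 k^2\log k/V)>0$, which in particular implies $\|\tilde p\|_\infty\geq 3/\sqrt V$. This contradicts the bound $\|\tilde p\|_\infty\leq 2/\sqrt V$ derived above, establishing the existence of some $i$ with $\operatorname{Inf}_i[p]\geq V^2/C^k$. There is no real obstacle here; the only thing to be careful about is book-keeping of the two scaling factors (the division by $\sqrt V$ affects influences by $1/V$ and $\|\cdot\|_\infty$ by $1/\sqrt V$), and the choice of $t$ slightly above $2/\sqrt V$ so that the probabilistic lower bound from Theorem \ref{dfkothm} forces a strict violation of the $\ell_\infty$ bound.
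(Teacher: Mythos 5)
Your outline matches the paper's proof---rescale to variance $1$ and apply Theorem \ref{dfkothm} in contrapositive to extract an influential variable---but the step where you verify the theorem's hypothesis is where the argument genuinely breaks. You read the influence condition as ``$\operatorname{Inf}_i \leq t^{2}C^{-k}$'' and choose $t=3/\sqrt{V}$ precisely because that condition only gets \emph{easier} as $t$ grows. A hypothesis of that shape cannot be what the theorem asserts: every influence of a variance-$1$ degree-$k$ polynomial is at most $k$ (Lemma \ref{infbound}), so in the regime of interest ($V$ near $C/N$, hence $k\leq 9/(VC_{0}^{k})$) the condition holds for \emph{every} such polynomial and your argument never uses the contrapositive assumption $\operatorname{Inf}_i[p]<V^{2}/C^{k}$ at all---it would ``prove'' that no bounded degree-$k$ polynomial with small variance exists. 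Equivalently, taking $t$ arbitrarily large would give $\Vert\tilde p\Vert_\infty\geq t$ for all $t$, which is absurd. The printed exponent is a transcription slip: as the sentence immediately after Theorem \ref{dfkothm} indicates, the intended condition is $\operatorname{Inf}_i[p]\leq t^{-2}C^{-k}$, i.e.\ an upper bound that \emph{shrinks} with $t$, and that is the form the corollary actually needs. So while your steps formally track the statement as printed, they lean on the mis-stated direction in an essential way.

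With the corrected hypothesis your parameters do not close: the assumption $\operatorname{Inf}_i[\tilde p]<V/C_{0}^{k}$ only licenses thresholds $t\leq 1/\sqrt{V}$, and because you centered $p$ (giving $\Vert\tilde p\Vert_\infty\leq 2/\sqrt{V}$ rather than $1/\sqrt{V}$) the resulting conclusion $\Vert\tilde p\Vert_\infty\geq 1/\sqrt{V}$ produces no contradiction. The paper's proof avoids both issues by scaling without subtracting the mean, so that $\Vert p/\sqrt{V}\Vert_\infty\leq 1/\sqrt{V}$ and the threshold $t=1/\sqrt{V}$ suffices. Your version is repairable: either drop the centering, or keep $t=3/\sqrt{V}$ but assume $\operatorname{Inf}_i[p]<V^{2}/(9C_{0}^{k})$ for contradiction and absorb the factor $9$ into the corollary's constant $C$. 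As written, however, the hypothesis-verification step would fail against the actual theorem of Dinur et al., so the proof has a genuine gap rather than being a cosmetic variant of the paper's argument.
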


\begin{proof}
We can reword Theorem \ref{dfkothm} as follows: if $\left\Vert p\right\Vert
_{\infty}\leq t$ and $\operatorname{Var}\left[  p\right]  =1$, then there
exists an $i\in\left[  N\right]  $\ such that $\operatorname{Inf}_{i}%
[p]\geq\frac{t^{2}}{C^{k}}$. \ Now suppose $\left\Vert p\right\Vert _{\infty
}\leq1$. \ Then define a new polynomial $q:=\frac{p}{\sqrt{\operatorname{Var}%
\left[  p\right]  }}$. \ We have $\operatorname{Var}[q]=1$ and $\left\Vert
q\right\Vert _{\infty}\leq\frac{1}{\sqrt{\operatorname{Var}\left[  p\right]
}}$, which implies that there exists an $i\in\left[  N\right]  $ such that
$\operatorname{Inf}_{i}\left[  q\right]  \geq\frac{\operatorname{Var}\left[
p\right]  }{C^{k}}$, or equivalently $\operatorname{Inf}_{i}\left[  p\right]
\geq\frac{\operatorname{Var}\left[  p\right]  ^{2}}{C^{k}}$.
\end{proof}

We now consider the following algorithm for variable-splitting. \ We start
with $p_{0}:=p$. \ We then repeat the following, for $j\in\{0,1,2,\ldots\}$:

\begin{enumerate}
\item[(1)] If $\operatorname{Var}\left[  p_{j}\right]  <\frac{C}{N}$, then
halt and output $p_{j}$.

\item[(2)] Otherwise, choose some variable $x_{i}$\ such that
$\operatorname{Inf}_{i}\left[  p\right]  \geq\frac{1}{C^{k}}\operatorname{Var}%
\left[  p\right]  ^{2}$\ (which is guaranteed to exist by Corollary
\ref{cor:dfko}). \ Let $p_{j+1}$ be the polynomial obtained from $p_{j}$\ by
splitting $x_{i}$ into two variables. \ (In other words, by defining new
variables $x_{i,1}$\ and $x_{i,2}$, and then replacing every occurrence of
$x_{i}$\ in $p_{j}$\ by $\frac{x_{i,1}+x_{i,2}}{2}$.)
\end{enumerate}

When the algorithm halts (say at step $J$), we must have $\operatorname{Var}%
\left[  p_{J}\right]  <\frac{C}{N}$. \ Furthermore, observe that for every
$j$,
\begin{align*}
\operatorname{Var}[p_{j+1}]  &  =\operatorname{Var}[p_{j}]-\frac
{\operatorname{Inf}_{i}\left[  p_{j}\right]  }{2}\\
&  \leq\operatorname{Var}\left[  p_{j}\right]  -\frac{\operatorname{Var}%
\left[  p_{j}\right]  ^{2}}{2C^{k}}.
\end{align*}
Solving this recurrence (together with the initial condition
$\operatorname{Var}\left[  p_{0}\right]  \leq1$) implies that the algorithm
can continue for at most $O\left(  N\right)  $\ steps until
$\operatorname{Var}\left[  p_{J}\right]  =O(1/N)$. \ Therefore, the algorithm
introduces at most $O\left(  N\right)  $\ new variables.

\subsection{Requirement (ii)\label{REQII}}

We now show how to satisfy requirement (ii) in the special case $k=2$. \ To do
so, we will need another result from Dinur et al.\ \cite{dfko}.

\begin{lemma}
[\cite{dfko}, Lemma 1.3]\label{dfkolem}There exists a constant $C$ such that
the following holds. \ Let $p:\left\{  -1,1\right\}  ^{N}\rightarrow
\mathbb{R}$ be a polynomial of degree $k$,\ and suppose that%
\[
\sum_{i\in\left[  N\right]  }\hat{p}\left(  \left\{  i\right\}  \right)
^{2}\geq1
\]
whereas $\left\vert \hat{p}\left(  \left\{  i\right\}  \right)  \right\vert
\leq\frac{1}{Ckt}$ for all $i\in\left[  N\right]  $. \ Then
\[
\Pr_{x\in\left\{  -1,1\right\}  ^{N}}\left[  \left\vert p\left(  x\right)
\right\vert \geq t\right]  \geq\exp(-Ct^{2}k^{2}).
\]

\end{lemma}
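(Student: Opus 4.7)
My plan is to decompose $p = c + L + q$, where $c = \hat p(\emptyset)$ is the constant term, $L(x) = \sum_{i} \hat p(\{i\}) x_i$ is the linear part, and $q$ collects the Fourier levels $2,\ldots,k$. The hypotheses tell us that $\mathrm{Var}(L) = \sum_i \hat p(\{i\})^2 \geq 1$, while each summand of $L$ is tiny: $|\hat p(\{i\})| \leq 1/(Ckt)$. The strategy is to show that $L$ is anticoncentrated on the scale $t$, and then to argue that the higher-degree part $q$ is too small to cancel it with more than an $\exp(-Ct^2k^2)$-fraction of the probability.

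The first step would be to establish anticoncentration of $L$. Since $L$ is a sum of independent symmetric random variables bounded individually by $1/(Ckt)$ and with total variance at least $1$, the Berry--Esseen theorem (equivalently, a quantitative CLT) yields that the CDF of $L/\sqrt{\mathrm{Var}(L)}$ lies within Kolmogorov distance $O(1/(Ckt))$ of that of $\mathcal N(0,1)$. Choosing a threshold $T = 3t$, this gives $\Pr[L(x) \geq 3t] \geq \tfrac12 \exp(-C't^2)$ for an absolute constant $C'$, as long as the Gaussian tail at $3t$ dominates the $1/(Ckt)$ error, which is arranged by the normalization of $C$.

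The second, harder step is to rule out cancellation from $q$, i.e.\ to show that on an event of probability $\exp(-Ct^2 k^2)$ one has simultaneously $L(x) \geq 3t$ and $|q(x) + c| \leq 2t$. The main obstacle is that the lemma places no norm bound on $p$, so $\|q\|_2$ is a priori unconstrained. I would resolve this by combining two tools. First, because every linear coefficient of $p$ is uniformly $O(1/(Ckt))$, the Mossel--O'Donnell--Oleszkiewicz invariance principle allows transferring the problem from the hypercube to Gaussian space with negligible error on the relevant scale. Second, on the Gaussian side I would invoke the Carbery--Wright small-ball inequality for low-degree polynomials, which says $\Pr[|r(x) - u| \leq \eta \|r\|_2] = O(k \eta^{1/k})$ for any degree-$k$ polynomial $r$ and any $u \in \mathbb R$. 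Applying this to $r = q$ (conditional on the value of $L$, which by the smallness of the linear coefficients only negligibly perturbs the distribution of $q$) produces the desired lower bound.

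The $k^2$ factor in the exponent $\exp(-Ct^2k^2)$ arises naturally from this architecture: one factor of $k$ from the degree of the polynomial in the Carbery--Wright small-ball estimate, and a second factor of $k$ from the hypercontractive loss incurred when passing between $L^2$ and pointwise norms for a degree-$k$ polynomial in the invariance reduction. The principal technical obstacle will be the invariance step, since we must control the error of Gaussianization uniformly over the \emph{joint} law of $L$ and $q$; this is where the assumption that each linear coefficient is at most $1/(Ckt)$ (rather than, say, only $1/\sqrt{Ckt}$) is essential, as it is precisely the hypothesis needed to make the linear-level Berry--Esseen error and the higher-level Lindeberg error both small enough to be absorbed into the final $\exp(-Ct^2k^2)$ bound.
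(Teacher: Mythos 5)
This lemma is not proved in the paper at all: it is imported as a black box from Dinur et al.\ \cite{dfko}, so your outline has to stand on its own, and as it stands it has a genuine gap at its central step. You propose to realize the event $\left\vert p(x)\right\vert \geq t$ by showing that, with probability at least $\exp(-Ct^{2}k^{2})$, one has \emph{simultaneously} $L(x)\geq 3t$ and $\left\vert q(x)+c\right\vert \leq 2t$. That intermediate claim is false in general, and the tool you invoke for it points in the wrong direction. The Carbery--Wright inequality is an \emph{upper} bound on small-ball probabilities, $\Pr\left[  \left\vert r-u\right\vert \leq\eta\left\Vert r\right\Vert _{2}\right]  =O(k\eta^{1/k})$; it can never certify a \emph{lower} bound on the probability that the degree-$\geq2$ part is small, and since (as you note) $\left\Vert q\right\Vert _{2}$ is completely unconstrained by the hypotheses, no such lower bound exists. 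Concretely, take $p(x)=L(x)+Mx_{1}x_{2}$ with $M$ enormous and $L$ any admissible linear form: the hypotheses hold, the conclusion holds trivially because $\left\vert p\right\vert \approx M\geq t$ essentially always, yet your target event $\{L\geq 3t \text{ and } \left\vert Mx_{1}x_{2}+c\right\vert \leq 2t\}$ has probability exactly zero. So ``linear part large, higher part small'' cannot be the sole mechanism; a correct argument must at least case-split on the size of the high-degree part (when $\left\Vert p\right\Vert _{2}\gg t$, hypercontractivity plus Paley--Zygmund already gives $\Pr\left[  \left\vert p\right\vert \geq t\right]  \geq e^{-O(k)}$, and the delicate case is $\left\Vert p\right\Vert _{2}=O(t)$), and in the delicate case it must confront the fact that $q$ and $L$ are functions of the same bits and can be strongly correlated---$q$ can contain a component that tracks $L^{2}$---so your assertion that conditioning on the exponentially rare event $L\geq 3t$ ``only negligibly perturbs the distribution of $q$'' is unjustified and generally false.

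Two further, smaller problems. First, plain Berry--Esseen gives Kolmogorov-distance error $O(1/(Ckt))$, which for large $t$ utterly dwarfs the Gaussian tail $e^{-\Theta(t^{2})}$ you are trying to lower-bound; what you need there is a moderate-deviations (Cram\'{e}r/Kolmogorov-type) \emph{lower} tail estimate for sums of bounded symmetric variables, which does hold in this regime because $t\cdot\max_{i}\left\vert \hat{p}\left(  \left\{  i\right\}  \right)  \right\vert =O(1/k)\ll\operatorname{Var}\left[  L\right]  $, but it is a different statement from Berry--Esseen. Second, the bookkeeping that is supposed to produce the $k^{2}$ in the exponent (one $k$ from Carbery--Wright, one from hypercontractivity in an invariance step) is asserted rather than derived, and since the Carbery--Wright step cannot play the role you assign it, that accounting does not survive. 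In short: the decomposition $p=c+L+q$ and the anticoncentration of $L$ are reasonable starting points, but the cancellation analysis---which is the entire content of the lemma---is missing, and the proposed substitute would fail.
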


By rescaling $p$ (similarly to Corollary \ref{cor:dfko}), we can transform
Lemma \ref{dfkolem}\ into the following.

\begin{corollary}
\label{cor:dfko2}There exists a constant $C$ such that the following holds.
\ For any bounded real polynomial $p:\{-1,1\}^{N}\rightarrow\left[
-1,1\right]  $ of degree $k$ that satisfies%
\[
\sum_{i\in\left[  N\right]  }\hat{p}\left(  \left\{  i\right\}  \right)
^{2}=v,
\]
there exists an $i\in\left[  N\right]  $ such that $\left\vert \hat{p}\left(
\left\{  i\right\}  \right)  \right\vert \geq\frac{v}{Ck}$.
\end{corollary}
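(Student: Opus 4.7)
The plan is to obtain this corollary from Lemma \ref{dfkolem} by a direct rescaling argument, mirroring how Corollary \ref{cor:dfko} was derived from Theorem \ref{dfkothm}. The idea is to normalize $p$ so its degree-$1$ Fourier mass is exactly $1$, then invoke Lemma \ref{dfkolem} with a threshold $t$ chosen so that the conclusion contradicts boundedness unless some linear coefficient is large.

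Concretely, I would set $q := p/\sqrt{v}$, so that $\hat{q}(\{i\}) = \hat{p}(\{i\})/\sqrt{v}$ and $\sum_i \hat{q}(\{i\})^2 = 1$, matching the first hypothesis of Lemma \ref{dfkolem}. Note that $v = \sum_i \hat{p}(\{i\})^2 \leq \sum_S \hat{p}(S)^2 = \operatorname{E}[p(x)^2] \leq 1$ by Parseval and the boundedness of $p$, so $1/\sqrt{v} \geq 1$. I would then choose $t := 2/\sqrt{v}$, guaranteeing $t \geq 2$.

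Suppose for contradiction that every linear coefficient satisfies $|\hat{p}(\{i\})| < v/(2Ck)$, where $C$ is the constant furnished by Lemma \ref{dfkolem}. Dividing by $\sqrt{v}$ gives $|\hat{q}(\{i\})| < \sqrt{v}/(2Ck) = 1/(Ckt)$ for every $i$, so the hypotheses of Lemma \ref{dfkolem} hold for $q$. Its conclusion is $\Pr_x[\,|q(x)| \geq t\,] \geq \exp(-Ct^2 k^2) > 0$, which forces $\|q\|_\infty \geq t = 2/\sqrt{v}$, i.e.\ $\|p\|_\infty \geq 2$. This contradicts $p(x) \in [-1,1]$. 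Hence some $i$ must satisfy $|\hat{p}(\{i\})| \geq v/(2Ck)$, giving the stated corollary with $2C$ in place of $C$.

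The argument is essentially bookkeeping once Lemma \ref{dfkolem} is assumed, so I do not anticipate a real obstacle. The one subtlety worth flagging is the choice $t = 2/\sqrt{v}$ rather than $t = 1/\sqrt{v}$: the factor of $2$ is what makes the derived lower bound $\|p\|_\infty \geq t$ strictly incompatible with $\|p\|_\infty \leq 1$, and it is affordable precisely because $v \leq 1$. Any constant $1+\eta > 1$ would work in place of $2$, at the cost of absorbing the change into $C$; since the corollary merely asserts the existence of \emph{some} universal constant, this is harmless.
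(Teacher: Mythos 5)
Your proof is correct and follows essentially the same route as the paper: rescale to $q:=p/\sqrt{v}$ so that $\sum_i \hat{q}(\{i\})^2=1$, and apply Lemma \ref{dfkolem} in contrapositive form, using the bound $\|q\|_\infty\leq 1/\sqrt{v}$ to force a large linear coefficient. The only difference is your choice $t=2/\sqrt{v}$ (costing a harmless factor of $2$ in the constant), where the paper implicitly takes $t$ at the sup-norm threshold $1/\sqrt{v}$; this is just a cleaner handling of the strict-inequality bookkeeping, not a different argument.
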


\begin{proof}
Let us define a new polynomial $q\left(  x\right)  :=\frac{p\left(  x\right)
}{\sqrt{v}}$. \ Then $q\left(  x\right)  \in\left[  -\frac{1}{\sqrt{v}}%
,\frac{1}{\sqrt{v}}\right]  $\ for all $x\in\left\{  -1,1\right\}  ^{N}$, and%
\[
\sum_{i\in\left[  N\right]  }\hat{q}\left(  \left\{  i\right\}  \right)
^{2}=1.
\]
So by Lemma \ref{dfkolem}, there must exist an index $i\in\left[  N\right]
$\ such that%
\[
\left\vert \hat{q}\left(  \left\{  i\right\}  \right)  \right\vert
>\frac{\sqrt{v}}{Ck}%
\]
and hence%
\[
\left\vert \hat{p}\left(  \left\{  i\right\}  \right)  \right\vert >\frac
{v}{Ck}.
\]

\end{proof}

We now specialize to the case $k=2$. \ Observe that, given a bounded quadratic
polynomial $p:\left\{  -1,1\right\}  ^{N}\rightarrow\left[  -1,1\right]  $
that we are trying to estimate, we can assume without loss of generality
that\ $p$\ is homogeneous: in other words, is a quadratic form. \ First of
all, if $p$ contains a degree-$0$ term (i.e., an additive constant) $c$, then
we can replace it by the degree-$2$ term $cx_{1}^{2}$. \ Next, suppose we
decompose $p$ as a sum of its quadratic part $p_{2}$ and its linear part
$p_{1}$:%
\[
p\left(  x\right)  =p_{2}\left(  x\right)  +p_{1}\left(  x\right)  .
\]
Then%
\[
p\left(  -x\right)  =p_{2}\left(  x\right)  -p_{1}\left(  x\right)
\]
is also bounded in $\left[  -1,1\right]  $\ for all $x\in\left\{
-1,1\right\}  ^{N}$. \ So%
\[
p_{1}\left(  x\right)  =\frac{p\left(  x\right)  -p\left(  -x\right)  }{2}%
\in\left[  -1,1\right]  ,
\]
and likewise $p_{2}\left(  x\right)  \in\left[  -1,1\right]  $, for all
$x\in\left\{  -1,1\right\}  ^{N}$. \ Hence we can simply estimate
$p_{2}\left(  x\right)  $\ and $p_{1}\left(  x\right)  $ separately and then
sum them. \ Furthermore, $p_{1}$\ has the form%
\[
p_{1}\left(  x\right)  =a_{1}x_{1}+\cdots+a_{n}x_{n}%
\]
for some coefficients satisfying $\left\vert a_{1}\right\vert +\cdots
+\left\vert a_{n}\right\vert \leq1$. \ By standard tail bounds, such a linear
form is easy to estimate to within $\pm\varepsilon$\ by querying only
$O\left(  1\right)  $\ variables $x_{i}$.

So our problem reduces to estimating the bounded quadratic form $p_{2}\left(
x\right)  $. \ To do this, the first step is to apply the variable-splitting
algorithm described in Section \ref{REQI}. \ This produces a \textit{new}
bounded quadratic form, which we denote $P$:%
\[
P(x_{1},\ldots,x_{N})=\sum_{i,j\in\left[  N\right]  }a_{i,j}x_{i}x_{j}.
\]
(Abusing notation, we continue to call the number of variables $N$, even
though $O\left(  N\right)  $\ new variables have been introduced.) \ We have
$\operatorname{Var}[P]=O(\frac{1}{N})$. \ Now, our goal is to perform another
sequence of variable-splittings, which should introduce $O(N)$ new variables
and achieve requirement (ii).

Observe that, to achieve requirement (ii), it suffices to ensure%
\begin{equation}
\sum_{i\in\left[  N\right]  }\left(  \sum_{i\neq j}a_{ij}\right)
^{2}=O\left(  \frac{1}{\sqrt{N}}\right)  . \label{eq:squares}%
\end{equation}
For if we achieve (\ref{eq:squares}), then we can obtain the original
requirement (ii) by removing all squares $a_{ij}^{2}$ from the left hand side
(which only decreases it) and dividing it by $2$.

Let%
\begin{align*}
\operatorname{I}_{i}\left[  P\right]   &  :=\left(  \sum_{i\neq j}%
a_{ij}\right)  ^{2},\\
\operatorname{V}\left[  P\right]   &  :=\sum_{i\in\lbrack N]}\operatorname{I}%
_{i}\left[  P\right]  .
\end{align*}
We show the following counterpart of Corollary \ref{cor:dfko}.

\begin{lemma}
\label{lem:dfko}There exists a constant $C$ for which the following holds.
\ If $\operatorname{V}\left[  P\right]  \geq\frac{C\log N}{N}$, then there
exists an $i\in\left[  N\right]  $ such that $\operatorname{I}_{i}\left[
P\right]  =\Omega(\operatorname{V}\left[  P\right]  ^{2})$.
\end{lemma}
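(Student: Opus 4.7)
The plan is to reduce Lemma \ref{lem:dfko} to Corollary \ref{cor:dfko2} by exhibiting an auxiliary \emph{linear} polynomial whose linear Fourier coefficients are precisely the numbers $c_i := \sum_{j \ne i} a_{ij}$. First I would normalize so that $a_{ij} = a_{ji}$ (symmetrize) and $a_{ii} = 0$ (absorb into a constant, free because $x_i^2 \equiv 1$ on $\{-1,1\}^N$ and $P$ is homogeneous quadratic). With this normalization, $\operatorname{I}_i[P] = c_i^2$ and $\operatorname{V}[P] = \sum_i c_i^2$. Now define
\[
F(y) \;:=\; \sum_{i=1}^{N} c_i\, y_i,
\]
so that $\widehat{F}(\{i\}) = c_i$ and $\sum_i \widehat{F}(\{i\})^2 = \operatorname{V}[P]$.

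The key step is to show that $F$ is uniformly bounded by an absolute constant on $\{-1,1\}^N$. Writing $\widetilde P$ for the multilinear extension of $P$ to $\mathbb{R}^N$, a direct expansion of $\widetilde P\!\left(\tfrac{y+\vec 1}{2}\right)$ using the symmetry $a_{ij} = a_{ji}$ yields the identity
\[
4\,\widetilde P\!\left(\tfrac{y+\vec 1}{2}\right) \;=\; P(y) \;+\; 2F(y) \;+\; P(\vec 1).
\]
For every multilinear $p:\{-1,1\}^N\to[-1,1]$, one has $|\widetilde p(x)| \le 1$ on all of $[-1,1]^N$, because $\widetilde p(x) = \mathbb{E}_{z \sim \mu_x}[p(z)]$ where $\mu_x$ is the product distribution on $\{-1,1\}^N$ with mean $x$. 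Applying this with $x = (y+\vec 1)/2 \in \{0,1\}^N \subseteq [-1,1]^N$, and using $|P(y)|, |P(\vec 1)| \le 1$, I conclude
\[
|F(y)| \;\le\; \tfrac{1}{2}\bigl(4 + 1 + 1\bigr) \;=\; 3.
\]

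With $F$ bounded by $3$, I apply Corollary \ref{cor:dfko2} to the scaled polynomial $F/3$, which has $\|F/3\|_\infty \le 1$ and $\sum_i \widehat{(F/3)}(\{i\})^2 = \operatorname{V}[P]/9$. The corollary (with $k=1$) then produces an index $i$ satisfying $|c_i|/3 \ge \operatorname{V}[P]/(9C)$, hence $\operatorname{I}_i[P] = c_i^2 = \Omega(\operatorname{V}[P]^2)$, which is the conclusion of the lemma.

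The main obstacle is finding the auxiliary polynomial $F$ and proving its boundedness via the multilinear-extension identity above; once those are in place, the rest is a direct invocation of the Dinur--Friedgut--Kindler--O'Donnell result. The hypothesis $\operatorname{V}[P] \ge C\log N/N$ is not needed for the existence of a large $c_i$ per se; rather, it guarantees that the lower bound $\operatorname{I}_i[P] = \Omega(\operatorname{V}[P]^2)$ is substantial enough for the ensuing variable-splitting recurrence to terminate in $O(N)$ steps, in direct analogy with Section \ref{REQI}.
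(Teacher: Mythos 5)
Your proof is correct, but it takes a genuinely different route from the paper's. The paper proves the lemma probabilistically: it substitutes $x_i=1$ independently with probability $1/2$, uses Azuma's inequality to show each linear Fourier coefficient of the restricted polynomial $q$ concentrates around $\frac{1}{2}\sum_{j\neq i}a_{ij}$, then combines a union bound, Cauchy--Schwarz, and Lemma \ref{infbound} to get $\sum_i \hat{q}(\{i\})^2=\Omega(\operatorname{V}[P])$---and it is exactly here that the hypothesis $\operatorname{V}[P]\geq C\log N/N$ is needed, to make the $O(\sqrt{\operatorname{V}[P]\log N/N}+\log N/N)$ error terms lower order---before invoking Corollary \ref{cor:dfko2} on the degree-$2$ polynomial $q$. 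You instead extract the row-sum linear form $F(y)=\sum_i c_iy_i$ deterministically, by evaluating the multilinear extension at $\frac{y+\vec{1}}{2}\in\{0,1\}^N$; note this is precisely the expectation of the paper's random restriction, so your identity is the de-randomized version of their concentration step. Your argument has two advantages: it gives the conclusion unconditionally (no $\log N/N$ threshold, which only strengthens the downstream recurrence, as you observe), and once $\|F\|_\infty=O(1)$ is in hand the invocation of Corollary \ref{cor:dfko2} at degree $1$ is actually overkill---$\sum_i|c_i|=F(\operatorname{sgn}(c_1),\ldots,\operatorname{sgn}(c_N))=O(1)$ already forces $\max_i|c_i|\geq\Omega\bigl(\sum_i c_i^2\bigr)$---so this step needs no Dinur et al.\ machinery at all, whereas the paper's route genuinely uses Corollary \ref{cor:dfko2}. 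One bookkeeping nit: after absorbing the diagonal terms $a_{ii}x_i^2$ into a constant $c_0$, either keep $c_0$ in $P$ (then your identity picks up an extra $+2c_0$ and the bound becomes $|F|\leq 4$) or drop it (then $|P|\leq 2$ rather than $1$); either way $\|F\|_\infty$ is still an absolute constant, so nothing changes.
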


\begin{proof}
We take $P(x_{1},\ldots,x_{N})$ and, for each $i\in\lbrack N]$, substitute
$x_{i}=1$ with probability $1/2$ (with choices for different $i$ made
independently). \ Let $q$ be the resulting polynomial.

The key claim is the following: let $x_{i}$ be a variable for which we do not
substitute $x_{i}=1$. \ Then
\begin{equation}
\left\vert \hat{q}\left(  \left\{  i\right\}  \right)  -\frac{1}{2}\sum_{i\neq
j}a_{ij}\right\vert =O\left(  \sqrt{\operatorname{Inf}_{i}[P]\log\frac
{1}{\epsilon}}\right)  \label{eq:bounds}%
\end{equation}
with probability at least $1-\epsilon$.

To prove the claim: since each term $x_{i}x_{j}$ is transformed into $x_{i}$
with probability $1/2$ by substituting $x_{i}=1$, the expectation of $\hat
{q}\left(  \left\{  i\right\}  \right)  $ is equal to $\frac{1}{2}\sum_{j\neq
i}a_{ij}$. \ Since the decision to substitute or not substitute $x_{i}=1$
changes the value of $\hat{q}\left(  \left\{  i\right\}  \right)  $ by the
amount $a_{ij}$, Azuma's inequality implies%
\[
\Pr\left[  \left\vert \hat{q}\left(  \left\{  i\right\}  \right)  -\frac{1}%
{2}\sum_{i\neq j}a_{ij}\right\vert \geq t\right]  \leq\exp\left(  -\frac
{t^{2}}{2\sum_{j\in\left[  N\right]  }a_{ij}^{2}}\right)  .
\]
Using $\operatorname{Inf}_{i}[P]=\sum_{j\in\left[  N\right]  }a_{ij}^{2}$ and
taking $t=\sqrt{2\operatorname{Inf}_{i}[P]\log\frac{1}{\epsilon}}$ completes
the proof.

We now show how the claim implies the lemma. \ Let $\epsilon=\frac{1}{2N}$.
\ Then we can make the substitutions so that (\ref{eq:bounds}) holds for all
$i\in\lbrack N]$. \ By substituting $\epsilon=\frac{1}{2N}$ into
(\ref{eq:bounds}), we have
\begin{equation}
\left\vert \hat{q}\left(  \left\{  i\right\}  \right)  \right\vert \in\left[
\frac{\sqrt{\operatorname{I}_{i}\left[  P\right]  }}{2}-O\left(
\sqrt{\operatorname{Inf}_{i}\left[  P\right]  \log N}\right)  ,\frac
{\sqrt{\operatorname{I}_{i}\left[  P\right]  }}{2}+O\left(  \sqrt
{\operatorname{Inf}_{i}\left[  P\right]  \log N}\right)  \right]
\label{eq:bounds1}%
\end{equation}
By squaring (\ref{eq:bounds1}) and summing over all $i\in\left[  N\right]  $,
we obtain
\begin{align*}
\left\vert \sum_{i\in\left[  N\right]  }\hat{q}\left(  \left\{  i\right\}
\right)  ^{2}-\frac{1}{4}\sum_{i\in\left[  N\right]  }\operatorname{I}%
_{i}\left[  P\right]  \right\vert  &  =O\left(  \sqrt{\log N}\right)
\sum_{i\in\left[  N\right]  }\sqrt{\operatorname{I}_{i}\left[  P\right]
\operatorname{Inf}_{i}\left[  P\right]  }+O\left(  \log N\right)  \sum
_{i\in\left[  N\right]  }\operatorname{Inf}_{i}\left[  P\right] \\
&  \leq O\left(  \sqrt{\log N}\right)  \sqrt{\sum_{i\in\left[  N\right]
}\operatorname{I}_{i}\left[  P\right]  \cdot\sum_{i\in\left[  N\right]
}\operatorname{Inf}_{i}\left[  P\right]  }+O\left(  \log N\right)  \sum
_{i\in\left[  N\right]  }\operatorname{Inf}_{i}\left[  P\right] \\
&  \leq O\left(  \sqrt{\log N}\right)  \sqrt{\operatorname{V}\left[  P\right]
\cdot2\operatorname{Var}\left[  P\right]  }+O\left(  \log N\right)
\cdot2\operatorname{Var}\left[  P\right] \\
&  =O\left(  \sqrt{\operatorname{V}\left[  P\right]  \frac{\log N}{N}}%
+\frac{\log N}{N}\right)  ,
\end{align*}
where the second line used Cauchy-Schwarz and the third used Lemma
\ref{infbound}.

Now, if $\operatorname{V}\left[  P\right]  \geq C\frac{\log N}{N}$ for a
sufficiently large constant $C$, then the above implies that%
\begin{align*}
\sum_{i\in\left[  N\right]  }\hat{q}\left(  \left\{  i\right\}  \right)  ^{2}
&  \geq\frac{1}{4}\operatorname{V}\left[  P\right]  -O\left(  \sqrt
{\operatorname{V}\left[  P\right]  \frac{\log N}{N}}+\frac{\log N}{N}\right)
\\
&  =\Omega\left(  \operatorname{V}\left[  P\right]  \right)  .
\end{align*}
By Corollary \ref{cor:dfko2}, this means that there exists an index
$i\in\left[  N\right]  $ such that $\left\vert \hat{q}\left(  \left\{
i\right\}  \right)  \right\vert =\Omega(\operatorname{V}\left[  P\right]  )$.
\ From equation (\ref{eq:bounds1}), we get that $\sqrt{\operatorname{I}%
_{i}\left[  P\right]  }=\Omega(\operatorname{V}\left[  P\right]  )$ and
$\operatorname{I}_{i}\left[  P\right]  =\Omega(\operatorname{V}\left[
P\right]  ^{2})$.
\end{proof}

Given Lemma \ref{lem:dfko}, we can use the same algorithm as in the previous
section. \ That is, we repeatedly choose a variable $i\in\left[  N\right]  $
that maximizes $\operatorname{I}_{i}\left[  P\right]  $ and split the variable
$x_{i}$. \ Let $p_{0},p_{1},\ldots$ be the resulting sequence of polynomials.
\ Initially, we have
\[
\operatorname{V}\left[  p_{0}\right]  =\sum_{i}\left(  \sum_{j\neq i}%
a_{ij}\right)  ^{2}\leq N\sum_{i}\sum_{j}a_{ij}^{2}=O(1),
\]
with the last equality following from $\operatorname{Var}[P]=O(\frac{1}{N})$,
which was achieved by the previous sequence of variable-splittings. \ We also
have
\[
\operatorname{V}[p_{j+1}]\leq\operatorname{V}[p_{j}]-\Omega(\operatorname{V}%
[p_{j}]^{2}),
\]
as long as $\operatorname{V}[p_{j}]\geq\frac{C\log N}{N}$. \ This means that
after $O(N/\log N)$ variable-splittings, we achieve $\operatorname{V}%
[p_{j}]<\frac{C\log N}{N}$, which is substantially stronger than the
requirement (\ref{eq:squares}) that we needed.

\section{\label{KFOLDLB}Appendix: Lower Bound for $k$-fold
\textsc{Forrelation}}

In this appendix, we use the machinery developed in Section \ref{GAP} to prove
a $\widetilde{\Omega}(\sqrt{N})$\ lower bound on the randomized query
complexity of $k$-fold \textsc{Forrelation}, for all $k\geq2$. \ Ideally, of
course, we would like to prove a lower bound that gets \textit{better} as $k$
gets larger, but even proving the same kind of lower bound that we had in the
$k=2$\ case will take some work.

In $k$-fold \textsc{Forrelation}, recall that we are given oracle access to
Boolean functions $f_{1},\ldots,f_{k}:\left\{  0,1\right\}  ^{n}%
\rightarrow\left\{  -1,1\right\}  $, and are interested in the quantity%
\[
\Phi_{f_{1},\ldots,f_{k}}:=\frac{1}{2^{\left(  k+1\right)  n/2}}\sum
_{x_{1},\ldots,x_{k}\in\left\{  0,1\right\}  ^{n}}f_{1}\left(  x_{1}\right)
\left(  -1\right)  ^{x_{1}\cdot x_{2}}f_{2}\left(  x_{2}\right)  \left(
-1\right)  ^{x_{2}\cdot x_{3}}\cdots\left(  -1\right)  ^{x_{k-1}\cdot x_{k}%
}f_{k}\left(  x_{k}\right)  .
\]
The problem is to decide whether $\left\vert \Phi_{f_{1},\ldots,f_{k}%
}\right\vert \leq\frac{1}{100}$\ or $\Phi_{f_{1},\ldots,f_{k}}\geq\frac{3}{5}%
$, promised that one of these is the case.

We will prove a lower bound for $k$-fold \textsc{Forrelation}\ by reducing the
\textsc{Gaussian Distinguishing}\ problem to it, and then applying Theorem
\ref{gdlb}---thereby illustrating the usefulness of formulating a general
lower bound for \textsc{Gaussian Distinguishing}.

\subsection{Concentration Inequalities\label{CONC}}

The first step is to prove some concentration inequalities for $k$-fold
\textsc{Forrelation}. \ In what follows, recall that $f_{\ell}^{\left(
x\right)  }\left(  x_{\ell}\right)  $ is defined as $f_{\ell}\left(  x_{\ell
}\right)  \left(  -1\right)  ^{x_{\ell}\cdot x}$, for all $\ell\in\left[
k\right]  $\ and $x_{\ell},x\in\left\{  0,1\right\}  ^{n}$.

\begin{lemma}
\label{phibound}Suppose $f_{1},\ldots,f_{k}:\left\{  0,1\right\}
^{n}\rightarrow\left\{  -1,1\right\}  $\ are chosen uniformly at random.
\ Then%
\[
\Pr_{f_{1},\ldots,f_{k}}\left[  \left\vert \Phi_{f_{1},\ldots,f_{k}%
}\right\vert \geq\frac{t}{\sqrt{N}}\right]  =O\left(  \frac{1}{t^{t}}\right)
.
\]

\end{lemma}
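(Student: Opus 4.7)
The plan is to fix $f_1, \ldots, f_{k-1}$ and exploit the fact that, conditionally, $\Phi_{f_1,\ldots,f_k}$ becomes a weighted Rademacher sum in the bits $f_k(y)$ with a controllable $\ell_2$-norm on its coefficient vector. Concretely, from the quantum circuit interpretation in Figure~\ref{kfold} one can write
\[
\Phi_{f_1,\ldots,f_k} \;=\; \left\langle 0^n\right\vert H\, U_{f_k}\, H \left\vert \psi\right\rangle, \qquad \left\vert \psi\right\rangle \;:=\; U_{f_{k-1}} H\, U_{f_{k-2}} \cdots H\, U_{f_1} H \left\vert 0^n\right\rangle,
\]
where $U_{f_i}$ denotes the diagonal unitary sending $\left\vert x\right\rangle$ to $f_i(x)\left\vert x\right\rangle$. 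Expanding $\left\langle 0^n\right\vert H = \frac{1}{\sqrt{N}}\sum_y \left\langle y\right\vert$ and using $U_{f_k}\left\vert y\right\rangle = f_k(y)\left\vert y\right\rangle$, this collapses to
\[
\Phi_{f_1,\ldots,f_k} \;=\; \frac{1}{\sqrt{N}} \sum_{y \in \{0,1\}^n} a_y\, f_k(y), \qquad a_y := \left\langle y\right\vert H \left\vert \psi\right\rangle,
\]
where the vector $(a_y)$ depends only on $f_1,\ldots,f_{k-1}$ and satisfies $\sum_y a_y^2 = 1$ by unitarity of $H$ and normalization of $\left\vert\psi\right\rangle$.

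Next, I would condition on $f_1,\ldots,f_{k-1}$ and view $\Phi_{f_1,\ldots,f_k}$ as a linear form in the independent Rademacher variables $f_k(y)$ whose coefficient vector $a_y/\sqrt{N}$ has squared $\ell_2$-norm exactly $1/N$. Hoeffding's inequality for Rademacher sums then yields
\[
\Pr_{f_k}\!\left[\,\left\vert \Phi_{f_1,\ldots,f_k}\right\vert \geq t/\sqrt{N} \,\middle|\, f_1,\ldots,f_{k-1}\right] \;\leq\; 2\exp(-t^2/2),
\]
and since the bound is uniform in the conditioning, the same bound survives unconditionally. The stated $O(1/t^t)$ tail then follows from the elementary comparison $2e^{-t^2/2} \leq 1/t^t$ once $t^2/2 \geq t\log t$, i.e., for all $t \geq 2$; the claim is trivial for $t = O(1)$ with a large enough implicit constant.

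No real obstacle arises: the whole argument reduces to a single Chernoff bound once unitarity is used to pin down the conditional coefficient vector's $\ell_2$-norm as exactly $1/\sqrt{N}$, independent of the particular $f_1,\ldots,f_{k-1}$. In fact the plan delivers a subgaussian tail $2 e^{-t^2/2}$, which is stronger than the lemma's stated $O(1/t^t)$---but that excess strength is harmless, since the downstream reduction from $k$-fold \textsc{Forrelation}\ to \textsc{Gaussian Distinguishing}\ only invokes the weaker polynomial form. The one structural check is that singling out $f_k$ as the ``free'' variable costs no generality: the symmetry expressed in equation~(\ref{ss}) shows $\Phi$ behaves symmetrically enough under reversing the order of the $f_i$'s that any choice of free coordinate works.
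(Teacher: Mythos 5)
Your proposal is correct and follows essentially the same route as the paper: fix $f_{1},\ldots,f_{k-1}$, write $\Phi_{f_{1},\ldots,f_{k}}=\frac{1}{\sqrt{N}}\sum_{y}a_{y}f_{k}\left(  y\right)$ with $\sum_{y}a_{y}^{2}=1$ (your unitarity argument is exactly the content of equation (\ref{ss})), and finish with a standard concentration bound for the resulting Rademacher sum. The only difference is that you invoke Hoeffding's inequality where the paper uses Bennett's; your subgaussian tail $2e^{-t^{2}/2}$ is in fact stronger than the paper's $2e^{-h\left(  t\right)  }$ with $h\left(  t\right)  =\left(  1+t\right)  \ln\left(  1+t\right)  -t$, and both imply the stated $O\left(  1/t^{t}\right)$ bound.
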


\begin{proof}
Imagine that $f_{1},\ldots,f_{k-1}$\ are fixed, so that we are considering
$\Phi_{f_{1},\ldots,f_{k}}$\ solely as a function of $f_{k}$. \ We have%
\[
\Phi_{f_{1},\ldots,f_{k}}=\frac{1}{\sqrt{N}}\sum_{x\in\left\{  0,1\right\}
^{n}}\alpha_{x}f_{k}\left(  x\right)
\]
where%
\[
\alpha_{x}=\Phi_{f_{1},\ldots,f_{k-2},f_{k-1}^{\left(  x\right)  }}=\frac
{1}{2^{kn/2}}\sum_{x_{1},\ldots,x_{k-1}\in\left\{  0,1\right\}  ^{n}}%
f_{1}\left(  x_{1}\right)  \left(  -1\right)  ^{x_{1}\cdot x_{2}}\cdots\left(
-1\right)  ^{x_{k-2}\cdot x_{k-1}}f_{k-1}\left(  x_{k-1}\right)  \left(
-1\right)  ^{x_{k-1}\cdot x}.
\]
Thus, by equation (\ref{ss}),%
\[
\sum_{x\in\left\{  0,1\right\}  ^{n}}\alpha_{x}^{2}=\sum_{x\in\left\{
0,1\right\}  ^{n}}\Phi_{f_{1},\ldots,f_{k-2},f_{k-1}^{\left(  x\right)  }}%
^{2}=1.
\]
So in particular, $\left\vert \alpha_{x}\right\vert \leq1$ for all $x$. \ We
now appeal to Bennett's inequality, which tells us that%
\[
\Pr_{f_{k}:\left\{  0,1\right\}  ^{n}\rightarrow\left\{  -1,1\right\}
}\left[  \left\vert \sum_{x\in\left\{  0,1\right\}  ^{n}}\alpha_{x}%
f_{k}\left(  x\right)  \right\vert \geq t\right]  \leq2\exp\left(  -h\left(
t\right)  \right)
\]
where%
\[
h\left(  t\right)  :=\left(  1+t\right)  \ln\left(  1+t\right)  -t.
\]

\end{proof}

The following is also useful.

\begin{lemma}
\label{balanced}Suppose $f_{1},\ldots,f_{k}:\left\{  0,1\right\}
^{n}\rightarrow\left\{  -1,1\right\}  $\ are chosen uniformly at random.
\ Then with probability $1-O\left(  1/N\right)  $, we have%
\[
\left\vert \sum_{z~:~z\cdot y=0}\Phi_{f_{1},\ldots,f_{k-1},f_{k}^{\left(
z\right)  }}^{2}-\frac{1}{2}\right\vert \leq\frac{\log^{5/2}N}{\sqrt{N}}%
\]
for all $y\in\left\{  0,1\right\}  ^{n}$.
\end{lemma}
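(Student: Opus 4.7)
The plan is to recast the sum as a Fourier-type quantity over the output amplitudes of the $k$-fold Forrelation circuit, and then apply Hoeffding's inequality twice---once to secure a high-probability $\ell_\infty$-bound on an intermediate amplitude function, and once to bound the resulting Rademacher sum. Implicitly I assume $y \neq 0^n$: for $y = 0^n$ the sum equals $1$ by (\ref{ss}), so the statement as written cannot hold for $y = 0^n$ and must be read with the zero string excluded. Writing $\mathbf{1}[z \cdot y = 0] = \tfrac{1}{2}(1 + (-1)^{z \cdot y})$ and using (\ref{ss}), where $\alpha_z := \Phi_{f_1,\ldots,f_{k-1},f_k^{(z)}}$, it suffices to prove
\[
\Big| \sum_z (-1)^{z \cdot y} \alpha_z^2 \Big| = O\!\left( \frac{\log^{5/2} N}{\sqrt{N}} \right).
\]

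Next I would interpret $\alpha_z$ as the amplitude of $|z\rangle$ in the final state $|\phi\rangle = H U_{f_k} H U_{f_{k-1}} \cdots H U_{f_1} H |0^n\rangle$ produced by the Figure \ref{kfold} circuit. Then $\sum_z (-1)^{z \cdot y} \alpha_z^2 = \langle \phi | O_y | \phi\rangle$, where $O_y$ is the diagonal operator $|z\rangle \mapsto (-1)^{z \cdot y} |z\rangle$. The elementary identity $H O_y H = X_y$, where $X_y : |z\rangle \mapsto |z \oplus y\rangle$, lets me rewrite this as $\langle \phi' | X_y | \phi'\rangle = \sum_z \phi'(z) \phi'(z \oplus y)$, where $|\phi'\rangle := H |\phi\rangle = U_{f_k} H U_{f_{k-1}} \cdots H |0^n\rangle$. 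Since $\phi'(z) = f_k(z) \phi''(z)$ for some vector $\phi''$ depending only on $f_1, \ldots, f_{k-1}$ and satisfying $\sum_z \phi''(z)^2 = 1$, the task reduces to bounding
\[
S_y := \sum_z f_k(z) f_k(z \oplus y) \phi''(z) \phi''(z \oplus y).
\]
For $y \neq 0^n$, pairing $z$ with $z \oplus y$ exhibits $S_y$ as a Rademacher sum in $N/2$ independent signs $f_k(z) f_k(z \oplus y)$.

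The remainder is a two-layer Hoeffding argument. First I establish the high-probability event $E := \{\max_z |\phi''(z)|^2 \leq 4\log N / N\}$. Conditional on $f_1, \ldots, f_{k-2}$, one can write $\sqrt{N}\,\phi''(z) = \sum_y \beta_y (-1)^{y \cdot z} f_{k-1}(y)$, where $\beta_y$ is the amplitude of $|y\rangle$ in the state $H U_{f_{k-2}} \cdots H |0^n\rangle$, so $\sum_y \beta_y^2 = 1$ by unitarity; Hoeffding over $f_{k-1}$ yields $\Pr[|\phi''(z)|^2 > 4\log N / N] = O(1/N^2)$ for each $z$, and a union bound over $z$ shows $\Pr[E] \geq 1 - O(1/N)$. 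Conditional on $\phi''$ satisfying $E$, the Rademacher sum $S_y$ has sum of squared coefficients
\[
2 \sum_z (\phi''(z) \phi''(z \oplus y))^2 \leq 2 \max_z |\phi''(z)|^2 \cdot \sum_z \phi''(z)^2 = O(\log N / N),
\]
so a second Hoeffding bound over the independent signs in $f_k$ yields $|S_y| = O(\log N / \sqrt{N})$ with failure probability $O(1/N^2)$. Union-bounding over the $N-1$ nonzero $y$'s then gives a total failure probability of $O(1/N)$, and the resulting bound $O(\log N / \sqrt{N})$ comfortably implies the claimed $\log^{5/2} N / \sqrt{N}$.

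The main technical subtlety is the iterated conditioning: I must separate the randomness of $f_{k-1}$ (used to secure $E$) from that of $f_k$ (used to bound $S_y$), and check that conditioning on $E$ does not spoil the independence of the Rademachers $\{f_k(z) f_k(z \oplus y)\}$ across pairs. Because $E$ is measurable in $f_1, \ldots, f_{k-1}$ alone, this independence is preserved, and the two applications of Hoeffding compose cleanly. A minor additional point is that the argument actually delivers a $\log$-factor stronger bound than the one stated, which accounts for the slack in the $\log^{5/2}$ exponent; conversely, the missing $y = 0^n$ case should be handled as an explicit exclusion in the statement rather than through the bound.
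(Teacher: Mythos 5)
Your proof is correct and is essentially the paper's argument: the paper likewise reduces the half-sum to $\tfrac{1}{2}$ plus the Rademacher sum $\sum_{x'}\alpha_{0x'}\alpha_{1x'}f_k(0x')f_k(1x')$ in the independent signs $f_k(0x')f_k(1x')$ (your $S_y$, after its symmetry reduction to $y=10\cdots 0$), bounds the coefficients with high probability (via Lemma \ref{phibound} rather than your direct Hoeffding bound on the intermediate amplitudes), and finishes with Hoeffding plus a union bound over $y$. Your variations---treating general $y$ through the identity $H^{\otimes n}O_{y}H^{\otimes n}=X_{y}$ instead of the symmetry step, the sharper $\ell_2$ coefficient estimate that saves log factors over the stated $\log^{5/2}N/\sqrt{N}$, and the explicit exclusion of $y=0^{n}$ (which the paper's ``wlog $y=10\cdots0$'' step leaves implicit)---are refinements of the same route rather than a different argument.
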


\begin{proof}
By symmetry, we can assume without loss of generality that $y=10\cdots0$, so
that the sum is over all $z$ that start with $0$.

As in the proof of Lemma \ref{phibound}, imagine that $f_{1},\ldots,f_{k-1}%
$\ are fixed, so that we are considering%
\[
\Phi_{f_{1},\ldots,f_{k-1},f_{k}^{\left(  z\right)  }}=\frac{1}{\sqrt{N}}%
\sum_{x\in\left\{  0,1\right\}  ^{n}}\alpha_{x}f_{k}\left(  x\right)  \left(
-1\right)  ^{x\cdot z}%
\]
solely as a function of $f_{k}$. \ Then it is not hard to see that%
\[
\sum_{z~:~z\cdot y=0}\Phi_{f_{1},\ldots,f_{k-1},f_{k}^{\left(  z\right)  }%
}^{2}%
\]
is just the probability of measuring an output string that starts with $0$
when the \textsc{Forrelation}\ algorithm\ is run on $f_{1},\ldots,f_{k}$,
which also equals%
\begin{align*}
\sum_{x\in\left\{  0,1\right\}  ^{n-1}}\left(  \frac{\alpha_{0x}f_{k}\left(
0x\right)  +\alpha_{1x}f_{k}\left(  1x\right)  }{\sqrt{2}}\right)  ^{2}  &
=\frac{1}{2}\sum_{x\in\left\{  0,1\right\}  ^{n-1}}\left(  \alpha_{0x}%
f_{k}\left(  0x\right)  +\alpha_{1x}f_{k}\left(  1x\right)  \right)  ^{2}\\
&  =\frac{1}{2}\sum_{x\in\left\{  0,1\right\}  ^{n-1}}\left(  \alpha_{0x}%
^{2}+\alpha_{1x}^{2}+2\alpha_{0x}\alpha_{1x}f_{k}\left(  0x\right)
f_{k}\left(  1x\right)  \right) \\
&  =\frac{1}{2}+\sum_{x\in\left\{  0,1\right\}  ^{n-1}}\alpha_{0x}\alpha
_{1x}f_{k}\left(  0x\right)  f_{k}\left(  1x\right)  .
\end{align*}
Now, each $f_{k}\left(  0x\right)  f_{k}\left(  1x\right)  $\ is an
independent, uniform $\left\{  -1,1\right\}  $\ random variable.
\ Furthermore, by Lemma \ref{phibound}, with $1-o\left(  1/N\right)
$\ probability we have $\left\vert \alpha_{x}\right\vert \leq\frac{\log
N}{\sqrt{N}}$\ for all $x\in\left\{  0,1\right\}  ^{n}$, in which case%
\[
\left\vert \alpha_{0x}\alpha_{1x}\right\vert \leq\frac{\log^{2}N}{N}%
\]
for all $x\in\left\{  0,1\right\}  ^{n-1}$. \ By Hoeffding's inequality, it
follows that%
\begin{align*}
\Pr\left[  \left\vert \sum_{z~:~z\cdot y=0}\Phi_{f_{1},\ldots,f_{k-1}%
,f_{k}^{\left(  z\right)  }}^{2}-\frac{1}{2}\right\vert \geq\frac{t}{\sqrt{N}%
}\right]   &  \leq2\exp\left(  -\frac{2(t/\sqrt{N})^{2}}{\left(  N/2\right)
\left(  \frac{2\log^{2}N}{N}\right)  ^{2}}\right) \\
&  =2\exp\left(  -\frac{t^{2}}{4\log^{4}N}\right)  .
\end{align*}
Setting $t=C\log^{5/2}N$ for some constant $C$, this probability is at most
$1/N^{2}$. \ So by the union bound, the probability is at most $1/N$ when
summed over all $y$.
\end{proof}

\subsection{Continuous/Discrete Hybrid\label{HYBRID}}

Just like we did in the $k=2$\ case, it is convenient to define a continuous
analogue of the $k$-fold\ \textsc{Forrelation} problem---though in this case,
the problem will be a hybrid of continuous and discrete. \ In $k$-fold
\textsc{Real Forrelation}, we are given oracle access to functions
$f_{1},\ldots,f_{k-2}:\left\{  0,1\right\}  ^{n}\rightarrow\left\{
-1,1\right\}  $ as well as $f_{k-1},f_{k}:\left\{  0,1\right\}  ^{n}%
\rightarrow\mathbb{R}$. \ We are promised that each $f_{i}\left(
x_{i}\right)  $\ (for $i\in\left[  k-2\right]  $) is chosen uniformly and
independently from $\left\{  -1,1\right\}  $, and \textit{also} that one of
the following holds:

\begin{enumerate}
\item[(i)] Uniform measure $\mathcal{U}$: Each $f_{k-1}\left(  x_{k-1}\right)
$\ and $f_{k}\left(  x_{k}\right)  $\ is an independent $\mathcal{N}\left(
0,1\right)  $\ Gaussian.

\item[(ii)] Forrelated measure $\mathcal{F}$: Each $f_{k-1}\left(
x_{k-1}\right)  $\ is an independent $\mathcal{N}\left(  0,1\right)
$\ Gaussian, and each $f_{k}\left(  x_{k}\right)  $ is set equal to%
\[
\frac{1}{2^{\left(  k-1\right)  n/2}}\sum_{x_{1},\ldots,x_{k-1}\in\left\{
0,1\right\}  ^{n}}f_{1}\left(  x_{1}\right)  \left(  -1\right)  ^{x_{1}\cdot
x_{2}}\cdots\left(  -1\right)  ^{x_{k-2}\cdot x_{k-1}}f_{k-1}\left(
x_{k-1}\right)  \left(  -1\right)  ^{x_{k-1}\cdot x_{k}}.
\]

\end{enumerate}

The problem is to decide whether (i) or (ii) holds.

Here is another way to think about $k$-fold \textsc{Real Forrelation}: let%
\[
f\left(  x_{k-1}\right)  :=c_{x_{k-1}}f_{k-1}\left(  x_{k-1}\right)
,~~~~~~~~~~g\left(  x_{k}\right)  :=f_{k}\left(  x_{k}\right)  ,
\]
where%
\begin{align*}
c_{x_{k-1}}  &  =\sqrt{N}\Phi_{f_{1},\ldots,f_{k-3},f_{k-2}^{\left(
x_{k-1}\right)  }}\\
&  =\frac{1}{2^{\left(  k-2\right)  n/2}}\sum_{x_{1},\ldots,x_{k-2}\in\left\{
0,1\right\}  ^{n}}f_{1}\left(  x_{1}\right)  \left(  -1\right)  ^{x_{1}\cdot
x_{2}}\cdots f_{1}\left(  x_{k-2}\right)  \left(  -1\right)  ^{x_{k-2}\cdot
x_{k-1}}.
\end{align*}
Then%
\[
\Phi_{f_{1},\ldots,f_{k}}=\Phi_{f,g}=\frac{1}{2^{3n/2}}\sum_{x,y\in\left\{
0,1\right\}  ^{n}}f\left(  x\right)  \left(  -1\right)  ^{x\cdot y}g\left(
y\right)  .
\]
In other words, we can think of $k$-fold \textsc{Real Forrelation}\ as
equivalent to ordinary, $2$-fold \textsc{Real Forrelation}\ on the functions
$f_{k-1}\left(  x\right)  $\ and $f_{k}\left(  y\right)  $, except that each
$f_{k-1}\left(  x\right)  $\ is \textquotedblleft twisted\textquotedblright%
\ by a multiplicative factor $c_{x}$\ depending on $f_{1},\ldots,f_{k-2}$.
\ We have the following useful fact.

\begin{proposition}
\label{logub}With $1-o\left(  1/N\right)  $ probability over $f_{1}%
,\ldots,f_{k-2}$, we have $\left\vert c_{x}\right\vert \leq\log N$ for all
$x\in\left\{  0,1\right\}  ^{n}$.
\end{proposition}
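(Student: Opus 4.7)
The plan is a direct application of Lemma \ref{phibound} together with a union bound. Recall that
\[
c_x = \sqrt{N}\, \Phi_{f_1,\ldots,f_{k-3},f_{k-2}^{(x)}},
\]
so showing $|c_x|\le\log N$ is equivalent to showing $|\Phi_{f_1,\ldots,f_{k-3},f_{k-2}^{(x)}}|\le (\log N)/\sqrt{N}$.

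First, I would fix an arbitrary $x\in\{0,1\}^n$ and observe that the map $h\mapsto h^{(x)}$, defined by $h^{(x)}(y):=h(y)(-1)^{y\cdot x}$, is a bijection on the set of Boolean functions $\{0,1\}^n\to\{-1,1\}$. Consequently, if $f_{k-2}$ is uniformly random, then $f_{k-2}^{(x)}$ is also uniformly random, and jointly with $f_1,\ldots,f_{k-3}$ the tuple $(f_1,\ldots,f_{k-3},f_{k-2}^{(x)})$ has the same distribution as $(f_1,\ldots,f_{k-2})$ itself. Hence Lemma \ref{phibound}, applied to $(k-2)$-fold \textsc{Forrelation} with threshold $t=\log N$, yields
\[
\Pr\Big[|c_x|>\log N\Big] \;=\; \Pr\Big[|\Phi_{f_1,\ldots,f_{k-3},f_{k-2}^{(x)}}|\ge \tfrac{\log N}{\sqrt{N}}\Big] \;=\; O\!\left(\frac{1}{(\log N)^{\log N}}\right).
\]

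Second, I would take a union bound over all $N$ choices of $x\in\{0,1\}^n$. Since $(\log N)^{\log N} = 2^{(\log\log N)(\log N)}$ dominates any fixed polynomial in $N$, we obtain
\[
\Pr\Big[\exists\, x:\ |c_x|>\log N\Big] \;\le\; N\cdot O\!\left(\frac{1}{(\log N)^{\log N}}\right) \;=\; o(1/N),
\]
which is the claim.

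There is no real obstacle here: the only subtlety is verifying that Lemma \ref{phibound} actually applies to the twisted tuple, which is handled by the bijection observation above. The inequality is comfortably tight because Lemma \ref{phibound} gives super-polynomial decay in $t$ once $t$ is super-constant, so the $\log N$ threshold absorbs the union bound over $N$ strings with room to spare.
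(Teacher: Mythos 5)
Your proof is correct and follows essentially the same route as the paper: apply Lemma \ref{phibound} with $t=\log N$ to get a per-$x$ failure probability of $O\left(1/(\log N)^{\log N}\right)=o(1/N^{2})$, then take a union bound over the $N$ strings $x$. Your extra observation that twisting $f_{k-2}$ by $(-1)^{y\cdot x}$ preserves the uniform distribution is a harmless bit of added care that the paper leaves implicit.
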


\begin{proof}
By Lemma \ref{phibound},%
\[
\Pr_{f_{1},\ldots,f_{k-2}}\left[  \left\vert c_{x}\right\vert \geq\log
N\right]  =O\left(  \frac{1}{\left(  \log N\right)  ^{\log N}}\right)
=o\left(  \frac{1}{N^{2}}\right)  .
\]
The proposition now follows from the union bound.
\end{proof}

Note also that%
\[
\sum_{x\in\left\{  0,1\right\}  ^{n}}c_{x}^{2}=N.
\]
Next, we prove a $k$-fold analogue of Theorem \ref{2overpi}, showing that
$k$-fold \textsc{Real Forrelation}\ can be reduced to Boolean $k$%
-fold\ \textsc{Forrelation}.

\begin{theorem}
\label{2overpi2}Fix $f_{1},\ldots,f_{k-2}$\ (or equivalently, the multipliers
$c_{x}$). \ Suppose $\left\langle f,g\right\rangle =\left\langle f_{k-1}%
,f_{k}\right\rangle $ are drawn from the forrelated measure $\mathcal{F}$.
\ Define Boolean functions $F,G:\left\{  0,1\right\}  ^{n}\rightarrow\left\{
-1,1\right\}  $\ by $F\left(  x\right)  :=\operatorname*{sgn}\left(  f\left(
x\right)  \right)  $ and $G\left(  y\right)  :=\operatorname*{sgn}\left(
g\left(  y\right)  \right)  $. \ Then%
\[
\operatorname{E}\left[  \Phi_{F,G}\right]  =\frac{2}{\pi}\pm O\left(
\frac{\log^{3}N}{N}\right)  .
\]

\end{theorem}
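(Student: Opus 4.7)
The plan is to mimic Theorem~\ref{2overpi}, adapted to the $k$-fold setting where the new feature is the presence of the multipliers $c_x$. First I would collapse the $x_1,\ldots,x_{k-2}$ summations into the $c_{x_{k-1}}$'s (as in Section~\ref{HYBRID}) to obtain the identity
\begin{equation*}
\Phi_{f_1,\ldots,f_{k-2},F,G} \;=\; \frac{1}{N^{3/2}} \sum_{x,y} c_x\, F(x)\,(-1)^{x\cdot y}\,G(y).
\end{equation*}
(I am reading ``$\Phi_{F,G}$'' in the statement as this $k$-fold forrelation; the literal $2$-fold reading gives an answer depending on $c$ that is not $2/\pi$ in general.) By linearity, the task reduces to evaluating $\operatorname*{E}[F(x)(-1)^{x\cdot y}G(y)]$ for each fixed $(x,y)$, with the expectation taken over $f_{k-1},f_k\sim\mathcal{F}$.

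For each such pair, $(f_{k-1}(x),f_k(y))$ is a mean-zero joint Gaussian vector with both marginals of unit variance (using $\sum_{x'} c_{x'}^2 = N$) and covariance $c_x(-1)^{x\cdot y}/\sqrt{N}$. I would then mirror the proof of Theorem~\ref{2overpi}: write $f_k(y) = (-1)^{x\cdot y}c_x f_{k-1}(x)/\sqrt{N} + Z_y$ with $Z_y$ independent of $f_{k-1}(x)$ and Gaussian of variance $1-c_x^2/N$, compare $G(y)=\operatorname*{sgn}(f_k(y))$ to $G'(y):=\operatorname*{sgn}(Z_y)$, and note that $\operatorname*{E}[F(x)(-1)^{x\cdot y}G'(y)] = 0$ by independence and symmetry. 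The contribution from the sign-flip event $G\neq G'$ reduces to the same Gaussian integral encountered in the $k=2$ case (equivalently, one applies Sheppard's formula $\operatorname*{E}[\operatorname*{sgn}(A)\operatorname*{sgn}(B)]=\tfrac{2}{\pi}\arcsin(\rho)$), yielding
\begin{equation*}
\operatorname*{E}\!\bigl[F(x)(-1)^{x\cdot y}G(y)\bigr] \;=\; \frac{2}{\pi}\arcsin\!\left(\frac{c_x}{\sqrt{N}}\right),
\end{equation*}
where the two factors of $(-1)^{x\cdot y}$ cancel by oddness of $\arcsin$.

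Substituting back and Taylor-expanding $\arcsin(z)=z+O(z^3)$:
\begin{equation*}
\operatorname*{E}[\Phi_{F,G}] \;=\; \frac{2}{\pi\sqrt{N}}\sum_x c_x\arcsin(c_x/\sqrt{N}) \;=\; \frac{2}{\pi N}\sum_x c_x^2 \;+\; O\!\left(\frac{1}{N^2}\sum_x c_x^4\right).
\end{equation*}
The leading term is \emph{exactly} $2/\pi$ by $\sum_x c_x^2 = N$. For the error, I would invoke Proposition~\ref{logub} to get $\max_x|c_x|\le\log N$ on the probability-$(1-o(1/N))$ event that the theorem is tacitly applied to; then $\sum_x c_x^4 \le (\log N)^2\cdot\sum_x c_x^2 = N(\log N)^2$, so the error is bounded by $O((\log N)^3/N)$ with room to spare.

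The main subtlety is sign bookkeeping in the Gaussian integral: one must track carefully that the sign of $c_x$ comes out of the $\arcsin$ correctly, so that multiplying by the outer $c_x$ produces $c_x^2$ rather than $|c_x|\cdot c_x$ or similar. This is what lets the leading term close via the identity $\sum_x c_x^2 = N$ rather than $\sum_x c_x$ or $\sum_x|c_x|$, neither of which equals $N$ in general --- this can already be checked for $k=3$, where $c_x = \hat{f_1}(x)$ is typically far from $\pm 1$ but still obeys Parseval. Apart from this bookkeeping and the routine use of Proposition~\ref{logub}, the proof is a direct generalization of Theorem~\ref{2overpi}.
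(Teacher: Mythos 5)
Your proposal is correct, and your reading of $\Phi_{F,G}$ as the $k$-fold quantity $\Phi_{f_1,\ldots,f_{k-2},F,G}=\frac{1}{N^{3/2}}\sum_{x,y}c_x F(x)(-1)^{x\cdot y}G(y)$ is exactly what the paper intends. The skeleton matches the paper's proof (linearity of expectation, then a per-pair analysis in which $f_k(y)$ is split into its component along $f_{k-1}(x)$ plus an independent Gaussian $Z_y$, comparing $G$ to $G':=\operatorname*{sgn}(Z_y)$), but your execution differs in a useful way: you evaluate the per-pair expectation in closed form, $\operatorname{E}\left[F(x)(-1)^{x\cdot y}G(y)\right]=\frac{2}{\pi}\arcsin\left(c_x/\sqrt{N}\right)$, multiply by the outer $c_x$, and Taylor-expand, so the leading term is $\frac{2}{\pi N}\sum_x c_x^2$, which is exactly $2/\pi$ by $\sum_x c_x^2=N$, with error $O\left(\frac{1}{N^2}\sum_x c_x^4\right)=O\left(\frac{\log^2 N}{N}\right)$ on the event $\max_x|c_x|\le\log N$ furnished by Proposition~\ref{logub} (the same high-probability caveat the paper invokes). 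The paper instead derives two-sided estimates on the Gaussian integral $2\int_0^\infty Z(t)\Pr[c_xv_x>t]\,dt$ with a truncation at $C=\sqrt{\log N}$ and plugs a per-pair value of the form $\frac{2|c_x|}{\pi\sqrt N}\pm\cdots$ into the final sum. Your flagged subtlety about $c_x$ versus $|c_x|$ versus $c_x^2$ is well placed: the sign-flip identity really reads $\operatorname{E}\left[c_xF(x)(-1)^{x\cdot y}G(y)\right]=2|c_x|\Pr\left[G(y)\neq G'(y)\right]$, so the correct per-pair value is $\frac{2|c_x|}{\pi}\arcsin\left(|c_x|/\sqrt N\right)\approx\frac{2c_x^2}{\pi\sqrt N}$ --- precisely your arcsine expression --- whereas the chain as displayed in the paper passes through $\frac{2}{\pi N}\sum_x|c_x|$, which equals $\frac{2}{\pi}$ only when $|c_x|\equiv 1$ (as in the $2$-fold case, but not for general $k$). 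So your route both proves the theorem and makes explicit the bookkeeping step that the paper's write-up compresses.
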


\begin{proof}
By Proposition \ref{logub}, we can assume without loss of generality that
$\left\vert c_{x}\right\vert \leq\log N$ for all $x\in\left\{  0,1\right\}
^{n}$ (the times when this assumption fails can only change $\operatorname{E}%
\left[  \Phi_{F,G}\right]  $\ by $o\left(  1/N\right)  $).

By linearity of expectation, it suffices to calculate $\operatorname{E}\left[
c_{x}F\left(  x\right)  \left(  -1\right)  ^{x\cdot y}G\left(  y\right)
\right]  $\ for some specific $x,y$ pair. \ Let $v\in\mathbb{R}^{N}$ be a
vector of independent $\mathcal{N}\left(  0,1\right)  $\ Gaussians. \ Then we
can consider $\left\langle F,G\right\rangle $\ to have been generated as
follows:%
\begin{align*}
F\left(  x\right)   &  =\operatorname*{sgn}\left(  v_{x}\right)  ,\\
G\left(  y\right)   &  =\operatorname*{sgn}\left(  \sum_{x\in\left\{
0,1\right\}  ^{n}}c_{x}v_{x}\left(  -1\right)  ^{x\cdot y}\right)  .
\end{align*}
Let%
\[
Z:=\sum_{x^{\prime}\neq x}c_{x^{\prime}}v_{x^{\prime}}\left(  -1\right)
^{x^{\prime}\cdot y},
\]
and let $G^{\prime}\left(  y\right)  :=\operatorname*{sgn}\left(  Z\right)  $.
\ Then%
\[
\operatorname{E}\left[  c_{x}F\left(  x\right)  \left(  -1\right)  ^{x\cdot
y}G^{\prime}\left(  y\right)  \right]  =\operatorname{E}\left[  c_{x}%
\operatorname*{sgn}\left(  v_{x}\right)  \left(  -1\right)  ^{x\cdot
y}\operatorname*{sgn}\left(  Z\right)  \right]  =0,
\]
since $v_{x}$\ and $Z$ are independent Gaussians both with mean $0$. \ Note
that adding $c_{x}v_{x}\left(  -1\right)  ^{x\cdot y}$\ back to $Z$\ can only
flip $Z$ to having the same sign as $c_{x}\operatorname*{sgn}\left(
v_{x}\right)  \left(  -1\right)  ^{x\cdot y}$, not the opposite sign, and
hence can only increase $c_{x}F\left(  x\right)  \left(  -1\right)  ^{x\cdot
y}G\left(  y\right)  $. \ It follows that%
\[
\operatorname{E}\left[  c_{x}F\left(  x\right)  \left(  -1\right)  ^{x\cdot
y}G\left(  y\right)  \right]  =2\Pr\left[  G\left(  y\right)  \neq G^{\prime
}\left(  y\right)  \right]  .
\]

The event $G\left(  y\right)  \neq G^{\prime}\left(  y\right)  $\ occurs if
and only if the following two events both occur:%
\begin{align*}
\left\vert c_{x}v_{x}\right\vert  &  >\left\vert Z\right\vert ,\\
\operatorname*{sgn}\left(  c_{x}v_{x}\left(  -1\right)  ^{x\cdot y}\right)
&  \neq\operatorname*{sgn}\left(  Z\right)  .
\end{align*}
Since the distribution of $v_{x}$\ is symmetric about $0$, we can assume
without loss of generality that $c_{x}\left(  -1\right)  ^{x\cdot y}=1$.

Let $Z\left(  t\right)  $\ be the probability density function of $Z$. \ Then%
\[
\Pr\left[  \left\vert c_{x}v_{x}\right\vert >\left\vert Z\right\vert \text{
\ and \ }\operatorname*{sgn}\left(  v_{x}\right)  \neq\operatorname*{sgn}%
\left(  Z\right)  \right]  =2\int_{t=0}^{\infty}Z\left(  t\right)  \Pr\left[
c_{x}v_{x}>t\right]  dt.
\]

Now, $Z$ is a linear combination of $N-1$\ independent $\mathcal{N}\left(
0,1\right)  $\ Gaussians, with coefficients $\left\{  c_{x^{\prime}}\right\}
_{x^{\prime}\neq x}$. \ This means that $Z$\ has the $\mathcal{N}\left(
0,N-c_{x}^{2}\right)  $\ Gaussian distribution. \ Therefore%
\begin{align*}
2\int_{t=0}^{\infty}Z\left(  t\right)  \Pr\left[  c_{x}v_{x}>t\right]  dt  &
=\frac{2}{\sqrt{2\pi\left(  N-c_{x}^{2}\right)  }}\int_{t=0}^{\infty}%
\exp\left(  -\frac{t^{2}}{2\left(  N-c_{x}^{2}\right)  }\right)  \Pr\left[
c_{x}v_{x}>t\right]  dt\\
&  \leq\frac{2}{\sqrt{2\pi\left(  N-c_{x}^{2}\right)  }}\int_{t=0}^{\infty}%
\Pr\left[  c_{x}v_{x}>t\right]  dt\\
&  =\frac{2}{\sqrt{2\pi\left(  N-c_{x}^{2}\right)  }}\operatorname{E}\left[
\left\vert c_{x}v_{x}\right\vert \right] \\
&  =\frac{2\left\vert c_{x}\right\vert }{\pi\sqrt{N-c_{x}^{2}}}\\
&  \leq\frac{2\left\vert c_{x}\right\vert }{\pi\sqrt{N}}+O\left(
\frac{\left\vert c_{x}\right\vert ^{3}}{N^{3/2}}\right)  .
\end{align*}
In the other direction, for all $C>0$ we have%
\begin{align*}
2\int_{t=0}^{\infty}Z\left(  t\right)  \Pr\left[  c_{x}v_{x}>t\right]  dt  &
=\frac{2}{\sqrt{2\pi\left(  N-c_{x}^{2}\right)  }}\int_{t=0}^{\infty}%
\exp\left(  -\frac{t^{2}}{2\left(  N-c_{x}^{2}\right)  }\right)  \Pr\left[
c_{x}v_{x}>t\right]  dt\\
&  \geq\frac{2}{\sqrt{2\pi N}}\int_{t=0}^{C}\exp\left(  -\frac{t^{2}}{2\left(
N-c_{x}^{2}\right)  }\right)  \Pr\left[  c_{x}v_{x}>t\right]  dt\\
&  \geq\frac{2}{\sqrt{2\pi N}}\exp\left(  -\frac{C^{2}}{2\left(  N-c_{x}%
^{2}\right)  }\right)  \int_{t=0}^{C}\Pr\left[  c_{x}v_{x}>t\right]  dt\\
&  =\frac{2}{\sqrt{2\pi N}}\exp\left(  -\frac{C^{2}}{2\left(  N-c_{x}%
^{2}\right)  }\right)  \left(  \operatorname{E}\left[  \left\vert c_{x}%
v_{x}\right\vert \right]  -\frac{1}{c_{x}\sqrt{2\pi}}\int_{t=C}^{\infty
}te^{-t^{2}/\left(  2c_{x}^{2}\right)  }dt\right) \\
&  =\frac{2}{\sqrt{2\pi N}}\exp\left(  -\frac{C^{2}}{2\left(  N-c_{x}%
^{2}\right)  }\right)  \left\vert c_{x}\right\vert \left(  \sqrt{\frac{2}{\pi
}}-\frac{e^{-C^{2}/2}}{\sqrt{2\pi}}\right)  .
\end{align*}
If we set $C:=\sqrt{\log N}$, then using $\left\vert c_{x}\right\vert \leq\log
N$, the above is%
\[
\frac{2\left\vert c_{x}\right\vert }{\sqrt{2\pi N}}\left(  1-O\left(
\frac{\log N}{N}\right)  \right)  \left(  \sqrt{\frac{2}{\pi}}-\frac{1}%
{\sqrt{2\pi}N}\right)  \geq\frac{2\left\vert c_{x}\right\vert }{\pi\sqrt{N}%
}-O\left(  \frac{\log N}{N^{3/2}}\right)  .
\]

Therefore%
\begin{align*}
\operatorname{E}\left[  \Phi_{F,G}\right]   &  =\frac{1}{2^{3n/2}}\sum
_{x,y\in\left\{  0,1\right\}  ^{n}}\operatorname{E}\left[  c_{x}F\left(
x\right)  \left(  -1\right)  ^{x\cdot y}G\left(  y\right)  \right] \\
&  =\frac{1}{N^{3/2}}\sum_{x,y\in\left\{  0,1\right\}  ^{n}}\left(
\frac{2\left\vert c_{x}\right\vert }{\pi\sqrt{N}}+O\left(  \frac{\left\vert
c_{x}\right\vert ^{3}}{N^{3/2}}\right)  -O\left(  \frac{\log N}{N^{3/2}%
}\right)  \right) \\
&  =\frac{2}{\pi}+\frac{1}{N^{2}}\sum_{x\in\left\{  0,1\right\}  ^{n}%
}\left\vert c_{x}\right\vert ^{3}-O\left(  \frac{\log N}{N}\right)
\backslash\\
&  =\frac{2}{\pi}\pm O\left(  \frac{\log^{3}N}{N}\right)  .
\end{align*}

\end{proof}

By direct analogy to Corollary \ref{realfor}, Theorem \ref{2overpi2}\ implies
that there exists a reduction from $k$-fold \textsc{Real Forrelation}\ to
$k$-fold \textsc{Forrelation}.

\begin{corollary}
\label{realfor2}Suppose there exists a $T$-query algorithm that solves
$k$-fold \textsc{Forrelation}\ with bounded error. \ Then there also exists an
$O\left(  T\right)  $-query algorithm that solves $k$-fold \textsc{Real
Forrelation} with bounded error.
\end{corollary}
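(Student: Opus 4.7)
The plan is to imitate the proof of Corollary \ref{realfor} verbatim, using Theorem \ref{2overpi2} in place of Theorem \ref{2overpi} and Proposition \ref{logub} to handle the small bad event that arises when the multipliers $c_x$ are too large. Given an instance $\langle f_1,\ldots,f_{k-2},f_{k-1},f_k\rangle$ of $k$-fold \textsc{Real Forrelation}, I would produce a Boolean $k$-fold \textsc{Forrelation} instance $\langle f_1,\ldots,f_{k-2},F_{k-1},F_k\rangle$ by rounding only the two real functions: $F_{k-1}(x):=\operatorname*{sgn}(f_{k-1}(x))$ and $F_k(y):=\operatorname*{sgn}(f_k(y))$. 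Queries to the new instance cost at most one query to the original oracle each, so a $T$-query solver for Boolean $k$-fold \textsc{Forrelation} (after a constant amount of amplification to drive its error below, say, $1/100$) translates to an $O(T)$-query procedure.

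The key bookkeeping step is to observe that, by the same factoring of the sum over $x_1,\ldots,x_{k-2}$ that produced the $c_x$'s in Section \ref{HYBRID},
\[
\Phi_{f_1,\ldots,f_{k-2},F_{k-1},F_k}=\frac{1}{2^{3n/2}}\sum_{x,y\in\{0,1\}^n}c_x F_{k-1}(x)(-1)^{x\cdot y}F_k(y),
\]
which is precisely the quantity that Theorem \ref{2overpi2} identifies as $\Phi_{F,G}$ under the forrelated measure. This reduces the analysis of the two cases to routine tail bounds.

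In the uniform case, $f_1,\ldots,f_{k-2},F_{k-1},F_k$ are jointly iid uniform $\{-1,1\}$-valued functions on $\{0,1\}^n$, so the symmetry argument accompanying equation (\ref{ss}) gives $\operatorname*{E}[\Phi_{f_1,\ldots,f_{k-2},F_{k-1},F_k}^2]=1/N$, and Markov's inequality yields $\Pr[|\Phi|>1/100]<10^4/N$. In the forrelated case, Proposition \ref{logub} ensures that $|c_x|\leq \log N$ for every $x$ except on an event of probability $o(1/N)$, and conditioned on that event Theorem \ref{2overpi2} gives $\operatorname*{E}[\Phi_{f_1,\ldots,f_{k-2},F_{k-1},F_k}]\geq 2/\pi - O(\log^3 N/N)$. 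Since $|\Phi|\leq 1$, Markov's inequality applied to $1-\Phi$ then forces $\Pr[\Phi\geq 3/5]\geq \Omega(1)$.

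Combining the amplified Boolean-\textsc{Forrelation} solver with a union bound over (i) its own error, (ii) the $O(1/N)$ failure event in the uniform case, and (iii) the $o(1/N)$ event $|c_x|>\log N$ in the forrelated case, produces an $O(T)$-query randomized algorithm that distinguishes $\mathcal{U}$ from $\mathcal{F}$ with $\Omega(1)$ bias. The only thing that really needs care is checking that the same symmetry identity $\operatorname*{E}[\Phi^2]=1/N$ that was used in the $k=2$ case still applies after rounding in the $k$-fold setting; this works precisely because the rounded Boolean instance is uniform over all tuples of Boolean functions, which is exactly the hypothesis under which (\ref{ss}) was derived.
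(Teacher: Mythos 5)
Your proposal is correct and is essentially the paper's own argument: the paper proves Corollary \ref{realfor2} simply ``by direct analogy to Corollary \ref{realfor},'' i.e.\ by rounding $f_{k-1},f_k$ to $\operatorname{sgn}$ values, invoking Theorem \ref{2overpi2} (with Proposition \ref{logub} absorbing the bad event $|c_x|>\log N$) for the forrelated case, and using the uniform-tuple identity $\operatorname{E}[\Phi^2]=1/N$ plus Markov for the uniform case, exactly as you do. The details you fill in, including the factorization through the multipliers $c_x$ and the final union-bound bias calculation, match the intended reduction.
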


\subsection{Lower Bound}

Finally, we apply our lower bound for \textsc{Gaussian Distinguishing}\ to
obtain a lower bound on the randomized query complexity of $k$-fold
\textsc{Real Forrelation}, which almost matches what we obtained for the
$2$-fold case.

\begin{theorem}
$k$-fold \textsc{Real Forrelation} requires $\Omega(\sqrt{N}/\log^{7/2}%
N)$\ randomized queries.\label{kfoldlbthm}
\end{theorem}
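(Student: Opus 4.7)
The plan is to reduce $k$-fold \textsc{Real Forrelation} to the general \textsc{Gaussian Distinguishing}\ problem and then invoke Theorem \ref{gdlb}. First, I would hand the algorithm the entire truth tables of $f_1,\ldots,f_{k-2}$ for free; since these functions have the same distribution in the uniform and forrelated cases, this can only make the problem easier and hence strengthens any lower bound. Once $f_1,\ldots,f_{k-2}$ are fixed, the multipliers $c_x=\sqrt{N}\,\Phi_{f_1,\ldots,f_{k-3},f_{k-2}^{(x)}}$ are determined, and the residual task becomes: given queries to $f_{k-1},f_k\colon\{0,1\}^n\to\mathbb{R}$, decide whether both are $\mathcal{N}(0,1)^N$ Gaussians or whether $f_k=HCf_{k-1}$, where $C$ is the diagonal matrix with entries $c_x$ and $H$ is the normalized Hadamard. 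This is exactly an instance of \textsc{Gaussian Distinguishing} with hidden vector $\Psi=f_{k-1}$, test vectors $\mathcal{V}=\{e_x\}_x\cup\{CHe_y\}_y$, and $M=2N$.

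Second, I would condition on the two high-probability events already proved in this appendix: $|c_x|\le\log N$ for every $x$ (Proposition \ref{logub}), and $\left|\sum_{z:\,z\cdot w=0}c_z^2/N-\tfrac12\right|\le O(\log^{5/2}N/\sqrt{N})$ for every nonzero $w$ (Lemma \ref{balanced} applied to the $k-1$ functions $f_1,\ldots,f_{k-2}$, using that $c_z^2/N=\Phi_{f_1,\ldots,f_{k-3},f_{k-2}^{(z)}}^2$). Under these events the vectors in $\mathcal{V}$ all have unit norm, since $\|CHe_y\|^2=\sum_x c_x^2/N=1$. I would then check the three families of pairwise inner products: two distinct standard basis vectors are orthogonal; $\langle e_x,CHe_y\rangle=c_x(-1)^{x\cdot y}/\sqrt{N}$ has absolute value at most $\log N/\sqrt{N}$; and $\langle CHe_y,CHe_{y'}\rangle=(1/N)\sum_x c_x^2(-1)^{x\cdot w}$ with $w=y\oplus y'\neq 0$ can be rewritten as $2\sum_{x:\,x\cdot w=0}c_x^2/N-1$ and is therefore $O(\log^{5/2}N/\sqrt{N})$ by the balance estimate. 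Hence every pair of distinct test vectors has inner product at most $\varepsilon=O(\log^{5/2}N/\sqrt{N})$ in absolute value.

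Third, I would apply Theorem \ref{gdlb} with $M=2N$ and $\varepsilon=O(\log^{5/2}N/\sqrt{N})$, obtaining the lower bound
\[
\Omega\!\left(\frac{1/\varepsilon}{\log(M/\varepsilon)}\right)=\Omega\!\left(\frac{\sqrt{N}/\log^{5/2}N}{\log N}\right)=\Omega\!\left(\frac{\sqrt{N}}{\log^{7/2}N}\right).
\]
Since Theorem \ref{gdlb} is proved for a deterministically fixed set $\mathcal{V}$, and since the conditioning event has probability $1-o(1)$, any algorithm solving $k$-fold \textsc{Real Forrelation} with constant bias must, on the conditioned instances, solve the induced \textsc{Gaussian Distinguishing} problem with constant bias, and therefore must make $\Omega(\sqrt{N}/\log^{7/2}N)$ queries.

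The main obstacle lies in bounding $\langle CHe_y,CHe_{y'}\rangle$: this inner product mixes the random multipliers $c_x$ with Hadamard phases and so does not automatically inherit near-orthogonality from the Hadamard alone. Lemma \ref{balanced} is what makes this step work, and the $\log^{5/2}N$ factor in its statement is exactly what propagates into the $\log^{7/2}N$ denominator of the final bound; any route to the conjectured $\widetilde{\Omega}(N^{1-1/k})$ separation for larger $k$ would require a genuinely sharper substitute for Lemma \ref{balanced}, which appears difficult because for $k>2$ the forrelated distribution is confined not to a linear subspace but to a nonlinear manifold, as noted in the Discussion.
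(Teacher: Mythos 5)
Your proposal is correct and follows essentially the same route as the paper's own proof: give away $f_1,\ldots,f_{k-2}$ for free, view the residual problem as \textsc{Gaussian Distinguishing} with test vectors $\{\left\vert x\right\rangle\}_x \cup \{\left\vert \psi_y\right\rangle\}_y$ where $\left\vert \psi_y\right\rangle$ has coordinates $c_x(-1)^{x\cdot y}/\sqrt{N}$, bound the pairwise inner products by $O(\log^{5/2}N/\sqrt{N})$ via Proposition \ref{logub} and Lemma \ref{balanced} (your rewriting of $\langle\psi_y|\psi_{y'}\rangle$ as $2\sum_{x:x\cdot w=0}c_x^2/N-1$ is exactly the intended use, noting it is the $k-2$ functions $f_1,\ldots,f_{k-2}$ to which the lemma is applied), and then invoke Theorem \ref{gdlb} with $M=2N$ to get $\Omega(\sqrt{N}/\log^{7/2}N)$. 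No gaps.
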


\begin{proof}
The strategy is the following: we will give the values $f_{1}\left(
x_{1}\right)  ,\ldots,f_{k-2}\left(  x_{k-2}\right)  $, for all $x_{1}%
,\ldots,x_{k-2}\in\left\{  0,1\right\}  ^{n}$, away to the algorithm
\textquotedblleft free of charge.\textquotedblright\ \ This can only make our
lower bound stronger.

As we saw above, after we do this, $k$-fold \textsc{Real Forrelation} becomes
equivalent to ordinary $2$-fold \textsc{Real Forrelation}\ on the functions
$f\left(  x\right)  =c_{x}f_{k-1}\left(  x\right)  $ and $g\left(  y\right)
=f_{k}\left(  y\right)  $, where the $c_{x}$'s are known multipliers. \ This,
in turn, can be expressed as an instance of \textsc{Gaussian Distinguishing},
in which our set $\mathcal{V}$\ of test vectors consists of $\left\vert
1\right\rangle ,\ldots,\left\vert N\right\rangle $ along with the vectors
$\left\{  \left\vert \psi_{y}\right\rangle \right\}  _{y\in\left\{
0,1\right\}  ^{n}}$\ defined as follows:%
\[
\left\vert \psi_{y}\right\rangle :=\frac{1}{\sqrt{N}}\sum_{x\in\left\{
0,1\right\}  ^{n}}c_{x}\left(  -1\right)  ^{x\cdot y}\left\vert x\right\rangle
.
\]
Note that the $\left\vert \psi_{y}\right\rangle $'s are unit vectors, since
$\sum_{x}c_{x}^{2}=N$. \ As for inner products, with $1-O\left(  1/N\right)
$\ probability we have%
\[
\left\vert \left\langle x|\psi_{y}\right\rangle \right\vert =\frac{\left\vert
c_{x}\right\vert }{\sqrt{N}}\leq\frac{\log N}{\sqrt{N}}%
\]
for all $x,y$\ by Proposition \ref{logub}, and%
\[
\left\vert \left\langle \psi_{y}|\psi_{z}\right\rangle \right\vert =\frac
{1}{N}\left\vert \sum_{x\in\left\{  0,1\right\}  ^{n}}c_{x}^{2}\left(
-1\right)  ^{x\cdot\left(  y\oplus z\right)  }\right\vert =O\left(  \frac
{\log^{5/2}N}{\sqrt{N}}\right)
\]
for all $y,z$\ by Lemma \ref{balanced}. \ So, in summary,\ we have $\left\vert
\mathcal{V}\right\vert =2N$\ and $\left\vert \left\langle v|w\right\rangle
\right\vert \leq\varepsilon=O\left(  \frac{\log^{5/2}N}{\sqrt{N}}\right)  $
for all distinct $\left\vert v\right\rangle ,\left\vert w\right\rangle
\in\mathcal{V}$. \ By Theorem \ref{gdlb}, this implies that%
\[
\Omega\left(  \frac{1/\varepsilon}{\log(2N/\varepsilon)}\right)
=\Omega\left(  \frac{\sqrt{N}}{\log^{7/2}N}\right)
\]
queries are needed.
\end{proof}

\section{Appendix: Property Testing\label{PROP}}

In this appendix, we show that \textsc{Forrelation}\ can be recast as a
\textit{property-testing} problem: that is, as a problem of deciding whether
the functions $f,g:\left\{  0,1\right\}  ^{n}\rightarrow\left\{  -1,1\right\}
$\ satisfy a certain property, or are far in Hamming distance from any
functions satisfying that property. \ (For a recent survey of quantum
property-testing, see Montanaro and de Wolf \cite{mdw}.)

In particular, we will obtain a property of $N$-bit strings that

\begin{enumerate}
\item[(1)] can be quantumly $\varepsilon$-tested (with bounded error) using
only $O\left(  1/\varepsilon\right)  $ queries, but

\item[(2)] requires $\Omega(\frac{\sqrt{N}}{\log N})$\ queries to
$\varepsilon$-test classically,\ provided $\varepsilon$\ is a sufficiently
small constant.
\end{enumerate}

This is the largest quantum versus classical property-testing separation yet known.

Since our analysis works for $k$-fold \textsc{Forrelation} just as easily as
for $2$-fold, we will use the more general setting. \ Let $\mathcal{Y}$\ be
the set of all $k$-tuples of Boolean functions $\left\langle f_{1}%
,\ldots,f_{k}\right\rangle $\ such that $\Phi_{f_{1},\ldots,f_{k}}\leq\frac
{1}{100}$. \ Also, let $\mathcal{N}_{\varepsilon}$\ be the set of all
$k$-tuples $\left\langle g_{1},\ldots,g_{k}\right\rangle $\ that differ from
every $\left\langle f_{1},\ldots,f_{k}\right\rangle \in\mathcal{Y}$\ on at
least $\varepsilon\cdot k2^{n}$\ of the $k2^{n}$\ positions. \ Then we will be
interested in the problem of deciding whether $\left\langle f_{1},\ldots
,f_{k}\right\rangle \in\mathcal{Y}$\ or $\left\langle f_{1},\ldots
,f_{k}\right\rangle \in\mathcal{N}_{\varepsilon}$, promised that one of these
is the case.

Our goal is to show that both our quantum algorithm for \textsc{Forrelation},
\textit{and} our randomized lower bound for it, carry over to the
property-testing variant. \ There are two difficulties here. \ First, one
could imagine an $\left\langle f_{1},\ldots,f_{k}\right\rangle $\ that was far
from $\mathcal{Y}$\ in the Hamming distance sense, yet had an $\Phi
_{f_{1},\ldots,f_{k}}$\ value only \textit{slightly} greater than $\frac
{1}{100}$---in which case, we would need many repetitions of our quantum
algorithm to separate $\left\langle f_{1},\ldots,f_{k}\right\rangle $\ from
$\mathcal{Y}$. \ Second, one could imagine an $\left\langle f_{1},\ldots
,f_{k}\right\rangle $\ that was close to $\mathcal{Y}$\ in Hamming distance,
yet had (say) $\Phi_{f_{1},\ldots,f_{k}}\geq\frac{3}{5}$---in which case, the
known classical hardness of distinguishing $\left\vert \Phi_{f_{1}%
,\ldots,f_{k}}\right\vert \leq\frac{1}{100}$\ from $\Phi_{f_{1},\ldots,f_{k}%
}\geq\frac{3}{5}$\ might not imply anything about the hardness of
distinguishing $\left\langle f_{1},\ldots,f_{k}\right\rangle \in\mathcal{Y}%
$\ from $\left\langle f_{1},\ldots,f_{k}\right\rangle \in\mathcal{N}%
_{\varepsilon}$, causing the classical lower bound to
fail.\footnote{Furthermore, this worry is not farfetched: if $k\geq3$, then
there really \textit{are} cases where changing a single function value
can\ change $\Phi_{f_{1},\ldots,f_{k}}$\ dramatically. \ For example, let
$f_{1}$, $f_{2}$, and $f_{3}$ each be the identically-$1$ function. \ Then
$\Phi_{f_{1},f_{2},f_{3}}=1$. \ But if we simply change $f_{2}\left(
0^{n}\right)  $\ from $1$\ to $-1$, then $\Phi_{f_{1},f_{2},f_{3}}=-1$.}

Fortunately, we can deal with both difficulties. \ To start with the first:

\begin{lemma}
\label{proptest}Let $f_{1},\ldots,f_{k}:\left\{  0,1\right\}  ^{n}%
\rightarrow\left\{  -1,1\right\}  $\ be Boolean functions satisfying
$\Phi_{f_{1},\ldots,f_{k}}\geq0$. \ Then for all $\varepsilon>0$, there exist
functions $g_{1},\ldots,g_{k}$\ such that each $g_{i}$\ differs from\ $f_{i}%
$\ on at most $\varepsilon2^{n}$ coordinates, and $\Phi_{g_{1},\ldots,g_{k}%
}\leq\left(  1-\varepsilon\right)  ^{k}\Phi_{f_{1},\ldots,f_{k}}$.
\end{lemma}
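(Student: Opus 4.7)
The approach is to modify the functions one at a time, exploiting the fact that $\Phi_{f_{1},\ldots,f_{k}}$ is a linear function of each $f_{i}$ when the other functions are held fixed. Concretely, fixing $\{f_{j}\}_{j\neq i}$, we can write
\[
\Phi_{f_{1},\ldots,f_{k}}=\sum_{x\in\{0,1\}^{n}}f_{i}(x)\,\beta_{x}
\]
where the coefficients $\beta_{x}$ depend only on the other functions. If we flip $f_{i}$ on a subset $S\subseteq\{0,1\}^{n}$, then $\Phi$ changes by exactly $-2\sum_{x\in S}f_{i}(x)\beta_{x}$, so the largest decrease achievable using $\varepsilon 2^{n}$ flips is obtained by taking $S$ to be the coordinates with the largest positive values of $v_{x}:=f_{i}(x)\beta_{x}$.

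The core analytic claim is an elementary averaging fact: if $N=2^{n}$ values $v_{1}\geq\cdots\geq v_{N}$ have sum $\Phi\geq 0$, then their top $\varepsilon N$ sum to at least $\varepsilon\Phi$, since the average of the top $\varepsilon N$ entries of any sorted sequence is at least the overall average $\Phi/N$. Hence, provided the top $\varepsilon N$ values of $\{v_{x}\}$ are all positive, flipping those coordinates reduces $\Phi$ by at least $2\varepsilon\Phi$, yielding a new value at most $(1-2\varepsilon)\Phi\leq(1-\varepsilon)\Phi$. In the remaining case---fewer than $\varepsilon N$ of the $v_{x}$'s are positive---we flip all of them; since the positive mass is at least $\Phi$, the decrease is at least $2\Phi$ and the new $\Phi$ becomes non-positive, trivially satisfying the desired bound.

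Armed with this single-function step, I would iterate over $i=1,\ldots,k$: at step $i$, treat $g_{1},\ldots,g_{i-1}$ (already defined) together with $f_{i+1},\ldots,f_{k}$ as fixed, and apply the step to $f_{i}$ to produce $g_{i}$. The invariant to maintain is that the current twisted sum $\Phi_{i}$ is either non-positive or at most $(1-\varepsilon)^{i}\Phi_{f_{1},\ldots,f_{k}}$. Either we apply the flipping step (when $\Phi_{i-1}$ is still positive), which by the previous paragraph gives $\Phi_{i}\leq(1-\varepsilon)\Phi_{i-1}\leq(1-\varepsilon)^{i}\Phi_{f_{1},\ldots,f_{k}}$, or we leave $f_{i}$ untouched (when $\Phi_{i-1}$ has already become non-positive); the latter choice is safe because $0\leq(1-\varepsilon)^{i}\Phi_{f_{1},\ldots,f_{k}}$ whenever $\varepsilon\leq 1$.

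The main hurdle, such as it is, is just to verify the averaging claim and to handle the two-case split (top-$\varepsilon N$ all positive versus some negative) cleanly. No machinery from the rest of the paper is needed; the argument is purely about the linearity of $\Phi$ in each coordinate function together with the sorted-average inequality.
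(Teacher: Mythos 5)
Your proof is correct, but it takes a genuinely different route from the paper's. The paper argues probabilistically and handles all $k$ functions at once: each $g_{i}$ is formed by re-randomizing $f_{i}$ on a uniformly random set $S_{i}$ of $\varepsilon2^{n}$ inputs, and by linearity of expectation every monomial of $\Phi$ whose argument lands in some $S_{i}$ averages to zero while the remaining ones survive with weight exactly $\left(  1-\varepsilon\right)  ^{k}$, so that $\operatorname{E}\left[  \Phi_{g_{1},\ldots,g_{k}}\right]  =\left(  1-\varepsilon\right)  ^{k}\Phi_{f_{1},\ldots,f_{k}}$ and some fixed choice achieves the bound. You instead proceed deterministically, one function at a time, exploiting the multilinearity of $\Phi$ in each $f_{i}$: the sorted-average inequality correctly shows that flipping the $\varepsilon2^{n}$ coordinates with the largest contributions $v_{x}=f_{i}\left(  x\right)  \beta_{x}$ decreases a nonnegative value of $\Phi$ by at least $2\varepsilon\Phi$ (your two-case split is valid, though not strictly needed, since the top-$\varepsilon2^{n}$ sum is at least $\varepsilon\Phi$ regardless of signs), and your invariant $\Phi_{i}\leq\left(  1-\varepsilon\right)  ^{i}\Phi_{f_{1},\ldots,f_{k}}$ (with the ``already non-positive'' branch) closes the induction for $\varepsilon\leq1$. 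What each approach buys: the paper's averaging argument is a two-line computation giving the constant $\left(  1-\varepsilon\right)  ^{k}$ exactly in expectation, whereas yours is constructive---it exhibits explicit flips---and actually gives a per-step factor of $1-2\varepsilon$ (or drives $\Phi$ non-positive), which is at least as strong as required. One shared, harmless imprecision: like the paper, you implicitly treat $\varepsilon2^{n}$ as an integer when selecting the set of flipped coordinates.
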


\begin{proof}
We form each $g_{i}$\ by simply choosing a subset $S_{i}\subset\left\{
0,1\right\}  ^{n}$\ with $\left\vert S_{i}\right\vert =\varepsilon2^{n}%
$\ uniformly at random, then picking $g_{i}\left(  x\right)  $\ uniformly at
random if $x\in S_{i}$, or setting $g_{i}\left(  x\right)  :=f_{i}\left(
x\right)  $\ if $x\notin S_{i}$. \ By linearity of expectation,%
\[
\operatorname*{E}\left[  \Phi_{g_{1},\ldots,g_{k}}\right]  =\frac
{1}{2^{\left(  k+1\right)  n/2}}\sum_{x_{1},\ldots,x_{k}\in\left\{
0,1\right\}  ^{n}}\operatorname*{E}\left[  g_{1}\left(  x_{1}\right)  \left(
-1\right)  ^{x_{1}\cdot x_{2}}g_{2}\left(  x_{2}\right)  \left(  -1\right)
^{x_{2}\cdot x_{3}}\cdots\left(  -1\right)  ^{x_{k-1}\cdot x_{k}}g_{k}\left(
x_{k}\right)  \right]  .
\]
By symmetry, the expectation inside the sum is $0$ if we condition on
$x_{i}\in S_{i}$\ for any $i$. \ Conversely, if we condition on $x_{i}\notin
S_{i}$\ for all $i$, then the expectation is%
\[
\Lambda_{x_{1},\ldots,x_{k}}:=f_{1}\left(  x_{1}\right)  \left(  -1\right)
^{x_{1}\cdot x_{2}}f_{2}\left(  x_{2}\right)  \left(  -1\right)  ^{x_{2}\cdot
x_{3}}\cdots\left(  -1\right)  ^{x_{k-1}\cdot x_{k}}f_{k}\left(  x_{k}\right)
.
\]
Overall, then, the expectation is%
\[
\Lambda_{x_{1},\ldots,x_{k}}\cdot\prod_{i\in\left[  k\right]  }\Pr_{S_{i}%
}\left[  x_{i}\notin S_{i}\right]  =\Lambda_{x_{1},\ldots,x_{k}}\cdot\left(
1-\varepsilon\right)  ^{k},
\]
which yields%
\[
\operatorname*{E}\left[  \Phi_{g_{1},\ldots,g_{k}}\right]  =\left(
1-\varepsilon\right)  ^{k}\Phi_{f_{1},\ldots,f_{k}}.
\]
Clearly, then, there exists at least one choice of $g_{1},\ldots,g_{k}$\ such
that%
\[
\Phi_{g_{1},\ldots,g_{k}}\leq\left(  1-\varepsilon\right)  ^{k}\Phi
_{f_{1},\ldots,f_{k}}.
\]

\end{proof}

In contrapositive form, Lemma \ref{proptest}\ implies that, if a $k$-tuple of
functions $\left\langle f_{1},\ldots,f_{k}\right\rangle $\ has Hamming
distance at least $\varepsilon\cdot k2^{n}$\ from every tuple $\left\langle
g_{1},\ldots,g_{k}\right\rangle $\ such that $\Phi_{g_{1},\ldots,g_{k}}\leq c$
(for some $c\geq0$), then we must have%
\[
\Phi_{f_{1},\ldots,f_{k}}>\frac{c}{\left(  1-\varepsilon\right)  ^{k}%
}>c\left(  1+k\varepsilon\right)  .
\]
As a side note, we conjecture that Lemma \ref{proptest} also holds
\textquotedblleft in the other direction\textquotedblright---that is, that if
$\left\langle f_{1},\ldots,f_{k}\right\rangle $\ has Hamming distance at least
$\varepsilon\cdot k2^{n}$\ from every $\left\langle g_{1},\ldots
,g_{k}\right\rangle $\ such that $\Phi_{g_{1},\ldots,g_{k}}\geq c$ (for some
$c>0$), then $\Phi_{f_{1},\ldots,f_{k}}$\ must be significantly
\textit{smaller} than $c$---but we do not currently have a proof of that.

We now show how to deal with the second difficulty. \ Call a $k$-tuple\ of
Boolean functions $f_{1},\ldots,f_{k}:\left\{  0,1\right\}  ^{n}%
\rightarrow\left\{  -1,1\right\}  $\ \textit{good} if%
\[
\left\vert \Phi_{f_{1},\ldots,f_{i-1},f_{i}^{\left(  x\right)  }}\right\vert
\leq\frac{\log N}{\sqrt{N}}%
\]
for every $i\in\left[  k\right]  $\ and $x\in\left\{  0,1\right\}  ^{n}$; and
moreover, there exists a constant $C_{k}$\ such that, for every $i\in\left[
k\right]  $ and $t\in\left[  \log N\right]  $, the \textquotedblleft partial
sums\textquotedblright\ $\Phi_{f_{1},\ldots,f_{i-1},f_{i}^{\left(  x\right)
}}$\ satisfy the following property:%
\[
\Pr_{x\in\left\{  0,1\right\}  ^{n}}\left[  \left\vert \Phi_{f_{1}%
,\ldots,f_{i-1},f_{i}^{\left(  x\right)  }}\right\vert \geq\frac{t}{\sqrt{N}%
}\right]  \leq\frac{C_{k}}{t^{t/2}}.
\]
Then we have the following extension of Proposition \ref{logub}:

\begin{proposition}
\label{logub2}If $\left\langle f_{1},\ldots,f_{k}\right\rangle $\ is chosen
uniformly at random, then it is good with\ probability at least $1-\delta_{k}%
$, where $\delta_{k}$\ can be made arbitrarily small by increasing $C_{k}$.
\end{proposition}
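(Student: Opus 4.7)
The strategy is to reduce both properties of ``good'' to Lemma \ref{phibound}, applied to each partial sum $\Phi_{f_1,\ldots,f_{i-1},f_i^{(x)}}$. The key observation is that for any fixed $x$, the map $f_i \mapsto f_i^{(x)}$ (defined by $f_i^{(x)}(x_i) = f_i(x_i)(-1)^{x_i\cdot x}$) is a bijection on Boolean functions that preserves the uniform measure. Consequently, when $f_1,\ldots,f_i$ are uniformly random, $\Phi_{f_1,\ldots,f_{i-1},f_i^{(x)}}$ has the same distribution as an ordinary $i$-fold Forrelation sum of uniformly random functions, so Lemma \ref{phibound} yields
\[
\Pr_{f_1,\ldots,f_i}\!\left[\,\bigl|\Phi_{f_1,\ldots,f_{i-1},f_i^{(x)}}\bigr|\geq t/\sqrt{N}\,\right]\;=\;O(1/t^t)
\]
for every fixed $i\in[k]$, $x\in\{0,1\}^n$, and $t\geq 1$.

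For the first part of ``good'' (the uniform bound by $\log N/\sqrt{N}$), I would set $t=\log N$ in the above. This makes the failure probability for each fixed pair $(i,x)$ at most $O((\log N)^{-\log N}) = o(1/N^2)$. A union bound over the $k\cdot N$ pairs $(i,x)$ contributes only $o(k/N)$ to the failure probability.

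For the second part, fix $i\in[k]$ and $t\in[\log N]$, and let $p_{i,t}(f_1,\ldots,f_i) := \Pr_x[|\Phi_{f_1,\ldots,f_{i-1},f_i^{(x)}}|\geq t/\sqrt{N}]$. Linearity of expectation gives $\mathbb{E}[p_{i,t}] = O(1/t^t)$. Applying Markov's inequality at threshold $C_k/t^{t/2}$,
\[
\Pr_{f_1,\ldots,f_i}\!\left[\,p_{i,t}\geq C_k/t^{t/2}\,\right]\;\leq\;\frac{O(1/t^t)}{C_k/t^{t/2}}\;=\;O\!\left(\frac{1}{C_k\, t^{t/2}}\right).
\]
Summing over $i\in[k]$ and $t\in[\log N]$ (and noting that $t=1$ gives a vacuous bound, while $\sum_{t\geq 2} t^{-t/2}$ converges to a finite constant) yields a total failure probability of order $k/C_k$, which can be made arbitrarily small by choosing $C_k$ large enough.

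Combining both parts by the union bound establishes that a random $\langle f_1,\ldots,f_k\rangle$ is good with probability $1-\delta_k$, where $\delta_k = O(k/C_k) + o(k/N)$ can be driven below any desired threshold by suitable choice of $C_k$. There is no substantive obstacle here: the proposition is essentially an averaged/Markov-wrapped version of Lemma \ref{phibound}; the only mild care required is in observing that $f_i^{(x)}$ is uniformly distributed whenever $f_i$ is, so that Lemma \ref{phibound} applies verbatim to the partial sums even though $x$ appears inside.
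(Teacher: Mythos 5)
Your proof is correct and follows essentially the same route as the paper's: apply Lemma \ref{phibound} to each partial sum, use Markov's inequality to control the fraction of bad $x$ at threshold $C_k/t^{t/2}$, and union bound over $i\in[k]$ and $t\in[\log N]$ to get failure probability $O(k/C_k)$. You additionally make explicit two points the paper leaves implicit (that $f_i\mapsto f_i^{(x)}$ preserves the uniform measure, and the separate union bound for the uniform $\log N/\sqrt{N}$ condition), which is fine.
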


\begin{proof}
By Lemma \ref{phibound}, for all $i\in\left[  k\right]  $\ and $x\in\left\{
0,1\right\}  ^{n}$ we have%
\[
\Pr_{f_{1},\ldots,f_{i}}\left[  \left\vert \Phi_{f_{1},\ldots,f_{i-1}%
,f_{i}^{\left(  x\right)  }}\right\vert \geq\frac{t}{\sqrt{N}}\right]
=O\left(  \frac{1}{t^{t}}\right)  .
\]
So by Markov's inequality,%
\[
\Pr_{f_{1},\ldots,f_{i}}\left[  \Pr_{x}\left[  \left\vert \Phi_{f_{1}%
,\ldots,f_{i-1},f_{i}^{\left(  x\right)  }}\right\vert \geq\frac{t}{\sqrt{N}%
}\right]  >\frac{C_{k}}{t^{t/2}}\right]  =O\left(  \frac{1}{C_{k}t^{t/2}%
}\right)  .
\]
So by the union bound,%
\begin{align*}
\Pr_{f_{1},\ldots,f_{i}}\left[  \exists i,t:\Pr_{x}\left[  \left\vert
\Phi_{f_{1},\ldots,f_{i-1},f_{i}^{\left(  x\right)  }}\right\vert \geq\frac
{t}{\sqrt{N}}\right]  >\frac{C_{k}}{t^{t/2}}\right]   &  =O\left(  \frac
{k}{C_{k}}\sum_{t\in\left[  \log N\right]  }\frac{1}{t^{t/2}}\right) \\
&  =O\left(  \frac{k}{C_{k}}\right)  .
\end{align*}

\end{proof}

Furthermore, good $k$-tuples behave as we want for property-testing purposes.

\begin{lemma}
\label{proptest2}Let $\left\langle f_{1},\ldots,f_{k}\right\rangle $\ be a
good $k$-tuple.\ \ Then for all modifications $g_{1},\ldots,g_{k}$\ such that
$\Pr_{x}\left[  f_{i}\left(  x\right)  \neq g_{i}\left(  x\right)  \right]
\leq\varepsilon$\ for all $i\in\left[  k\right]  $, we have%
\[
\left\vert \Phi_{f_{1},\ldots,f_{k}}-\Phi_{g_{1},\ldots,g_{k}}\right\vert
=O\left(  k\sqrt{\varepsilon}\log\frac{1}{\varepsilon}\right)  .
\]

\end{lemma}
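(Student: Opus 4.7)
The plan is a telescoping hybrid argument that swaps the $f_i$ for $g_i$ one index at a time and bounds each swap by $O(\sqrt{\varepsilon}\,\log(1/\varepsilon))$, so the triangle inequality yields the claimed $O(k\sqrt{\varepsilon}\,\log(1/\varepsilon))$. Write $S_i:=\{x:f_i(x)\neq g_i(x)\}$ (so $|S_i|\le\varepsilon N$), let $h_i:=g_i-f_i$ (supported on $S_i$, taking values in $\{-2,0,2\}$), and define the hybrid $H^{(i)}:=\Phi_{g_1,\dots,g_i,f_{i+1},\dots,f_k}$, so $H^{(0)}=\Phi_{\vec f}$ and $H^{(k)}=\Phi_{\vec g}$.

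\textbf{Step 1 (factorize each increment).} By the same associativity that powers equation~(\ref{ss}), splitting the $k$-fold Forrelation sum at the $i$-th position gives
\[
H^{(i)}-H^{(i-1)}\;=\;\Phi_{g_1,\dots,g_{i-1},h_i,f_{i+1},\dots,f_k}\;=\;\sum_{x\in S_i}h_i(x)\,L_g(x)\,R_f(x),
\]
where $L_g(x):=\Phi_{g_1,\dots,g_{i-1}^{(x)}}$ is the ``left partial'' built from the already-modified $g$'s and $R_f(x):=\Phi_{f_{i+1}^{(x)},f_{i+2},\dots,f_k}$ is the ``right partial'' built from the unchanged $f$'s. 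Using $|h_i|\le 2$ and Cauchy--Schwarz,
\[
|H^{(i)}-H^{(i-1)}|\;\le\;2\,\sqrt{\sum_{x\in S_i}L_g(x)^2}\cdot\sqrt{\sum_{x\in S_i}R_f(x)^2}.
\]
The first factor is automatically at most $1$: applying equation~(\ref{ss}) to the $(i-1)$-tuple $\langle g_1,\dots,g_{i-1}\rangle$ gives $\sum_x L_g(x)^2=1$, so no tail information about the modified functions is ever needed. Everything reduces to controlling $\sum_{x\in S_i}R_f(x)^2$.

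\textbf{Step 2 (layer-cake via goodness).} By the reversal symmetry $\Phi_{h_1,\dots,h_m}=\Phi_{h_m,\dots,h_1}$, the right-partial $R_f(x)$ is a prefix-partial of the reversed tuple $\langle f_k,\dots,f_1\rangle$, so goodness supplies both $|R_f(x)|\le \log N/\sqrt{N}$ for all $x$ and the tail bound $\Pr_x[|R_f(x)|\ge t/\sqrt{N}]\le C_k/t^{t/2}$ for $t\in[\log N]$. Then
\[
\sum_{x\in S_i}R_f(x)^2\;=\;\frac{2}{N}\int_{0}^{\log N}\!\! t\cdot\bigl|\{x\in S_i:|R_f(x)|\ge t/\sqrt{N}\}\bigr|\,dt\;\le\;\frac{2}{N}\int_{0}^{\log N}\!\! t\cdot\min\!\bigl(\varepsilon N,\,C_k N/t^{t/2}\bigr)\,dt.
\]
Split the integral at the crossover $t^\ast$ where $C_k/(t^\ast)^{t^\ast/2}=\varepsilon$, i.e.\ $t^\ast=\Theta(\log(1/\varepsilon)/\log\log(1/\varepsilon))$. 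The first regime contributes $\varepsilon(t^\ast)^2$; the second, thanks to the super-geometric decay of $1/t^{t/2}$, contributes only $O(\varepsilon t^\ast)$. So $\sum_{x\in S_i}R_f(x)^2=O(\varepsilon\log^2(1/\varepsilon))$, giving $|H^{(i)}-H^{(i-1)}|=O(\sqrt{\varepsilon}\,\log(1/\varepsilon))$, and summing over $i\in[k]$ yields the lemma.

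\textbf{Main obstacle.} The only delicate point is an asymmetry in the definition of ``good,'' which quantifies prefix-partials $\Phi_{f_1,\dots,f_i^{(x)}}$, whereas Step~2 needs tail control on the suffix-partial $R_f(x)=\Phi_{f_{i+1}^{(x)},\dots,f_k}$. The fix is either to fold symmetric (suffix) tail bounds into the definition of ``good''---the same Markov-after-Lemma~\ref{phibound} argument behind Proposition~\ref{logub2} delivers them at the same cost in $\delta_k$---or to invoke the reversal identity above to re-index every suffix-partial as a prefix-partial of $\langle f_k,\dots,f_1\rangle$. Once this symmetry is secured, the factorization, the use of~(\ref{ss}), and the layer-cake integration are all routine.
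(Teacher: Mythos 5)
Your architecture is essentially the paper's argument in classical dress: the paper runs a BBBV-style hybrid on the circuit of Figure \ref{kfold}, replacing the oracles one at a time, which is exactly your telescoping-plus-factorization once you note that the state before the $i$-th query is the vector of prefix partials; the threshold/tail split in your Step 2 matches the paper's split at $3\log(1/\varepsilon)/\sqrt{N}$ and gives the same $O(\sqrt{\varepsilon}\log\frac{1}{\varepsilon})$ per swap. The factorization identity, the use of (\ref{ss}) for the unit $\ell_2$ mass, and the Cauchy--Schwarz step are all correct.

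The genuine gap is the one you flag, and your proposed repairs do not both work. Because you replace $f_1,\ldots,f_i$ \emph{first}, your increment needs tail control on the suffix partials $\Phi_{f_{i+1}^{(x)},f_{i+2},\ldots,f_k}$, whereas ``good'' as defined in the paper only constrains the prefix partials $\Phi_{f_1,\ldots,f_{i-1},f_i^{(x)}}$. The reversal identity $\Phi_{h_1,\ldots,h_m}=\Phi_{h_m,\ldots,h_1}$ does \emph{not} close this by itself: it rewrites $R_f(x)$ as a prefix partial of the reversed tuple $\langle f_k,\ldots,f_{i+1}\rangle$, but goodness of $\langle f_1,\ldots,f_k\rangle$ is a property of that specific ordering and says nothing about prefix partials of the reversed tuple, so in Step 2 you are invoking bounds the hypothesis does not supply. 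Your other fix---strengthening the definition of ``good'' to include suffix tails, with Proposition \ref{logub2} extending verbatim since Lemma \ref{phibound} applies symmetrically to random tuples---does work, but then you have proved a variant of the stated lemma (harmless for the application, but not the lemma as written). The cleanest repair, and what the paper effectively does, is to run your telescoping in the opposite order (replace $f_k$ first, then $f_{k-1}$, and so on): the increment at position $i$ becomes $\sum_{x\in S_i}h_i(x)\,\Phi_{f_1,\ldots,f_{i-2},f_{i-1}^{(x)}}\,\Phi_{g_{i+1}^{(x)},\ldots,g_k}$, the $g$-side carries the unit $\ell_2$ mass (here reversal \emph{is} legitimate, since (\ref{ss}) holds for any Boolean tuple and needs no goodness hypothesis), and the $f$-side consists of exactly the prefix partials that ``good'' controls; your Step 2 estimate then goes through unchanged and yields the stated bound.
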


\begin{proof}
Recall the \textquotedblleft standard\textquotedblright\ quantum algorithm for
$k$-fold \textsc{Forrelation}\ (the one shown in Figure \ref{kfold}). By
direct analogy to the hybrid argument of Bennett, Bernstein, Brassard, and
Vazirani \cite{bbbv}, we consider what happens if, in that algorithm, we
replace the $U_{f_{i}}$\ oracles by $U_{g_{i}}$\ oracles one by one---starting
with $U_{f_{k}}$, and working backwards towards $U_{f_{1}}$. \ Let%
\[
\left\vert \psi_{i}\right\rangle =\sum_{x\in\left\{  0,1\right\}  ^{n}}%
\Phi_{f_{1},\ldots,f_{i-2},f_{i-1}^{\left(  x\right)  }}\left\vert
x\right\rangle
\]
be the state of the quantum algorithm immediately before the $i^{th}$\ oracle
call. \ Then by quantum-mechanical linearity, the entire sequence of oracle
replacements can change the final amplitude of the all-$0$ state,
$\alpha_{0\cdots0}$, by at most%
\begin{align*}
\sum_{i\in\left[  k\right]  }\left\Vert U_{f_{i}}\left\vert \psi
_{i}\right\rangle -U_{g_{i}}\left\vert \psi_{i}\right\rangle \right\Vert  &
=\sum_{i\in\left[  k\right]  }\sqrt{\sum_{x~:~f_{i}\left(  x\right)  \neq
g_{i}\left(  x\right)  }\left(  2\Phi_{f_{1},\ldots,f_{i-2},f_{i-1}^{\left(
x\right)  }}\right)  ^{2}}\\
&  \leq2\sum_{i\in\left[  k\right]  }\sqrt{\varepsilon N\left(  \frac
{3\log1/\varepsilon}{\sqrt{N}}\right)  ^{2}+\sum_{t=3\log1/\varepsilon}^{\log
N}\Pr_{x}\left[  \left\vert \Phi_{f_{1},\ldots,f_{i-2},f_{i-1}^{\left(
x\right)  }}\right\vert \geq\frac{t}{\sqrt{N}}\right]  \left(  \frac
{t+1}{\sqrt{N}}\right)  ^{2}}\\
&  \leq2\sum_{i\in\left[  k\right]  }\sqrt{9\varepsilon\log^{2}\frac
{1}{\varepsilon}+\sum_{t=3\log1/\varepsilon}^{\log N}\frac{C_{k}}{t^{t/2}%
}\left(  \frac{t+1}{\sqrt{N}}\right)  ^{2}}\\
&  \leq2\sum_{i\in\left[  k\right]  }\sqrt{9\varepsilon\log^{2}\frac
{1}{\varepsilon}+C_{k}\cdot O\left(  \varepsilon\right)  }\\
&  =O\left(  k\sqrt{\varepsilon}\log\frac{1}{\varepsilon}\right)  .
\end{align*}
But since $\alpha_{0\cdots0}$\ is precisely equal to $\Phi_{f_{1},\ldots
,f_{k}}$, this means that $\Phi_{f_{1},\ldots,f_{k}}$\ can change by at most
$O\left(  k\sqrt{\varepsilon}\log\frac{1}{\varepsilon}\right)  $\ as well.
\end{proof}

In contrapositive form, Lemma \ref{proptest2}\ implies that if $\Phi
_{g_{1},\ldots,g_{k}}\geq\frac{3}{5}$, then for every good $k$-tuple
$\left\langle f_{1},\ldots,f_{k}\right\rangle $\ with $\left\vert \Phi
_{f_{1},\ldots,f_{k}}\right\vert \leq\frac{1}{100}$, there must be an
$i\in\left[  k\right]  $ such that $g_{i}$\ differs from $f_{i}$\ on
a\ $\Omega\left(  \frac{1}{k^{2}\log^{2}k}\right)  $\ fraction of points.

We now put everything together. \ Recall the property-testing problem, of
distinguishing $\mathcal{Y}$\ (the set of all $\left\langle f_{1},\ldots
,f_{k}\right\rangle $\ such that $\Phi_{f_{1},\ldots,f_{k}}\leq\frac{1}{100}$)
from $\mathcal{N}_{\varepsilon}$\ (the set of all $\left\langle g_{1}%
,\ldots,g_{k}\right\rangle $\ that are at least $\varepsilon$\ away from
$\mathcal{Y}$\ in the Hamming distance sense). \ Lemma \ref{proptest} implies
that, just by taking the quantum algorithm of Proposition \ref{inpromisebqp}%
\ and amplifying it a suitable number of times, we can solve this problem,
with error probability at most (say) $1/3$, using only $O\left(
1/\varepsilon\right)  $\ quantum queries.\footnote{Na\"{\i}ve repetition would
give $O\left(  1/\varepsilon^{2}\right)  $ queries, but we can get down to
$O\left(  1/\varepsilon\right)  $\ using amplitude amplification.}
\ Furthermore, if we only need to distinguish $\mathcal{Y}$\ from
$\mathcal{N}_{\varepsilon}$\ with $\Theta\left(  \varepsilon\right)  $\ bias,
then it suffices to make just $1$\ quantum query.

On the other hand, suppose we set $\varepsilon\leq\frac{C}{k^{3}\log^{2}k}$,
for some suitably small constant $C$. \ Then Lemma \ref{proptest2} implies
that, if we had a randomized algorithm to distinguish $\mathcal{Y}$\ from
$\mathcal{N}_{\varepsilon}$ with constant bias, then we could also use that
algorithm to distinguish the good tuples $\left\langle f_{1},\ldots
,f_{k}\right\rangle $\ with $\left\vert \Phi_{f_{1},\ldots,f_{k}}\right\vert
\leq\frac{1}{100}$\ from the tuples $\left\langle g_{1},\ldots,g_{k}%
\right\rangle $\ with $\Phi_{g_{1},\ldots,g_{k}}\geq\frac{3}{5}$. \ But such
an algorithm would solve the distributional version of $k$-fold
\textsc{Forrelation}, and we already showed (in Theorems \ref{thm1}\ and
\ref{kfoldlbthm} respectively) that any such algorithm requires $\Omega
(\frac{\sqrt{N}}{\log N})$\ queries for $k=2$\ or $\Omega(\frac{\sqrt{N}}%
{\log^{7/2}N})$\ queries for $k>2$.

\end{document}